\PassOptionsToPackage{final,colorlinks,linkcolor={blue},citecolor={blue},urlcolor={red},breaklinks=true}{hyperref}
\documentclass[sigconf,UKenglish,nonacm,final]{acmart}
\usepackage{ifthen}
\usepackage{ifdraft}
\usepackage{lineno}
\usepackage{microtype}

\usepackage[capitalize,nameinlink]{cleveref}

\newboolean{proofsinappendix}
\ifoptionfinal{
  \setboolean{proofsinappendix}{true}
}{
\setboolean{proofsinappendix}{false}
}

\newcommand{\mycomment}[1]{}  %

\title[Nominal Automata with Name Deallocation]{Nominal Automata with Name Deallocation}

\author{Simon Prucker}
\orcid{0009-0000-2317-5565}
\affiliation{
  \institution{Friedrich-Alexander-Universität Erlangen-Nürnberg}
  \city{Erlangen}
  \country{Germany}
}
\email{simon.prucker@fau.de}

\author{Stefan Milius}
\orcid{0000-0002-2021-1644}
\affiliation{
  \institution{Friedrich-Alexander-Universität Erlangen-Nürnberg}
  \city{Erlangen}
  \country{Germany}
}
\email{stefan.milius@fau.de}

\author{Lutz Schr\"oder}
\orcid{0000-0002-3146-5906}
\affiliation{
  \institution{Friedrich-Alexander-Universität Erlangen-Nürnberg}
  \city{Erlangen}
  \country{Germany}
}
\email{lutz.schroeder@fau.de}

\usepackage[inline]{enumitem}
\setlist[enumerate,1]{label=(\arabic*),font=\normalfont,align=left,leftmargin=0pt,labelindent=0pt,
listparindent=\parindent,labelwidth=0pt,itemindent=!,topsep=3pt,parsep=0pt,itemsep=3pt,start=1}
\setlist[enumerate,2]{label=(\alph*),font=\normalfont,labelindent=*,leftmargin=*,topsep=3pt,start=1}
\setlist[itemize]{labelindent=*,leftmargin=*,topsep=5pt,itemsep=3pt}
\setlist[description]{labelindent=*,leftmargin=*,itemindent=-1 em}

\usepackage[utf8]{inputenc}
\usepackage{graphicx}
\usepackage{adjustbox}
\usepackage{tikz}
\usetikzlibrary{decorations.pathreplacing,calc,graphs,shapes,arrows,positioning,fit,automata}
\usepackage{tikz-cd}

\usepackage{amsmath,amsthm}
\usepackage{mathtools}
\usepackage{dsfont}

\usepackage{thmtools}
\usepackage{bussproofs}
\usepackage{fancyvrb}
\usepackage{scalefnt}
\usepackage{ifthen}
\usepackage{xspace}
\usepackage{float}
\usepackage{verbatim}
\usepackage{minted} 
\usepackage{latexsym}
\usepackage{csquotes}
\usepackage{tasks}
\usepackage{wrapfig}
\usepackage{relsize}
\usepackage{iflang}
\usepackage{dirtytalk}
\usepackage{pdfpages}
\usepackage{environ}
\usepackage{newfile}
\usepackage[printonlyused]{acronym}
\usepackage{lplfitch}
\usepackage{newunicodechar}
\usepackage{booktabs}
\usepackage{etoolbox}
\usepackage[layout=footnote,marginclue]{fixme}
\FXRegisterAuthor{sp}{asp}{SP}
\FXRegisterAuthor{sm}{apn}{SM}
\FXRegisterAuthor{ls}{als}{LS}
\FXRegisterAuthor{tw}{atw}{TW}
\usepackage[type=CC, modifier=by-sa,version=4.0]{doclicense}
\usepackage{embrac}
\usepackage{babel}

\usepackage[notref,notcite]{showkeys}

\usepackage{seqsplit}
\usepackage{xstring}
\newcommand{\defaultshowkeysformat}[1]{%
\StrSubstitute{#1}{ }{\textvisiblespace}[\TEMP]%
\parbox[t]{\marginparwidth}{\raggedright\normalfont\small\ttfamily\(\{\){\color{red!50!black}\expandafter\seqsplit\expandafter{\TEMP}}\(\}\)}%
}

\renewcommand*\showkeyslabelformat[1]{%
\noexpandarg%
\defaultshowkeysformat{#1}%
}

\addto\extrasenglish{%

}

\makeatletter
\ifthenelse{\boolean{proofsinappendix}}{
  \newoutputstream{proofstream}
  \openoutputfile{\jobname-proofs.out}{proofstream}
}{
}

\newcommand{\proofappendixbegin}[2]{%
  \phantomsection%
  \subsection*{\textbf{#1~\autoref{#2}}}%
  \addcontentsline{toc}{subsection}{#1~\autoref{#2}}%
  \label{#2:proof}%
  \def\proofappendix@qedsymbolmissing{\qed}
}
\newcommand{\proofappendixend}{%
     \proofappendix@qedsymbolmissing%
}
\let\oldqedhere\qedhere
\def\qedhere{%
  \ifproofsinappendix%
    \global\def\proofappendix@qedsymbolmissing{}\qed%
  \else\oldqedhere\fi%
  }

\newenvironment{proofhere}[2][Proof of]{%
  \begin{proof}
}{
  \end{proof}
}

\ifthenelse{\boolean{proofsinappendix}}{%
  \NewEnviron{proofappendix}[2][Proof of]{%
    \addtostream{proofstream}{\noexpand\proofappendixbegin{#1}{#2}}%
    \addtostream{proofstream}{\noexpand\takeout{
      WARNING: this file is auto generated from the main tex file!
      Do not edit manually!
    }}
    \expandafter\addtostream{proofstream}{\expandonce{\BODY}}%
    \addtostream{proofstream}{\noexpand\proofappendixend}%
    \addtostream{proofstream}{}%
  }%
  \NewEnviron{inappendix}{%
    \addtostream{proofstream}{\noexpand\takeout{
      WARNING: this file is auto generated from the main tex file!
      Do not edit manually!
    }}
    \expandafter\addtostream{proofstream}{\expandonce{\BODY}}%
    \addtostream{proofstream}{}%
  }%
}{%
  \newenvironment{proofappendix}[2][Proof of]{%
      \begin{proofhere}[#1]{#2}
  }{
      \end{proofhere}
  }
  \newenvironment{inappendix}{%
    \marginpar{\color{blue}In Appendix!}
    \color{blue}
  }{
    \color{black}
  }
}
\makeatother

\newcommand{\takeout}[1]{} %

\newcounter{inlineequation}
\def\resettheorembrackets{
  \def\theorembracketopen{(}
  \def\theorembracketclose{)}
}
\resettheorembrackets
\makeatletter
\def\@spopargbegintheorem#1#2#3#4#5{\trivlist
  \item[\hskip\labelsep{#4#1\ #2}]{#4{\theorembracketopen}#3{\theorembracketclose}\@thmcounterend\ }#5}
\makeatother

\newcommand{\resetCurThmBraces}{%
  \gdef\curThmBraceOpen{(}%
  \gdef\curThmBraceClose{)}}
\resetCurThmBraces
\newcommand{\removeThmBraces}{%
  \gdef\curThmBraceOpen{}%
  \gdef\curThmBraceClose{}}

\patchcmd{\thmhead}{(#3)}{\curThmBraceOpen #3\curThmBraceClose }{}{}

\theoremstyle{plain}
\newtheorem{theorem}{Theorem}[section]
\newtheorem{lemma}[theorem]{Lemma}
\newtheorem{cor}[theorem]{Corollary}
\newtheorem{proposition}[theorem]{Proposition}

\theoremstyle{definition}
\newtheorem{defn}[theorem]{Definition}
\newtheorem{construction}[theorem]{Construction}
\newtheorem{example}[theorem]{Example}
\newtheorem{rem}[theorem]{Remark}
\newtheorem{notn}[theorem]{Notation}

\newtheorem{fact}[theorem]{Fact}

\newenvironment{proofsketch}{\begin{proof}[Proof sketch]}{\end{proof}}

\newcommand{\restr}[2]{\left.\kern-\nulldelimiterspace #1 \vphantom{\big|}\right|_{#2}}

\newunicodechar{∀}{\ensuremath{\forall}}
\newunicodechar{∃}{\ensuremath{\exists}}
\newunicodechar{×}{\ensuremath{\times}}
\newunicodechar{ø}{\ensuremath{\emptyset}}
\newunicodechar{≤}{\ensuremath{\le}}
\newunicodechar{∈}{\ensuremath{\in}}
\newunicodechar{→}{\ensuremath{\to}}
\newunicodechar{⊆}{\ensuremath{\subseteq}}
\newunicodechar{⊗}{\ensuremath{\otimes}}
\newunicodechar{∧}{\ensuremath{\wedge}}

\newcommand{\by}[1]{(\text{#1})}

\newcommand{\names}{\mathbb{A}}
\newcommand{\N}{\mathds{N}}
\newcommand{\A}{\mathbb{A}}

\newcommand{\aeq}{\equiv_\alpha}

\newcommand{\ahs}{\hat{\names}^*}
\newcommand{\db}{\mathsf{db}}
\newcommand{\pow}{\mathcal{P}}
\newcommand{\pfin}{\mathcal{P}_\omega}
\newcommand{\trans}[1]{\overset{#1}\longrightarrow}
\DeclareMathOperator{\supp}{\mathsf{supp}}  

\newcommand{\ah}{\hat\names}

\DeclareMathOperator{\cleq}{\uparrow_{\mathsf{eq}}}

\newcommand{\la}{L_\alpha}

\newcommand{\perm}{\mathsf{Perm}}

\newcommand{\fresh}{\mathrel{\#}}
\newcommand{\parto}{\rightharpoonup}
\newcommand{\dg}{\mathsf{deg}}
\newcommand{\drop}[1]{\mathord{\downarrow}_{#1}}
\newcommand{\disc}{\mathsf{disc}}
\newcommand{\D}{\mathcal{D}}
\newcommand{\shs}{\hat{S}^*}
\DeclareMathOperator{\clal}{\uparrow_\alpha}

\newcommand{\orb}{\mathcal{O}}
\newcommand{\permG}{\perm(\names)}
\newcommand{\RNS}{\mathsf{RNS}}

\newcommand{\id}{\mathrm{id}}

\newcommand{\set}[1]{\{#1\}}
\newcommand{\setw}[2]{\{#1 \mid #2\}}

\newcommand{\epito}{\twoheadrightarrow}

\newcommand{\midmid}{\hspace{0.2ex}{\rule[-0.1ex]{0.6pt}{1.65ex}}\hspace{0.2ex}}

\newcommand{\newletter}[1]{{\midmid}#1}

\newcommand{\mybar}[3]{%
  \mathrlap{\hspace{#2}\overline{\scalebox{#1}[1]{\phantom{\ensuremath{#3}}}}}\ensuremath{#3}}
\newcommand{\barNames}{{\mybar{0.7}{1.25pt}{\names}}}

\newcommand{\zz}{\mathrm{Z\kern-.6em\raise-0.5ex\hbox{Z}}}

\newcommand{\lb}{\lbrack\!\!\langle}
\newcommand{\rb}{\rangle\!\!\rbrack}
\newcommand{\braket}[1]{\langle #1 \rangle}
\newcommand{\q}[1]{\lb#1\rb}

\DeclareMathOperator{\rc}{\mathsf{RC}}
\DeclareMathOperator{\lo}{\mathsf{LO}}
\DeclareMathOperator{\lc}{\mathsf{LC}}

\begin{document}

\begin{abstract}
  Data words with binders formalize concurrently allocated memory. Most name-binding mechanisms
  in formal languages, such as the $\lambda$-calculus, adhere to properly nested
  scoping. 
  In contrast, stateful programming languages with explicit memory allocation and
  deallocation, such as~C, commonly interleave the scopes of allocated memory regions. This
  phenomenon is captured in dedicated formalisms such as dynamic sequences and bracket
  algebra, which similarly feature explicit allocation and deallocation of letters. One of
  the classical formalisms for data languages are register automata, which have been shown
  to be equivalent to automata models over nominal sets. In the present work, we introduce a
  nominal automaton model for languages of data words with explicit allocation and
  deallocation that strongly resemble dynamic sequences, extending existing nominal automata
  models by adding deallocating transitions. Using a finite NFA-type representation of the
  model, we establish a Kleene theorem that shows equivalence with a natural expression
  language. Moreover, we show that our non-deterministic model allows for determinization, a
  quite unusual phenomenon in the realm of nominal and register automata.
\end{abstract}


\maketitle

\section{Introduction}\label{sec:intro}

Formal languages over infinite alphabets are known as \emph{data
  languages}, and their elements as \emph{data words}, in a view where
the infinite alphabet is seen as abstracting data from infinite
(or very large)
domains such as UUIDs, idealized unbounded heap addresses,
or variable names in
programming languages. Numerous formalisms for data languages have
been developed, including both logics
(e.g.~\cite{DemriLazic09,GrumbergEA12,ColcombetManuel14,KlinLelyk19,HausmannEA21})
and automata models
(e.g.~\cite{KaminskiFrancez94,GrumbergEA12,ManuelEA16,BojanczykEA11}). For
our present purposes, the most relevant family of formalisms are
automata following the register paradigm~\cite{KaminskiFrancez94},
which are equipped with a fixed finite number of registers in which
data values encountered in the input can be stored. Such automata are
equivalent to various flavours of \emph{nominal
  automata}~\cite{BojanczykEA14}, which employ the formalism of
\emph{nominal sets}~\cite{Pitts13}. In the nominal framework, data
values are often identified with a fixed set of \emph{names}, and
elements of nominal sets may be thought of as containing finitely many
names.

Nominal automata and transition systems often involve explicit
\emph{name allocation} (or \emph{name binding})
(e.g.~\cite{SchroderEA17,ParrowEA21}), which may be thought of as
abstracting the allocation of a resource such as a memory location or
a channel name. Explicit deallocation, featured in many concrete
programming languages as illustrated by the following code fragment in C
\begin{verbatim}
    int *num1 = malloc(sizeof(int));
    ...
    free(num1);
\end{verbatim}
is rarely featured in abstract
formalisms. 
A notable exception are \emph{dynamic sequences}~\cite{GabbayEA15}, in
which allocated names can be deallocated in any order (that is,
without adhering to any nesting discipline); see the related work
paragraph for additional references.

We introduce a slight variant of dynamic sequences, along with a dedicated automata model
for languages of them, \emph{non-deterministic deallocation automata (NDA)}. The latter form
an extension of the name-allocating automata model of regular non-deterministic nominal
automata (RNNA)~\cite{SchroderEA17}, from which they are set apart by additionally featuring
\emph{deallocating transitions}. We equip NDA with an \emph{alphatic semantics} given in
terms of $\alpha$-equivalence classes of accepted words, and with a
\emph{local freshness semantics} given in terms of a data language obtained from the alphatic language by
erasing explicit binders. Under the latter semantics, NDA are expressively equivalent to
RNNA. Our main results on NDA are, on the one hand, a Kleene theorem stating equivalence,
w.r.t.~alphatic semantics, with what we term \emph{regular deallocation expressions}, and,
on the other hand, a determinization construction w.r.t.~local freshness semantics. The latter is a
quite unusual phenomenon in the world of regular and nominal automata, where
non-deterministic models typically have strictly higher expressivity than their
deterministic restrictions~\cite{BojanczykEA14,SchroderEA17}. In a nutshell, our
determinization result says that all that needs to be added to RNNA to allow their
determinization is explicit deallocation.

Apart from the definition of NDA, the key contributions of this work
are~(1) a construction closing the language of an NDA under
$\alpha$-equivalence (name-dropping modification, \cref{sec:ndm}), (2)
a complete algebraic specification of the accepted languages of NDA
via a form of regular expressions (Kleene-theorem, \cref{sec:kleene}),
and~(3) the determinization construction preserving local freshness semantics
(symbolic determinization, \cref{sec:determinization}). These results
are accompanied by algorithmic tractability via equiexpressiveness with RNNA,
entailing for instance decidability of language inclusion.

\paragraph*{Related work}

As mentioned above, we build on the model of regular non-deterministic
nominal automata (RNNA)~\cite{SchroderEA17}, which, on the one hand, are
related to automata models with name allocation featuring a partial
Kleene results for nominal Kleene
algebra~\cite{GabbayCiancia11,KozenEA15,KozenEA15b} and, on the other
hand, themselves satisfy a (full) Kleene theorem w.r.t.~\emph{regular
  bar expressions}~\cite{SchroderEA17}. In sharp contrast to the
computational hardness or undecidability that is otherwise typical for
non-deterministic register models
(e.g.~\cite{KaminskiFrancez94,DemriLazic09}), RNNA (which can be
identified with a subclass of register automata characterized by a
lossiness property) allow for inclusion checking in parametrized
polynomial space, with the number of registers as the parameter. Our
determinization result may eventually provide a perspective to
establishing similar complexity bounds for nominal deallocation
automata.

Dynamic sequences have been introduced by Gabbay et al.~\cite{GabbayEA15} and are equipped
with a notion of $\alpha$-equivalence and a relational semantics. A related way of denoting
data languages using name creation and destruction is presented by Brunet and
Silva~\cite{brunet_et_al:LIPIcs:2019:10683} who prove a Kleene Theorem relating the data
semantics of their \emph{memory-finite rational expressions} with non-deterministic
orbit-finite automata (NOFA)~\cite{BojanczykEA14}, equivalently register automata with
non-deterministic reassignment~\cite{KaminskiZeitlin10}. This implies that memory-finite
rational expressions inherit the computational hardness of the full register automaton
model; under local freshness semantics, NDA are equiexpressive with RNNA, i.e.~like the latter trade
expressiveness for computational tractability (as possibly indicated by our determinization result).

\section{Preliminaries}\label{sec:prel}
\begin{inappendix}
\end{inappendix}

We assume basic familiarity with automata theory and nominal sets;
we briefly review requisite concepts and notation.

\paragraph*{Nominal sets} Fix a countably infinite set $\names$ of \emph{names}. We
denote the group of finite permutations over $\names$ by $\permG$.  We write
$(a\,b)\in \permG$ for the permutation swapping~$a$ and $b$.  An \emph{action} on a set $X$ is
an operation $\permG \times X \trans{\cdot} X$.  A subset $S \subseteq \names$
\emph{supports} an element $x\in X$ if every permutation that fixes each element of $S$ also
fixes $x$:
$(\forall \pi \in \permG.\,(\forall a\in S.\, \pi(a) = a) \Rightarrow \pi \cdot x = x)$. If
every $x \in X$ has a finite support, then~$X$ is a \emph{nominal set}.  In this case there
is a least finite support $\supp(x)$ (referred to as \emph{the support} in the following).
A name $c$ is \emph{fresh} for~$x$, denoted
$c\fresh x$, if $c \notin \supp(x)$.  The set $\names$ with the action defined by
$\pi \cdot a = \pi(a)$ is a nominal set. Products $X\times Y$ of nominal sets are nominal
sets under the coordinate-wise action. The finitely supported subsets of a nominal set~$X$
form a nominal set with the expected group action:
$\pi \cdot Y = \setw{\pi \cdot y}{y \in Y}$ for every finitely supported $Y \subseteq X$. A
subset~$A$ of a nominal set~$X$ is \emph{uniformly finitely supported} if its support is the
union of the supports of its elements, i.e.~$\supp(A) = \bigcup_{x \in A} \supp(x)$.  We write $\pfin(X)$ for the \emph{finite powerset} of~$X$, i.e.~the set of finite subsets of~$X$; every finite subset of~$X$ is uniformly finitely supported. A map
$f\colon X \to Y$ between nominal sets is \emph{equivariant} if
\mbox{$f(\pi \cdot x) = \pi \cdot f(x)$} for all $x \in X$ and $\pi \in \permG$, which implies
$\supp f(x) \subseteq \supp (x)$ for all $x \in X$. In combination with uniform finite supportedness, we obtain the following simple property:
\begin{lemma}\label{lem:supp-pfin}
  Let $f\colon X\to\pfin(Y)$ be equivariant. Then
  $\supp(y)\subseteq\supp(x)$ for all $x\in X$, $y\in f(x)$.
\end{lemma}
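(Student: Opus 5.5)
The plan is to chain the two facts recalled just before the statement. Equivariant maps do not enlarge supports, and finite subsets are uniformly finitely supported. Fix $x\in X$ and $y\in f(x)$.

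First, since $f$ is equivariant, we have $\supp f(x)\subseteq\supp(x)$. For completeness this is the standard argument. Any permutation $\pi$ fixing $\supp(x)$ pointwise satisfies $\pi\cdot x=x$. Hence $\pi\cdot f(x)=f(\pi\cdot x)=f(x)$, so $\supp(x)$ supports $f(x)$. Because $\supp(f(x))$ is the least finite support, the inclusion follows.

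Second, $f(x)\in\pfin(Y)$ is a finite subset of $Y$, so it is uniformly finitely supported: $\supp(f(x))=\bigcup_{y'\in f(x)}\supp(y')$. The step needing justification is this claim that finite subsets are uniformly finitely supported; the paper asserts it, and one can verify it directly as follows. Let $S=\supp(f(x))$, and suppose some $a\in\supp(y)\setminus S$. Choose a name $b$ outside $S\cup\bigcup_{y'\in f(x)}\supp(y')$; this union is finite, so such a $b$ exists. The swap $(a\,b)$ fixes $S$ pointwise, so it fixes the set $f(x)$ and merely permutes its elements. Since $a\in\supp(y)$ and $b\notin\supp(y)$, we get $(a\,b)\cdot y\neq y$, because $\supp((a\,b)\cdot y)=(a\,b)\cdot\supp(y)$ contains $b$. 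Iterating $(a\,b)$ on $y$ stays inside the finite set $f(x)$. To reach a contradiction, apply instead a permutation fixing $S$ that maps $a$ to a name $b'$ fresh for all elements of $f(x)$. The image of $y$ then has $b'$ in its support, yet it must lie in $f(x)$, and no element of $f(x)$ has $b'$ in its support. Hence $\supp(y)\subseteq S$.

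Combining the two steps gives $\supp(y)\subseteq\supp(f(x))\subseteq\supp(x)$, as required. Only the uniform-support fact is non-trivial, and since the paper states it as known, the lemma is essentially a two-line consequence of these facts.
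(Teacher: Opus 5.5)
Your proof is correct and follows the paper's route: equivariance gives $\supp(f(x))\subseteq\supp(x)$, and uniform finite supportedness of the finite set $f(x)$ gives $\supp(y)\subseteq\supp(f(x))$. Your direct check of the uniform-support fact is also sound, though the detour about iterating $(a\,b)$ and then passing to $b'$ is unnecessary: you already chose $b$ fresh for every element of $f(x)$, so $(a\,b)\cdot y\in f(x)$ together with $b\in\supp((a\,b)\cdot y)$ is already the contradiction.
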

\noindent 
A subset $S$ of a nominal set~$X$ is equivariant if
$\pi \cdot S \subseteq S$. We write $\cleq S$ for the closure of $S$
under equivariance: $\cleq S = \bigcup_{\pi \in \permG} \pi \cdot
S$. For $x \in X$, the set $\orb_x = \cleq \set{x} \subseteq X$ is the
\emph{orbit} of $x$. For two elements $x_1,x_2 \in X$, the two orbits
$\orb_{x_1}$ and~$\orb_{x_2}$ are either equal or disjoint.  A nominal
set $X$ is \emph{orbit-finite} if it has finitely many orbits.
A map $f\colon X \to Y$ between nominal sets is \emph{supported} by
$S \subseteq \names$ if $f(\pi \cdot x) = \pi \cdot f(x)$ holds for
all $x\in X$ and all $\pi \in \permG$ satisfying $\pi(s) = s$ for all
$s \in S$. The map $f$ is \emph{finitely supported} if it is supported
by a finite set $S$. (Note that equivariant maps are thus precisely
those maps which are supported by the empty set.)  On the nominal set
$\names\times X$ we have the equivalence relation defined by
$(a,x) \sim (b,y)$ iff $(a\,c)\cdot x = (b\,c) \cdot y$ for some, or
equivalently all, names $c$ such that $c\fresh a,b,x,y$; we
occasionally refer to this relation, somewhat informally, as
\emph{$\alpha$-equivalence}. The \emph{abstraction set}
$[\names](\cdot)$ is defined as the quotient set
$\names \times X /\mathord{\sim}$.  The $\sim$-equivalence class of
$(a,x) \in \names \times X$ is denoted by $\braket{a}x$. If
$b\in\supp(x)$, then $\braket{a}x$ does not equal $\braket{b}x'$ in
$[\names]X$ for any~$x'\in X$; in this case, we say that the renaming
of~$a$ into $b$ in $\braket{a}x$ is \emph{blocked}.
For further details on nominal sets, see Pitts' book~\cite{Pitts13}.

\paragraph*{Bar languages, RNNA and bar NFA} A \emph{data word} is a finite word over $\names$; as
usual, we write $\names^*$ for the set of all data words. A \emph{data language} is a subset
of $\names^*$. We use a bar symbol (\enquote{$\midmid$}) preceding names to indicate that
the next letter is bound until the end of the word. A \emph{bar string} is a finite word
over the alphabet
$\barNames = \names \cup \setw{\newletter{a}}{a\in \names}$~\cite{SchroderEA17}.  We have a
notion of $\alpha$-equivalence generated by $w\newletter{a}v \aeq w\newletter{b}u$ if
$\braket{a} v = \braket{b} u$ in $[\names]\bar\names^*$. We denote the $\alpha$-equivalence
class of a bar string $w$ by $[w]_\alpha$.  Hence, a bar string $w$ represents a set of
data words obtained by removing all bars from all words in $[w]_\alpha$.  Bar languages can
be recognized by either \emph{bar NFA} or \emph{RNNA}: the former are just ordinary
non-deterministic finite automata (NFA) with input alphabet (a necessarily finite subset of)
$\bar\names$, and the latter are infinite but finitely representable automata
$A = (Q,\Delta,i,F)$ with an orbit-finite nominal set of states~$Q$, an equivariant
transition relation $\Delta$ invariant under $\alpha$-renaming, an initial state $i$ and an
equivariant subset $F \subseteq Q$ of final states.

\section{Explicit Deallocation}\label{sec:alpha}


We next introduce our notion of words with explicit deallocation
and
an associated notion of $\alpha$-equivalence. As mentioned above,
these notions are related to those found in dynamic
sequences~\cite{GabbayEA15}.
Elements of~$\names$ can be equipped with a
\emph{binder}~$\lb\cdot$ and a \emph{marker}~$\cdot\rb$,
yielding $\hat{S} = \set{\lb a, a, a\rb, \q{a}\mid a\in S}$
for a set $S\subseteq \names$ of names.
We form a new alphabet
$\ah$, 
%
which is a nominal set under the action of~$\permG$ extending the action
on~$\names$ in the expected manner: $\pi\cdot\lb a=\lb\pi(a)$, $\pi\cdot\lb
a\rb=\lb\pi(a)\rb$, $\pi\cdot a\rb=\pi(a)\rb$. The set~$\ahs$ of words
over~$\ah$ is then a nominal set under the action
of~$\permG$ acting on the letters of a word as just described. We view names as an abstract
representation of resources. In this view, we understand a letter $\lb
a$ as \emph{opening} or \emph{allocating} the name~$a$; a letter
$a$ as using an allocated or preexisting name~$a$; a
letter~$a\rb$ as \emph{closing} or \emph{deallocating} the name~$a$; and a letter $\lb
a\rb$ as allocating and immediately deallocating the name~$a$.

Based on this view, one could give definitions of multisets of names
allocated or deallocated in a given word. It turns out to be
technically more convenient to just distinguish names according to the
type of their first occurrence when traversing the word from the left
or from the right, respectively. In particular, we will use these
concepts in defining a notion of (non-)shadowing to the right.
Specifically, we introduce the notions of \emph{right-closed},
\emph{left-open}, and \emph{left-closed} names of a word. A
name~$a$ is \emph{right-closed} in a
word~$w$ if the last appearance of~$a$
in~$w$ (read left to right) has the marker attached (i.e.~is
either~$a\rb$
or~$\q{a}$). Similarly, a name is \emph{left-closed} in a word
$w$ if the first appearance of~$a$
in~$w$ (read left to right) has the binder attached (i.e.~is
either~$\lb a$ or~$\q{a}$). Moreover, a name
$a$ is \emph{left-open}
in~$w$ if its first appearance does not have the binder attached
(i.e.~is either~$a$
or~$a\rb$). Recursive definitions of the sets $\rc(w)$,
$\lo(w)$, and
$\lc(w)$ of all right-closed, left-open, and left-closed names of the
word
$w$, respectively, are given in \cref{tab:namedefs}. As indicated
above, we use these notions to formalize a notion of \emph{right
  shadowing}: If a name~$a$ is right-closed
in~$v$ and left-open in~$u$, then the composite word
$vu$ is \emph{right-shadowing}, i.e.~some occurrence
of~$a$ with the marker shadows an occurrence
of~$a$ without the binder further to the right. Correspondingly. a
word~$w$ is \emph{right-non-shadowing} if $\rc(u) \cap \lo(v) =
\emptyset$ holds for each decomposition $w = uv$. We write
$\RNS(\ah)$ for the set of right non-shadowing words over the alphabet
with binders $\ah$.

All these notions are illustrated in \cref{ex:boundAndClean}.
\begin{table}%
  \begin{center}
    \resizebox{\columnwidth}{!}{%
      \(
      \def\arraystretch{1.2}
      \begin{array}{@{}l|@{\hspace*{12pt}}l|@{\,}l|@{\,}l@{\,}}
        \bottomrule
        w'
        & \rc(w')
        & \lo(w')
        & \lc(w')
        \\
        \midrule
        \epsilon
        
        & \emptyset
        & \emptyset
        & \emptyset
        \\\hline
        a\,w
        & \rc(w)
        & \lo(w) \cup \set{a}
        &\lc(w) \setminus \set{a}
        \\\hline
        \lb a w
        & \rc(w)
        & \lo(w)\setminus\set{a}
        & \lc(w) \cup \set{a}
        \\\hline
        a\rb w
        & \begin{array}{ll}
            \rc(w) \\
            \hspace*{12pt}\text{if } a\in \lc(w)\cup\lo(w)\\
            \rc(w)\cup\set{a}\\
            \hspace*{12pt}\text{otherwise}
          \end{array}
        & \lo(w) \cup \set{a}
        &\lc(w) \setminus \set{a}
        \\\hline
        \q{a}w
        & \begin{array}{ll}
            \rc(w) \\
            \hspace*{12pt}\text{if } a\in \lc(w)\cup \lo(w)\\
            \rc(w)\cup\set{a}\\
            \hspace*{12pt}\text{otherwise}
          \end{array}
        &\lo(w)\setminus\set{a}
        &\lc(w) \cup \set{a}
        \\
        \bottomrule
      \end{array}
      \)%
  }%
\end{center}

\caption{Definitions of the subsets $\rc(w'), \lo(w'), \lc(w')\subseteq \A$ by case distinction
  on $w'$.}
\label{tab:namedefs}
\end{table}

\begin{lemma}[label=lem:cnConcat]
  For compound words $w = uv$, we have that
  \begin{enumerate}
    \item $\rc(w) = (\rc(u)\setminus (\lc(v)\cup \lo(v))) \cup \rc(v)$,
    \item $\lc(w) = \lc(u) \cup (\lc(v)\setminus \lo(u))$, and
    \item $\lo(w) = \lo(u) \cup (\lo(v) \setminus \lc(u))$.
  \end{enumerate}
\end{lemma}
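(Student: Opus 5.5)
Since the sets $\rc$, $\lo$ and $\lc$ are defined in \cref{tab:namedefs} by recursion on the first letter of a word, I will prove all three identities by induction on the length of $u$, keeping $v$ arbitrary. In the base case $u=\epsilon$ all three sets for $u$ are empty, so each right-hand side reduces to the corresponding set for $v$. In the inductive step I write $u = x\,u'$ with $x\in\ah$ and apply the induction hypothesis to $w' = u'v$. This reduces each claim to a set-theoretic identity, to be checked separately for each of the four shapes of $x$ ($a$, $\lb a$, $a\rb$, $\q{a}$).

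For (2) and (3) the table rows are simple updates: $\lo$ either adds $a$ or removes it, and $\lc$ does the opposite. Take $x=a$ (or $a\rb$). Then
\begin{align*}
\lo(w) &= \lo(u'v)\cup\set{a} = \lo(u')\cup(\lo(v)\setminus\lc(u'))\cup\set{a}, \\
\lo(u)\cup(\lo(v)\setminus\lc(u)) &= \lo(u')\cup\set{a}\cup(\lo(v)\setminus(\lc(u')\setminus\set{a})),
\end{align*}
and these agree because the only element the second expression might contain in addition is $a$, which is already present. The cases $x=\lb a$ and $x=\q{a}$ are symmetric: there $a$ is removed from $\lo$ and added to $\lc$. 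For (2) the computation is dual with the roles of $\lo$ and $\lc$ swapped. To keep things clean, I will note as an auxiliary fact that $\lc(z)\cap\lo(z)=\emptyset$ for every word $z$, by an immediate induction; this is not strictly needed here.

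Identity (1) is the main obstacle, because the row for $\rc$ is conditional on whether $a\in\lc(u'v)\cup\lo(u'v)$. First I observe that for every $z$, $\lc(z)\cup\lo(z)$ is exactly the set of names occurring in $z$, again by induction. This yields
\[
\lc(u'v)\cup\lo(u'v) = \lc(u')\cup\lo(u')\cup\lc(v)\cup\lo(v),
\]
and similarly for $u$. I then treat $x\in\set{a,\lb a}$ and $x\in\set{a\rb,\q{a}}$ separately.

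In the first case, $\rc(u)=\rc(u')$ and $\rc(w)=\rc(u'v)$. The names of $u$ are those of $u'$ together with $a$, so I must check that adding $a$ to the subtracted set $\lc(v)\cup\lo(v)$ does not matter. This holds because $a\in\rc(u')$ is impossible unless $a$ occurs in $u'$, and only the names of $v$ are subtracted anyway; so only the names of $v$ are relevant, and the identity follows directly from the induction hypothesis.

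In the second case I split according to whether $a$ occurs in $u'v$. If it does not, then $a$ belongs to both sides: on the left by the table, and on the right because $a\in\rc(u)$ and $a$ is not a name of $v$. If $a$ occurs in $u'$, then $\rc(u)=\rc(u')$ and $\rc(w)=\rc(u'v)$, and the claim is the induction hypothesis. If $a$ occurs in $v$ but not in $u'$, then $a\in\rc(u)$ but is removed on the right-hand side since $a\in\lc(v)\cup\lo(v)$. Meanwhile $\rc(w)=\rc(u'v)$, which by the induction hypothesis equals the right-hand side computed with $u'$, and this matches. The whole proof is careful bookkeeping, with this last subcase the only delicate point.
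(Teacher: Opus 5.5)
Your proof is correct and follows essentially the same route as the paper's. Both peel off the first letter of $u$, apply the hypothesis to $u'v$, and split into four cases on that letter, with the only real work in the $\rc$-clause for $a\rb$ and $\q{a}$; there you split on whether $a$ occurs in $u'$ or in $v$, as the paper does, and your auxiliary fact that $\lc(z)\cup\lo(z)$ is the set of names of $z$ replaces the paper's derivation of $\lc(u'v)\cup\lo(u'v)$ from the other two identities. In the case $x\in\set{a,\lb a}$ your extra check is unnecessary: the subtracted set $\lc(v)\cup\lo(v)$ does not involve $u$ and $\rc(u)=\rc(u')$, so the claim is literally the induction hypothesis.
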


\begin{proofappendix}{lem:cnConcat}
  We prove the three statements simultaneously by induction on the length of
  a word $w$. The proof involves straightforward but extensive technicalities.
  In particular, we show that
  for every decomposition $w = uv$ we have
  \begin{align*}
    \rc(uv) &= (\rc(u)\setminus (\lc(v)\cup \lo(v))) \cup \rc(v),\\
    \lc(uv) &= \lc(u) \cup (\lc(v)\setminus \lo(u)),\\
    \lo(uv) &= \lo(u) \cup (\lo(v) \setminus \lc(u)).
  \end{align*}

  For the base case $w = \epsilon$, we have the trivial decomposition
  $\epsilon = \epsilon\epsilon$.
  and by definition we have
  \[
    \rc(\epsilon) = \lc(\epsilon) = \lo(\epsilon) = \emptyset.
  \]
  thus the claim holds.

  For the inductive step, let $w = \gamma w'$ be a non-empty word
  with $\gamma\in\ah$ the first symbol and suffix $w'$.
  We assume the claim holds for every proper suffix
  of~$w$, in particular for $w'$.

  Let $w = uv$ be a decomposition of $w$.
  If $u = \epsilon$, then with
  $\rc(\epsilon) = \lc(\epsilon) = \lo(\epsilon) = \emptyset$, we get

 
  \begin{align*}
    \rc(w) &= (\emptyset \setminus \emptyset) \cup \rc(w)
           = (\rc(u) \setminus (\lc(v)\cup\lo(v))) \cup \rc(v)\text{,}\\
    \lc(w) &= \emptyset \cup (\lc(w)\setminus \emptyset)
           = \lc(u) \cup (\lc(v)\setminus \lo(u))\text{,}\\
    \lo(w) &= \emptyset \cup (\lo(w)\setminus \emptyset)
           = \lo(u) \cup (\lo(v) \setminus \lc(u))\text{.}
  \end{align*}
  If instead $v = \epsilon$, then we have
  \begin{align*}
    \rc(w) &= (\rc(w) \setminus \emptyset) \cup \emptyset
           = (\rc(u) \setminus (\lc(v)\cup\lo(v))) \cup \rc(v)\text{,}\\
    \lc(w) &= \lc(w) \cup (\emptyset \setminus \lo(u))
           = \lc(u) \cup (\lc(v)\setminus \lo(u))\text{,}\\
    \lo(w) &= \lo(w) \cup (\emptyset\setminus \lc(u))
           = \lo(u) \cup (\lo(v) \setminus \lc(u))\text{.}
  \end{align*}
  Thus, assume both $u$ and $v$ are non-empty.
  Since $w = \gamma w'$, the split between $u$ and $v$
  occurs in $w'$, hence we have $w' = u'v$
  such that $u = \gamma u'$.
  By the induction hypothesis we have:
  \begin{align*}
    \rc(w')
      &= (\rc(u')\setminus (\lc(v)\cup\lo(v))) \cup \rc(v),\\
    \lc(w')
      &= \lc(u') \cup (\lc(v)\setminus \lo(u')),\\
    \lo(w')
      &= \lo(u') \cup (\lo(v)\setminus \lc(u')).
  \end{align*}

  We proceed via case distinction on $\gamma$.
  \begin{itemize}
  \item $\gamma = a$:
  From \cref{tab:namedefs}, we have
  \begin{align*}
    \rc(w) &= \rc(aw') = \rc(w'),\\
    \rc(u) &= \rc(au') = \rc(u'),\\
    \lc(w) &= \lc(aw') = \lc(w')\setminus\{a\},\\
    \lc(u) &= \lc(au') = \lc(u')\setminus\{a\},\\
    \lo(w) &= \lo(aw') = \lo(w')\cup\{a\},\\
    \lo(u) &= \lo(au') = \lo(u')\cup\{a\}.
  \end{align*}
  For $\rc$, we compute
  \begin{align*}
    \rc(w)
      &= \rc(w')\\
      &= (\rc(u')\setminus (\lc(v)\cup\lo(v))) \cup \rc(v)\\
      &= (\rc(u)\setminus (\lc(v)\cup\lo(v))) \cup \rc(v),
  \end{align*}
  for $\lc$, we obtain
  \begin{align*}
    \lc(w)
      &= \lc(w')\setminus\{a\}\\
      &= (\lc(u') \cup (\lc(v)\setminus \lo(u')))\setminus\{a\}\\
      &= (\lc(u')\setminus\{a\}) \cup ((\lc(v)\setminus \lo(u'))\setminus\{a\})\\
      &= \lc(u) \cup (\lc(v)\setminus (\lo(u')\cup\{a\}))\\
      &= \lc(u) \cup (\lc(v)\setminus \lo(u)),
  \end{align*}
  and for $\lo$, we find
  \begin{align*}
    \lo(w)
      &= \lo(w')\cup\{a\}\\
      &= (\lo(u') \cup (\lo(v)\setminus \lc(u')))\cup\{a\}\\
      &= \lo(u')\cup\{a\} \cup (\lo(v)\setminus (\lc(u')\setminus \{a\}))\\
      &= \lo(u) \cup (\lo(v)\setminus \lc(u)).
  \end{align*}
  \item $\gamma = \lb a$:
  From \cref{tab:namedefs}, we obtain
  \begin{align*}
    \rc(w) &= \rc(\lb a w') = \rc(w'),\\
    \rc(u) &= \rc(\lb a u') = \rc(u'),\\
    \lc(w) &= \lc(\lb a w') = \lc(w')\cup\{a\},\\
    \lc(u) &= \lc(\lb a u') = \lc(u')\cup\{a\},\\
    \lo(w) &= \lo(\lb a w') = \lo(w')\setminus\{a\},\\
    \lo(u) &= \lo(\lb a u') = \lo(u')\setminus\{a\}.
  \end{align*}
  For $\rc$, we calculate
  \begin{align*}
    \rc(w)
      &= \rc(w')\\
      &= (\rc(u')\setminus (\lc(v)\cup\lo(v))) \cup \rc(v)\\
      &= (\rc(u)\setminus (\lc(v)\cup\lo(v))) \cup \rc(v).
  \end{align*}
  For $\lc$, we have
  \begin{align*}
    \lc(w)
      &= \lc(w')\cup\{a\}\\
      &= \bigl(\lc(u') \cup (\lc(v)\setminus \lo(u'))\bigr)\cup\{a\}\\
      &= (\lc(u')\cup\{a\}) \cup (\lc(v)\setminus \lo(u'))\\
      &= \lc(u) \cup (\lc(v)\setminus \lo(u'))\\
      &= \lc(u) \cup (\lc(v)\setminus \lo(u)),
  \end{align*}
  where in the last step we
  observe that the only possible difference between $\lo(u')$ and
  $\lo(u)$ is the element $a$, which is in any case contained in
  $\lc(u)$.
  For $\lo$, we compute
  \begin{align*}
    \lo(w)
      &= \lo(w')\setminus\{a\}\\
      &= (\lo(u') \cup (\lo(v)\setminus \lc(u')))\setminus\{a\}\\
      &= (\lo(u')\setminus\{a\}) \cup\bigl((\lo(v)\setminus \lc(u'))\setminus\{a\}\bigr)\\
      &= \lo(u) \cup (\lo(v)\setminus (\lc(u')\cup\{a\}))\\
      &= \lo(u) \cup (\lo(v)\setminus \lc(u)).
  \end{align*}

  \item $\gamma = a\rb$:
  From \cref{tab:namedefs}, we have
  \begin{align*}
    \lc(w) &= \lc(a\rb w') = \lc(w')\setminus\{a\},\\
    \lc(u) &= \lc(a\rb u') = \lc(u')\setminus\{a\},\\
    \lo(w) &= \lo(a\rb w') = \lo(w')\cup\{a\},\\
    \lo(u) &= \lo(a\rb u') = \lo(u')\cup\{a\}.
  \end{align*}
  Also, we have
  \begin{align*}
    a\in\rc(w) &\iff a\in\rc(w') \text{ or } a\notin(\lc(w')\cup\lo(w')),\\
    a\in\rc(u) &\iff a\in\rc(u') \text{ or } a\notin(\lc(u')\cup\lo(u')).
  \end{align*}
  We distinguish cases:
  \begin{enumerate}
  \item If $a \in (\lc(w')\cup\lo(w'))$, then
    $\rc(w) = \rc(w') = (\rc(u')\setminus (\lc(v)\cup\lo(v))) \cup \rc(v)$.
    Recall that $\lc(w') = \lc(u')\cup(\lc(v)\setminus\lo(u'))$
    and $\lo(w') = \lo(u')\cup(\lo(v)\setminus\lc(u'))$ by
    induction, so $\lc(w')\cup\lo(w') = \lc(u')\cup\lc(v)\cup\lo(u')\cup\lo(v)$, and 
    we further distinguish cases on~$a$:
    \begin{enumerate}
    \item If $a \in \lc(u')$ or $a \in \lo(u')$, then $\rc(u) = \rc(u')$ and hence
      $(\rc(u')\setminus(\lc(v)\cup\lo(v))) \cup \rc(v) = (\rc(u) \setminus
      (\lc(v)\cup\lo(v))) \cup \rc(v)$.
    \item If $a \in \lc(v)$ or $a \in \lo(v)$, then
      $\rc(u')\setminus(\lc(v)\cup\lo(v)) = (\rc(u')\cup\{a\})\setminus(\lc(v)\cup\lo(v))$ and
      thus
      $(\rc(u')\setminus(\lc(v)\cup\lo(v))) \cup \rc(v) = (\rc(u)\setminus(\lc(v)\cup\lo(v)))
      \cup \rc(v)$.
    \end{enumerate}
  \item If $a \notin (\lc(w')\cup\lo(w'))$, then we immediately
    have
    $\rc(w) = \rc(w') \cup \set{a} = (\rc(u')\setminus(\lc(v)\cup\lo(v))) \cup
    \rc(v) \cup \set{a} = (\rc(u)\setminus(\lc(v)\cup\lo(v))) \cup \rc(v)$.
  \end{enumerate}
  For $\lc$, we obtain
  \begin{align*}
    \lc(w)
      &= \lc(w')\setminus\{a\}\\
      &= \bigl(\lc(u') \cup (\lc(v)\setminus \lo(u'))\bigr)\setminus\{a\}\\
      &= (\lc(u')\setminus\{a\}) \cup ((\lc(v)\setminus \lo(u'))\setminus\{a\})\\
      &= \lc(u) \cup \bigl(\lc(v)\setminus (\lo(u')\cup\{a\})\bigr)\\
      &= \lc(u) \cup (\lc(v)\setminus \lo(u)),
  \end{align*}
  and for $\lo$, we have
  \begin{align*}
    \lo(w)
      &= \lo(w')\cup\{a\}\\
      &= \lo(u') \cup (\lo(v)\setminus \lc(u')) \cup\{a\}\\
      &= \lo(u')\cup\{a\} \cup (\lo(v)\setminus (\lc(u')\setminus \{a\}))\\
      &= \lo(u) \cup (\lo(v)\setminus \lc(u)),
  \end{align*}
 
  \item $\gamma = \q{a}$: 
  Finally, for $\gamma = \q{a}$ we combine the above patterns.
  From \cref{tab:namedefs} for $\lc$ and $\lo$, we have
  \begin{align*}
    \lc(w) &= \lc(\q{a}w') = \lc(w')\cup\{a\},\\
    \lc(u) &= \lc(\q{a}u') = \lc(u')\cup\{a\},\\
    \lo(w) &= \lo(\q{a}w') = \lo(w')\setminus\{a\},\\
    \lo(u) &= \lo(\q{a}u') = \lo(u')\setminus\{a\},
  \end{align*}
  thus the proof is identical to the case for $\gamma = \lb a$ above.
  For $\rc$, $\q{a}$ behaves 
  the same as $a\rb$, and we can repeat the argument
  given in that case. \qedhere
  \end{itemize}
\end{proofappendix}
\noindent As an immediate consequence, we have
\begin{lemma}\label{cor:rcRNS}
  For a decomposition $w = uv$ of a right non-shadowing word $w$, we have
  $\rc(w) = (\rc(u)\setminus\lc(v)) \cup \rc(v)$.
\end{lemma}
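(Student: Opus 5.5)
We derive the statement directly from the first clause of \cref{lem:cnConcat}. That clause gives
\[
  \rc(w) = (\rc(u)\setminus(\lc(v)\cup\lo(v)))\cup\rc(v)
\]
for every decomposition $w = uv$. It therefore suffices to show
\[
  \rc(u)\setminus(\lc(v)\cup\lo(v)) = \rc(u)\setminus\lc(v),
\]
and this equality holds exactly when $\rc(u)\cap\lo(v)\subseteq\lc(v)$.

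The key step is to use right non-shadowing. By definition, a right non-shadowing word $w$ satisfies $\rc(u)\cap\lo(v)=\emptyset$ for every decomposition $w = uv$. So the set $\rc(u)\cap\lo(v)$ is empty, and removing $\lo(v)$ from $\rc(u)$ changes nothing:
\[
  \rc(u)\setminus(\lc(v)\cup\lo(v)) = (\rc(u)\setminus\lo(v))\setminus\lc(v) = \rc(u)\setminus\lc(v).
\]
Substituting this into the first clause of \cref{lem:cnConcat} gives the claimed identity $\rc(w) = (\rc(u)\setminus\lc(v))\cup\rc(v)$.

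There is no real obstacle. The only point to check is that the definition of $\RNS(\ah)$ quantifies over all decompositions $w = uv$, so the hypothesis applies to the particular decomposition given in the statement. Since it does, the proof is a one-line set-theoretic simplification once \cref{lem:cnConcat} is available.
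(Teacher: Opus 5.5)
Your proposal is correct and is exactly the argument the paper intends: the paper states this lemma as an immediate consequence of the first clause of \cref{lem:cnConcat}, and right non-shadowing gives $\rc(u)\cap\lo(v)=\emptyset$, so removing $\lo(v)$ from $\rc(u)$ has no effect.
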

\noindent 
We next use these concatenation identities to establish a closure
property of right non-shadowing words under concatenation. From the
definition of right non-shadowing words, we see that concatenating two
words $w$ and $u$ can only introduce shadowing to the right if a name
that is right-closed in $w$ clashes with a name that is left-open in
$u$; the condition $\rc(w)\cap \lo(u)=\emptyset$ in the next lemma
rules this~out:

\begin{lemma}\label{lem:rnsConcat}
  Let $w,u \in \RNS(\ah)$ be right non-shadowing words such that
  $\rc(w)\cap \lo(u) = \emptyset$. Then the concatenation $wu$ is also
  right non-shadowing.
\end{lemma}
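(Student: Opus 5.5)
We need to show that for every decomposition $wu = xy$ we have $\rc(x)\cap\lo(y)=\emptyset$. The plan is to split according to where the cut point lies. Either it falls inside $w$, so that $w = w_1w_2$, $x = w_1$ and $y = w_2u$, or it falls inside $u$, so that $u = u_1u_2$, $x = wu_1$ and $y = u_2$. The boundary cut $x=w$, $y=u$ is exactly the hypothesis $\rc(w)\cap\lo(u)=\emptyset$, and it is also covered by either case with an empty piece.

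\emph{Cut inside $w$.} By \cref{lem:cnConcat}(3),
\[
\lo(w_2u) = \lo(w_2)\cup(\lo(u)\setminus\lc(w_2)).
\]
Since $w$ is right non-shadowing, $\rc(w_1)\cap\lo(w_2)=\emptyset$. It remains to show that any $a\in\rc(w_1)\cap\lo(u)$ with $a\notin\lc(w_2)$ leads to a contradiction. Since $a\notin\lo(w_2)$ as well, $a$ does not occur in $w_2$ at all. Then \cref{lem:cnConcat}(1) for $w=w_1w_2$ gives $a\in\rc(w)$, because $a\notin\lc(w_2)\cup\lo(w_2)$. This contradicts $\rc(w)\cap\lo(u)=\emptyset$.

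\emph{Cut inside $u$.} By \cref{lem:cnConcat}(1),
\[
\rc(wu_1) = (\rc(w)\setminus(\lc(u_1)\cup\lo(u_1)))\cup\rc(u_1).
\]
The intersection $\rc(u_1)\cap\lo(u_2)$ is empty because $u$ is right non-shadowing. Suppose now $a\in\rc(w)\cap\lo(u_2)$ with $a\notin\lc(u_1)\cup\lo(u_1)$. Then \cref{lem:cnConcat}(3) applied to $u=u_1u_2$ gives $a\in\lo(u)$, since $a\in\lo(u_2)$ and $a\notin\lc(u_1)$. This contradicts the hypothesis $\rc(w)\cap\lo(u)=\emptyset$.

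The only delicate point is the observation in each case that a name outside $\lc\cup\lo$ of a segment does not occur in that segment. This lets such a name pass through the concatenation formulas unchanged, and the identities of \cref{lem:cnConcat} supply exactly this.
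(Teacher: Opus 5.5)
Your proof is correct and follows essentially the same route as the paper: split according to whether the cut lies in $w$ or in $u$, exclude one half of the relevant set by right non-shadowing of the respective factor, and derive a contradiction with $\rc(w)\cap\lo(u)=\emptyset$ for the other half via \cref{lem:cnConcat}. The only cosmetic difference is that you use the unsimplified identity \cref{lem:cnConcat}(1) throughout, whereas the paper uses the simplified form $(\rc(\cdot)\setminus\lc(\cdot))\cup\rc(\cdot)$ from \cref{cor:rcRNS}; either suffices.
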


\begin{proofappendix}{lem:rnsConcat}
  Let $wu = xy$ be a decomposition. We have to show that
  $\rc(x) \cap \lo(y) = \emptyset$.  If $x = w$ and $y = u$, then we
  are done by hypothesis. Otherwise, the decomposition splits
  either $w$ or $u$. We distinguish cases:

  First, assume that the split lies inside $w$. Then there is a word~$w'$ such that
  $w = xw'$ and hence $y = w'u$. Since $w$ is right-non-shadowing, the decomposition
  $w = xw'$ yields $\rc(x) \cap \lo(w') = \emptyset$. Moreover, by \cref{lem:cnConcat} we
  obtain
  \[
    \lo(y) = \lo(w'u) = \lo(w') \cup (\lo(u) \setminus \lc(w')),
  \]
  and it remains to exclude elements
  in $\rc(x) \cap (\lo(u) \setminus \lc(w'))$. Suppose
  for contradiction that $a\in\rc(x) \cap (\lo(u) \setminus \lc(w'))$.
  Then $a \in \rc(x)$, $a \in \lo(u)$ and $a \notin \lc(w')$,
  and again by \cref{lem:cnConcat}, we have
  \[
    \rc(w) = (\rc(x) \setminus \lc(w')) \cup \rc(w').
  \]
  Thus, we have $a \in \rc(w)$, and since $a \in \lo(u)$ holds true,
  we obtain
  $a \in \rc(w) \cap \lo(u)$, contradicting our assumption
  $\rc(w) \cap \lo(u) = \emptyset$. Consequently,
  $\rc(x) \cap \lo(y) = \emptyset$ in this case.

  Now assume that the split lies inside $u$. Then there exists
  a word~$u'$ such that $u = u'y$ and hence $x = wu'$.
  Since $u$ is right-non-shadowing, the decomposition $u = u'y$
  yields $\rc(u') \cap \lo(y) = \emptyset$. By
  \cref{lem:cnConcat}, we have
  \begin{align*}
    \rc(x) &= \rc(wu')
           = (\rc(w) \setminus \lc(u')) \cup \rc(u'),\\
    \lo(u) &= \lo(u'y)
           = \lo(u') \cup (\lo(y) \setminus \lc(u')).
  \end{align*}
  Let $a \in \rc(x) \cap \lo(y) = \rc(wu') \cap \lo(y)$. If we have $a \in \rc(u')$, then
  $a \in \rc(u') \cap \lo(y)$, contradicting that $u$ is right-non-shadowing. Hence,
  $a \notin \rc(u')$, and therefore $a \in \rc(w) \setminus \lc(u')$ holds. In particular,
  we have $a \notin \lc(u')$, so $a \in \lo(y) \setminus \lc(u')$ and thus, by the above
  expression for $\lo(u)$, we obtain $a \in \lo(u)$.  We conclude that
  $a \in \rc(w) \cap \lo(u)$, contradicting our assumption. Hence, also in this case
  we have $\rc(x) \cap \lo(y) = \emptyset$, whence $wu$ is right non-shadowing.
\end{proofappendix}

\begin{example}[label=ex:boundAndClean]
  We present some examples of words, giving the relevant sets of names
  as defined above and noting whether they are right non-shadowing
  (rns):

  \smallskip
  \centerline{
  \begin{tabular}{c||c|c|c||c}
    $w$ & $\lo(w)$ & $\rc(w)$ & $\lc(w)$ & rns\\
    \hline
    $\lb ab\rb a\rb$ & $\set{b}$ & $\set{a}$ & $\set{a}$ & yes\\
    $\q{a}\lb b$ & $\emptyset$ & $\set{a}$ & $\set{a,b}$ & yes\\
    $a\q{a}a$ & $\set{a}$ & $\emptyset$ & $\emptyset$ & no\\
    $bab\rb a\rb$ & $\set{a,b}$ & $\set{a,b}$ & $\emptyset$ & yes\\
  \end{tabular}}
\end{example}

\noindent As usual, renaming of bound names in a word leads to a notion of
$\alpha$-equivalence:

\begin{defn}[name=$\alpha$-equivalence,label=def:Alpha]
  The relation $\aeq$ on $\RNS(\ah)$ is
  defined inductively by the following rules 1 through 5 and
  closure under transitivity:
  \noindent
  \begin{tasks}[style=enumerate](4)
    \task\begin{minipage}[H]{0pt}
      \begin{prooftree}\label{rule:epsilon}
        \AxiomC{} \UnaryInfC{$\epsilon \aeq \epsilon$}
      \end{prooftree}
    \end{minipage}
    \task
    \begin{minipage}[H]{0pt}
      \begin{prooftree}\label{rule:alpha}
        \AxiomC{$\braket{a}\,w = \braket{b}\,w'$}
        \UnaryInfC{$\lb a\,w \aeq \lb b\,w'$}
      \end{prooftree}
    \end{minipage}
  \end{tasks}
  
  \begin{tasks}[style=enumerate,start=3](4)
  \task
    \begin{minipage}[H]{0pt}
      \begin{prooftree}\label{rule:cong-free-open}
        \AxiomC{$w\aeq w'$}
        \AxiomC{$\gamma \in \set{a, \lb a} $}
        \BinaryInfC{$\gamma w \aeq
          \gamma w'$}
      \end{prooftree}
    \end{minipage}
  \end{tasks}
  \begin{tasks}[style=enumerate,start=4](4)
    \task 
    \begin{minipage}[H]{0pt}
      \begin{prooftree}\label{rule:cong-close}
        \AxiomC{$w\aeq w', a \notin \lo(w)$}
        \UnaryInfC{$a\rb w \aeq a\rb w'$}
      \end{prooftree}
    \end{minipage}
  \end{tasks}
  \begin{tasks}[style=enumerate,start=5](4)
    \task 
    \begin{minipage}[H]{0pt}
    \begin{prooftree}\label{rule:cong-unknown}
    \AxiomC{$w\aeq w', \set{a,b} \cap (\lo(w)) = \emptyset$}
    \UnaryInfC{$\q{a} w \aeq \q{b} w'$}
  \end{prooftree}
    \end{minipage}
  \end{tasks}\medskip

  \noindent (Rule~4.~is phrased separately from Rule~3.~only to ensure
  that the word $a\rb w$ is right-non-shadowing). A \emph{literal language} is a
  language $L_0 \in \pow(\RNS(\ah))$.  We write $[w]_\alpha$ for the
  $\alpha$-equivalence class of~$w$. An \emph{alphatic language} is a
  set $L_\alpha \subseteq \RNS(\ah)/\mathord{\aeq}$ of
  $\alpha$-equivalence classes of right-non-shadowing words. A \emph{data language} is
  a set $L_D \in \pow(\names^*)$ of data words.
\end{defn}
\noindent
We write $\clal(L_0)$ for the \emph{closure under $\alpha$-equivalence} of a literal language~$L_0$,
i.e.~the set $\set{w\in \RNS(\ah) \mid w \aeq w' \text{ for some } w' \in L_0}$ of words $\alpha$-equivalent to some word in~$L_0$.

\begin{fact}\label{fact:supp-lo}
  For $w\in \RNS(\ah)$, the support of $[w]_\alpha$ is
  $\supp([w]_\alpha) = \lo(w) $; in particular, if $w \aeq w'$, then
  $\lo(w) = \lo(w')$.
\end{fact}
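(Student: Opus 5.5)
Two claims need proof: $\lo$ is invariant under $\aeq$, and $\supp([w]_\alpha)=\lo(w)$. We take them in that order.

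\emph{Invariance of $\lo$.} Since $\aeq$ is the transitive closure of Rules~1--5, it suffices to show that each rule preserves $\lo$, by induction on the derivation.
\begin{itemize}
\item Rule~1 is trivial.
\item Rules~3 and~4 follow from the table and the induction hypothesis. For $\gamma=a$ or $a\rb$ we get $\lo(\gamma w)=\lo(w)\cup\set{a}$; for $\gamma=\lb a$ we get $\lo(w)\setminus\set{a}$.
\item Rule~5 gives $\lo(\q{a}w)=\lo(w)\setminus\set{a}=\lo(w)$, using the side condition $a\notin\lo(w)$. Likewise $\lo(\q{b}w')=\lo(w')=\lo(w)$, since $b\notin\lo(w)=\lo(w')$.
\item Rule~2 is the main case. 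From $\braket{a}w=\braket{b}w'$, pick $c$ fresh, so that $(a\,c)\cdot w=(b\,c)\cdot w'$. The function $\lo$ is equivariant, since the table is defined uniformly in names, so $(a\,c)\cdot\lo(w)=(b\,c)\cdot\lo(w')$. Removing $c$ from both sides gives $\lo(w)\setminus\set{a}=\lo(w')\setminus\set{b}$, which is $\lo(\lb a\,w)=\lo(\lb b\,w')$.
\end{itemize}
Note that Rule~2 uses literal equality of abstractions, not $\aeq$ on the body, so no induction hypothesis is needed there.

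\emph{Support.} First, $\aeq$ is equivariant: every rule and side condition is stable under permutation, because $\lo$ and abstraction are equivariant. Hence $[\cdot]_\alpha$ is an equivariant map and $\RNS(\ah)/\mathord{\aeq}$ is a nominal set.

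For the inclusion $\supp([w]_\alpha)\subseteq\lo(w)$, show that $\lo(w)$ supports $[w]_\alpha$. It suffices to consider a swap $(c\,d)$ with $c,d\notin\lo(w)$, since such swaps generate the permutations fixing $\lo(w)$. The goal is $(c\,d)\cdot w\aeq w$, proved by induction on $w$ together with a renaming step.
\begin{itemize}
\item If the first letter is $\lb a$ with $a\in\set{c,d}$, use Rule~2. The condition $c,d\notin\lo(\lb a\,v)=\lo(v)\setminus\set{a}$ gives freshness in $v$ for the name other than $a$.
\item If the first letter is $\q{a}$, use Rule~5.
\item For $e$ or $e\rb$, we have $e\in\lo(w)$, so $e\notin\set{c,d}$, and we apply Rules~3 and~4 to the tail.
\end{itemize}
The tails have larger $\lo$ sets, namely $\lo(v)\subseteq\lo(w)\cup\set{a}$. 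So the induction must be generalized to: for every finite $S\supseteq\lo(w)$ and every $\pi$ fixing $S$, we have $\pi\cdot w\aeq w$. Under a binder $\lb a$ with $\pi(a)\ne a$, factor $\pi$ so that Rule~2 absorbs the rename of $a$.

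\emph{Main obstacle.} This binder step is the delicate point, for two reasons. Right-non-shadowing must be maintained after renaming. Also, the blocking phenomenon must be handled: renaming $a$ into a name occurring free later is impossible. Both are handled because $\pi$ fixes $\lo(w)$, which by \cref{lem:cnConcat} contains all left-open names of the tail except $a$.

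For the reverse inclusion $\lo(w)\subseteq\supp([w]_\alpha)$, let $a\in\lo(w)$ and pick $c$ fresh for $w$. Then $\lo((a\,c)\cdot w)=(a\,c)\cdot\lo(w)$ contains $c$ but not $a$. So $\lo((a\,c)\cdot w)\neq\lo(w)$, and by invariance of $\lo$ we get $(a\,c)\cdot w\not\aeq w$. Hence every support of $[w]_\alpha$ contains $a$.

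Finally, if $w\aeq w'$ then $[w]_\alpha=[w']_\alpha$, so their supports coincide and $\lo(w)=\lo(w')$. This also follows directly from the first step.
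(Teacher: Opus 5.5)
The paper states this as a Fact without proof, so your argument has to stand on its own. Most of it is correct: the rule-by-rule invariance of $\lo$ under $\aeq$ (including the Rule~2 computation), the equivariance of $\aeq$, and the lower bound $\lo(w)\subseteq\supp([w]_\alpha)$ via invariance of $\lo$. For the lower bound, choose $c$ also outside the support under consideration, or argue with the least support.

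\textbf{The gap.} The upper bound fails at the binder case, which is the only non-routine step. In your first bullet you infer from $c,d\notin\lo(v)\setminus\{a\}$ that the name other than $a$ is fresh in $v$, and then apply Rule~2. But Rule~2 demands the literal equality $\braket{c}v=\braket{d}((c\,d)\cdot v)$ in $[\names]\ahs$, as you yourself stress in the invariance part. For $c\neq d$ this holds only if $d$ does not occur in $v$ at all. Not being left-open in $v$ does not exclude bound occurrences.

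Concretely, $w=\lb c\lb d\,d\,c$ has $\lo(w)=\emptyset$. Yet Rule~2 does not relate $w$ to $(c\,d)\cdot w=\lb d\lb c\,c\,d$, because $d$ occurs in $\lb d\,d\,c$. So your ``main obstacle'' paragraph misdiagnoses blocking. Fixing $\lo(w)$ stops the target of the rename from occurring free in the tail, not from occurring bound. Hence ``factor $\pi$ so that Rule~2 absorbs the rename of $a$'' fails whenever $\pi(a)$ occurs bound in $v$. Your reformulation with $S\supseteq\lo(w)$ does not help either: it is no strengthening, since any $\pi$ fixing such an $S$ already fixes $\lo(w)$.

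\textbf{The fix.} What is missing is a detour through a literally fresh name. Take $w=\lb a\,v$, $\pi$ fixing $\lo(w)$, and $a'=\pi(a)\neq a$. Choose $e\notin\{a,a'\}$ occurring neither in $v$ nor in $\pi\cdot v$. Rule~2 then legitimately gives $\lb a\,v\aeq\lb e\,((a\,e)\cdot v)$ and $\lb a'\,(\pi\cdot v)\aeq\lb e\,((a'\,e)\pi\cdot v)$. Moreover $(a'\,e)\pi\cdot v=\tau\cdot((a\,e)\cdot v)$ with $\tau=(a'\,e)\,\pi\,(a\,e)$. This $\tau$ fixes $e$, and it fixes every $x\in\lo(v)\setminus\{a\}$: such $x$ lies in $\lo(w)$, so $\pi(x)=x$, and $x\neq a'$ by injectivity of $\pi$. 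Hence $\tau$ fixes $\lo((a\,e)\cdot v)$.

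You must then apply the induction hypothesis to the renamed tail $(a\,e)\cdot v$, not to $v$. That requires induction on length, or transporting the hypothesis for $v$ along $(a\,e)$ by equivariance of $\aeq$. Rule~3, transitivity and (derivable) symmetry then close the case.

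Alternatively, invoke the admissible rule of \cref{alphaRuleAdmissible}, whose proof is exactly this detour. For $f$ fresh, the permutation $(a\,f)(a'\,f)\pi$ fixes $\lo(v)$. So your hypothesis for $v$ yields $\braket{a}[v]_\alpha=\braket{a'}[\pi\cdot v]_\alpha$, and that lemma gives $w\aeq\pi\cdot w$ directly.
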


\mycomment{ 
\begin{lemma}\label{alphaRightShadow}
  Renaming to an $\alpha$-equivalent word does not introduce shadowing to the right: for all
  $w, w' \in \RNS(\ah)$ such that $w \aeq w'$, $w$ is right non-shadowing iff so is $w'$.
\end{lemma}

\begin{proofappendix}{alphaRightShadow}
For renaming $\lb a w$ to $\lb b w'$, we have $\braket{a}w = \braket{b}w'$ by definition.
Hence, we have $b \notin \lo(w)$ and $a \notin \lo(w')$, and renaming
$a$ to $b$ does not introduce shadowing to the right.
\end{proofappendix}
}

\begin{lemma}\label{alphaRuleAdmissible}
\sloppy
The following rule rule is admissible for $\alpha$-equivalence:
  \begin{prooftree}
    \AxiomC{$\braket{a}[w]_\alpha = \braket{b}[w']_\alpha$}
    \UnaryInfC{$\lb aw \aeq \lb bw'$}
  \end{prooftree}

\end{lemma}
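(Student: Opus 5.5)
We reduce the admissible rule to the primitive Rule~2 of \cref{def:Alpha} plus the congruence rules and transitivity. Suppose $\braket{a}[w]_\alpha = \braket{b}[w']_\alpha$ in $[\names](\RNS(\ah)/\aeq)$. By the definition of abstraction, pick a name $c$ fresh for $a,b,[w]_\alpha,[w']_\alpha$ and also for the literal words $w,w'$. This is possible since only finitely many names occur in them. Then $(a\,c)\cdot[w]_\alpha = (b\,c)\cdot[w']_\alpha$. Since $\aeq$ is equivariant (each rule is stable under permutations, so $\pi\cdot[w]_\alpha = [\pi\cdot w]_\alpha$), this gives $(a\,c)\cdot w \aeq (b\,c)\cdot w'$.

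The chain we aim for is
\[
\lb a\,w \;\aeq\; \lb c\,((a\,c)\cdot w) \;\aeq\; \lb c\,((b\,c)\cdot w') \;\aeq\; \lb b\,w'.
\]

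\textbf{First and third steps.} These are instances of Rule~2. We need $\braket{a}w = \braket{c}((a\,c)\cdot w)$ in $[\names]\ahs$. This holds by the standard characterisation of abstraction: $c$ does not occur in $w$ at all, so $c\fresh w$, and swapping with any further fresh name $d$ gives equal results. The third step is symmetric.

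\textbf{Middle step.} This is Rule~3 with $\gamma=\lb c$, applied to $(a\,c)\cdot w \aeq (b\,c)\cdot w'$.

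\textbf{Right non-shadowing.} We must check that all intermediate words lie in $\RNS(\ah)$. The word $\lb c\,u$ with $u$ right non-shadowing is again right non-shadowing: prefixing $\lb c$ adds nothing to $\rc$ of any prefix, by \cref{tab:namedefs}. Permutations preserve right non-shadowing. Finally, $\lb a w$ and $\lb b w'$ are assumed to lie in $\RNS(\ah)$.

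\textbf{Main obstacle.} The delicate point is the passage from equality of abstractions of $\alpha$-classes to a statement about representatives. The hypothesis gives equality only up to $\aeq$, so we need $(a\,c)\cdot w \aeq (b\,c)\cdot w'$ for a single well-chosen $c$. Two things make this work:
\begin{itemize}
\item the ``some or equivalently all fresh $c$'' clause of $\sim$, which lets us choose $c$ fresh for the literal words and not merely for their classes, whose supports are $\lo(w)$ and $\lo(w')$ by \cref{fact:supp-lo};
\item equivariance of $\aeq$, which we verify by a routine induction over the rules, since the side conditions involving $\lo$ are equivariant.
\end{itemize}
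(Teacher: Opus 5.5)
Your proposal is correct and follows the paper's proof essentially verbatim: you choose a name fresh for $a,b,w,w'$, use equivariance of $\aeq$ to turn the abstraction equality into $(a\,c)\cdot w\aeq(b\,c)\cdot w'$, prefix $\lb c$ by the congruence rule, and connect to $\lb a\,w$ and $\lb b\,w'$ by the renaming rule and transitivity. Your explicit checks of equivariance of $\aeq$ and of right non-shadowing of the intermediate words are details the paper leaves implicit. Your rule numbering (Rule~2 for renaming, Rule~3 for congruence) matches \cref{def:Alpha}, whereas the paper's proof text swaps the two.
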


\begin{proofappendix}{alphaRuleAdmissible}
Let $d$ be fresh for $a,b,w,w'$. Then the antecedence spells out to
$(ad)\cdot[w]_\alpha=(bd)\cdot[w']_\alpha$. By equivariance, $[(ad)\cdot w]_\alpha=[(bd)\cdot w']_\alpha$,
so $(ad)\cdot w \aeq (bd)\cdot w'$. Prefixing by $\lb d$ using rule (2) of \cref{def:Alpha} yields
$\lb d\,(ad)\cdot w \aeq \lb d\,(bd)\cdot w'$. Since $d$ is fresh for $w,w'$, we have
$\braket{d}(ad)\cdot w=\braket{a}w$ and $\braket{d}(bd)\cdot w'=\braket{b}w'$, and by rule (3)
of \cref{def:Alpha} and transitivity, we get $\lb a\,w \aeq \lb b\,w'$.\qedhere
\end{proofappendix}

\noindent From the definition of $\alpha$-equivalence for a letter
$\q{a}$ bound from both sides, it becomes apparent that the identity
of the bound letter is immaterial: In a word $\q{a}w$, the letter~$a$
can be substituted by any other name that is not left-open in~$w$. We
say that the letter is \emph{unknown} and use the following
alternative notation:
\begin{notn}
  We write $u?v$ in lieu of $u\q{a}v$ whenever~$u\q{a}v$ is right
  non-shadowing.
\end{notn}

\begin{rem}\label{rem:rns}
  Our actual interest in the following is exclusively in right
  non-shadowing words; notice, for instance, that we have
  defined $\alpha$-equivalence only on right-non-shadowing words. Technically, the
  property is crucial for maintaining the interpretability of words by
  our nominal automata model introduced in \cref{def:nda};
  essentially, it will be seen that any word having a run in one of
  our automata must be right-non-shadowing. Allowing shadowing on the left (as
  in~$\lb a\lb a$, which is right-non-shadowing) but not on the right is in tune with
  our intuitive view of $\lb a$ and $a\rb$ as representing allocation
  and deallocation of resources: Allocating the same resource twice in
  a row without an intervening deallocation will be regarded as bad
  style (see also the discussion of disciplined words in
  \Cref{sec:semantics}) but conceivably will often be technically
  possible; this corresponds to the fact that words like $\lb a\lb a$
  are right-non-shadowing. Contrastingly, deallocating the same resource twice, as
  in $a\rb a\rb$, or using a resource again after deallocation, as in
  $a\rb a$, would reasonably be regarded as technically impossible, as
  the resource should no longer be at hand after deallocation; this
  corresponds to the fact that $a\rb a\rb$ and $a\rb a$ are
  right shadowing.
\end{rem}

\section{Semantics}\label{sec:semantics}


In order to express data languages using languages with binders, we
provide an interpre\-ta\-tion of words in $\RNS(\ah)$ as words in
$\names^*$.  This is done using a \emph{debracket} function $\db$
turning a literal word into the data word it represents by removing
delimiters. We define the function
$\db\colon \RNS(\ah) \to \names^*$ as the extension to words of the
letter substitution which replaces each of the letters
$a, \lb a, a\rb, \q{a}$ with $a$. Following previous
work~\cite{PruckerSchroder24arXiv,SchroderEA17,UrbatEA21,frank2025alternatingnominalautomataallocation},
we define the \emph{local freshness semantics} $\D(L_\alpha)$ of an alphatic
language~$L_\alpha$ by taking all right-non-shadowing
representatives of $\alpha$-equivalence classes in the languages and
then debracketing; formally:
\begin{equation*}
  \D(L_\alpha) = \set{\db(w)\mid [w]_\alpha \in L_\alpha}.
\end{equation*}
%

\noindent The intuition of viewing $a\rb$ as deallocating a
pre-existing or previously allocated resource~$a$ suggests a discipline of
memory safety in which we insist that every resource is deallocated as
soon as it is no longer needed, i.e.~at the time of its last use. For
instance, the word $\lb a\lb bb\rb$ would leak the resource~$a$, which
is allocated but never used again; in a memory-safe discipline, one
would instead insist that~$a$ is immediately deallocated, i.e.~the
word should be $\lb a\rb\lb bb\rb$. Similarly, $\lb a\lb b a b$ leaks
both~$a$ and~$b$, and has $\lb a\lb b a\rb b\rb$ as a memory-safe
modification. We phrase associated formal definitions as follows.

\begin{defn}
  A word $w$ over $\ah$ is \emph{disciplined} if for every
  decomposition $w = uv$, we have
  $a \in LC(u)\cup LO(u) \Rightarrow a \in RC(u) \vee (a \in RC(v)
  \wedge a \notin LC(v))$).  A language $L$ over $\ah$ is
  \emph{disciplined} if every word in $L$ is so.  Given a word $w$
  over $\ah$, we obtain a disciplined word $\disc(w)$ recursively as
  follows: $\disc(\epsilon) = \epsilon$, $\disc(aw) = a\disc(w)$ if
  $a\in\lo(w)$ and $\disc(aw) = a\rb\disc(w)$ otherwise,
  $\disc(\q{a}w) = \q{a}\disc(w)$, $\disc(\lb aw) = \lb a\disc(w)$ if
  $a\in\lo(w)$ and $\disc(\lb aw) = \q{a}\disc(w)$ otherwise,
  $\disc(a\rb w) = a\rb \disc(w)$.
\end{defn}

\begin{lemma}\label{lem:disc}
  For every word $w\in\ah^*$, the following hold.
  \begin{enumerate}
  \item The word $\disc(w)$ is disciplined.
  \item $\lo(\disc(w))=\lo(w)$.
  \end{enumerate}
\end{lemma}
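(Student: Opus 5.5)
Both claims will be proved by induction on the length of~$w$, following the recursive definition of $\disc$ and the clauses of \cref{tab:namedefs}. I will prove (2) first, since (1) uses it, and I will also record the auxiliary fact $\lc(\disc(w))=\lc(w)$, proved alongside.

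For (2) and the auxiliary fact, write $w=\gamma w'$ and assume $\lo(\disc(w'))=\lo(w')$ and $\lc(\disc(w'))=\lc(w')$. In each case $\disc(w)=\gamma'\disc(w')$, where $\gamma'$ differs from $\gamma$ at most by an added marker: $a\mapsto a\rb$ or $\lb a\mapsto\q{a}$. By \cref{tab:namedefs}, $\lo$ and $\lc$ of $\gamma u$ depend only on whether $\gamma$ carries a binder, not on whether it carries a marker. Adding a marker therefore changes neither $\lo$ nor $\lc$, and the induction hypothesis closes both claims.

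For (1), I will use the characterisation that $w$ is disciplined iff for every decomposition $w=uv$ and every $a\in\lc(u)\cup\lo(u)$, either $a\in\rc(u)$ or $v$ later closes $a$ before rebinding it. Put less formally: every occurrence of $a$ that is last before a rebinding, or last overall, carries a marker. The plan is to show this directly for $\disc(w)$.

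Take any name $a$ occurring in $u$, and consider its last occurrence in $u$, at position $i$ of $\disc(w)$. Let $x$ be the suffix of $w$ after position $i$. By the recursive definition of $\disc$ applied at position $i$, together with (2), this occurrence receives a marker exactly when $a\notin\lo(x)$, unless it already had one. There are two cases.
\begin{itemize}
\item If position $i$ carries a marker, then by \cref{lem:cnConcat}(1) we get $a\in\rc(u)$, because no later occurrence inside $u$ exists.
\item If it carries no marker, then $a\in\lo(x)$, so the next occurrence of $a$ after $i$ lies in $v$ and has no binder. I will then show $a\in\rc(v)\setminus\lc(v)$. First, $a\notin\lc(v)$ because the first occurrence of $a$ in $v$ is unbound. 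For $a\in\rc(v)$, apply the same reasoning to the last occurrence of $a$ in $v$: its suffix contains no $a$, so $\disc$ marks it. Hence $a\in\rc(v)$.
\end{itemize}

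The main obstacle is formal bookkeeping: relating the position-based reasoning above to the set-based definitions via \cref{lem:cnConcat}. In particular, I must check that $\rc(v)$ is determined by the last occurrence only, which follows by iterating \cref{lem:cnConcat}(1) letter by letter. I expect the rest to be routine case checking.
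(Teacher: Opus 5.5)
The paper states this lemma without proof, so there is no argument to compare against. Your proof is correct and is the natural direct one.

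\textbf{Part (2).} This works as you say. By \cref{tab:namedefs}, $\lo(\gamma y)$ depends only on $\lo(y)$, on the name in $\gamma$, and on whether $\gamma$ carries a binder. $\disc$ never changes binder status, so the induction closes. The $\lc$ analogue is equally immediate; it is not needed for (2), but it does transfer ``first occurrence in $v$ is unbound'' from $w$ to $\disc(w)$ in part (1).

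\textbf{Part (1).} Make one piece of bookkeeping explicit: $\disc$ acts position-wise. Unfolding the recursion, the $i$-th letter of $\disc(w)$ is determined by the $i$-th letter of $w$ and by $\lo$ of the suffix of $w$ after position $i$. Hence every suffix of $\disc(w)$ is $\disc$ of the corresponding suffix of $w$. This is what licenses your position-based reasoning. It also shows you do not need (2) at that step, since the case split in $\disc$ already refers to $\lo(x)$ for the original suffix $x$.

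With this in place, your two cases verify exactly the defining condition, namely $a\in\rc(u)$ or $a\in\rc(v)\setminus\lc(v)$. You also use two facts that each follow by a trivial induction on \cref{tab:namedefs}:
\begin{itemize}
\item $\lc(u)\cup\lo(u)$ is precisely the set of names occurring in $u$;
\item $\rc(u)$ is precisely the set of names whose last occurrence in $u$ carries a marker.
\end{itemize}

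\textbf{A wording issue.} Your first informal gloss, that $v$ ``closes $a$ before rebinding it'', is not equivalent to $a\in\rc(v)\setminus\lc(v)$:
\begin{itemize}
\item $v=a\rb\,\lb a$ closes before rebinding, yet $a\notin\rc(v)$;
\item $v=a\,\lb a\,a\rb$ satisfies the condition although it rebinds before closing.
\end{itemize}
Your second rephrasing is accurate: every occurrence of $a$ that is last before a rebinding, or last overall, carries a marker. The body of your argument checks the actual definition, so nothing depends on the faulty gloss; simply drop it.
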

\begin{example}
  The words $w_1=\lb a\lb bb\rb$ and $w_2=\lb a\lb b a b$ indeed fail
  to be disciplined, as witnessed, for instance, by the decompositions
  $w_1=(\lb a)(\lb bb\rb)$ and $w_2=(\lb a\lb b)(ab)$. The function
  $\disc$ acts on these words as indicated above,
  i.e.~$\disc(w_1)=\lb a\rb\lb bb\rb$ and
  $\disc(w_2)=\lb a\lb b a\rb b\rb$.
\end{example}
\noindent Crucially, the transformation to disciplined words does not
affect the local freshness semantics:
\begin{lemma}[label=lem:discSameData]
  For $w\in \RNS(\ah)$, we have
  \begin{equation*}
    \D(\set{[w]_\alpha}) = \D(\set{[\disc(w)]_\alpha)}).
  \end{equation*}
\end{lemma}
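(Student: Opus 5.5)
The plan is to show that $\disc$ is compatible with $\alpha$-equivalence in both directions, and then to use that $\db$ ignores binders and markers. The starting observation is that $\disc$ only adds or removes the decorations $\lb\cdot$ and $\cdot\rb$ and never changes the underlying name of a letter. Hence
\[
  \db(\disc(u))=\db(u)\quad\text{for every word } u .
\]
The definition of $\disc$ is built from equivariant data (membership in $\lo$), so $\disc$ is equivariant. I would also check first that $\disc(w)\in\RNS(\ah)$ whenever $w\in\RNS(\ah)$, so that $[\disc(w)]_\alpha$ makes sense. For this I would argue by induction on $w$ using \cref{lem:cnConcat}. The new markers added by $\disc$ sit only at last occurrences of names: $a\mapsto a\rb$ and $\lb a\mapsto\q{a}$ are applied only when $a\notin\lo$ of the remaining suffix. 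Such a marker therefore cannot shadow a later left-open occurrence.

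\emph{Inclusion $\subseteq$.} I would prove that $w'\aeq w$ implies $\disc(w')\aeq\disc(w)$, by induction on the derivation in \cref{def:Alpha}. Transitivity and rule~1 are trivial. For rules~3 and~4 I would use $\lo(w)=\lo(w')$ from \cref{fact:supp-lo}, which guarantees that $\disc$ makes the same case choice on both sides. The conclusion then follows from the induction hypothesis, and for rule~4 also from $\lo(\disc(w))=\lo(w)$ (\cref{lem:disc}). Rule~5 is handled the same way.

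The interesting case is rule~2, with premise $\braket{a}w=\braket{b}w'$. In this case $a\in\lo(w)$ if and only if $b\in\lo(w')$, so again both sides are treated alike.
\begin{itemize}
  \item If $a\in\lo(w)$, I would use equivariance of $\disc$ to get $\braket{a}\disc(w)=\braket{b}\disc(w')$ and then apply rule~2.
  \item If $a\notin\lo(w)$, we get $\q{a}\disc(w)$ and $\q{b}\disc(w')$. Rule~5 applies because $a,b\notin\lo(\disc(w))=\lo(w)$ by \cref{lem:disc}, together with the induction hypothesis after renaming via the fresh-name characterisation of $\braket{a}w$.
\end{itemize}
Combining this with $\db\circ\disc=\db$ gives $\D(\set{[w]_\alpha})\subseteq\D(\set{[\disc(w)]_\alpha})$: every $\db(w')$ with $w'\aeq w$ equals $\db(\disc(w'))$ with $\disc(w')\aeq\disc(w)$.

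\emph{Inclusion $\supseteq$.} I expect this to be the main obstacle. Given $v\aeq\disc(w)$, I must produce some $w'\aeq w$ with $\db(w')=\db(v)$; I will aim for $w'$ with $\disc(w')=v$. I would prove this by induction on the derivation of $\disc(w)\aeq v$, generalised over $w$, which handles transitivity by chaining. At the head of the derivation, the first letter of $\disc(w)$ comes from a determined first letter of $w$, and I mirror the rule used.
\begin{itemize}
  \item The awkward case is a head $\q{a}$ in $\disc(w)$ that came from $\lb a$ in $w$ with $a\notin\lo(w_0)$, where $w=\lb a\,w_0$, and that is renamed by rule~5 to $\q{b}\,v_0$. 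By induction I obtain $w_0'\aeq w_0$ with $\disc(w_0')=v_0$, and set $w'=\lb b\,w_0'$. By \cref{fact:supp-lo}, $\supp[w_0]_\alpha=\lo(w_0)$, which contains neither $a$ nor $b$. Hence $\braket{a}[w_0]_\alpha=\braket{b}[w_0']_\alpha$, and \cref{alphaRuleAdmissible} yields $\lb a\,w_0\aeq\lb b\,w_0'$.
  \item A head $a\rb$ in $\disc(w)$ that came from a plain $a$ in $w$ is handled by rule~3 on $w$.
  \item All remaining cases copy the rule directly.
\end{itemize}
This gives $\db(v)=\db(\disc(w'))=\db(w')\in\D(\set{[w]_\alpha})$, which completes the proof.
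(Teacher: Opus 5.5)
Your proof is correct, but it is organised differently from the paper's.

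\textbf{The paper's route.} The paper runs a structural induction on $w$ along the recursive clauses of $\disc$. It declares the clause $\lb a\,w_0\mapsto\q{a}\disc(w_0)$ with $a\notin\lo(w_0)$ to be the only non-trivial one. It dispatches that clause by observing that both $\lb a\,w_0$ and $\q{a}\disc(w_0)$ allow the head to be renamed to any $b\notin\lo(w_0)$. The step from an equality of data-word sets for the tail to one for the whole word is left implicit. That step is not immediate, since renaming a binder permutes names inside the tail.

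\textbf{Your route.} You prove a word-level statement: $\disc$ maps $[w]_\alpha$ onto $[\disc(w)]_\alpha$. The forward direction is an induction on $\aeq$-derivations. The backward direction lifts a derivation starting at $\disc(w)$ back to $w$, generalised over $w$ so that transitivity chains. You then finish with $\db\circ\disc=\db$.

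\textbf{Comparison.} Your version is longer and needs the extra check that $\disc$ preserves right non-shadowing. In return, the invariant it carries composes correctly with head renaming. It also shows exactly where \cref{fact:supp-lo}, \cref{lem:disc} and \cref{alphaRuleAdmissible} are needed: the $\lb a\mapsto\q{a}$ situations in both directions. Finally, it yields the stronger fact $\disc[[w]_\alpha]=[\disc(w)]_\alpha$, from which the data-level equality is immediate.

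\textbf{One small repair.} In the forward direction, rule~2 has no $\aeq$-premise. So in the subcase $a\notin\lo(w)$ there is no induction hypothesis available for $\disc(w)\aeq\disc(w')$. You can get it directly instead. Here $w'=(a\,b)\cdot w$ and $b$ does not occur in $w$, and $\disc$ is equivariant. Also $a,b\notin\lo(w)=\lo(\disc(w))=\supp([\disc(w)]_\alpha)$. Hence $(a\,b)$ fixes $[\disc(w)]_\alpha$, which gives $\disc(w)\aeq(a\,b)\cdot\disc(w)=\disc(w')$. Alternatively, make the outer induction on word length.
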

\begin{proofappendix}{lem:discSameData}
  This is immediate by induction on the definition of $\disc$: The base case is trivial,
  and the only non-trivial case in the inductive step is where $w = \lb aw'$ with
  $a \notin \lo(w')$, then we have $\disc(w) = \q{a}\disc(w')$. Since $a \notin
  \lo(w')$, we have for each $b \notin \lo(w')$ that $\lb aw'
  \aeq \lb bw'$ holds, as well as $\q{a}\disc(w') \aeq \q{a}\disc(w')$ by the definition of
  $\alpha$-equivalence. Then we have $\D(\set{[\lb aw]_\alpha}) =
  \D(\set{[\q{a}\disc(w)]_\alpha})$.
\end{proofappendix}

\section{Deallocation Automata}\label{sec:nda}


We now proceed to introduce our automata model with deallocating
transitions. The model internalizes allocation, explicit deallocation,
and their immediate combination (i.e.~$\q{a}$) as first-class
transitions.

\begin{defn}[label=def:nda]
  A \emph{non-deterministic deallocation automaton (NDA)} is a tuple
  $A = (Q,\Delta,i,F)$ where $Q$ is an orbit-finite set of
  \emph{states}, $\Delta\subseteq Q \times \ah \times Q$ is an
  equivariant \emph{transition relation},~$i\in Q$ is the
  \emph{initial} state, and~$F\subseteq Q$ is an equivariant set of
  \emph{final} states. We write $q \trans{\gamma} q'$ both for a
  corresponding element of $\Delta \subseteq Q \times \ah \times Q$
  and for the statement $(q \trans{\gamma} q') \in \Delta$, which can
  be true or false; we refer to $q \trans{\gamma} q'$ as a
  \emph{transition}. In correspondence to their labelling letters, we
  refer to transitions as \emph{free} ($a$), \emph{allocating}
  ($\lb a$), \emph{deallocating} ($a \rb$), \emph{unknown} ($\q{a}$).
  We impose the following conditions on~$\Delta$:
    \begin{itemize}
    \item \emph{Left $\alpha$-invariance}: Let $q,q',q''\in Q$ and
      $a,b\in\names$ such that $\braket{a}\, q' = \braket{b}\,
      q''$. Then the following hold:
      \begin{itemize}
        \item If $q \trans{\lb a} q' $, then
        $q\trans{\lb b} q''$ .
        
      \item If $q \trans{\q{a}} q' $, then $q\trans{\q{b}} q'' $.
      \end{itemize}
    \item \emph{Name erasure}: If $q\trans{a \rb}q'$ or
      $q\trans{\q{a}} q'$, then $a\fresh q'$ .

    \item \emph{Finite branching}: Up to left $\alpha$-invariance,
      every state has only finitely many outgoing transitions: For
      each $q \in Q$, the sets
      $\set{(a, q')\mid (q,a,q') \in \Delta}$,
      $\set{\braket{a} q'\mid (q,\lb a,q') \in \Delta}$,
      $\set{(a,q')\mid (q,a\rb,q')\in \Delta}$, and
      $\set{\braket{a}q'\mid (q,\q{a},q')\in \Delta}$ are finite.
    \end{itemize}
    A state $q$ accepts a word $w$ if there is an \emph{accepting run}
    on~$w$ in the usual sense, i.e.~a sequence of successive
    transitions with sequence~$w$ of labels that starts in~$q$ and
    ends in a final state. The NDA~$A$ accepts the word~$w$ if the
    initial state $i$ accepts $w$.  The \emph{literal language of an
      NDA $A$} is the set $L_0(A)$ of words accepted by $A$, and its
    \emph{alphatic language} $L_\alpha(A)$ is the set of
    $\alpha$-equivalence classes of words accepted by $A$:
    $L_\alpha(A) = \set{[w]_\alpha\mid w \in L_0(A)}$.  The \emph{data
      language} of an NDA $A$ is the set
    $L_D(A) = \db[L_0(A)] \subseteq \names^*$.
    The local freshness semantics (of the alphatic language)
    of $A$ is the set $\D(\la(A))$.
    The
    \emph{degree} of an orbit-finite nominal set $X$ is
    $\deg(X) = \max_{n\in X} |\supp(x)|$. The \emph{degree} $\deg(A)$
    of an NDA $A = (Q,\Delta, i, F)$ is $\deg(Q)$.

\end{defn}

\noindent The notions of languages introduced above are
visualized in \cref{ex:ndaLang}.

\begin{example}\label{ex:ndaLang}
  Let $A = (\set{i,q(a),f(a,b)\mid a,b\in \names, a\neq b},\set{i\trans{\lb a}q(a),
  q(a)\trans{\lb b}f(a,b)\mid a,b\in \names, a\neq b},i,\set{f(a,b)\mid a,b\in \names,
  a\neq b})$ be the NDA visualized in the following scheme:
  \begin{center}
    \begin{tikzpicture}[>=Stealth, node distance=2.8cm, auto,
      state/.style={circle, draw, minimum size=1.1cm, inner sep=0pt}]
      \node[state,initial] (i) {$i$};
      \node[state, right of=i] (q) {$q(a)$};
      \node[state,accepting, right of=q] (f) {$f(a,b)$};

      \path[->]
      (i) edge[above] node {$\lb a$} (q)
      (q) edge[above] node {$\lb b$} (f);
    \end{tikzpicture}
  \end{center}
  For $A$, we have the literal language $L_0(A) = \set{\lb a\lb b\mid a,b\in \names,
  a\neq b}$, the alphatic language $\la(A) = \set{[\lb a\lb b]_\alpha\mid a,b \in \names}$,
  the data language $L_D(A) = \set{ab\mid a,b\in \names, a\neq b}$, and the local freshness
  semantics (of the alphatic language) of $A$ given by
  $\D(\la(A)) = \set{ab\mid a,b \in \names}$.
  \emph{Nota bene}: While $a$ and $b$ are unequal for $L_0(A)$ and $L_D(A)$
  since there is no outgoing $\lb a$ transition from $q(a)$,
  they may coincide in $\la(A)$ and $\D(\la(A))$ because of
  $\alpha$-equivalent renaming.
\end{example}

\noindent We have informally recalled \emph{regular non-deterministic
  nominal automata (RNNA)}~\cite{SchroderEA17} in \Cref{sec:prel};
formally, we can now define an RNNA as an NDA without deallocating or
unknown transitions.\medskip

The right-handed analogue of left $\alpha$-invariance
follows from name erasure and equivariance:
\begin{lemma}[Right $\alpha$-invariance]\label{lem:rightR}
  Let $A = (Q,\Delta,i,F)$ be an NDA, and let $q,q',q''\in Q$,
  $a,b\in\names$ such that $\braket{a}\, q = \braket{b}\, q''$. Then
  the following holds:
  \begin{enumerate}
  \item If $q \trans{a\rb} q'$ in~$A$, then also $q'' \trans{b\rb} q'$.
  \item If $q \trans{\q{a}} q'$ in~$A$, then also
    $q'' \trans{\q{b}} q'$.
  \end{enumerate}
\end{lemma}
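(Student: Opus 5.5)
We derive the statement from equivariance of $\Delta$ together with name erasure; both parts are handled the same way. Fix a transition $q \trans{a\rb} q'$ (resp.\ $q\trans{\q{a}} q'$) and suppose $\braket{a}\,q = \braket{b}\,q''$. If $a=b$, then $\braket{a}q=\braket{a}q''$ forces $q=q''$, since $(a\,c)\cdot q = (a\,c)\cdot q''$ for a fresh $c$, and there is nothing to show. So assume $a\neq b$.

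The key step is to identify $q''$ as a transposition of $q$. From $\braket{a}q=\braket{b}q''$ and $a\neq b$ we get $b\fresh q$, because otherwise the renaming would be blocked, as noted in the preliminaries. By the standard characterisation of abstraction we then have $q''=(a\,b)\cdot q$. Concretely, pick $c\fresh a,b,q,q''$. Then $(a\,c)\cdot q = (b\,c)\cdot q''$, so $q''=(b\,c)(a\,c)\cdot q$. Since $b,c\fresh q$, this equals $(a\,b)\cdot q$, by checking the permutation $(b\,c)(a\,c)$ on $\supp(q)$, which contains no $b$ or $c$.

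Next we apply equivariance of $\Delta$ with the permutation $(a\,b)$. This gives $(a\,b)\cdot q \trans{b\rb} (a\,b)\cdot q'$, that is, $q''\trans{b\rb}(a\,b)\cdot q'$, and analogously with $\q{b}$ in part~(2). It remains to show $(a\,b)\cdot q'=q'$, i.e.\ that $a,b\fresh q'$. By name erasure, $a\fresh q'$. For $b$, use that $\Delta$ is equivariant, so that the map $q\mapsto\set{q'\mid q\trans{a\rb}q'}$, parametrised by $a$, behaves like \cref{lem:supp-pfin}. More directly: the transition $(q,a\rb,q')$ is an element of the nominal set $Q\times\ah\times Q$ lying in the equivariant subset $\Delta$. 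We need $\supp(q')\subseteq\supp(q)\cup\set{a}$. This follows from finite branching: $\set{(a,q')\mid q\trans{a\rb}q'}$ is a finite set supported by $\supp(q)$, hence uniformly supported. Therefore $\supp(q')\subseteq\supp(q)$ for each element, as in \cref{lem:supp-pfin}. In the $\q{a}$ case, the finite set of abstractions $\braket{a}q'$ is likewise supported by $\supp(q)$, so $\supp(q')\setminus\set{a}\subseteq\supp(q)$. Since $b\fresh q$, we obtain $b\fresh q'$, and together with $a\fresh q'$ this gives $(a\,b)\cdot q'=q'$.

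The main obstacle is this support bound for $q'$, which is where finite branching and uniform finite supportedness must be invoked carefully. This matters especially in the $\q{a}$ case, where finiteness holds only up to abstraction. The remaining steps are routine nominal-set calculations.
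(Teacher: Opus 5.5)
Your proof is correct and follows essentially the same route as the paper. You reduce to $a\neq b$, identify $q''=(a\,b)\cdot q$ with $b\fresh q$, get $a\fresh q'$ from name erasure and $b\fresh q'$ from a support bound, then conclude by equivariance of $\Delta$ under $(a\,b)$. The only difference is that you derive the support bound directly from finite branching and \cref{lem:supp-pfin}, while the paper cites the support lemma (\cref{lem:supptrans}), which is proved exactly that way.
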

\begin{proofappendix}{lem:rightR}
  Assume w.l.o.g.~that $a\neq b$. We use \cref{lem:supptrans}. We
  prove both claims synchronously. So let
  $(q \trans{\gamma}q') \in \Delta$ where $\gamma = a \rb$ or
  $\gamma = \q{a}$.  Then $a\fresh q'$ by name erasure. Since
  $\braket{a} q = \braket{b} q''$, we have $(ab)\cdot q = q''$ and
  $b\fresh q$. 
  By \cref{lem:supptrans}, it follows that $b\fresh q'$.  By
  equivariance of~$\Delta$, we thus have
  $(ab)\cdot(q\trans{\gamma}q') = (((ab)\cdot
  q)\trans{(ab)\cdot\gamma}((ab)\cdot q')) = (q'' \trans{\beta} q')
  \in \Delta$ where $\beta = b \rb$ or $\beta = \q{b}$, respectively.
\end{proofappendix}
\noindent We illustrate the $\alpha$-invariance properties of NDAs in
\Cref{fig:NDATransitionProperties}.
\begin{rem}\label{rem:right-alpha}
  For unknown transitions, left $\alpha$-invariance means
  equivalently that whenever $q \trans{\q{a}} q' $, then
  $q \trans{\q{b}} q' $ for every $b \notin \supp(q')$ (while we
  explicitly allow $b\in\supp(q)$). Correspondingly, finite branching
  on unknown transitions means equivalently that for every state~$q$,
  the set
  $\set{q'\mid q\trans{\q{a}}q'\text{ for some~$a\in\names$}}$
  is finite.

  We note that also the converse of \cref{lem:rightR} holds; that is,
  under the remaining conditions imposed on NDA, right
  $\alpha$-invariance implies name erasure by \Cref{lem:supp-pfin};
  details are in the appendix. Name erasure, which confirms the
  intuition that deallocated names should be actually forgotten, thus
  is equivalent to the formally natural property of right
  $\alpha$-invariance.
\end{rem}
\begin{proofappendix}[Details for]{rem:right-alpha}
  We prove name erasure for $a\rb$; name erasure for $\q{a}$ is shown
  similarly. Note that by right $\alpha$-invariance, the equivariant
  map~$\names\times Q\to\pfin Q$,
  $(a,q)\mapsto\{q'\mid q\trans{a\rb}q'\}$ factors through the
  quotient map $\names\times Q\to [\names]Q$,
  $(a,q)\mapsto\braket{a}q$. Then apply \Cref{lem:supp-pfin} and use
  that $a\notin\supp(\braket{a}q)$.
\end{proofappendix}

Similarly as for words, we introduce uniform notation for unknown
transitions:
\begin{notn}
  We write $q \trans{?} q'$ to represent all transitions \mbox{$q \trans{\q{a}} q'$} in an NDA.
\end{notn}

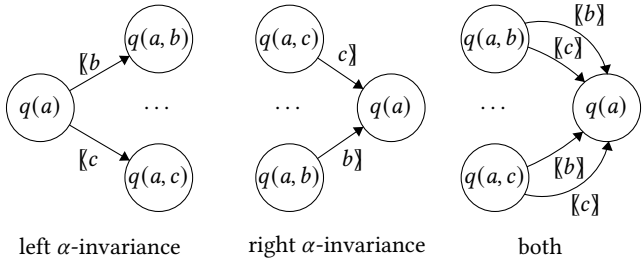
\begin{figure}
  \centering
  \begin{tikzpicture}[scale=0.15]
    \tikzstyle{every node}+=[inner sep=0pt]
    \draw [black] (5.7,-27.7) circle (3);
    \draw (5.7,-27.7) node {$q(a)$};
    \draw [black] (16.2,-21.6) circle (3);
    \draw (16.2,-21.6) node {$q(a,b)$};
    \draw [black] (16.2,-33.9) circle (3);
    \draw (11,-40) node {left $\alpha$-invariance};
    \draw (16.2,-33.9) node {$q(a,c)$};
    \draw (16.2,-27.75) node {$\dots$};
    \draw [black] (27.8,-21.6) circle (3);
    \draw (27.8,-21.6) node {$q(a,c)$};
    \draw [black] (27.8,-33.9) circle (3);
    \draw (27.8,-33.9) node {$q(a,b)$};
    \draw (27.8,-27.75) node {$\dots$};
    \draw (32,-40) node {right $\alpha$-invariance};
    \draw [black] (36.8,-27.7) circle (3);
    \draw (36.8,-27.7) node {$q(a)$};
    \draw [black] (46,-21.6) circle (3);
    \draw (46,-21.6) node {$q(a,b)$};
    \draw [black] (46,-33.9) circle (3);
    \draw (46,-33.9) node {$q(a,c)$};
    \draw (46,-27.75) node {$\dots$};
    \draw (50,-40) node {both};
    \draw [black] (55.9,-27.7) circle (3);
    \draw (55.9,-27.7) node {$q(a)$};
    \draw [black] (8.29,-26.19) -- (13.61,-23.11);
    \fill [black] (13.61,-23.11) -- (12.66,-23.08) -- (13.17,-23.94);
    \draw (10,-24.8) node [above] {$\lb b$};
    \draw [black] (8.28,-29.23) -- (13.62,-32.37);
    \fill [black] (13.62,-32.37) -- (13.18,-31.54) -- (12.67,-32.4);
    \draw (9.89,-31.3) node [below] {$\lb c$};
    \draw [black] (30.28,-23.28) -- (34.32,-26.02);
    \fill [black] (34.32,-26.02) -- (33.93,-25.15) -- (33.37,-25.98);
    \draw (33,-24) node [above] {$c\rb$};
    \draw [black] (30.27,-32.2) -- (34.33,-29.4);
    \fill [black] (34.33,-29.4) -- (33.39,-29.44) -- (33.95,-30.27);
    \draw (33.44,-31.3) node [below] {$b\rb$};
    \draw [black] (48.624,-20.211) arc (103.70508:13.01535:6.057);
    \fill [black] (55.96,-24.73) -- (56.27,-23.84) -- (55.29,-24.06);
    \draw (54.38,-20.44) node [above] {$\q{b}$};
    \draw [black] (56.087,-30.663) arc (-10.66072:-105.22453:6.024);
    \fill [black] (56.09,-30.66) -- (55.45,-31.36) -- (56.43,-31.54);
    \draw (54,-35.3) node [below] {$\q{c}$};
    \draw [black] (53.794,-29.832) arc (-49.14275:-66.7425:19.113);
    \fill [black] (53.79,-29.83) -- (52.86,-29.98) -- (53.52,-30.73);
    \draw (52.58,-32.07) node [below] {$\q{b}$};
    \draw [black] (48.908,-22.313) arc (70.25053:46.46991:14.386);
    \fill [black] (53.96,-25.42) -- (53.72,-24.51) -- (53.03,-25.23);
    \draw (52.5,-23.35) node [above] {$\q{c}$};
\end{tikzpicture}
  \caption{Invariance of NDA transitions under $\alpha$-equivalence}
  \label{fig:NDATransitionProperties}
\end{figure}

\begin{example}
  We model a small logging system using the NDA $A=(Q,\Delta, i,F)$ given by
  $Q=\set{s} \times \names^2 \cup \set{q_1} \times \names^3 \cup \set{q_2} \times \names^4$,
  $i = s(a,d)$, $F = \set{s}\times \names^2$, and $\Delta$ consisting of transitions as shown in
  \cref{fig:NDALog}, implicitly closed under all requirements on the transition relation of an NDA
  (only the fragment for $s(c,d)$, $q_1(a,c,d)$, $q_1(b,c,d)$ and $q_2(a,b,c,d)$ is displayed).

  The infinite alphabet $\names$ represents identifiers of users.  The
  support of a state in the automaton can then be interpreted as the
  currently known users.  Logging in and out is represented as binding
  and deallocating the corresponding identifier, respectively.  The
  example supports two new users with identifiers~$a$ and~$b$ logged
  in simultaneously (the $?$-transition on~$q_4$ represents failed
  login attempts of additional users that would exceed this capacity).
  IDs recognized globally are already contained in the support of the
  initial state.  In our example, this implements an allowlist and a
  blocklist: The user with ID~$c$ is an administrator; she does not
  log in or out and can perform actions (represented as looping free
  $c$-transitions) independently of the state of the machine.  On the
  other hand, the user with ID~$d$ is excluded from the logging
  system; she cannot perform actions (i.e.~free $d$-transitions) in
  any state and can log neither in nor out, formally because renaming
  transitions on $\lb a$ or $\lb b$ into $\lb d$ is blocked
  everywhere, cf.~\Cref{sec:prel}.  The words accepted by~$A$ are then
  precisely the valid logs where each user has properly logged in and
  out.

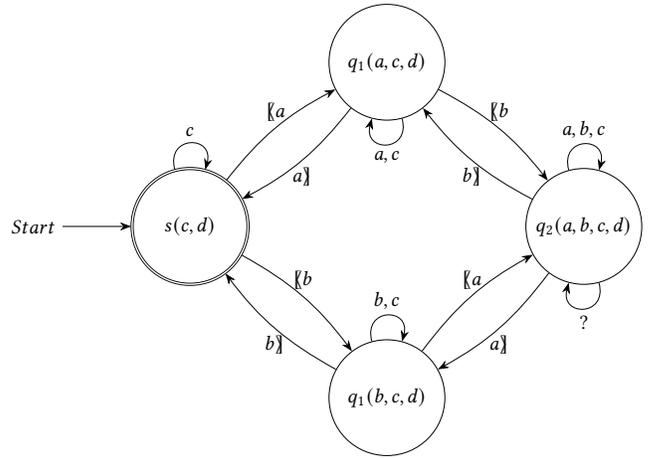
\begin{figure}
  \centering
  \resizebox{\columnwidth}{!}{%
    \begin{tikzpicture}[scale=0.22,>=Stealth]
      \tikzset{state/.style={draw,circle,minimum size=1.8cm}}
      \node[state,accepting] (s) at (22.4,-30.5) {$s(c,d)$};
      \node[state] (qonea) at (36.3,-18.9) {$q_1(a,c,d)$};
      \node[state] (qtwo) at (50.2,-30.5) {$q_2(a,b,c,d)$};
      \node[state] (qoneb) at (36.3,-42.6) {$q_1(b,c,d)$};

      \coordinate (start) at (13.4,-30.5);
      \draw[->] (start) -- (s.west);
      \node[left] at (start) {$Start$};

      \path[->]
        (s) edge[bend left=12] node[above] {$\lb a$} (qonea)
        (qonea) edge[bend left=12] node[below] {$a\rb$} (s)
        (s) edge[bend left=12] node[above] {$\lb b$} (qoneb)
        (qoneb) edge[bend left=12] node[below] {$b\rb$} (s)
        (qonea) edge[bend left=12] node[above] {$\lb b$} (qtwo)
        (qtwo) edge[bend left=12] node[below] {$b\rb$} (qonea)
        (qoneb) edge[bend left=12] node[above] {$\lb a$} (qtwo)
        (qtwo) edge[bend left=12] node[below] {$a\rb$} (qoneb)
        (s) edge[loop above, looseness=3] node[above] {$c$} (s)
        (qonea) edge[loop below, looseness=3] node[below] {$a,c$} (qonea)
        (qoneb) edge[loop above, looseness=3] node[above] {$b,c$} (qoneb)
        (qtwo) edge[loop above, looseness=3] node[above] {$a,b,c$} (qtwo);

      \path[->] (qtwo) edge[loop below,looseness=3] node[below] {$?$} (qtwo);
    \end{tikzpicture}%
  }
  \caption{An example NDA accepting valid logs of sessions with at most two participants (a and~b) and an admin (c).
  The NDA accepts all logs where all users except for the admin are logged out in the end.}\label{fig:NDALog}
\end{figure}

\end{example}



\noindent We record a few basic properties of NDA. First, by
\Cref{lem:supp-pfin}, equivariance and finite branching imply the
following estimates on supports of poststates of transitions:

\begin{lemma}[Support lemma]\label{lem:supptrans} For all NDA
  $A = (Q,\Delta,i,F)$, $q,q' \in Q$, and $a
  \in \names$, we have:
    \begin{enumerate}
        \item\label{it:free} If $q\trans{a}q'\in \Delta$, then $\supp(q') \cup \{a\} \subseteq \supp(q)$,
        \item\label{it:alloc} If $q\trans{\lb a}q' \in \Delta$, then $\supp(q') \subseteq \supp(q) \cup \{a\}$,
        \item \label{it:unknown} If $q\trans{\q{a}}q' \in \Delta$, then $\supp(q') \subseteq
        \supp(q)\setminus \set{a}$, $a \notin \supp(q)$,
        \item\label{it:dealloc} If $q\trans{a \rb}q' \in \Delta$, then $\supp(q') \subseteq \supp(q) \setminus \{a\}$.
        
    \end{enumerate}
\end{lemma}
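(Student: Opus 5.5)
The plan is to derive every item from \cref{lem:supp-pfin}. In each case I build an equivariant map from a nominal set built from $Q$ into $\pfin$ of a suitable nominal set, using equivariance of $\Delta$. Finite branching makes the images finite, and equivariance of $\Delta$ makes the map equivariant. \Cref{lem:supp-pfin} then gives the support of every element of an image as a subset of the support of its argument. Name erasure supplies the extra removal of~$a$.

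For item~\eqref{it:free}, take $f\colon Q\to\pfin(\names\times Q)$, $q\mapsto\set{(a,q')\mid q\trans{a}q'}$. This is equivariant because $\Delta$ is, and finite by finite branching. So $\supp(a,q')=\set{a}\cup\supp(q')\subseteq\supp(q)$.

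For item~\eqref{it:alloc}, take $q\mapsto\set{\braket{a}q'\mid q\trans{\lb a}q'}\in\pfin([\names]Q)$. This gives $\supp(q')\setminus\set{a}=\supp(\braket{a}q')\subseteq\supp(q)$, that is, $\supp(q')\subseteq\supp(q)\cup\set{a}$.

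For item~\eqref{it:dealloc}, use the map for deallocating transitions, $q\mapsto\set{(a,q')\mid q\trans{a\rb}q'}$. This gives $a\in\supp(q)$ and $\supp(q')\subseteq\supp(q)$. Name erasure ($a\fresh q'$) then sharpens this to $\supp(q')\subseteq\supp(q)\setminus\set{a}$.

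For item~\eqref{it:unknown}, the inclusion $\supp(q')\subseteq\supp(q)\setminus\set{a}$ follows as in item~\eqref{it:alloc}. Applied to $q\mapsto\set{\braket{a}q'\mid q\trans{\q{a}}q'}$, \cref{lem:supp-pfin} gives $\supp(q')\setminus\set{a}\subseteq\supp(q)$. Name erasure gives $a\notin\supp(q')$, so $\supp(q')\subseteq\supp(q)\setminus\set{a}$.

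The main obstacle is the final clause, $a\notin\supp(q)$. The supporting-set argument above bounds only $\supp(q')$ and yields nothing about~$a$ relative to~$q$. I would first try to exploit \cref{lem:rightR}: from $q\trans{\q{a}}q'$, use left $\alpha$-invariance to replace $a$ by some $b$ fresh for $q$ and $q'$, and then try to transport back along $(a\,b)$ using equivariance. Equivariance, however, only relates $q\trans{\q{b}}q'$ to $(a\,b)\cdot q\trans{\q{a}}q'$, and it cannot force $a$ out of $\supp(q)$.

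I therefore expect this clause to hold only for the representative in which the bound name is chosen fresh for~$q$. By left $\alpha$-invariance, every unknown transition $q\trans{\q{a}}q'$ has such a representative $q\trans{\q{b}}q'$ with $b\notin\supp(q)$. For that representative, the conjunction of both parts of item~\eqref{it:unknown} holds. This reading is consistent with the $?$-notation, where the bound letter is immaterial.

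If the intended statement is literally for all labels~$a$, I expect a counterexample: a state $q$ with $a\in\supp(q)$ and an unknown transition into a state with empty support. \Cref{rem:right-alpha} explicitly allows this case. I would check this case before committing to a proof of that clause.
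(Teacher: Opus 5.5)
Your argument for items (1), (2), (4) and for the inclusion $\supp(q')\subseteq\supp(q)\setminus\set{a}$ in item (3) is the paper's proof. The paper applies \cref{lem:supp-pfin} to the same four maps (for unknown transitions, to $q\mapsto\set{\braket{a}q'\mid q\trans{\q{a}}q'}$) and then adds name erasure. In item (2) you use the equality $\supp(\braket{a}q')=\supp(q')\setminus\set{a}$, which is what the argument needs. The paper's hint quotes only the inclusion $\supp(\braket{a}q')\subseteq\supp(q')\setminus\set{a}$, which is the direction that does not help.

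Your diagnosis of the remaining clause $a\notin\supp(q)$ is correct, and you can settle it instead of leaving it open. The paper's proof only says that name erasure is used, but name erasure constrains the target $q'$, not the source. The clause is in fact false for NDA as defined. Here is a counterexample:
\begin{itemize}
\item Take $Q=\names+\set{\ast}$, $\Delta=\set{(c,\q{b},\ast)\mid b,c\in\names}$, any $i\in\names$, and $F=\set{\ast}$.
\item $\Delta$ is equivariant, and name erasure holds because $\supp(\ast)=\emptyset$.
\item Left $\alpha$-invariance holds: $\braket{b}\ast=\braket{b'}q''$ forces $q''=\ast$, and $c\trans{\q{b'}}\ast$ is in $\Delta$.
\item Finite branching holds because all classes $\braket{b}\ast$ coincide.
\end{itemize}
Yet $c\trans{\q{c}}\ast$ with $c\in\supp(c)$.

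More generally, suppose $q\trans{\q{a}}q'$ for some $a$. Name erasure gives $a\notin\supp(q')$, so $\braket{a}q'=\braket{c}q'$ for every $c\notin\supp(q')$. Left $\alpha$-invariance then forces $q\trans{\q{c}}q'$ for every $c\in\supp(q)\setminus\supp(q')$, exactly as \cref{rem:right-alpha} states. So the provable content of item (3) is the first inclusion, together with your reformulation that every unknown transition has a representative label fresh for $q$.
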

\noindent That is, the support evolves along transitions in analogy to
the register paradigm: Transitions for $a$ or $a\rb$ can only be taken
if~$a$ is in memory (i.e.~in the support); names~$a$ are added to the
memory via transitions $\lb a$, i.e.~morally by reading~$a$ from the
input; and transitions under~$a\rb$ or~$\q{a}$ erase~$a$ from
memory. This entails that, as indicated in \Cref{rem:rns}, NFA operate
only on right-non-shadowing words:

\begin{proofappendix}{lem:supptrans}
  \sloppy Apply \Cref{lem:supp-pfin} to the following maps to prove
  each claim:
  \begin{enumerate}
  \item $Q\to\pfin(\names\times Q)$,
    $q\mapsto\{(a,q')\mid q\trans{a}q'\}$;
  \item $Q\to\pfin([\names]Q)$,
    $q\mapsto\{\braket{a}q'\mid q\trans{\lb a}q'\}$;
  \item $Q\to\pfin(Q)$,
    $q\mapsto\{\braket{a}q'\mid q\trans{\q{a}}q'\text{ for some~$a$}\}$;
  \item $Q\to\pfin(\names\times Q)$,
    $q\mapsto\{q'\mid q\trans{a\rb}q'\}$.
  \end{enumerate}
  For Claims~\ref{it:unknown} and~\ref{it:dealloc}, one additionally
  uses name erasure, and for Claim~\ref{it:alloc}, one uses
  additionally that
  $\supp(\braket{a}q')\subseteq\supp(q')\setminus a$.

\end{proofappendix}


\begin{proposition}\label{prop:ndaRns}
  Every word accepted by an NDA is right non-shadowing. 
\end{proposition}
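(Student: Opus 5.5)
We prove a stronger statement about runs that need not end in a final state. The claim is: for every run $q_0 \trans{\gamma_1} q_1 \cdots \trans{\gamma_n} q_n$ in~$A$ with label $w=\gamma_1\cdots\gamma_n$, we have
\[
\lo(w)\subseteq\supp(q_0)
\quad\text{and}\quad
\rc(w)\cap\supp(q_n)=\emptyset .
\]
Given this, right non-shadowing follows quickly. Take any decomposition $w=uv$ of the label of an accepting run from~$i$, and let~$q$ be the state reached after reading~$u$. The $u$-part of the run gives $\rc(u)\cap\supp(q)=\emptyset$. The $v$-part is a run starting at~$q$, so it gives $\lo(v)\subseteq\supp(q)$. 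Hence $\rc(u)\cap\lo(v)=\emptyset$. Note that finality of the last state is not used.

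We prove the two inclusions by induction on~$n$, peeling off letters using the recursive clauses in \cref{tab:namedefs}. The argument rests on the support lemma (\cref{lem:supptrans}).

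\emph{The $\lo$ part.} We peel off the first letter, $w=\gamma w'$ with $q_0\trans{\gamma}q_1$. By induction, $\lo(w')\subseteq\supp(q_1)$.
\begin{itemize}
\item If $\gamma=a$, then $\lo(w)=\lo(w')\cup\{a\}$. By \cref{lem:supptrans}(\ref{it:free}), $\supp(q_1)\cup\{a\}\subseteq\supp(q_0)$.
\item If $\gamma=\lb a$, then $\lo(w)=\lo(w')\setminus\{a\}\subseteq\supp(q_1)\setminus\{a\}\subseteq\supp(q_0)$ by \cref{lem:supptrans}(\ref{it:alloc}).
\item If $\gamma=\q{a}$, then $\lo(w)=\lo(w')\setminus\{a\}$, and $\supp(q_1)\subseteq\supp(q_0)$ by \cref{lem:supptrans}(\ref{it:unknown}).
\item If $\gamma=a\rb$, then $\lo(w)=\lo(w')\cup\{a\}$. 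We have $\supp(q_1)\subseteq\supp(q_0)$ by \cref{lem:supptrans}(\ref{it:dealloc}). We still need $a\in\supp(q_0)$. This holds because otherwise $(a\,b)$ with $b$ fresh would fix~$q_0$ and, by name erasure, fix~$q_1$, so equivariance would yield infinitely many transitions $q_0\trans{b\rb}q_1$. That contradicts finite branching. This is the same argument as in \cref{lem:supp-pfin} applied to $q\mapsto\{(a,q')\mid q\trans{a\rb}q'\}$.
\end{itemize}

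\emph{The $\rc$ part.} Here it is more convenient to peel off the last letter, $w=w'\gamma$ with $q_{n-1}\trans{\gamma}q_n$. By \cref{lem:cnConcat}(1) applied with the one-letter word~$\gamma$,
\[
\rc(w)=\bigl(\rc(w')\setminus(\lc(\gamma)\cup\lo(\gamma))\bigr)\cup\rc(\gamma).
\]
\begin{itemize}
\item For $\gamma\in\{a\rb,\q{a}\}$, we have $\rc(\gamma)=\{a\}$, and $a\fresh q_n$ by name erasure. For the other letters, $\rc(\gamma)=\emptyset$.
\item It remains to show that $\rc(w')\setminus\{a\}$ is disjoint from $\supp(q_n)$. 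By induction, $\rc(w')\cap\supp(q_{n-1})=\emptyset$. By \cref{lem:supptrans}, $\supp(q_n)\subseteq\supp(q_{n-1})\cup\{a\}$ in every case.
\end{itemize}

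The main obstacle, mild as it is, is the deallocating case of the $\lo$ induction. The support lemma as stated does not record $a\in\supp(q)$ for $q\trans{a\rb}q'$, so this must be extracted from finite branching together with equivariance, as sketched above. Everything else is routine bookkeeping with \cref{tab:namedefs} and \cref{lem:cnConcat}.
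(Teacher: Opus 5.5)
Your proof is correct and is essentially the paper's argument. The paper also observes that a name in $\rc(u)$ is absent from the support of the state reached after $u$, and that by the support lemma such a name cannot occur free or deallocated in $v$ before being reallocated, so it cannot be left-open in $v$. Your two invariants $\lo(w)\subseteq\supp(q_0)$ and $\rc(w)\cap\supp(q_n)=\emptyset$ make this rigorous, including the fact $a\in\supp(q)$ for $q\trans{a\rb}q'$. The paper uses that fact implicitly; it follows from \cref{lem:supp-pfin} together with finite branching, exactly as you note.
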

\begin{proofappendix}{prop:ndaRns}
  Let $w$ be a word accepted by an NDA, and let $w = uv$ be any decomposition of $w$.
  We show that $\rc(u) \cap \lo(v) = \emptyset$.
  If $a \in \rc(u)$, then $a$ is deallocated (or unknown) in $u$,
  so $a$ is removed from the state support after processing $u$.
  By \cref{lem:supptrans}, once $a$ is removed from the support,
  no free or deallocating transition is possible for $a$, hence
  it cannot appear again in the remaining word $v$,
  and in particular cannot be left-open in $v$.
  Hence $\rc(u) \cap \lo(v) = \emptyset$.
\end{proofappendix}

For convenience, we introduce \emph{$\epsilon$-transitions} into
NDA. An \emph{NDA with $\epsilon$-transitions} is defined like an NDA
but can additionally have \emph{$\epsilon$-transitions},
i.e.~transitions of the form $q\trans{\epsilon}q'$; these are also
subject to equivariance and finite branching, i.e.~for every~$q$, the
set~$\{q'\mid q\trans{\epsilon}q'\}$ is required to be finite; this
implies an extension of the support lemma (\Cref{lem:supptrans})
stating that whenever $q\trans{\epsilon}q'$, then
$\supp(q')\subseteq\supp(q)$. Acceptance is then defined as usual in
automata models with $\epsilon$-transitions, i.e.~accepting runs can
contain $\epsilon$-transitions, and these transitions do not consume
letters of the input word. (Formally, an accepting run on a word~$w$
has the shape $i\trans{v_1}q_1\trans{v_2}q_2\dots\trans{v_n}q_n\in F$
where each $v_1$ is either a single letter from~$\ah$ or~$\epsilon$
and $v_1\dots v_n=w$.)

\begin{lemma}[$\epsilon$-Elimination]\label{lem:eps}
  For every NDA with $\epsilon$-transitions, there exists an NDA
  accepting the same literal language.
\end{lemma}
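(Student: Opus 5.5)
We use the standard $\epsilon$-closure construction. Given an NDA with $\epsilon$-transitions $A=(Q,\Delta,i,F)$, write $q\Rightarrow q'$ for the reflexive-transitive closure of the $\epsilon$-transition relation. We define $A'=(Q,\Delta',i,F')$ on the same state set as follows. We put $q\trans{\gamma}q''$ in $\Delta'$ iff there is $q'$ with $q\Rightarrow q'\trans{\gamma}q''$ in $A$. We let $F'=\set{q\mid q\Rightarrow q'\text{ for some }q'\in F}$. Equality of literal languages is then the usual argument. Any accepting run of $A$ decomposes as blocks of $\epsilon$-steps followed by a letter step, and each block-plus-letter is a single $\Delta'$-transition; the trailing $\epsilon$-block is absorbed by $F'$. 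Conversely, each $\Delta'$-transition expands back into a run of $A$. No $\alpha$-equivalence enters, since we speak only of literal languages.

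The work lies in checking that $A'$ is an NDA.

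\emph{Equivariance} of $\Delta'$ and $F'$ holds because $\Rightarrow$ is equivariant, being the reflexive-transitive closure of an equivariant relation.

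\emph{Name erasure} is inherited directly. If $q\trans{a\rb}q''$ in $\Delta'$, then $q'\trans{a\rb}q''$ in $\Delta$ for some $q'$, so $a\fresh q''$; the case of $\q{a}$ is the same.

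\emph{Left $\alpha$-invariance} also transfers. Suppose $q\Rightarrow q'\trans{\lb a}q''$ and $\braket{a}q''=\braket{b}q'''$. Left $\alpha$-invariance of $A$ at $q'$ gives $q'\trans{\lb b}q'''$, hence $q\trans{\lb b}q'''$ in $\Delta'$. The case of $\q{a}$ is identical.

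\emph{Finite branching} is the main obstacle. By finite branching of $\epsilon$-transitions together with the extended support lemma ($\supp(q')\subseteq\supp(q)$ along $\epsilon$-steps), the set $E(q)=\set{q'\mid q\Rightarrow q'}$ is contained in the set of states supported by $\supp(q)$. In an orbit-finite nominal set, the elements supported by a fixed finite set form a finite set, since each orbit contributes only finitely many such elements. Hence $E(q)$ is finite; alternatively, one can argue by König's lemma applied to the finitely branching $\epsilon$-graph restricted to this finite set. The outgoing transitions of $q$ in $A'$, taken up to left $\alpha$-invariance, are then the finite union over $q'\in E(q)$ of the corresponding finite sets for $q'$ in $A$. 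For example, $\set{\braket{a}q''\mid q\trans{\lb a}q''\in\Delta'}=\bigcup_{q'\in E(q)}\set{\braket{a}q''\mid q'\trans{\lb a}q''\in\Delta}$, and similarly for the free, deallocating and unknown transitions. Each of these unions is finite.

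This completes the verification, so $A'$ is an NDA accepting $L_0(A)$.
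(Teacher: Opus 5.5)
Your proof is correct and follows the paper's route: the standard $\epsilon$-closure construction on the same state set. Equivariance, name erasure and left $\alpha$-invariance are inherited, and finite branching comes from orbit-finiteness together with the support bound $\supp(q')\subseteq\supp(q)$ along $\epsilon$-steps. You phrase that last step as finiteness of $E(q)$; the paper phrases it as uniform finite support of the successor sets. Your saturated set of final states $F'$ is in fact necessary. The paper keeps $F$ unchanged, and as written that loses words whose accepting runs end in $\epsilon$-steps (e.g.\ $i\trans{\epsilon}f$ with $f\in F$, $i\notin F$), so your version repairs a small gap.
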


\begin{proofsketch}
  By the usual method of $\epsilon$-elimination as for NFA; one checks
  easily that applying this construction to an NDA with
  $\epsilon$-transitions does yield an NDA.
\end{proofsketch}

\begin{proofappendix}{lem:eps}
  We proceed by the standard method of
  $\epsilon$-elimination. Let~$A=(Q,\Delta,i,F)$ be an NDA with
  $\epsilon$-transitions; we construct an NDA $A'=(Q,\Delta',i,F)$
  without $\epsilon$-transitions that accepts the same language. We
  define $\Delta'$ by taking $(q\trans{\gamma}q')\in\Delta'$ for
  $q,q'\in q$ and $\gamma\in\ahs$ iff there exist $q_0,\dots,q_n\in Q$
  such that
  $q=q_0\trans{\epsilon}q_1\dots\trans{\epsilon}q_n\trans{\alpha}q'$
  in~$A$. It is immediate from the definition of accepting runs in~$A$
  that~$A'$ accepts the same literal language as~$A$ (every accepting
  run in~$A$ induces an accepting run in~$A'$ and vice versa). It
  remains only to show that~$A'$ is indeed an NDA. Equivariance is
  clear; name erasure is inherited from~$A$ by the support lemma for
  $\epsilon$-transitions. Left $\alpha$-invariance is immediately
  inherited from~$A$. Finally, finite branching follows from the fact
  that the relevant sets of successors as listed in \Cref{def:nda}
  are, by the support lemma, uniformly finitely supported, and hence
  finite because~$Q$ is orbit-finite.
\end{proofappendix}

By equivariance of the transition relation and the set of final
states, it is immediate that acceptance of words is equivariant:
\begin{lemma}\label{lem:acceptEquiv}
  If a state~$q$ of an NDA accepts a word $w$, then $\pi \cdot q$
  accepts $\pi \cdot w$.
\end{lemma}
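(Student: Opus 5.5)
The plan is to transport an accepting run along~$\pi$ step by step, using only equivariance of~$\Delta$ and of~$F$.

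Suppose $q$ accepts $w=\gamma_1\cdots\gamma_n$. Then there is an accepting run
\[
q=q_0\trans{\gamma_1}q_1\trans{\gamma_2}\cdots\trans{\gamma_n}q_n \quad\text{with } q_n\in F .
\]
For each $k$ the triple $(q_{k-1},\gamma_k,q_k)$ lies in $\Delta\subseteq Q\times\ah\times Q$. This product carries the coordinatewise action, and $\Delta$ is an equivariant subset of it. Hence $\pi\cdot(q_{k-1},\gamma_k,q_k)=(\pi\cdot q_{k-1},\pi\cdot\gamma_k,\pi\cdot q_k)\in\Delta$, which means $\pi\cdot q_{k-1}\trans{\pi\cdot\gamma_k}\pi\cdot q_k$. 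Here $\pi\cdot\gamma_k$ is the letter-wise action on $\ah$: binders and markers are preserved and only the underlying name is renamed.

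Chaining these transitions gives a run
\[
\pi\cdot q_0\trans{\pi\cdot\gamma_1}\pi\cdot q_1\cdots\trans{\pi\cdot\gamma_n}\pi\cdot q_n .
\]
It starts in $\pi\cdot q$, and its label sequence is $(\pi\cdot\gamma_1)\cdots(\pi\cdot\gamma_n)=\pi\cdot w$, since the action on $\ahs$ is letter-wise by definition. Equivariance of $F$ gives $\pi\cdot q_n\in F$, so the run is accepting and $\pi\cdot q$ accepts $\pi\cdot w$.

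If $\epsilon$-transitions are present, the same argument applies to them, since they are also required to be equivariant. Alternatively, one may first eliminate them by \cref{lem:eps}.

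There is no real obstacle here. The only point to check is that the action on transitions is coordinatewise and that the action on words is letter-wise, which is exactly how both were defined. No $\alpha$-invariance or support argument is needed.
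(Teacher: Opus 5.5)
Your proposal is correct and takes the same approach as the paper, which simply notes that the lemma is immediate from equivariance of $\Delta$ and $F$. Your argument spells out that remark: you transport the accepting run along $\pi$ transition by transition, using the coordinatewise action on $\Delta$ and the letter-wise action on words.
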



%


We have seen above that RNNA can be regarded as NDA without
deallocating or unknown transitions. Under local freshness semantics,
we also have the converse inclusion:
\begin{proposition}\label{prop:NDA2RNNA}
  Under local freshness semantics, NDA and RNNA are equiexpressive.
\end{proposition}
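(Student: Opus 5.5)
One direction is immediate. An RNNA is an NDA without deallocating or unknown transitions, and its alphatic language is the same set of $\alpha$-equivalence classes in both readings. The RNNA local freshness semantics is defined via bar strings, with $\newletter{a}$ playing the role of $\lb a$. For words without $a\rb$ and $\q{a}$, the rules of \cref{def:Alpha} reduce to the bar-string $\alpha$-equivalence: Rules~2 and~3 are precisely renaming of a bound name and congruence, and right non-shadowing is vacuous since $\rc(w)=\emptyset$. Hence $\D(\la(A))$ coincides with the RNNA data semantics.

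For the converse, given an NDA $A=(Q,\Delta,i,F)$ we construct an RNNA $A'$ with $\D(\la(A'))=\D(\la(A))$. The construction keeps the states and replaces each transition $q\trans{a\rb}q'$ by $q\trans{a}q'$. It also replaces each unknown transition $q\trans{\q{a}}q'$ by an allocating transition $q\trans{\lb a}q'$, followed by $\epsilon$-elimination as in \cref{lem:eps}. We then check the RNNA axioms:
\begin{itemize}
\item Equivariance and finite branching are inherited.
\item The new free transitions satisfy the support lemma condition $\supp(q')\cup\{a\}\subseteq\supp(q)$, by \cref{lem:supptrans}(\ref{it:dealloc}) and because $a\in\supp(q)$ follows by \cref{lem:supp-pfin} applied to $q\mapsto\{(a,q')\mid q\trans{a\rb}q'\}$.
\item For the new allocating transitions, left $\alpha$-invariance follows from left $\alpha$-invariance of unknown transitions together with $a\fresh q'$, via \cref{rem:right-alpha}. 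Since $a\fresh q'$, we have $\braket{a}q'=\braket{b}q'$ for all $b\fresh q'$, and the unknown transitions provide exactly these.
\end{itemize}

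The semantic comparison maps each $A$-word $w$ to $A'$-word $w^\flat$ by replacing $a\rb$ with $a$ and $\q{a}$ with $\lb a$. Runs correspond bijectively, $\db(w)=\db(w^\flat)$, and $L_0(A')=\{w^\flat\mid w\in L_0(A)\}$. It remains to show that $\db$ of the $\alpha$-variants agree, that is,
\[
  \{\db(v)\mid v\aeq w\}=\{\db(u)\mid u\aeq w^\flat\}
\]
for every accepted $w$. Both sets are sets of data words, so one can argue with accepted words directly. By \cref{lem:acceptEquiv} and left $\alpha$-invariance, the words accepted from a state are closed under the renamings generated by Rule~2 (and Rule~5) at the first letter. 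I would prove by induction along the run that
\[
  \db[\clal(L_0(q))]=\db[\clal(L_0^{A'}(q))]
\]
for every state $q$.

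The main obstacle is this induction. Deallocation changes which renamings are allowed: $\lb b$ after $a\rb$ may reuse $a$, whereas in $w^\flat$ the name $a$ remains free, so one must check that the bar-string renaming permits the same data words. I expect this to go through because in the NDA, after $a\rb$ the state no longer has $a$ in its support, by name erasure. Consequently a later $\lb b$ can be renamed to $\lb a$ by left $\alpha$-invariance. In $A'$, after the free $a$ the same poststate $q'$ still has $a\fresh q'$, so renaming $\lb b$ to $\lb a$ is not blocked there either. Symmetrically, renaming $\q{a}$ corresponds to renaming a $\lb a$ whose name is fresh for the poststate. The inductive step therefore reduces to the equality $\braket{a}q'=\braket{b}q'$ computed on the identical state sets, which is where the key work lies.
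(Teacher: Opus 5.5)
\textbf{Verdict: the construction is right, but the semantic comparison has a genuine gap, and the mechanism you sketch for it fails.}

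Your construction (replace $a\rb$ by $a$ and $\q{a}$ by $\lb a$) and your check of the RNNA axioms coincide with the paper's. Your derivation of left $\alpha$-invariance for the new allocating transitions from name erasure is more explicit than the paper's. Two small slips: runs do not correspond bijectively, since $q\trans{a}q'$ and $q\trans{a\rb}q'$ collapse to one transition, but surjectivity is all you use; and no $\epsilon$-transitions arise. The paper's argument consists only of this run correspondence, concluding $L_D(A)=L_D(B)$. You are right that local freshness semantics also requires comparing $\alpha$-closures.

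\textbf{Where the gap is.} You claim that, by \cref{lem:acceptEquiv} and left $\alpha$-invariance, each $L_0(q)$ is closed under Rule~2/Rule~5 renamings of the first letter. This is false. Left $\alpha$-invariance only lets you rename $q\trans{\lb a}q'$ (or $q\trans{\q{a}}q'$) to $b$ when $b\fresh q'$. By contrast, $\alpha$-equivalence allows every $b\notin\lo(v)$ for the remaining word $v$, and $\supp(q')$ can be strictly larger than $\lo(v)$. The example in \cref{sec:ndm} shows exactly this: $q_0(b)$ accepts $\lb a\,a\rb$ and $\lb a\,\q{c}$, but not $\lb b\,b\rb$ or $\lb a\,\q{b}$.

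For the same reason, reducing the inductive step to $\braket{a}q'=\braket{b}q'$ on the shared state set is the wrong invariant. On both sides, the admissible renamings are governed by the left-open (respectively free) names of the remaining \emph{word}, not by state supports. So the step you flag as carrying the key work is not established by the mechanism you indicate. Your worry about $a$ remaining free in $w^\flat$ after a deallocation is also a non-issue: in a bar string, free occurrences to the \emph{left} of $\newletter{b}$ never block renaming it; only free occurrences in its scope do.

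\textbf{What is missing: a word-level lemma.} The automaton plays no role in it. For every $w\in\RNS(\ah)$,
\[
  \{\db(v)\mid v\aeq w\}=\{\db(x)\mid x\aeq w^\flat\},
\]
with bar-string $\alpha$-equivalence on the right. Prove it by induction on the length of $w$, using inversion on the leading letter.
\begin{itemize}
\item The key fact is that a name is free in a bar string iff its first occurrence is unbarred. Hence the free names of $v^\flat$ are exactly $\lo(v)$.
\item Consequently, the side conditions $\braket{b}[v_1]_\alpha=\braket{a}[w_1]_\alpha$ (\cref{alphaRuleAdmissible}) and $b\notin\lo(w_1)$ (Rule~5) coincide with the bar-string renaming conditions for $\newletter{a}\,w_1^\flat$.
\item Leading letters $a$ and $a\rb$ admit only congruence, on both sides.
\item The $\lb a$ case follows from the induction hypothesis by equivariance of $\db$ and of $(\cdot)^\flat$.
\item In the $\q{a}$ case, the bar side yields $x_1\aeq(ab)\cdot w_1^\flat$. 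Here $(ab)$ fixes $[w_1^\flat]_\alpha$, because $a,b\notin\lo(w_1)$ (for $a$, by right non-shadowing of $\q{a}w_1$).
\end{itemize}
Take the union over $w\in L_0(A)$ and use $L_0(A')=\{w^\flat\mid w\in L_0(A)\}$. This gives $\D(\la(A))=\D(\la(A'))$ directly.
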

\begin{proofsketch}
  In all transitions, replace $a\rb$ with~$a$ and $\q{a}$
  with~$\lb a$.
\end{proofsketch}
\takeout{ 
\begin{inappendix}
  \subsection*{Details for \cref{prop:NDA2RNNA}}
  
  We recall that an RNNA is a tuple $A = (Q,\Delta',i,F)$ where $Q$ is an orbit-finite set
  of states, $i\in Q$ is the initial state, $F\subseteq Q$ is an equivariant set of final
  states, and $\Delta'\subseteq Q\times \bar\names \times Q$, where
  $\bar\names = \set{a,|a\mid a\in \names}$, is an equivariant subset of
  $Q\times \bar\names \times Q$, the transition relation, such that $\Delta'$ is
  \emph{(left) $\alpha$-invariant} and \emph{finitely branching up to
    $\alpha$-invariance}. The two latter notions are completely analogous as for the
  transition relation of an NDA: We have $q \trans{|a} q' \in \Delta'$ whenever there is
  $q \trans{|b} q'' \in \Delta'$ with $\braket{a}q' = \braket{b}q''$, and the sets
  $ \set{(a, q')\mid q \trans{a} q' \in \Delta'}$ and
  $\set{\braket{a}q' \mid q \trans{|a} q' \in \Delta'}$ are finite.

  \begin{proof}[Proof of \cref{prop:NDA2RNNA}]
    Since RNNA are a proper subclass of NDA, it is clear that RNNA are at most equally
    expressive than NDA.  For the reverse inclusion, we show that given an NDA
    $A = (Q,\Delta,i,F)$ there is an RNNA $B = (Q,\Delta',i,F)$ such that $L_D(A) = L_D(B)$.
    We define the transition relation $\Delta'$ on the nominal state set $Q$ (leaving $i$
    and $F$ unchanged) by modifying the transitions of $\Delta$ as follows:
    \begin{itemize}
    \item For each $(q,a\rb,q')\in \Delta$ or $(q,a,q')\in \Delta$,
      put $(q,a,q') \in \Delta'$.
      
    \item For each $(q,\q{a},q') \in \Delta$ and $(q,\lb a,q') \in \Delta$, put
      $(q,|a,q') \in \Delta'$. 
      the set of new $|a$-transitions is finite up to $\alpha$-equivalence as well.
    \end{itemize}
    Since $\Delta$ is finitely branching up to $\alpha$-invariance, it follows that so is
    $\Delta'$.  \takeout{ Concretely, we define $\Delta'$ by:
      \begin{align*}
        \Delta' =&
        \set{(q,a,q') \mid a\in \names, (q,a,q') \in \Delta}\\
        &\cup \set{(q,\lb a,q') \mid (q,\lb a,q') \in \Delta}\\
        &\cup \set{(q,a,q') \mid (q,a\rb,q') \in \Delta}\\
        &\cup \set{(q,\lb a,q') \mid (q,?,q') \in \Delta, a\in \names\setminus \supp(q')}
      \end{align*}}
    Furthermore, $\Delta'$ is left $\alpha$-invariant, because so is $\Delta$.
    Then $B = (Q,\Delta',i,F)$ contains no $\q{a}$ and $a\rb$-transitions and thus,
    it is an RNNA. Furthermore, for each $q \trans{\gamma} q'$ in $\Delta$,
    we have $q \trans{\beta} q'$ in $\Delta'$ such that $\db(\gamma) = \db(\beta)$, where
    $\db$ is defined on bar letters analogously as on letters from $\ah$: $\db(a) =
    \db(\newletter a) = a$. It follows that every run on a word $w \in \ahs$ in $A$
    yields a run on some $w' \in \bar\names^*$ in $B$ such that $\db(w) = \db(w')$.
    Conversely, for every run of a word $w \in \bar\names^*$ in $A'$ there is a run for some
    word $w' \in ahs$ in $A$ such that $\db(w) = \db(w')$.
    Consequently, we obtain $L_D(A) = L_D(B)$.
  \end{proof}
\end{inappendix}
}

\begin{proofappendix}{prop:NDA2RNNA}
  It remains only to construct, given an NDA $A = (Q,\Delta,i,F)$, an
  RNNA $B = (Q,\Delta',i,F)$ that accepts the same data language
  as~$A$, i.e.~$L_D(A) = L_D(B)$.  We define the transition
  relation~$\Delta'$ as follows:
  \begin{itemize}
  \item Introduce a transition $(q,a,q') \in \Delta'$ whenever
    $(q,a\rb,q')\in \Delta$ or $(q,a,q')\in \Delta$; and
      
  \item Introduce a transition $(q,\lb a,q') \in \Delta'$ whenever
    $(q,\q{a},q') \in \Delta$ or $(q,\lb a,q') \in \Delta$.
    
  \end{itemize}
  The it is then clear that~$B$ is finitely branching up to
  $\alpha$-equivalence and left $\alpha$-invariant.  Furthermore, for
  every transition $q \trans{\gamma} q'$ in~$\Delta$, we have a
  transition $q \trans{\beta} q'$ in $\Delta'$ such that
  $\db(\gamma) = \db(\beta)$. It follows that every run on a word
  $w \in \ahs$ in $A$ yields a run on some word~$w'$ (without any
  letters $a\rb$ and $\q{a}$) in $B$ such that $\db(w) = \db(w')$.
  Conversely, for every run of a word~$w$ (without any letters $a \rb$
  and $\q{a}$) in $B$ there is a run for some word $w' \in \ahs$ in
  $A$ such that $\db(w) = \db(w')$. Consequently, we obtain
  $L_D(A) = L_D(B)$.
\end{proofappendix}

The above construction of a data-language equivalent RNNA from an NDA
is very simple, so under local freshness semantics, NDA inherit from
RNNA the algorithmic tractability of language
inclusion~\cite[Cor.~7.4]{SchroderEA17}:
\begin{theorem}
  Under local freshness semantics, language inclusion of NDAs is
  decidable in exponential space, in fact in parametrized polynomial
  space, with the degree as parameter.
\end{theorem}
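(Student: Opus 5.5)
The plan is to reduce the problem to the known result for RNNA, \cite[Cor.~7.4]{SchroderEA17}, using the translation from the proof of \Cref{prop:NDA2RNNA}. That result says inclusion of bar languages under local freshness semantics, which are the data languages $\D(L_\alpha(B))$, is decidable in parametrized polynomial space with the degree as parameter.

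Given NDAs $A_1,A_2$, we form RNNAs $B_1,B_2$ on the same state sets. Each $a\rb$-transition becomes an $a$-transition, and each $\q{a}$-transition becomes an $\lb a$-transition, read as the bar letter $\newletter{a}$.

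The first step is to check that this translation preserves local freshness semantics: $\D(\la(A_j))=\D(\la(B_j))$, where the right-hand side is the RNNA local freshness semantics. Debracketing commutes with the letter translation, so every run of $A_j$ yields a run of $B_j$ on the corresponding bar string, and conversely.

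The delicate point is $\alpha$-renaming. An $\alpha$-variant in the NDA sense must be matched by an $\alpha$-variant of the bar string, and vice versa. For this we use the support lemma (\Cref{lem:supptrans}) together with left and right $\alpha$-invariance (\Cref{lem:rightR}). After an $a\rb$- or $\q{a}$-transition the name $a$ is fresh for the target state. Hence renamings of $\lb a$ and $\q{a}$ binders in the NDA correspond exactly to renamings of $\newletter{a}$ in the RNNA: the bound scope in the bar string extends to the end of the word, but $a$ is no longer in the support after deallocation, so extending the scope changes nothing. Conversely, a renaming of a bar binder in $B_j$ whose scope crosses a former $a\rb$-transition is blocked exactly when the corresponding NDA renaming is blocked.

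I expect this correspondence of $\alpha$-equivalence classes to be the main obstacle. I would handle it by induction on runs. The inductive statement is that a state $q$ accepts $[w]_\alpha$ in $A_j$ iff $q$ accepts the translated class in $B_j$, up to $\db$. The induction relies on \Cref{fact:supp-lo}, which identifies supports with left-open names.

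The second step is complexity. The translation leaves the state set unchanged, so $\deg(B_j)=\deg(A_j)$. The finite representation of the orbit-finite automaton grows at most linearly, since we only relabel transitions. Applying the RNNA inclusion check to $B_1,B_2$ therefore gives a space bound polynomial in the size and exponential only in the degree. Since the degree is bounded by the size, this yields exponential space overall.
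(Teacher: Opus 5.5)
Your proposal is correct and follows the paper's route: relabel transitions as in \cref{prop:NDA2RNNA}, which leaves the states and hence the degree unchanged, and then invoke \cite[Cor.~7.4]{SchroderEA17}. Your explicit check that $\alpha$-equivalence classes correspond under the translation (using that $\lo(w)$ is exactly the set of free names of the translated bar string) supplies a step the paper leaves implicit, since its proof of \cref{prop:NDA2RNNA} only compares the literal data languages $L_D$.
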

\noindent We note here that in translations between nominal automata
models and register automata~\cite{BojanczykEA14,SchroderEA17}, the
degree corresponds to the number of registers.

\section{Name Dropping and Closure under $\alpha$-Equivalence}\label{sec:ndm}


Left and right $\alpha$-invariance of the transition relation of an
NDA may appear to introduce enough structural symmetry to ensure
closure of the accepted language under $\alpha$-equivalence. For some
NDA this is indeed the case, as the following lemma shows:

\begin{lemma}[label=lem:alphaClosedSupp]
  Let $A=(Q,\Delta,i,F)$ be an NDA. Suppose that each state $q\in Q$
  has as its support precisely the left-open names of each word
  accepted by $q$, in symbols: $\supp(q) = \lo(w)$ for all
  $w\in L_0(q)$. Then $L_0(A)$ is closed under $\alpha$-equivalence.
\end{lemma}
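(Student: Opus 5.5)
We strengthen the statement to all states: for every $q\in Q$ and all $w,w'\in\RNS(\ah)$, if $q$ accepts $w$ and $w\aeq w'$, then $q$ accepts $w'$. Taking $q=i$ gives the lemma. The proof is by induction on the derivation of $w\aeq w'$ in \cref{def:Alpha}. Transitivity is handled by chaining the induction hypotheses, and Rule~1 is trivial. For the single-step rules we write $w=\gamma u$ and $w'=\gamma' u'$ with $u\aeq u'$ (or the analogous abstraction premise). We fix an accepting run whose first step is $q\trans{\gamma}q_1$, with $q_1$ accepting $u$.

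For Rules~3 and~4 the first letter is unchanged. We keep the transition $q\trans{\gamma}q_1$ and apply the induction hypothesis at $q_1$, so $q_1$ accepts $u'$ and hence $q$ accepts $\gamma u'$.

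For Rule~5 we have $w=\q{a}u$ and $w'=\q{b}u'$ with $u\aeq u'$. The induction hypothesis gives that $q_1$ accepts $u'$. It remains to produce a transition $q\trans{\q{b}}q_1$. By \Cref{rem:right-alpha} (left $\alpha$-invariance for unknown transitions) it suffices that $b\notin\supp(q_1)$. The hypothesis of the lemma gives $\supp(q_1)=\lo(u)$, and the side condition of Rule~5 says $b\notin\lo(u)$. (Alternatively, $\lo(u)=\lo(u')$ by \cref{fact:supp-lo}.)

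Rule~2 is the main case: $w=\lb a u$ and $w'=\lb b u'$ with $\braket{a}u=\braket{b}u'$. Assume $a\neq b$ and let $c$ be fresh for everything in play. Then $u'=(a\,b)\cdot u$ and $b\notin\lo(u)$, since $\braket{a}u=\braket{b}u'$ with $b\notin\supp(u)$ forces $u'=(a\,b)\cdot u$ literally. The plan has three steps.
\begin{enumerate}
\item Identify the relevant support. The hypothesis gives $\supp(q_1)=\lo(u)$, so $b\notin\supp(q_1)$.
\item Obtain the transition. From $b\fresh q_1$ we get $\braket{a}q_1=\braket{b}\bigl((a\,b)\cdot q_1\bigr)$, so left $\alpha$-invariance yields $q\trans{\lb b}(a\,b)\cdot q_1$.
\item Transport acceptance. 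By \cref{lem:acceptEquiv}, $(a\,b)\cdot q_1$ accepts $(a\,b)\cdot u=u'$.
\end{enumerate}
Hence $q$ accepts $\lb b u'$.

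The main subtlety here is that the premise of Rule~2 is literal equality of abstractions of words, not of $\alpha$-classes, so the case needs no inner induction hypothesis. The hypothesis on supports is exactly what rules out blocking: without it, $q_1$ could contain $b$ in its support while $u$ does not use $b$ freely, and then the renamed transition would not exist.

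We expect the delicate bookkeeping to be making sure that every intermediate word in a transitivity chain stays within the scope of the claim. Since the claim is quantified over all words accepted by the state, the hypothesis $\supp(q)=\lo(w)$ applies to each intermediate word once acceptance of it has been established. So the induction must carry acceptance along the chain, which the strengthened statement does.
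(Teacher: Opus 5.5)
Your proof is correct and follows essentially the same route as the paper's: you strengthen the claim to all states, induct on the derivation of $\aeq$, and treat Rule~2 via $b\notin\lo(u)=\supp(q_1)$, left $\alpha$-invariance and equivariance of acceptance. Rule~5 likewise goes through $b\notin\lo(u)=\supp(q_1)$ and left $\alpha$-invariance, and Rules~3 and~4 come straight from the induction hypothesis. The only blemish is the fresh name $c$ in the Rule~2 case, which you introduce but never use.
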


\begin{proofappendix}{lem:alphaClosedSupp}
  We generalize to languages $L_0(q)$ of states~$q$ of $A$, showing by
  induction on the derivation of $w'\aeq w$ as per \Cref{def:Alpha}
  that whenever $w\in L_0(q)$, then $w'\in L_0(q)$. The cases for
  transitivity and Rule~\ref{rule:epsilon} are trivial. In the
  remaining cases, we have $w=\gamma v$ for some $\gamma\in\ah$, and
  $q\trans{\gamma}q'$ where~$q'$ accepts~$v$.
  \begin{itemize}
    \sloppy
  
  \item Rule~\ref{rule:alpha}: In this case, $\gamma=\lb a$ and
    $w' = \lb bv' $ where $a,b\in \names$ and
    $\braket{b}v' = \braket{a}v$. Assuming w.l.o.g.~that $a\neq b$, we
    have $v' = (ab) \cdot v$ and $b \notin \supp(v)$, in particular
    $b \notin \lo(v)$. By hypothesis, it follows that
    $b\notin\supp q'$, so that $\braket{b}q'' = \braket{a}q'$ for
    $q'' = (ab) \cdot q'$.  By equivariance of acceptance
    (\Cref{lem:acceptEquiv}),~$q''$ accepts~$v'$, and by left
    $\alpha$-invariance, $q \trans{\lb b} q''$, so~$q$ accepts
    $\lb bv'$.
  \item Rule~\ref{rule:cong-free-open}: In this case, $\gamma$ is of
    the form $\gamma = a$ or $\gamma=\lb a$ where $a\in\names$, and
    $w'=\gamma v'$ where $v\aeq v'$. Then $q'$ accepts $v'$ by the
    induction hypothesis, hence $q$ accepts $w'$.
  
  \item Rule~\ref{rule:cong-close}: Analogous to
    Rule~\ref{rule:cong-free-open}. 
    
  \item Rule~\ref{rule:cong-unknown}: In this case, $\gamma=\q{a}$ and
    $w' = \q{b}v'$ where $v\aeq v'$ and
    $\set{a,b} \cap \lo(v) = \emptyset$. By the induction
    hypothesis,~$q'$ accepts~$v'$.  Since~$q'$ accepts~$v$, we have
    $\supp(q') = \lo(v)$ by hypothesis, hence $b \notin \supp(q')$.
    By left $\alpha$-invariance, we obtain $q \trans{\q{b}} q'$,
    so~$q$ accepts $w'$.\qedhere
  \end{itemize}
\end{proofappendix}
\noindent In general, however, closure under $\alpha$-equivalence may
fail. Consider the NDA represented pictorially as follows:
%
%
\begin{center}
  \begin{tikzpicture}[align=center, node distance = 2.5cm, state/.style={circle, draw,
      minimum size=1.1cm, inner sep=0pt}, row sep = 1cm]
    \node[state,initial] (q0) {$q_0(b)$};
    \node[state, right of=q0] (q1) {$q_1(a,b)$};
    \node[state,accepting, right of=q1] (q2) {$q_2(b)$};
    \node[state,accepting] at (5,-1.5) (q3) {$q_3(a,b)$};
    \draw[->]
    (q0) edge[above] node {$\lb a$} (q1)
    (q1) edge[above] node {$a \rb$} (q2)
    (q1) edge[below] node[inner sep=5pt] {$\q{c}$} (q3);
  \end{tikzpicture}
\end{center}
This NDA accepts the word $\lb a a \rb$ but not the
$\alpha$-equivalent word $\lb b b \rb$, since renaming the
$\lb a$-transition into $\lb b$ is blocked by $b\in\supp(q_1(a,b))$.
Similarly, the word $\lb a \q{c}$ is accepted but not the
$\alpha$-equivalent word $\lb a \q{b}$, since renaming~$\q{c}$ into
$\q{b}$ is blocked by $b\in\supp(q_1(a,b))$. In terms of the condition
of \Cref{lem:alphaClosedSupp}, notice that $a\rb\in L_0(q_1(a,b))$ and
$b\in\supp(q_1(a,b))$ but $b\notin\lo(a\rb)$, and similarly that
$\epsilon\in L_0(q_3(a,b)$ and $b\in\supp(q_3(a,b))$ but
$b\notin\lo(\epsilon)$.

We proceed to present a construction on NDA closing their language
under $\alpha$-equivalence that has been employed in a similar fashion
in related
work~\cite{frank2025alternatingnominalautomataallocation,SchroderEA17,PruckerSchroder24arXiv}. The
construction relies on different principles than
\Cref{lem:alphaClosedSupp}; we will return to the latter in
\Cref{sec:determinization}.

First, we establish that we may assume w.l.o.g.~that the nominal state
set of an NDA has a more explicit description. To this end, recall
that a nominal set $X$ is \emph{strong}~\cite{Tzevelekos07} if, for
all $x\in X$ and $\pi\in \permG$, one has $\pi\cdot x = x$ if and only
if $\pi$ fixes every element of $\supp(x)$.  (The `if' direction holds
in every nominal set.) Furthermore, an equivariant map
$f\colon X \to Y$ is \emph{support-reflecting} if
$\supp(f(x)) = \supp(x)$ for every $x \in X$ (recall that
$\supp(f(x)) \subseteq \supp(x)$ always holds). For each nominal set
$Q$, we have a support-reflecting, surjective and equivariant map
$e\colon P\epito Q$ with a strong nominal domain $P$, where~$P$ can be
taken to be orbit-finite in case~$Q$ is orbit-finite~\cite[full
version, Cor.~B.27.1]{MiliusUrbat19}. For example, the
nominal sets $\names^{\#n}$, $\names^n$ and $\names^*$ are strong,
where $\names^{\#n}$ denotes the $n$-fold fresh product of~$\names$
with itself, i.e.~the set of~$n$-tuples over~$\names$ with pairwise
distinct entries. Up to isomorphism, orbit-finite strong nominal sets
are precisely coproducts $\coprod_{j=1}^n \names^{\# n_j}$, where
$n_j\in \N$ and $n$ is the number of orbits of the nominal set
(e.g.~\cite[full version, Cor.~B.27]{MiliusUrbat19}).

For $n\in\N$, we write $\names^{\$n}$ for the nominal set of partial
injective maps $[n] \parto \names$, where $[n]$ denotes the set
$\set{1, \ldots, n}$. Here, the group action of~$\permG$ on
$\names^{\$n}$ is pointwise, as expected:
$(\pi \cdot f)(j) = \pi (f(j))$ if $f(j)$ is defined; otherwise,
$(\pi \cdot f)(j)$ is undefined. We may identify elements of
$\names^{\#n}$ with total injective maps $[n]\to\names$, making
$\names^{\#n}$ a nominal subset of $\names^{\$n}$. A total injective
map $\bar f \in \names^{\#n}$ is said to \emph{extend} an element
$f\in \names^{\$n}$ if for every $j = 1, \ldots, n$, whenever $f(j)$
is defined, then $\bar f (j) = f(j)$. Observe that $\names^{\$n}$ is
orbit-finite, but, unlike $\names^{\#n}$, it has more than one
orbit. We generally write elements of sums
$\sum_{j=1}^n \names^{\$n_j}$ or $\sum_{j=1}^n \names^{\#n_j}$ in the
form~$(j,f)$ where $f\colon [n_j]\parto\names$.

\begin{proposition}\label{prop:ndaStrong}
  For every NDA there is an NDA with a strong nominal state set accepting the same literal
  language (hence, also the same alphatic language).
\end{proposition}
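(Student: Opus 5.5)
Given an NDA $A=(Q,\Delta,i,F)$, I take the support-reflecting, surjective, equivariant map $e\colon P\epito Q$ with $P$ strong and orbit-finite, as cited above from~\cite{MiliusUrbat19}. I then pull the automaton structure back along~$e$. Choose $p_0\in P$ with $e(p_0)=i$, put $F'=e^{-1}[F]$ (equivariant), and define $\Delta'$ by $p\trans{\gamma}p'$ iff $e(p)\trans{\gamma}e(p')$ in $A$ and
\[
  \supp(p')\subseteq\supp(p)\cup\set{a} \text{ for } \gamma=\lb a,
  \qquad
  \supp(p')\subseteq\supp(p)\setminus\set{a} \text{ for } \gamma\in\set{a\rb,\q{a}},
\]
and $\supp(p')\subseteq\supp(p)$ for free letters. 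Equivariance of $\Delta'$ is then immediate.

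The next step is to verify the NDA axioms for $A'=(P,\Delta',p_0,F')$.

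\emph{Name erasure.} Since $e$ is support-reflecting, $a\fresh e(p')$ gives $a\fresh p'$. The support constraint already builds this in.

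\emph{Left $\alpha$-invariance.} Suppose $p\trans{\lb a}p'$ and $\braket{a}p'=\braket{b}p''$. Then $p''=(ab)\cdot p'$ with $b\fresh p'$, so $\braket{a}e(p')=\braket{b}e(p'')$ by equivariance. Hence $e(p)\trans{\lb b}e(p'')$ in $A$. For the support condition, $\supp(p'')=(ab)\cdot\supp(p')\subseteq\supp(p)\cup\set{b}$, because $b\notin\supp(p')$ and $a$ is swapped to $b$. The case $\q{a}$ is analogous.

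\emph{Finite branching.} The constrained successor sets are uniformly supported by $\supp(p)$ (respectively in $[\names]P$ or $\names\times P$). Orbit-finiteness of $P$ then gives finiteness: only finitely many elements of an orbit-finite set have support inside a given finite set. This is the role of the support constraints, and it is the reason they are imposed.

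For language equality, the inclusion $L_0(A')\subseteq L_0(A)$ is direct, since $e$ maps runs to runs. For the converse I lift runs. Given $q\trans{\gamma}q'$ in $A$ and $p$ with $e(p)=q$, I need $p'$ with $e(p')=q'$ satisfying the support constraint. Pick any $p_1\in e^{-1}(q')$. Then $\supp(p_1)=\supp(q')$, and \cref{lem:supptrans} gives exactly the required inclusion for $\supp(q')$ relative to $\supp(q)=\supp(p)$. So $p'=p_1$ works, and this inductive lifting yields $L_0(A)\subseteq L_0(A')$. Note that lifting needs no coherence between choices of preimages, because the constraint is phrased purely via supports, which $e$ reflects. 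Equal literal languages give equal alphatic languages.

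I expect the main obstacle to be finite branching. The plan is to argue that the relevant sets, for instance $\set{\braket{a}p'\mid p\trans{\lb a}p'}$, consist of elements of the orbit-finite nominal set $[\names]P$ whose supports lie in $\supp(p)$, and are therefore finite.
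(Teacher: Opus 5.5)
Your proposal is correct and is essentially the paper's proof: you pull $\Delta$, $i$ and $F$ back along a support-reflecting equivariant surjection $e\colon P\epito Q$ with $P$ strong and orbit-finite, and you lift runs through arbitrary preimages. Your extra support side-conditions on $\Delta'$ are redundant, as your own lifting step shows: they follow from \cref{lem:supptrans} because $e$ reflects supports, so your $\Delta'$ coincides with the paper's, and your uniform-support argument for finite branching is just an alternative to the paper's finite-fibres argument.
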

\begin{proofappendix}{prop:ndaStrong}
  Given an NDA $A=(Q,\Delta,i,F)$, we pick a support-reflecting, surjective and equivariant
  map $e\colon P\epito Q$ with $P$ strong and orbit-finite.  We define
  $A'=(P,\Delta',i',F')$ by choosing some $i' \in P$ such that $e(i') = i$, putting
  $F' = \set{f \in P \mid e(f) \in F}$, and
  \[
    \Delta'
    =
    \set{p \trans{\gamma} q \in P\times\ah\times P \mid \text{$e(p) \trans{\gamma} e(q)$ in
        $\Delta$}}.
  \]
  (1)~We verify that this indeed defines an NDA.
  \begin{itemize}
  \item Equivariance: If $p \trans{\gamma} q$ in $\Delta'$, then $e(p)\trans{\gamma}e(q)$ in
    $\Delta$.  For every permutation $\pi$, equivariance of $e$ and $\Delta$ yields
    $e(\pi\cdot p)=\pi\cdot e(p) \trans{\pi\cdot\gamma} \pi\cdot e(q)=e(\pi\cdot q)$, hence
    $(\pi\cdot p)\trans{\pi\cdot\gamma}(\pi\cdot q)$ in $\Delta'$.

  \item Left $\alpha$-invariance: Suppose $p\trans{\gamma}q$ in $\Delta'$,
    $\gamma = \lb a$ or $\gamma = \q{a}$, and
    $\braket{a} q=\braket{b} q''$. Then
    $\braket{a} e(q)=\braket{b} e(q'')$ by equivariance of~$e$. From
    $e(p)\trans{\gamma} e(q)$ and left $\alpha$-invariance of~$\Delta$ we obtain
    $e(p)\trans{(ab)\cdot\gamma} e(q'')$, hence $p\trans{(ab)\cdot\gamma} q''$ in $\Delta'$.  
      
  \item Right $\alpha$-invariance: If
    $p\trans{\gamma}q$, $\gamma=a\rb$ or $ \q{a}$ and $\braket{a} p=\braket{b} p''$, then
    $\braket{a} e(p)=\braket{b} e(p'')$.  From $e(p)\trans{\gamma}e(q)$ and
    right $\alpha$-invariance of $\Delta$ we obtain $e(p'')\trans{(ab)\cdot\gamma}e(q)$, hence
    $p''\trans{(ab)\cdot\gamma}q$ in $\Delta'$.

  \item Finite branching: Let $p\in P$. By definition, the sets
    $\{(a,q')\mid e(p)\trans{a}q'\}$, $\{\braket{a} q'\mid e(p)\trans{\lb a}q'\}$,
    $\{(a,q')\mid e(p)\trans{a\rb}q'\}$ and $\{\braket{a}q'\mid e(p)\trans{\q{a}}q'\}$ are finite.
    For each such successor $q'$, the set $\set{q\mid e(q)=q'}$ is finite. Moreover, $e$ is
    support-reflecting, so for every $q\in e^{-1}(q')$ we have $\supp(q)=\supp(q')$. In an
    orbit-finite nominal set, there are only finitely many elements with the same support,
    hence, for each type of transition, the set of successors of $p$ in $A'$ is finite.
  \end{itemize}
  Finally, $F'=e^{-1}(F)$ is equivariant because both $F$ and $e$ are.

  \medskip\noindent
  (2)~We verify that $A'$ accepts the same literal language as $A$: $L_0(A')=L_0(A)$. 
  \begin{itemize}
  \item $L_0(A')\subseteq L_0(A)$: If $p\trans{\gamma}q$ in $A'$, then
    $e(p)\trans{\gamma}e(q)$ in $A$ by definition of $\Delta'$, and $e(i')=i$,
    $e(F')\subseteq F$. Hence $e$ maps accepting runs of $A'$ to accepting runs of $A$.
    
  \item $L_0(A)\subseteq L_0(A')$: Let
    $i=q_0\trans{\gamma_1}q_1\trans{\gamma_2}\cdots\trans{\gamma_n}q_n\in F$ be an accepting
    run in $A$.  Choose $p_0=i'$ with $e(p_0)=q_0$. Inductively, given $p_k$ with
    $e(p_k)=q_k$ and a transition $q_k\trans{\gamma_{k+1}}q_{k+1}$ in $A$, pick any
    $p_{k+1}\in e^{-1}(q_{k+1})$; then by definition of $\Delta'$ we have
    $p_k\trans{\gamma_{k+1}}p_{k+1}$ in $A'$. Since $q_n\in F$ implies $p_n\in F'$, this
    yields an accepting run in $A'$.
  \end{itemize}
  \sloppy
  Therefore, we have $L_0(A)=L_0(A')$, and consequently $L_\alpha(A)=L_\alpha(A')$.\qedhere
\end{proofappendix}

\noindent In the following, we tacitly assume that the state set of an
NDA is a strong nominal set.

\begin{construction}[Name-dropping modification]\label{def:ndm}
  Given an NDA $A = (Q,\Delta,i,F)$ with a strong nominal state set
  $Q=\sum_{j=1}^n \names^{\#n_j}$, its \emph{name-dropping modification} $A_\bot$ is
  the NDA $(Q_\bot, \Delta_\bot, i, F_\bot)$ defined by the following data:
  \begin{enumerate}
  \item the nominal set of states is 
    $Q_\bot = \sum_{j=1}^n \names^{\$n_j}$;
    
  \item $(j,f)$ is final in $A_\bot$ if $(j,\bar{f})$ is final in $A$
    for~$\bar{f}$ extending $f$ (by equivariance of finality, this is
    independent of the choice of the extension~$\bar f$);
    
  \item $(j,f)\trans{a} (k,g)$  in $\Delta_\bot$ if $\supp(f) \supseteq \{a\}
    \cup \supp(g)$ and $(j,\bar f)\trans{a} (k,\bar g)$ in $A$ for some
    $\bar f$ and $\bar g$ extending $f$ and $g$, respectively;
    
  \item $(j,f)\trans{\lb a} (k,g)$ in $\Delta_\bot$ if
    $\supp(f) \cup \{a\} \supseteq \supp(g)$ and $(j,\bar f)\trans{\lb b} (k,\bar g)$
    in~$A$ for some $b \in \names$, $g'\in Q_\bot$ such that
    $\braket{b} g' = \braket{a} g$, and total injective maps $\bar f$ and
    $\bar g$ extending $f$ and~$g'$, respectively;
    
  \item For $a\in\supp(f)$, $(j,f)\trans{a\rb} (k,g)$ in $\Delta_\bot$
    if $\supp(f) \supseteq\supp(g)$ and
    $(j,\bar f)\trans{a\rb} (k,\bar g)$ in $A$ for some $\bar f$ and
    $\bar g$ extending~$f$ and $g$ (thus, also $a \notin \supp(k,g)$).

  \item $(j,f)\trans{\q{a}} (k,g)$ in $\Delta_\bot$ if
    $\supp(f) \supseteq\supp(g)$ and
    $(j, \bar f)\trans{\q{b}} (k,\bar g)$ in $A$ for $b\in \names$,
    $f,g' \in Q_\bot$ such that
    $\braket{b}g' = \braket{a}g$, for some $\bar f$ and $\bar g$
    extending $f$ and $g'$, respectively (i.e. a transition
    $(j,f)\trans{\q{a}} (k,g)$ in $\Delta_\bot$ for each $a\notin \supp(k,g)$).\spnote{passt das so?}
  \end{enumerate}
\end{construction}
\takeout{
In most cases, it is more natural to drop a set of names instead of asserting their
presence.  Therefore, we introduce a new notation, $q\drop{S}$, read as `$q$ drop $S$',
defined as $q\drop{S}(n) = q(n)$ if $q(n) \not \in S$, and $q\drop{S}$ is undefined
otherwise. For a single name $a$, we write $\drop a$ for $\drop{\set{a}}$. 
}
\begin{lemma}\label{nameDropWellDefNDA}
  The name-dropping modification of an NDA is an NDA.
\end{lemma}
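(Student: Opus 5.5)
We need to check the clauses of \Cref{def:nda} for $A_\bot$ one by one: orbit-finiteness of $Q_\bot$, equivariance of $F_\bot$ and $\Delta_\bot$, left $\alpha$-invariance, name erasure, and finite branching. Orbit-finiteness is immediate, since each $\names^{\$n_j}$ has finitely many orbits: an orbit is determined by the domain of definition of the partial injection, a subset of $[n_j]$. For every element, $\supp(j,f)$ is the image of $f$.

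\textbf{Equivariance.} For $F_\bot$ and $\Delta_\bot$ we use one observation. If $\bar f$ extends $f$, then $\pi\cdot\bar f$ extends $\pi\cdot f$, and $\supp(\pi\cdot f)=\pi[\supp(f)]$. Each defining clause of \Cref{def:ndm} is phrased through support inclusions, extensions, and transitions of the equivariant relation $\Delta$, and in clauses (4) and (6) also through the equivariant abstraction $\braket{b}g'=\braket{a}g$. Applying $\pi$ to all witnesses therefore turns witnesses for a transition into witnesses for its $\pi$-translate. The independence of finality from the chosen extension, already noted in the construction, follows because any two extensions of $f$ differ by a permutation fixing $\supp(f)$ pointwise, together with equivariance of $F$.

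\textbf{Left $\alpha$-invariance and name erasure.} Clauses (4) and (6) are built to be closed under $\alpha$-renaming. Suppose $(j,f)\trans{\lb a}(k,g)$ is witnessed by $b,g'$ with $\braket{b}g'=\braket{a}g$, and suppose $\braket{a}g=\braket{c}h$. Then $\braket{b}g'=\braket{c}h$, so the same witnesses give $(j,f)\trans{\lb c}(k,h)$. It remains to check the support condition $\supp(h)\subseteq\supp(f)\cup\{c\}$. This follows because $\supp(\braket{a}g)=\supp(g)\setminus\{a\}\subseteq\supp(f)$ and $\supp(h)\subseteq\supp(\braket{c}h)\cup\{c\}$. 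The case of $\q{a}$ is identical, with the condition $\supp(g)\subseteq\supp(f)$, which the renaming preserves. For name erasure, in clause (5) we have $a\notin\supp(\bar g)\supseteq\supp(g)$ by name erasure in $A$. In clause (6), $b\notin\supp(\bar g)\supseteq\supp(g')$, and together with $\braket{b}g'=\braket{a}g$ this gives $a\notin\supp(g)$.

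\textbf{Finite branching.} This is the main obstacle, because a single transition of $A$ may give rise to many dropped versions, and a state $(j,f)$ has infinitely many extensions $\bar f$. The plan is as follows. For a fixed $q=(j,f)$, each successor set listed in \Cref{def:nda} is an equivariant-in-$q$ assignment, so it is supported by $\supp(f)$. Every element of it has support contained in $\supp(f)$ (for letters $a$ and $\lb a$ we use the support conditions, for $\braket{a}g$ we use $\supp(\braket{a}g)\subseteq\supp(f)$). Hence each such set is a subset of the elements of the orbit-finite nominal sets $\names\times Q_\bot$ or $[\names]Q_\bot$ whose support lies within the finite set $\supp(f)$. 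In an orbit-finite nominal set there are only finitely many elements supported by a given finite set, which is the same fact used in the proof of \Cref{prop:ndaStrong}. So every successor set is finite, and the finite branching of $A$ is not even needed. This bypasses the infinitely many extensions entirely; the only care required is to confirm the support bounds in clauses (3)–(6), which are built into the definitions.
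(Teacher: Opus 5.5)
Your proposal is correct. Its skeleton (checking the clauses of \cref{def:nda} one by one) is the same as in the paper's appendix proof, but several checks go differently.

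\emph{Name erasure.} You verify name erasure directly from clauses~(5) and~(6). The paper instead verifies right $\alpha$-invariance and relies on the equivalence noted in \cref{rem:right-alpha}.

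\emph{Left $\alpha$-invariance.} You observe that clauses~(4) and~(6) already quantify over all $g'$ with $\braket{b}g'=\braket{a}g$. So the original witnesses serve unchanged, and only the support side-condition has to be rechecked. The paper instead routes through left $\alpha$-invariance of $\Delta$ and a permutation relating extensions, and leaves the side-condition implicit.

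\emph{Finite branching.} This is the substantive difference. The paper bounds the successors by the number of extensions $\bar f$ of $f$, times the successors in $A$ up to $\alpha$-equivalence, times the ways of dropping names. But a non-total partial injection $f$ has infinitely many total extensions, so that count does not go through as written. What actually makes the successor sets finite is the support side-conditions of clauses~(3)--(6), and this is exactly what your argument uses. All successor data lie in $\names\times Q_\bot$ or $[\names]Q_\bot$ with support inside the finite set $\supp(f)$, and an orbit-finite nominal set has only finitely many such elements.

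Your route is therefore both sounder and more economical. It needs only equivariance of $\Delta$ and $F$ and name erasure in $A$; neither left $\alpha$-invariance nor finite branching of $A$ enters. The paper's formulation does have the merit of tracing each $\Delta_\bot$-transition back to a transition of $A$ plus a choice of dropped names, in the spirit of \cref{lem:everyExt}.

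Two small repairs are needed.
\begin{enumerate}
\item In the $\q{a}$ case of left $\alpha$-invariance, the side-condition $\supp(h)\subseteq\supp(f)$ does not survive for the same reason as in the $\lb a$ case, because there is no extra $\cup\{c\}$ to absorb the new name. It holds because name erasure gives $a\notin\supp(g)$. Then $\braket{a}g=\braket{c}h$ forces $c\notin\supp(g)$, and hence $h=(a\,c)\cdot g=g$. So establish name erasure first and cite it at this point.
\item In the finite-branching paragraph, the pairs $(a,(k,g))$ come from the $a$- and $a\rb$-transitions (not from $\lb a$), and the abstractions $\braket{a}(k,g)$ come from the $\lb a$- and $\q{a}$-transitions. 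You should also state that $[\names]Q_\bot$ is orbit-finite, being a quotient of the orbit-finite set $\names\times Q_\bot$.
\end{enumerate}
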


\begin{proofsketch}
  We verify the conditions of an NDA for~$A_\bot$. The nominal set $Q_\bot$ is orbit-finite,
  since it is a finite co\-pro\-duct of orbit-finite nominal sets $\names^{\$n}$.
  Equivariance of $\Delta_\bot$ follows utilizing equivariance of $\Delta$ and observing
  that the defining support side-conditions are preserved under the group action.  Left and
  right $\alpha$-invariance are obtained by combining $\alpha$-invariance of $\Delta$ with
  the abstraction in the definition of $\Delta_\bot$.  Finally, finite branching holds
  because, for a fixed $(j,f)$ in $Q_\bot$, there are only finitely many possible extensions
  $\bar f$, only finitely many successors in $A$ up to $\alpha$-invariance, and only
  finitely many ways of dropping names from the supports of these successors.
\end{proofsketch}
\begin{proofappendix}{nameDropWellDefNDA}
  Let $A = (Q, \Delta, i, F) $ be an NDA, and let
  $ A_\bot = (Q_\bot, \Delta_\bot, i, F_\bot) $ be its name-dropping modification as defined
  in \cref{def:ndm}.  We show that $A_\bot$ is an NDA.

  \begin{itemize}
    \sloppy
  \item The set of states is orbit-finite: By definition, we have
    $Q_\bot = \sum_{j = 1}^n \names^{\$n_j}$, which is a finite coproduct of orbit-finite
    nominal sets, whence orbit-finite.

  \item The transition relation $\Delta_\bot$ is equivariant: Suppose that 
    $(j,f) \trans{\gamma} (k,g) \in \Delta_\bot$ and $ \pi \in \permG$.
    \begin{itemize}
    \item $\gamma = a,a\rb$: By the definition of
    $\Delta_\bot$, there exist extensions $\bar{f}$ and $\bar{g}$ of $f$ and $g$,
    respectively, such that
    $(j, \bar{f}) \trans{\gamma} (k, \bar{g}) \in \Delta$.
    Since $\Delta$ is equivariant,
    we have
    $ ( j, \pi \cdot \bar{f}) \trans{\pi \cdot \gamma} (k, \pi \cdot \bar{g}) \in \Delta$.
    Since $\pi \cdot \bar{f}$ extends $\pi \cdot f$ and $\pi \cdot \bar{g}$ extends
    $\pi \cdot g$, by the definition of the name-dropping modification, we have
    $ (j, \pi \cdot f) \trans{\pi \cdot \gamma} (k, \pi \cdot g) \in \Delta_\bot$. Thus,
    $\Delta_\bot$ is equivariant.
    \item $\gamma = \lb a$: By definition of $\Delta_\bot$, we have
    $\supp(f) \cup \{a\} \supseteq \supp(g)$ and there exist $b \in \names$,
    $g' \in Q_\bot$ with $\braket{b} g' = \braket{a} g$ and total injective maps
    $\bar f,\bar g$ extending $f$ and $g'$, respectively, such that
    $(j,\bar f) \trans{\lb b} (k,\bar g) \in \Delta$.
    Since $\Delta$ is equivariant, we obtain
    $
      (j,\pi\cdot \bar f) \trans{\lb \pi(b)} (k,\pi\cdot \bar g) \in \Delta.
    $
    Moreover, $\pi\cdot \bar f$ and $\pi\cdot \bar g$ extend $\pi\cdot f$ and
    $\pi\cdot g'$, respectively, and by equivariance of name abstraction we have
    $
      \braket{\pi(b)} (\pi\cdot g') = \pi\cdot \braket{b} g'
      = \pi\cdot \braket{a} g
      = \braket{\pi(a)} (\pi\cdot g).
    $
    The support condition is preserved under permutations, since
    $
      \supp(\pi\cdot f) \cup \{\pi(a)\}
      = \pi(\supp(f)\cup\{a\}) \supseteq \pi(\supp(g)) = \supp(\pi\cdot g).
    $
    Thus, by the definition of $\Delta_\bot$, we obtain
    $
      (j,\pi\cdot f) \trans{\lb \pi(a)} (k,\pi\cdot g) \in \Delta_\bot,
    $
    i.e.\ $(j,\pi\cdot f) \trans{\pi\cdot \gamma} (k,\pi\cdot g) \in \Delta_\bot$.

  \item $\gamma = \q{a}$: By the definition of $\Delta_\bot$, we have
    $\supp(f) \supseteq \supp(g)$ and there exist $b \in \names$ and
    $f',g' \in Q_\bot$ with
    $\braket{b} f' = \braket{a} f$ and $\braket{b} g' = \braket{a} g$, and total
    injective maps $\bar f,\bar g$ extending $f'$ and $g'$, respectively, such that
    $(j,\bar f) \trans{\q{b}} (k,\bar g) \in \Delta$.
    By equivariance of $\Delta$ we obtain
    $
      (j,\pi\cdot \bar f) \trans{\q{\pi(b)}} (k,\pi\cdot \bar g) \in \Delta,
    $
    where $\pi\cdot \bar f$ and $\pi\cdot \bar g$ extend $\pi\cdot f'$ and
    $\pi\cdot g'$, respectively.  Equivariance of name abstraction yields
    $
      \braket{\pi(b)} (\pi\cdot f')
        = \pi\cdot \braket{b} f'
        = \pi\cdot \braket{a} f
        = \braket{\pi(a)} (\pi\cdot f),
    $
    and similarly
    $
      \braket{\pi(b)} (\pi\cdot g')
        = \pi\cdot \braket{b} g'
        = \pi\cdot \braket{a} g
        = \braket{\pi(a)} (\pi\cdot g).
    $
    The support condition is again preserved since
    $
      \supp(\pi\cdot f) = \pi(\supp(f)) \supseteq \pi(\supp(g)) = \supp(\pi\cdot g).
    $
    Hence, by the definition of $\Delta_\bot$, we obtain
    $
      (j,\pi\cdot f) \trans{\q{\pi(a)}} (k,\pi\cdot g) \in \Delta_\bot,
    $
    i.e.\ $(j,\pi\cdot f) \trans{\pi\cdot \gamma} (k,\pi\cdot g) \in \Delta_\bot$.
    \end{itemize}
  \item The transition relation is left $\alpha$-invariant: Suppose
    $(j,f) \trans{\gamma} (k,g) \in \Delta_\bot$, $\gamma = \lb a$ or $\gamma = \q{a}$, and
    $\braket{a} (k,g) = \braket{b} (k,g')$. By definition, there exist
    extensions $\bar{f}$ and $\bar{g}$ extending $f$ and $g''$ with
    $\braket{c} g'' = \braket{a} g$, respectively, such that
    $(j, \bar{f}) \trans{(ac)\cdot\gamma} (k, \bar{g}) \in \Delta$.  By left $\alpha$-invariance
    of $\Delta$, for $b \in \names$ and $(k, \bar{g}')$ such that
    $ \braket{a} (k, \bar{g}) = \braket{b} (k, \bar{g}')$, we have
    $ (j, \bar{f}) \trans{(ac)\cdot\gamma} (k, \bar{g}') \in \Delta$, where $\bar{g}'$ extends
    $g'$.  In particular, we have some $\pi \in \permG$ such that
    $\pi \cdot \bar{g} = \bar{g}'$ and $\pi \cdot g = g'$.
    Then by construction, $(j,f) \trans{(ab)\cdot\gamma} (k, g') \in \Delta_\bot$. Thus,
    $\Delta_\bot$ satisfies left $\alpha$-invariance.

  \item The transition relation is right $\alpha$-invariant: Suppose
    that $(j,f) \trans{\gamma} (k,g) \in \Delta_\bot$, $\gamma = a\rb$
    or $\gamma = \q{a}$, and
    $\braket{a} (j,f) = \braket{b} (j,f')$. By definition, there exist
    extensions $\bar{f}$ and $\bar{g}$ of $f''$ and $g$ with $\braket{c} f'' = \braket{a} f$, respectively, such that
    $(j, \bar{f}) \trans{\gamma} (k, \bar{g}) \in \Delta$.  By right $\alpha$-invariance
    of $\Delta$, for some $b \in \names$
    and $(j, \bar{f}')$, where $\bar{f}'$ extends $f'$, and such that
    $\braket{a} (j, \bar{f}) = \braket{b} (j, \bar{f}')$, we have
    $(j, \bar{f}') \trans{(ab)\cdot\gamma} (k, \bar{g}) \in \Delta$.
    Then by construction, $(j, f') \trans{(ab)\cdot\gamma} (k, g) \in \Delta_\bot$. Thus,
    $\Delta_\bot$ satisfies right $\alpha$-invariance.
    
  \item The name-dropping modification is finitely branching: Given a state
    $(j,f) \in Q_\bot$, we need to show that it has only finitely many outgoing transition
    $(j,f) \trans{\gamma} (k,g) \in \Delta_\bot$ up to left $\alpha$-invariance. Consider
    the following cases for $\gamma$:
    \begin{itemize}
    \item $\gamma = a, a\rb$: In the NDA $A$, there are only finitely many
      $(j,\bar{f}) \trans{\gamma} (k,\bar{g}) \in \Delta$, where $\bar{f}$ is
      any extension of $f$.  Furthermore, for each $(k,\bar{g}) \in Q$, there
      are only finitely many resulting states $(k,g) \in Q_\bot$ with names dropped from
      the support. Hence, for each $(j,f) \trans{\gamma} (k,g)$, there are only
      finitely many $(j,f) \trans{\gamma} (k,g) \in \Delta_\bot$ with $\bar g$ extending $g$.

    \item $\gamma = \lb a,\q{a}$: Here the number of transitions
      $(j,f) \trans{\gamma} (k,g)$ up to left $\alpha$-invariance is bounded above by
      the number of extension $\bar f$ of $f$ times the number of
      equivalence classes  $\set{ \braket{a} (j,\bar{g})\mid (k, \bar{f})
        \trans{\gamma} (k, \bar{g}) \in \Delta}$ times the number of states $(k,g)$ such that
      $\bar{g}$ extends $g$. All three of those numbers are finite, and so we are done.\qedhere
    \end{itemize}
  \end{itemize}
\end{proofappendix}
\begin{lemma}[label=lem:restrAccFNDN]
  Let $(j,r) \in Q_\bot$ be a state of the name-dropping modification
  $A_\bot = (Q_\bot,\Delta_\bot, i, F_\bot)$ of the NDA $A = (Q,\Delta,i,F)$.  Then $(j,r)$
  accepts every word $w$ that satisfies $\lo(w) \subseteq \supp(r)$ and is accepted by
  some $(j,\bar{r}) \in Q$ such that $\bar{r}$ extends~$r$; in symbols:
  \[
    \set{w\in L_0(j,\bar{r})\mid (j,\bar{r}) \in Q, \text{$\bar{r}$ extends $r$}, \lo(w)
    \subseteq \supp(r)}
    \subseteq
    L_0(j,r).
  \]
\end{lemma}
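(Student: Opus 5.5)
We argue by induction on the length of $w$, simultaneously for all states $(j,r)\in Q_\bot$ and all extensions $\bar r$. Concretely, we show: if $(j,\bar r)$ accepts $w$ in $A$, $\bar r$ extends $r$, and $\lo(w)\subseteq\supp(r)$, then $(j,r)$ accepts $w$ in $A_\bot$.

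The base case $w=\epsilon$ is immediate. Here $(j,\bar r)\in F$, so $(j,r)\in F_\bot$ by clause~(2) of \cref{def:ndm}, since finality does not depend on the chosen extension.

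For the inductive step, write $w=\gamma v$ with $(j,\bar r)\trans{\gamma}(k,\bar s)$ in $A$ and $(k,\bar s)$ accepting $v$. Define $s$ as the restriction of $\bar s$ to those positions whose value lies in $\lo(v)$. Then $\bar s$ extends $s$ and $\lo(v)=\supp(s)$, so the induction hypothesis gives that $(k,s)$ accepts $v$. It remains to produce a transition $(j,r)\trans{\gamma}(k,s)$ in $\Delta_\bot$, which we do by case distinction on $\gamma$, using \cref{tab:namedefs} and the support lemma (\cref{lem:supptrans}).
\begin{itemize}
\item $\gamma=a$. Then $\lo(w)=\lo(v)\cup\{a\}\subseteq\supp(r)$, so $\supp(r)\supseteq\{a\}\cup\supp(s)$, and clause~(3) applies with extensions $\bar r,\bar s$.
\item $\gamma=a\rb$. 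As in the previous case, $a\in\supp(r)$ and $\supp(s)=\lo(v)\subseteq\lo(w)\subseteq\supp(r)$, so clause~(5) applies.
\item $\gamma=\lb a$. Here $\lo(w)=\lo(v)\setminus\{a\}$, hence $\supp(s)\subseteq\supp(r)\cup\{a\}$. Clause~(4) applies with $b=a$ and $g'=s$, which trivially satisfies $\braket{a}s=\braket{a}s$.
\item $\gamma=\q{a}$. Here $\lo(w)=\lo(v)\setminus\{a\}$. By name erasure $a\notin\supp(\bar s)$, so $a\notin\lo(v)$. Hence $\supp(s)=\lo(v)=\lo(w)\subseteq\supp(r)$, and clause~(6) applies with $b=a$.
\end{itemize}
Composing the constructed transition with the accepting run from $(k,s)$ on $v$ yields acceptance of $w$ by $(j,r)$.

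The main obstacle is to make sure the induction hypothesis really applies to $v$, that is, that $\lo(v)\subseteq\supp(\bar s)$ so that restricting $\bar s$ to $\lo(v)$ gives a valid element of $\names^{\$n_k}$ with support exactly $\lo(v)$. Equivalently, every left-open name of a word accepted by a state must lie in that state's support. We prove this by a separate small induction along runs, using \cref{lem:supptrans}:
\begin{itemize}
\item free and deallocating transitions require $a\in\supp(q)$, and their successor supports are contained in $\supp(q)$;
\item an allocating transition $\lb a$ removes $a$ from the left-open names of the remaining word, while adding at most $a$ to the support;
\item a $\q{a}$ transition likewise removes $a$ from the left-open names, and its successor support is contained in $\supp(q)$.
\end{itemize}
With this auxiliary fact, which also follows the reasoning of \cref{prop:ndaRns}, the restriction is well defined. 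The remaining checks on the side conditions of \cref{def:ndm} are then routine.
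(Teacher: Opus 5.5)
Your proof is correct and follows essentially the same route as the paper's: induction on the length of $w$, restricting the successor's total map to a partial map whose support is $\lo$ of the remaining word, applying the induction hypothesis, and recovering the first transition from clauses (3)--(6) of the name-dropping construction. You are more explicit than the paper on two points it leaves implicit: the auxiliary fact that $\lo(v)\subseteq\supp(\bar s)$ (derived from the support lemma), and the unknown-letter case.
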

\begin{proofappendix}{lem:restrAccFNDN}
  We proceed via induction over the length of the word $w$. The base case is trivial. For
  the inductive step, let $S = \supp(j,r)$ and $(j,\bar{r})$ accept some word
  $w = \gamma w'$ such that $\lo(w) \subseteq S$.  We show that also $(j,r)$
  accepts $w$ by distinguishing cases:
  \begin{itemize}\sloppy
  \item $\gamma = a$: We show that $(j,r)$ accepts $w=aw'$. Since $(j,\bar{r})$ accepts $w$
    in $A$, there is a transition $(j,\bar{r})\trans{a}(k,\bar{r}') \in \Delta$ such that
    $(k,\bar{r}')$ accepts $w'$ (in $A$). Due to this fact and by the construction of the
    name-dropping modification, there exists $(k,r')\in Q_\bot$ with
    $\lo(w') \subseteq \supp(r') \subseteq S$ with $\bar{r}'$ extending $r'$.
    Thus, by the induction hypothesis, $(k,r')$ accepts $w'$ (in $A_\bot$). Since we have
    $\supp(r') \subseteq S$, again by the construction of the name-dropping modification, we
    have $(j,r)\trans{a} (k,r') \in \Delta_\bot$, hence $(j,r)$ accepts $w$.

  \item $\gamma = a\rb$: This case is analoguous to the previous case. For
    $w = a\rb w'$, we have $\lo(w) = \lo(w')\cup \set{a}$ and $a \notin \lo(w')$,
    which reflects in the condition on deallocating transitions with
    $a \in \supp(r)$, $\supp(r) \supseteq \supp(r')$ and $a \notin \supp(r')$, as stated in
    the construction of the name-dropping modification (i.e.~in $\Delta_\bot$).
      
  \item $\gamma = \lb a$: The state $(j,\bar{r})$ accepts $w$, hence we have a transition
    $(j,\bar{r})\trans{\lb a}(k,\bar{r}') \in \Delta$ such that $(k,\bar{r}')$ accepts $w'$
    (in $A$).  By the construction of $Q_\bot$, we may choose $(k,r') \in Q_\bot$ with
    $\bar{r}'$ extending $r'$ and $\supp(k,r')=\lo(w')$. By the
    induction hypothesis, $(k,r')$ accepts $w'$ (in $A_\bot$). Furthermore, we have
    that $\lo(w) \cup \set{a} \supseteq \lo(w')$. Hence, using 
    $\lo(w) \subseteq \supp(j,r)$, we obtain
    $\supp(j,r) \cup \set{a} \supseteq \supp(p,r')$. Again by the construction of
    the name-dropping modification, there exists a transition
    $(j,r) \trans{\lb a}(k,r') \in \Delta_\bot$, thus $(j,r)$ accepts
    $w$, as desired.
  
    \item $\gamma = \q{a}$: This case repeats in turn the patterns from 
    the cases of allocating and deallocating transitions.\qedhere
\end{itemize}
\end{proofappendix}

The next step is to show that the additional transitions
introduced by name-dropping is
precisely what is needed to make the literal language closed under $\alpha$-equivalence.

\begin{lemma}[label=lem:NDMClosedAlpha]
  The literal language of a name-dropping modification is closed under $\alpha$-equivalence.
\end{lemma}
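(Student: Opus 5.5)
We generalize from the initial state to arbitrary states of $A_\bot$ and argue by induction on the derivation of $w'\aeq w$ in \Cref{def:Alpha}. The claim we prove is: for every state $(j,r)\in Q_\bot$ and every $w\in L_0(j,r)$ with $w'\aeq w$, we have $w'\in L_0(j,r')$ for a suitable state $(j,r')$ obtained from $(j,r)$ by dropping names. The dropping is needed because a blocked renaming must be unblocked first. The statement for $i$ then follows. We intend to use that $i$ may be replaced by its restriction to the left-open names of accepted words. Alternatively, we prove the stronger claim that $(j,r\restriction \lo(w))$ accepts every $w'\aeq w$.

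To organize this, we first establish an auxiliary \emph{restriction} step. If $(j,r)$ accepts $w$ in $A_\bot$, then the state $(j,r|_{\lo(w)})$, obtained by undefining all entries of $r$ outside $\lo(w)$, also accepts $w$. The proof runs by induction on $w$ and follows the pattern of \Cref{lem:restrAccFNDN}. A transition $(j,r)\trans{\gamma}(k,s)$ of $A_\bot$ is witnessed by a transition of $A$ between total extensions. That same $A$-transition witnesses $(j,r|_{\lo(w)})\trans{\gamma}(k,s|_{\lo(w')})$, since total extensions of $r$ and $s$ are also extensions of their restrictions. The support side conditions of \Cref{def:ndm} are then checked via the recursive clauses of \Cref{tab:namedefs}:
\begin{itemize}
\item for $\gamma=a$ or $a\rb$, we use $\lo(w')\cup\{a\}\subseteq\lo(w)$;
\item for $\gamma=\lb a$, we use $\lo(w')\subseteq\lo(w)\cup\{a\}$;
\item for $\gamma=\q{a}$, we use $\lo(w')\subseteq\lo(w)$, together with $a\notin\lo(w')$ by right non-shadowing (\Cref{prop:ndaRns}).
\end{itemize}
Combined with \Cref{lem:restrAccFNDN}, this tells us that the name-dropping construction is idempotent on supports. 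Consequently, the hypothesis of \Cref{lem:alphaClosedSupp} essentially holds after restriction.

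With this in hand, we run the cases of \Cref{def:Alpha}. Rule~1 and transitivity are immediate. For Rules~3 and~4, we apply the induction hypothesis at the successor state, after restricting it to $\lo(v)$; \Cref{fact:supp-lo} gives $\lo(v)=\lo(v')$, so the prefix transition survives the restriction. The key cases are Rule~2, with $\lb a v\aeq \lb b v'$, and Rule~5, with $\q{a}v\aeq\q{b}v'$. Take the transition $(j,r)\trans{\lb a}(k,s)$ and replace $s$ by $s|_{\lo(v)}$. Since $b\notin\lo(v)$, we have $b\notin\supp(s|_{\lo(v)})$, so $\braket{a}s|_{\lo(v)}=\braket{b}((ab)\cdot s|_{\lo(v)})$ and nothing blocks the renaming. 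Left $\alpha$-invariance of $A_\bot$ (\Cref{nameDropWellDefNDA}) then gives $(j,r)\trans{\lb b}(ab)\cdot s|_{\lo(v)}$. Equivariance of acceptance (\Cref{lem:acceptEquiv}) shows that $(ab)\cdot s|_{\lo(v)}$ accepts $(ab)\cdot v$, and the induction hypothesis gives acceptance of $v'$. Rule~5 is handled the same way, using \Cref{rem:right-alpha}: once $\supp$ of the successor is $\lo(v)$, which does not contain $b$, the $\q{b}$-transition exists.

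We expect the main obstacle to be the restriction step, specifically in the allocating and unknown cases. There one must check that the restricted successor is still reachable. This requires the abstraction $\braket{b}g'=\braket{a}g$ clause in \Cref{def:ndm} to allow dropping names in the poststate while keeping the bound name. It also requires the inequality $\supp(r|_{\lo(w)})\cup\{a\}\supseteq \lo(w')$ to hold even when $a$ itself was dropped from $r$, which is fine because $a$ is freshly allocated. We would first verify this carefully against the clauses of \Cref{def:ndm}, falling back on \Cref{lem:restrAccFNDN} applied at total extensions wherever possible.
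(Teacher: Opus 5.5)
Your argument is correct. It shares the paper's skeleton: strengthen the claim to all states of $A_\bot$, and in the binder cases drop the blocking name from the successor before using left $\alpha$-invariance of $\Delta_\bot$ and equivariance of acceptance. It differs in the induction principle and in the key auxiliary lemma, and both choices make it more robust.

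\begin{itemize}
\item You induct on derivations of $\aeq$. The paper inducts on word length and reads off the shape of $w'$ as if a single rule had been applied. With the transitivity rule present, that needs an inversion step: for example, $\lb a v\aeq\lb b v'$ only gives $v'\aeq(a\,b)\cdot v$, not $v' = (a\,b)\cdot v$.
\item To show that the restricted successor still accepts $v$, the paper cites \cref{lem:restrAccFNDN}. That lemma transfers acceptance from \emph{total} states of $A$, whereas at that point one only knows acceptance by a partial state of $A_\bot$. Your restriction lemma for $A_\bot$ is exactly what that step needs.
\item The point you flagged goes through. In the $\lb a$ case the witnessing $A$-transition $(j,\bar f)\trans{\lb c}(k,\bar g)$ still serves, with $g'$ replaced by its restriction $(a\,c)\cdot(s|_{\lo(w')})$, which $\bar g$ extends.
\item Restricting the successor in the $\q{a}$ case is also genuinely needed. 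The paper keeps the original successor $(k,r')$ and uses only $a\notin\supp(k,r')$. Retargeting to $\q{b}$ with the same poststate also requires $b\notin\supp(k,r')$, which can fail: the successor may retain $b$ even though $b\notin\lo(v)$.
\end{itemize}

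One thing to tidy: take as invariant simply that every $L_0(j,r)$ is closed under $\aeq$. Your constructions in Rules~2--5 all produce runs from $(j,r)$ itself, since only successors are restricted. So this is what you actually prove, and transitivity is then immediate. Your other formulations (acceptance by `some restriction of $(j,r)$' or by $(j,r|_{\lo(w)})$), and the remark about replacing $i$ by its restriction, would otherwise need an upward-monotonicity step you have not proved. That step says a state accepts whatever its restrictions accept; it does follow from \cref{lem:everyExt}. Also, Rule~2 needs no induction hypothesis: its premise is a literal abstraction equality, so $v'=(a\,b)\cdot v$.
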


\begin{proofsketch}
  Strengthen the claim to all states $(j,r)$ of $A_\bot$ and proceed
  by induction on the length of words.  Given
  $w=\gamma v\aeq \delta v'=w'$ and the first transition
  $(j,r)\trans{\gamma}(k,r_1)$, use the induction hypothesis to obtain
  acceptance of $v$ by acceptance of $v'$ in the successor state.  If
  $\gamma\in\{a,a\rb\}$, then $\delta=\gamma$ and no renaming is
  needed.  If $\gamma=\q{a}$, then $w'=\q{b}v'$ with $b$ fresh for
  $v$; use left $\alpha$-invariance of $\Delta_\bot$ to replace the
  $\q{a}$-transition by a $\q{b}$-transition to the same target.  If
  $\gamma=\lb a$, then $\delta=\lb b$ and $v'=(ab)\cdot v$; first drop
  the renamed name from the successor support using
  \cref{lem:restrAccFNDN}, then transport acceptance along
  equivariance, and finally apply left $\alpha$-invariance to switch
  the allocating letter from $\lb a$ to $\lb b$.
\end{proofsketch}

\begin{proofappendix}{lem:NDMClosedAlpha}
  Let $A_\bot=(Q_\bot,\Delta_\bot,i,F_\bot)$ be the name-dropping modification of
  an NDA $A = (Q,\Delta, i, F)$. We proceed via induction on the length of words and
  strengthen the induction hypothesis to state that the language $L_0(j,r)$ of every state
  $(j,r)\in Q_\bot$ is closed under $\alpha$-equivalence. The base case is trivial. For the
  induction step, let
  $w = \gamma v \aeq \delta v' = w'$ be a pair of $\alpha$-equivalent words, and suppose
  that $w \in L_0(j,r)$. We prove that $w' \in L_0(j,r)$ by distinguishing cases for $\gamma$:
  \begin{itemize}
  \item $\gamma = a,a\rb$: By the definition of $\alpha$-equivalence, we have
    $\delta = \gamma$ and $v \aeq v'$ with $\lo(v) = \lo(v')$. Since $(j,r)$ accepts $w$ in $A_\bot$, there is a
    transition $(j,r) \trans{\gamma} (k,r') \in \Delta_\bot$ with $(j,r')$ accepting $v$ (in
    $A_\bot$). By the induction hypothesis, $(k,r')$ also accepts $v'$, hence $(j,r)$
    accepts $\gamma v' = \delta v' = w'$ in $A_\bot$.

  \item $\gamma = \q{a}$: By the definition of $\alpha$-equivalence,
    we have $w' = \q{b} v' = \delta v'$ for some $b \in \names$ with
    $v \aeq v'$ and $\set{a,b} \cap \lo(v) = \emptyset$. Since $(j,r)$ accepts $w$ in
    $A_\bot$, there exists a transition $(j,r) \trans{\q{a}} (k,r') \in \Delta_\bot$ such that
    $(k,r')$ accepts $v$ (in $A_\bot$). By the induction hypothesis, $(k,r')$ also accepts
    $v'$. By \cref{nameDropWellDefNDA}, $\Delta_\bot$ is left $\alpha$-invariant,
    thus from $(j,r) \trans{\q{a}} (k,r')$ we obtain a transition
    $(j,r) \trans{\q{b}} (k,r'') \in \Delta_\bot$ for some state $(k,r'')$ with
    $\braket{a}(k,r') = \braket{b}(k,r'')$. By \cref{lem:supptrans}, we have
    $a \notin \supp(k,r')$, hence we obtain $(k,r') = (k,r'')$
     and therefore $(j,r)$ accepts $w' = \q{b} v'$ in $A_\bot$.
        
  \item $\gamma = \lb a$: By the definition of $\alpha$-equivalence, we have
    $\delta = \lb b$ and $\braket{a} v = \braket{b} v'$, i.e.
    $v' = (a\, b)\cdot v$, for some $b \in \names$. Since $(j,r)$ accepts $w$ in $A_\bot$,
    there exists a transition $(j,r) \trans{\lb a} (k,r') \in \Delta_\bot$ such that
    $(k,r')$ accepts $v$ (in $A_\bot$). By construction, let $(k,r'') \in Q_\bot$ such that
    $r'$ extends $r''$ and $\supp(r'') = \supp(r')\setminus\set{b}$. Since $(k,r')$ accepts
    $v$, we have that $\lo(v) \subseteq \supp(k,r')$. Furthermore, since
    $\braket{a} v = \braket{b} v'$, we have $b \notin \lo(v)$.
    Hence, $\lo(v) \subseteq \supp(k,r'')$ so that $(k,r'')$ accepts $v$ by
    \cref{lem:restrAccFNDN}.  By equivariance of acceptance, $(a\,b)\cdot (k,r'')$ accepts
    $(a\,b) \cdot v = v'$. Since $(j,r) \trans{\lb a} (k, r')\in \Delta_\bot$, we have a
    transition $(j,r) \trans{\lb a} (k,r'')\in \Delta_\bot$ by construction. By
    $\alpha$-invariance of $\Delta_\bot$, there exists a transition
    $(j,r)\trans{\lb b} (a\,b)\cdot (k,r'')\in \Delta_\bot$. Thus, $(j,r)$ accepts
    $\lb b v' = w'$.\qedhere
  \end{itemize}
\end{proofappendix}

The following lemma provides an alternative characterization of clauses
(3) through (6) of \cref{def:ndm}: According to the definition, each transition
of an NDA yields possibly several transitions in the name-dropping modification for any state
with dropped names. Yet, from a state in the name-dropping modification,
we can also trace back the construction and thence obtain a transition for each extension
of the state in the original NDA.

\begin{lemma}[label=lem:everyExt]
  Let $A = (Q,\Delta, i, F)$ be an NDA and $A_\bot = (Q_\bot,\Delta_\bot,i,F_\bot)$ its name-dropping modification.
  \begin{enumerate}
  \item If $(j,r)\trans{\gamma}(k,r') \in \Delta_\bot$ for $\gamma = a,a\rb,\q{a}$, then for
    every $\bar{r}$ extending $r$, there is some transition
    $(j,\bar{r})\trans{\gamma}(k,\bar{r}')\in \Delta$ with $\bar{r}'$ extending $r'$.
    
  \item If $(j,r)\trans{\lb a}(k,r')\in \Delta_\bot$, then for each $\bar{r}$ extending
    $r$, there is some name $b\in \names$ fresh for~$r'$ and some transition
    $(j,\bar{r})\trans{\lb b}(k,\bar{r}'')\in \Delta$ with $\bar{r}''$ extending
    $(a\, b)\cdot r'$.
  \end{enumerate}
\end{lemma}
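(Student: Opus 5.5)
The plan is to unfold the definition of $\Delta_\bot$ in \cref{def:ndm}, which gives one witnessing transition in $A$ for \emph{some} extension $\bar f$ of the source. We then transport this witness to an \emph{arbitrary} extension $\bar r$ by a permutation that fixes $\supp(r)$ and is applied via equivariance of $\Delta$. The key observation is that any two total injective extensions $\bar r_1,\bar r_2\in\names^{\#n_j}$ of the same partial map $r\in\names^{\$n_j}$ are related by a finite permutation $\pi$ with $\pi\cdot\bar r_1=\bar r_2$ and $\pi(c)=c$ for all $c\in\supp(r)$. To build $\pi$, map $\bar r_1(m)$ to $\bar r_2(m)$ for each $m$ with $r(m)$ undefined; these values avoid $\supp(r)$ by injectivity. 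Then extend this partial bijection between the finite sets $\{\bar r_1(m)\}$ and $\{\bar r_2(m)\}$ to a finite permutation of $\names$ that is the identity on $\supp(r)$. This extension exists because both finite sets are disjoint from $\supp(r)$.

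For part (1), with $\gamma\in\{a,a\rb\}$, clauses (3) and (5) yield $(j,\bar f)\trans{\gamma}(k,\bar g)$ in $\Delta$ with $\bar f$ extending $r$ and $\bar g$ extending $r'$. Given an arbitrary $\bar r$ extending $r$, choose $\pi$ as above with $\pi\cdot\bar f=\bar r$ fixing $\supp(r)$. Equivariance gives $(j,\bar r)\trans{\pi\cdot\gamma}(k,\pi\cdot\bar g)$. Since $a\in\supp(r)$ (by clause (3), resp.\ the side condition in clause (5)), we have $\pi\cdot\gamma=\gamma$. Since $\supp(r')\subseteq\supp(r)$, the map $\pi\cdot\bar g$ still extends $r'$.

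For $\gamma=\q{a}$, clause (6) gives $(j,\bar f)\trans{\q{b}}(k,\bar g)$ with $\bar g$ extending some $g'$ satisfying $\braket{b}g'=\braket{a}r'$. By name erasure, $a\notin\supp(r')$; hence $b\notin\supp(g')$, and indeed $g'=r'$ when the abstraction is nontrivial (the case $a\notin\supp(r')$ makes $\braket{a}r'=\braket{b}g'$ force $g'=r'$). Transporting by $\pi$ gives $(j,\bar r)\trans{\q{\pi(b)}}(k,\pi\cdot\bar g)$, where $\pi\cdot\bar g$ extends $r'$. We then apply left $\alpha$-invariance of $\Delta$ (\cref{rem:right-alpha}) to replace $\q{\pi(b)}$ by $\q{a}$ with the same target. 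This requires $a\notin\supp(k,\pi\cdot\bar g)$, which is the delicate point: $\pi\cdot\bar g$ may contain $a$ in positions where $r'$ is undefined. To avoid this, we first choose the extension $\bar f$ or $\pi$ carefully, or invoke right $\alpha$-invariance (\cref{lem:rightR}) and the support lemma to rename away any occurrence of $a$ in the undefined positions. I expect this bookkeeping to be the main obstacle. If a full repair is not possible, the statement should be read as asserting a $\q{a}$-transition up to the free choice of $a$ allowed by the unknown-letter notation.

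For part (2), clause (4) gives $(j,\bar f)\trans{\lb c}(k,\bar g)$ with $\bar g$ extending $g'$ and $\braket{c}g'=\braket{a}r'$. Transport by $\pi$ as before to obtain $(j,\bar r)\trans{\lb \pi(c)}(k,\pi\cdot\bar g)$. Next pick $b$ fresh for $r'$, for $\bar r$, and for $\pi\cdot\bar g$, and use left $\alpha$-invariance of $\Delta$ to rename the bound name $\pi(c)$ to $b$, which yields target $\bar r''=(\pi(c)\,b)\cdot\pi\cdot\bar g$. It remains to check that $\bar r''$ extends $(a\,b)\cdot r'$. This follows by computing with the abstraction equation $\braket{\pi(c)}(\pi\cdot g')=\braket{a}r'$: since $\pi$ fixes $\supp(r)\supseteq\supp(r')\setminus\{a\}$, both sides equal $\braket{b}((a\,b)\cdot r')$ after renaming to the fresh $b$.
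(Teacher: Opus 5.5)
Your route is the paper's: take the witness supplied by \cref{def:ndm}, move it to the given extension $\bar r$ by a permutation $\pi$ with $\pi\cdot\bar f=\bar r$ (any such $\pi$ fixes $\supp(r)$), use $a\in\supp(r)$ to keep the label for $a$ and $a\rb$, and use left $\alpha$-invariance for $\lb a$ and $\q{a}$. The $a$ and $a\rb$ cases are fine, and so is part~(2). There, your explicit renaming to a genuinely fresh $b$ is actually more careful than the paper's choice $b:=\pi(c)$. That choice can violate the required freshness of $b$ for $r'$ when $\pi(c)=a\in\supp(r')$.

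The step you leave open for $\gamma=\q{a}$ is a genuine gap, and neither proposed repair can close it. Because $\pi\cdot\bar f=\bar r$ fixes $\pi$ on $\supp(\bar f)\supseteq\supp(\bar g)$, the transported target $\pi\cdot\bar g$ is completely determined. Right $\alpha$-invariance follows from equivariance and name erasure and only relates transitions out of different sources, so it cannot create a $\q{a}$-transition out of $\bar r$ that is not already there.

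In fact part~(1) is false for $\q{a}$. Let $A$ have states $p(x,y)$ for $(x,y)\in\names^{\#2}$ and final states $s(x)$ for $x\in\names$, with transitions $p(x,y)\trans{\q{c}}s(x)$ for all $c\neq x$. This is an NDA. For $a\neq d$, clause~(6) gives $p(\bot,d)\trans{\q{a}}s(\bot)$ in $A_\bot$, witnessed by $p(e,d)\trans{\q{d}}s(e)$ for any $e\neq d$. Now $p(a,d)$ extends $p(\bot,d)$, yet every unknown transition out of $p(a,d)$ goes to $s(a)$ and carries a label $\q{c}$ with $c\neq a$. So $p(a,d)$ has no $\q{a}$-transition at all. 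The paper's proof has exactly this hole: its appeal to left $\alpha$-invariance to obtain $(j,\bar r)\trans{\q{a}}(k,\tilde r)$ is unjustified precisely when $a$ occurs in $\pi\cdot\bar g$.

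What your transport argument does establish is the weakened form you suggest as a fallback. It gives $(j,\bar r)\trans{\q{c}}(k,\pi\cdot\bar g)$, with $\pi\cdot\bar g$ extending $r'$, for every $c\notin\supp(\pi\cdot\bar g)$. This is all that \cref{lem:NDMSameAlpha} needs. There, with $(k,r')$ accepting $v$ and $(k,\pi\cdot\bar g)$ accepting some $v'\aeq v$, Rule~5 of \cref{def:Alpha} gives $\q{c}v'\aeq\q{a}v$. Indeed $a\notin\lo(v)$ because $a\notin\supp(r')$, and $c\notin\lo(v')=\lo(v)$ because $c\notin\supp(\pi\cdot\bar g)$.
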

\begin{proofappendix}{lem:everyExt}
\begin{enumerate}
\item Let $(j,r) \trans{\gamma} (k,r') \in \Delta_\bot$ for $\gamma = a,a\rb,\q{a}$ and
  $(j,\bar{r}) \in Q$ with $\bar{r}$ extending $r$.  By the construction of the
  name-dropping modification, there is some $\bar{r}''$ that extends $r$ such that
  $(j,\bar{r}'') \trans{\gamma}(k,\bar{r}') \in \Delta$ and $\bar{r}'$ extends $r'$. Let
  $\pi \in \permG$ such that $\pi \cdot \bar{r}'' = \bar{r}$; this $\pi$ exists, since
  $\bar{r}''$ and $\bar{r}$ are both total functions extending $r$ in the same single-orbit
  component $\names^{\#n_j}$ of the nominal state set $Q$. Then by equivariance, from
  $(j, \pi\cdot \bar{r}'') = (j,\bar{r})$ we obtain
  $(j,\bar{r})\trans{\pi\cdot \gamma} (k,\pi\cdot\bar{r}')\in \Delta$. We distinguish cases
  for~$\gamma$. If $\gamma=a$ or $\gamma=a\rb$ and $(j,r)\trans{\gamma}(k,r')\in \Delta_\bot$,
  then $a \in \supp(r)$, hence $\pi$ fixes $a$ and therefore $\pi\cdot \gamma = \gamma$.
  If $\gamma=\q{a}$, then by left $\alpha$-invariance of $\Delta$ for $\q{a}$-transitions,
  from $(j,\bar{r})\trans{\pi\cdot\gamma}(k,\pi\cdot\bar{r}')$ we obtain a transition
  $(j,\bar{r})\trans{\gamma}(k,\tilde{r})\in\Delta$ for some $\tilde{r}$ extending $r'$.
  In all cases, there exists a transition $(j,\bar{r})\trans{\gamma}(k,\tilde{r})\in\Delta$
  with $\tilde{r}$ extending $r'$.

\item Let $(j,r) \trans{\lb a} (j,r')\in \Delta_\bot$ and $(j,\bar{r}) \in Q$ with
  $\bar{r}$ extending $r$. By the construction of the name-dropping modification, we have
  some $(j,\bar{r}') \in Q$ with $\bar{r}'$ extending $r$ and
  $(j,\bar{r}')\trans{\lb c}(k,\bar{r}''')\in \Delta$ with $\bar{r}'''$ extending some
  $r'''\in Q_\bot$ such that $\braket{a} r' = \braket{c} r'''$.  As in the
  previous case, let $\pi \in \permG$ such that $\bar{r} = \pi \cdot \bar{r}'$ and put
  $b = \pi(c)$. By equivariance of $\Delta$ we have
  $\pi\cdot((o,\bar{r}')\trans{\lb c} (p,\bar{r}'''))= (o,\bar{r})\trans{\lb
    b}(p,\pi\cdot\bar{r}''') \in \Delta$. We put $r'' = \pi \cdot r'''$. Since $\bar r'''$
  extends $r'''$, we have that $\bar r'' = \pi\cdot \bar r'''$ extends
  $\pi \cdot r''' = r''$. Moreover, since $\pi \cdot \bar{r}' = \bar{r}$ holds for
  $\bar{r}'$ and $\bar{r}$ both extending $r$, $\pi$ fixes
  $\supp(r)\supseteq\supp(r')\setminus\set{a}$.  From
  $\braket{a} r' = \braket{c} r'''$ we conclude that
  $\braket{b} r'' = \braket{b} (\pi\cdot r''') = \braket{a} r'$.
  This implies that $r'' = (a \, b)\cdot r'$ and that $b$ is fresh for $r'$.
  \qedhere
\end{enumerate}
\end{proofappendix}
\begin{lemma}[label=lem:NDMSameAlpha]
  An NDA accepts the same alphatic language as its name-dropping modification.
\end{lemma}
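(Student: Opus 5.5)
We have to show $L_\alpha(A) = L_\alpha(A_\bot)$, i.e.\ that the two literal languages have the same $\alpha$-closure. We prove the two inclusions separately.

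\emph{Inclusion $L_\alpha(A)\subseteq L_\alpha(A_\bot)$.} We show $L_0(A)\subseteq L_0(A_\bot)$. The initial state $i=(j,\bar r)$ of $A$ is a total injective map, and as an element of $Q_\bot$ it extends itself. Every word $w$ accepted by $i$ satisfies $\lo(w)\subseteq\supp(i)$. Indeed, the support lemma (\cref{lem:supptrans}) gives, by a routine induction along an accepting run, that every left-open name of a word accepted by a state lies in the support of that state. Hence \cref{lem:restrAccFNDN}, applied with $r=\bar r$, yields $w\in L_0(A_\bot)$. Taking $\alpha$-equivalence classes gives the inclusion.

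\emph{Inclusion $L_\alpha(A_\bot)\subseteq L_\alpha(A)$.} This is the main step. We prove, for every state $(j,r)\in Q_\bot$, every $\bar r$ extending $r$, and every $w\in L_0(j,r)$, that there is $w'\aeq w$ with $w'\in L_0(j,\bar r)$. The proof is by induction on the length of $w$.

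The base case uses clause~(2) of \cref{def:ndm}: finality of $(j,r)$ transfers to every extension $\bar r$.

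For the inductive step, write $w=\gamma v$ with first transition $(j,r)\trans{\gamma}(k,r')$ in $A_\bot$.
\begin{itemize}
\item Case $\gamma\in\{a,a\rb,\q{a}\}$. By \cref{lem:everyExt}(1) there is a transition $(j,\bar r)\trans{\gamma}(k,\bar r')$ in $A$ with $\bar r'$ extending $r'$. The induction hypothesis gives $v'\aeq v$ accepted by $(k,\bar r')$, so $\gamma v'$ is accepted by $(j,\bar r)$.

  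It remains to check $\gamma v'\aeq\gamma v$. For $a$ this is Rule~3 of \cref{def:Alpha}. For $a\rb$ and $\q{a}$ we use Rules~4 and~5, which need $a\notin\lo(v)$; this holds because \cref{fact:supp-lo} gives $\lo(v')=\lo(v)$, and $v$ is accepted after the name $a$ has been erased.
\item Case $\gamma=\lb a$. By \cref{lem:everyExt}(2) there are $b$ fresh for $r'$ and a transition $(j,\bar r)\trans{\lb b}(k,\bar r'')$ with $\bar r''$ extending $(a\,b)\cdot r'$. By equivariance of acceptance (\cref{lem:acceptEquiv}), $(k,(a\,b)\cdot r')$ accepts $(a\,b)\cdot v$ in $A_\bot$. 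The induction hypothesis then gives $v''\aeq(a\,b)\cdot v$ accepted by $(k,\bar r'')$, so $(j,\bar r)$ accepts $\lb b\,v''$.

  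It remains to show $\lb b\,v''\aeq\lb a\,v$. We use \cref{alphaRuleAdmissible}, which requires $\braket{b}[(a\,b)\cdot v]_\alpha=\braket{a}[v]_\alpha$. By \cref{fact:supp-lo} this reduces to $b\notin\lo(v)$. That holds because $\lo(v)\subseteq\supp(r')$, again by the support-lemma argument above, while $b$ is fresh for $r'$.
\end{itemize}

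The main obstacle is keeping track of the support conditions in the $\lb a$ case, namely that $\lo(v)\subseteq\supp(k,r')$ for words accepted in $A_\bot$, so that the freshness of $b$ transfers from $r'$ to $v$. This is why we would first isolate, as an auxiliary claim, the inclusion $\lo(w)\subseteq\supp(q)$ for every word $w$ accepted by a state $q$ of any NDA, proved by induction using \cref{lem:supptrans}. Applying the main claim to $(j,r)=i$ with $\bar r=i$ then yields the second inclusion.
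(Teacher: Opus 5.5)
Your proof is correct and follows the paper's argument. It uses the same induction on word length, strengthened to all total extensions $\bar r$ of a state $(j,r)$ of $A_\bot$, with \cref{lem:everyExt} lifting the first transition; deriving the easy inclusion from \cref{lem:restrAccFNDN} instead of the paper's remark that $A$ is a subautomaton of $A_\bot$ is a harmless variation.

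Your $\lb a$ case is in fact more robust than the paper's: you transport along $(a\,b)$ inside $A_\bot$ before invoking the induction hypothesis at $\bar r''$, and get $b\notin\lo(v)$ from $\lo(v)\subseteq\supp(r')$, whereas the paper asks for an extension $\bar r'$ of $r'$ with $\braket{a}\bar r'=\braket{b}\bar r''$, which does not exist if $a\neq b$ and $a$ occurs in $\bar r''$. The one weak spot you inherit from the paper is the $\q{a}$ case: \cref{lem:everyExt}(1) can fail when $a$ occurs in $\bar r$ outside $r$, but taking a $\q{c}$-step from $(j,\bar r)$ with $c$ fresh and then applying Rule~5 repairs it.
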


\begin{proofsketch}
  The inclusion $L_\alpha(A)\subseteq L_\alpha(A_\bot)$ is immediate since $A_\bot$ subsumes
  $A$.
  For the converse, we show that if $A_\bot$ accepts a word $w$, the original automaton $A$
  accepts some $w'\aeq w$.
  Proceed by induction on the length of $w=\gamma v$ for total extensions $\bar r$
  of states $(j,r)\in Q_\bot$ in $A_\bot$.
  Use \cref{lem:everyExt} to lift the first transition of an accepting run in $A_\bot$ to a
  corresponding transition in $A$ and apply the induction hypothesis.
  The allocating case requires changing the binder name, for which we choose a fresh
  name $b$ provided by \cref{lem:everyExt}. 
\end{proofsketch}
\begin{proofappendix}{lem:NDMSameAlpha}
  \sloppy Let $A = (Q,\Delta, i, F)$ be an NDA and $A_\bot = (Q_\bot, \Delta_\bot, i, F_\bot)$ be
  its name-dropping modification. The inclusion $L_\alpha(A)\subseteq L_\alpha(A_\bot)$ is
  trivial, since the NDA $A$ is a subautomaton of $A_\bot$.

  We proceed to show that $L_\alpha(A_\bot)\subseteq L_\alpha(A)$: we prove that for each
  word $w$ accepted by $A_\bot$ there is an $\alpha$-equivalent word $w'\aeq w$ accepted by
  $A$. We proceed via induction on the length of words. The base case is trivial (since
  final states remain final when names are dropped). For the induction step, let
  $w=\gamma v$ be accepted by some $(j,r)\in Q_\bot$. We strengthen the induction hypothesis
  to that for some state $(j,r)\in Q_\bot$ accepting $w$, each $(j,\bar{r})\in Q$ with $\bar{r}$ extending $r$
  accepts some $w' \aeq w$. We distinguish cases for $\gamma$:
  \begin{itemize}
  \item $\gamma = a,a\rb$: Let $(j,\bar{r})\in Q$ with $\bar{r}$ extending $r$. We
    show that $(j,\bar{r})$ accepts some $w'\aeq w$.  We have a transition
    $(j,r)\trans{\gamma}(k,r') \in \Delta_\bot$ with $(k,r')$ accepting $v$ (in $A_\bot$).
    By \cref{lem:everyExt}, we also have a transition $(j,\bar{r})\trans{\gamma}(k,\bar{r}') \in \Delta$ with
    $\bar{r}'$ extending~$r'$. By the induction hypothesis,
    $(k,\bar{r}')$ accepts some $v' \aeq v$. Therefore, $(j,\bar{r})$ accepts
    $w' = \gamma v' \aeq \gamma v = w$.

  \item $\gamma = \q{a}$: Let $(j,\bar{r})\in Q$ with $\bar{r}$ extending $r$. Since
    $(j,r)$ accepts $w = \q{a} v$ in $A_\bot$, there is a transition
    $(j,r)\trans{\q{a}}(k,r') \in \Delta_\bot$ with $(k,r')$ accepting $v$ (in $A_\bot$).
    By \cref{lem:everyExt}, we obtain a transition
    $(j,\bar{r})\trans{\q{a}}(k,\bar{r}') \in \Delta$ with $\bar{r}'$ extending $r'$.
    By the induction hypothesis, $(k,\bar{r}')$ accepts some $v'\aeq v$,
    hence we eventually have $w' = \q{a} v' \aeq \q{a} v = w$, and $(j,\bar{r})$
    accepts $w'$.

  \item $\gamma = \lb a$: Let $(j,\bar{r})\in Q$ with $\bar{r}$ extending $r$, we show
    that $(j,\bar{r})$ accepts some $w'\aeq w$.  Since $(j,r)$ accepts $w$, there is a
    transition $(j,r)\trans{\lb a}(k,r') \in \Delta_\bot$ with $(k,r')$
    accepting~$v$. By \cref{lem:everyExt}, we have a name $b \in \names$ and a transition
    $(j,\bar{r})\trans{\lb b}(k,\bar{r}'') \in \Delta$ with~$\bar{r}''$ extending
    $(a\, b)\cdot r'$. Choose some $\bar{r}'$ extending $r'$ such that
    $\braket{a}\bar{r}' = \braket{b} \bar{r}''$.  By the induction hypothesis,
    $(k,\bar{r}')$ accepts some $v''\aeq v$. We have that
    $(a\, b)\cdot \bar{r}' = \bar{r}''$.  By equivariance of acceptance, we have that
    $\bar{r}''$ accepts $(a\, b)\cdot v''$. Consequently, $(j,\bar{r})$ accepts the word $w'
    = \lb b (a\, b)\cdot v''$. Finally, since
    $\braket{a}\bar{r}' = \braket{b}\bar{r}''$ and $v''\in L(\bar{r}')$ imply
    that $b\notin\lo(v'')$ and
    $a\notin\lo((ab)\cdot v'')$, we have that
    $\braket{a} v'' = \braket{b}((ab) v'')$. Since $v\aeq v''$ holds, we also
    have
    $\braket{a}\lbrack v\rbrack_\alpha = \braket{b} \lbrack(a\ b)\cdot
    v''\rbrack_\alpha$, thus eventually
    $w = \lb a v \aeq \lb b(a\, b)\cdot v'' = w'$.\qedhere
  \end{itemize}
\end{proofappendix}
\begin{theorem}\label{th:nameDropModAlphaClosed}
  The name-dropping modification closes the language of an NDA under $\alpha$-equivalence.
\end{theorem}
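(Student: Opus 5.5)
The theorem follows by combining \cref{lem:NDMClosedAlpha} and \cref{lem:NDMSameAlpha}. To make it precise, we read \enquote{closes the language of $A$ under $\alpha$-equivalence} as the identity
\[
  L_0(A_\bot) = \clal(L_0(A)),
\]
for an NDA $A$ with strong state set; \cref{prop:ndaStrong} makes this assumption harmless. We prove the two inclusions separately.

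For $\clal(L_0(A)) \subseteq L_0(A_\bot)$, take $w \aeq w'$ with $w' \in L_0(A)$. We first show $L_0(A) \subseteq L_0(A_\bot)$, i.e.\ that $A$ embeds into $A_\bot$ along $\names^{\#n_j} \subseteq \names^{\$n_j}$. A total map extends itself, and for transitions of $A$ the support side conditions in clauses (3)--(6) of \cref{def:ndm} hold by the support lemma (\cref{lem:supptrans}). For clauses (4) and (6) we take $b=a$ and $g'=g$. Final states of $A$ remain final in $A_\bot$. Hence $w' \in L_0(A_\bot)$. Since $L_0(A_\bot)$ is closed under $\alpha$-equivalence by \cref{lem:NDMClosedAlpha}, it follows that $w \in L_0(A_\bot)$.

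For $L_0(A_\bot) \subseteq \clal(L_0(A))$, let $w \in L_0(A_\bot)$. Then $[w]_\alpha \in L_\alpha(A_\bot) = L_\alpha(A)$ by \cref{lem:NDMSameAlpha}. So some $w' \in L_0(A)$ satisfies $w \aeq w'$, and therefore $w \in \clal(L_0(A))$.

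As a consequence, $L_\alpha(A_\bot) = L_\alpha(A)$, and the literal language of $A_\bot$ is exactly the $\alpha$-closure of that of $A$.

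The only point that needs care is the embedding of $A$ into $A_\bot$ in the first inclusion, specifically the support side conditions. For allocating transitions, \cref{lem:supptrans}(2) gives $\supp(g)\subseteq\supp(f)\cup\{a\}$. For unknown transitions, \cref{lem:supptrans}(3) gives $\supp(g)\subseteq\supp(f)$. Beyond this check, the argument is bookkeeping with the two earlier lemmas.
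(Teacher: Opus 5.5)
Your proof is correct and is essentially the paper's: you combine closure of $L_0(A_\bot)$ under $\alpha$-equivalence (\cref{lem:NDMClosedAlpha}) with $L_\alpha(A_\bot)=L_\alpha(A)$ (\cref{lem:NDMSameAlpha}), and you make explicit the two inclusions of $L_0(A_\bot)=\clal(L_0(A))$. Your separate check that $A$ embeds into $A_\bot$ is fine, though for clause~(5) you also need $a\in\supp(\bar f)$, which follows from \cref{lem:supp-pfin} rather than from item (4) of \cref{lem:supptrans} as stated; the check is in any case redundant, since $L_0(A)\subseteq L_0(A_\bot)$ already follows from $L_\alpha(A)\subseteq L_\alpha(A_\bot)$ together with closedness of $L_0(A_\bot)$.
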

\begin{proof}
  The literal language of the name-dropping modification~$A_\bot$ is closed under
  $\alpha$-equivalence by \cref{lem:NDMClosedAlpha}, and its alphatic language coincides
  with that of $A$ by \cref{lem:NDMSameAlpha}.  Hence, $A_\bot$ accepts exactly the
  $\alpha$-renamings of words that $A$ accepts literally.
\end{proof}
\section{A Kleene Theorem for NDA}\label{sec:kleene}


We present a Kleene theorem establishing the precise correspondence between NDA and regular
deallocation expressions over~$\ah$, i.e.~an algebraic description for languages expressible
by NDA. For this purpose, we are going to introduce a representation of an NDA as a
\emph{D-NFA}, a (classical) non-deterministic finite automaton (NFA) with an additional
constraint on the transitions. Our results then allow translating between NDA, regular
deallocation expressions and D-NFA as shown in \cref{fig:kleene}.

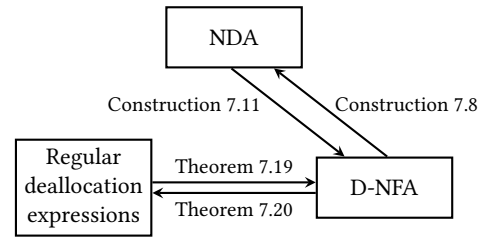
\begin{figure}[h]
\centering
\begin{tikzpicture}[
    auto,
    box/.style={rectangle, draw=black, thick, minimum width=1.8cm, minimum height=0.8cm, align=center},
    arrow/.style={->,>=stealth,thick},
    lemma/.style={font=\small}
]
\begin{scope}[x = 4cm, y = 2cm]
\node[box] (regdex) {Regular\\deallocation\\expressions};
\node[box] (dnfa) at (1,0) {D-NFA};
\node[box] (nda) at (0.5,1) {NDA};

\draw[arrow] ([yshift=2pt]regdex.east) -- node[above, lemma] {\cref{th:regdexToDnfa}} ([yshift=2pt]dnfa.west);
\draw[arrow] ([yshift=-2pt]dnfa.west) -- node[below, lemma] {\cref{th:dnfaToRegdex}} ([yshift=-2pt]regdex.east);
\draw[arrow] ([xshift=1pt]dnfa.north) -- node[right, lemma, outer sep = 3pt, pos = 0.6] {\cref{con:nom}} (nda);
\draw[arrow] ([xshift=-1pt]nda.south) -- node[left, lemma, outer sep = 3pt, pos = 0.4] {\cref{con:sRes}} (dnfa);
\end{scope}
\end{tikzpicture}
\caption{Correspondence between regular deallocation expressions, D-NFA, and NDA.
Together, this yields a Kleene-theorem for NDA.}
\label{fig:kleene}
\end{figure}

We recall the definition of regular expressions: A (classical) regular expression $r$ over
$\ah$ is generated by the following grammar
\[
  r ::= \emptyset \mid \epsilon \mid \gamma \mid r\cdot r \mid r + r \mid r^*,
  \qquad\text{where $\gamma \in \ah$}.
\]
The \emph{literal language} $L(r)\subseteq \ahs$ is defined in the standard way:
\begin{itemize}
  \item $L(\emptyset)=\emptyset$, $L(\epsilon)=\{\epsilon\}$, and $L(\gamma)=\{\gamma\}$ for letters $\gamma$;
  \item $L(r+r')=L(r)\cup L(r')$, $L(r\cdot r')=L(r)\cdot L(r')$, and $L(r^*)=L(r)^*$.
\end{itemize}

Given a regular expression $r$, the set $\rc(r)$ of all \emph{right-closed} names of~$r$
consists of all those names $a$ such that there is a word $w\in L(r)$ in which $a$ is
right-closed.

Similarly, the set $\lo(r)$ of all \emph{left-open} names of $r$
consists of all those names $a$ such that there is a word $w \in L(r)$
in which~$a$ is left-open.  Finally, the set $\lc(r)$ of all
\emph{left-closed} names of $r$ consists of all those names that are
left-closed in \emph{every} word of $L(r)$. In symbols, we have
\begin{align}
  \rc(r) &= \textstyle\bigcup_{w \in L(r)} \rc(w),\quad 
  \lo(r) = \bigcup_{w \in L(r)} \lo(w), \label{eq:rc-lo} \\
  \lc(r) &= \textstyle\bigcap_{w \in L(r)} \lc(w). \nonumber
\end{align}
  
%
%
%

We are interested in regular expressions that generate only right non-shadowing words, and
we shall prove in \cref{lem:RegNonShadow} that they have the following syntactic
characterization.

\begin{defn}[label=def:regdex]
  A \emph{regular deallocation expression} is a regular expression $r$ over $\ah$ which is
  generated by the following grammar
  \[
    r,r_1,r_2,r_3 := \emptyset\mid\epsilon\mid\lb a\mid a\rb\mid a
    \mid {?} \mid r_1\cdot r_2 \mid r + r \mid (r_3)^*,
  \]
  where $\rc(r_1)\cap \lo(r_2) = \emptyset$ and
  $\rc(r_3) \cap \lo(r_3) = \emptyset$. 
\end{defn}
Additionally, we write $L_\alpha(r)=\{[w]_\alpha\mid w\in L(r)\}$ for the corresponding alphatic
language.

\begin{lemma}[label=lem:gramDec]
  It is decidable whether a given regular expression over~$\ah$ is a
  regular deallocation expression as per \cref{def:regdex}.
\end{lemma}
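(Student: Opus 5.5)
To decide whether a regular expression $r$ over $\ah$ is a regular deallocation expression, I would reduce the question to computing the finite sets $\rc(s)$, $\lo(s)$ and $\lc(s)$ for every subexpression $s$ of $r$. Once these sets are available, the side conditions of \cref{def:regdex} are checked by a bottom-up traversal of the syntax tree. At each concatenation node $r_1\cdot r_2$ we test $\rc(r_1)\cap\lo(r_2)=\emptyset$, and at each star node $(r_3)^*$ we test $\rc(r_3)\cap\lo(r_3)=\emptyset$. Leaves and sums impose no condition. One reading point: $?$ should be treated as a shorthand for some $\q{a}$, or handled as its own letter contributing nothing to $\lo$ and contributing a bound name. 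Either way, only finitely many names occur in $r$, so all three sets are subsets of the finite set $N$ of names occurring in $r$.

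The key step is computing $\rc(s)$ and $\lo(s)$ as defined in \eqref{eq:rc-lo}. I would not derive compositional recursive formulas from \cref{lem:cnConcat}, because unions over words do not commute with the set differences occurring there. Instead, I would use the fact that $L(s)$ is an ordinary regular language over the finite alphabet $\hat N$, namely the letters over $N$ together with $?$.

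For each name $a\in N$, the property ``$a\in\lo(w)$'' holds exactly when the first letter of $w$ involving $a$ is $a$ or $a\rb$. This is a regular property: it is recognized by a small DFA that scans for the first occurrence of $a$. Similarly, ``$a\in\rc(w)$'' holds exactly when the last occurrence of $a$ in $w$ is $a\rb$ or $\q{a}$, which is also regular. Hence $a\in\lo(s)$ if and only if $L(s)\cap L_{\lo,a}\neq\emptyset$, which is decidable by the standard product construction with an NFA for $s$ followed by an emptiness check. The same argument decides $a\in\rc(s)$. Since $N$ is finite, both sets are computable.

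For the decision procedure itself, $\lc$ is not even needed, since it never appears in the conditions.

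The only subtle point is making sure the definition of $\rc$ and $\lo$ via \cref{tab:namedefs} agrees with these first/last-occurrence characterizations. For instance, $\rc$ is defined recursively from the right, and its clause for $a\rb w$ depends on whether $a$ occurs in $w$ at all, i.e.\ $a\in\lc(w)\cup\lo(w)$. I would verify this by a short induction on $w$, showing that $\lc(w)\cup\lo(w)$ is exactly the set of names occurring in $w$. I expect this consistency check to be the main, though routine, obstacle. Once it holds, decidability follows immediately, because emptiness of regular languages is decidable and there are only finitely many nodes and names to check.
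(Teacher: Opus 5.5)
Your proposal is correct, but it takes a different route from the paper. The paper gives syntax-directed recursive clauses for $\rc(r)$, $\lo(r)$ and $\lc(r)$, with $\lc$ needed as an auxiliary quantity in the concatenation clauses. It argues by structural induction with \cref{lem:cnConcat} that these clauses agree with \eqref{eq:rc-lo}, and then checks the side conditions node by node. You instead compute the sets of \eqref{eq:rc-lo} directly. You use the first-/last-occurrence characterizations of $\lo$ and $\rc$, which do follow from \cref{tab:namedefs} once $\lc(w)\cup\lo(w)$ is identified with the set of names occurring in~$w$, as you plan. This reduces $a\in\lo(s)$ and $a\in\rc(s)$ to non-emptiness of a product of an NFA for~$s$ with a small DFA.

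The paper's route is cheaper and produces closed formulas, but its clauses are exact only under side assumptions:
\begin{itemize}
\item The concatenation clause for $\rc$ is the right-non-shadowing identity of \cref{cor:rcRNS}. The recursion therefore yields $\rc(a\rb\cdot a)=\set{a}$, although semantically the set is $\emptyset$. This is harmless in a bottom-up check that rejects at the first violated node.
\item The concatenation clauses ignore empty operands. For example, $\lo(a\cdot\emptyset)$ is computed as $\set{a}$, so $a\rb\cdot(a\cdot\emptyset)$, which does satisfy \cref{def:regdex}, would be rejected.
\end{itemize}
Your semantic computation avoids both problems and does not need $\lc$ at all. The cost is one product-and-emptiness test per name and subexpression, which is still polynomial.

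Two minor remarks. First, your stated reason for avoiding compositional formulas is overstated. Since $u$ and $v$ range independently, $\bigcup_{u,v}(\rc(u)\setminus\lc(v))=\rc(r_1)\setminus\lc(r_2)$, which is exactly what the paper exploits. The genuine obstacles are the extra $\lo(v)$ term in the general identity of \cref{lem:cnConcat} and empty languages. Second, $\lc(s)$ need not be a subset of~$N$: it equals $\names$ when $L(s)=\emptyset$. This does not affect your argument, since you never use $\lc$.
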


\begin{proofsketch}
  We give recursive clauses for $\rc(r)$ (recursive clauses for
  $\lo(r)$ and $\lc(r)$ are in the appendix):
  \begin{align*}
    \rc(\emptyset) &= \rc(\epsilon) = \rc(\lb a) = \rc(a) = \rc(?) = \emptyset \\
    \rc(a\rb) & = \set{a} \\
    \rc(r_1\cdot r_2) &= \rc(r_2)\cup(\rc(r_1)\setminus \lc(r_2) ) \\
    \rc(r_1+r_2) &= \rc(r_1)\cup\rc(r_2) \\
    \rc(r_3^*) &= \rc(r_3)
  \end{align*}
  Since $r$ contains only finitely many name occurrences, all intermediate sets are finite
  (or equal to $\names$ as for the base case $\lc(\emptyset)=\names$).
  The side conditions in the grammar of \cref{def:regdex} are then decidable by checking the
  finite intersections $\rc(r_1)\cap\lo(r_2)$ and $\rc(r_3)\cap\lo(r_3)$ at each occurrence of
  concatenation and star.
\end{proofsketch}

\begin{proofappendix}{lem:gramDec}
  Towards a decision procedure we first we give recursive definitions of $\rc(r)$, $\lo(r)$
  and $\lc(r)$:
  \begin{align*}
    \rc(\emptyset) &= \rc(\epsilon) = \rc(\lb a) = \rc(a) = \rc(?) = \emptyset \\
    \rc(a\rb) & = \set{a} \\
    \rc(r_1\cdot r_2) &= \rc(r_2)\cup(\rc(r_1)\setminus \lc(r_2) ) \\
    \rc(r_1+r_2) &= \rc(r_1)\cup\rc(r_2) \\
    \rc(r_3^*) &= \rc(r_3)
    \\[5pt]
    \lo(\emptyset) &= \lo(\epsilon) = \lo(\lb a) = \lo(?) = \emptyset \\
    \lo(a\rb) &= \lo(a) = \set{a} \\
    \lo(r_1\cdot r_2) &= \lo(r_1)\cup(\lo(r_2)\setminus \lc(r_1)) \\
    \lo(r_1+r_2) &= \lo(r_1)\cup\lo(r_2) \\
    \lo(r_3^*) &= \lo(r_3)
    \\[5pt]
    \lc(\emptyset) &= \names \\
    \lc(\epsilon) &= \lc(a\rb) = \lc(a) = \lc(?) = \emptyset \\
    \lc(\lb a) &= \set{a} \\
    \lc(r_1\cdot r_2) &= \lc(r_1)\cup(\lc(r_2)\setminus \lo(r_1) ) \\
    \lc(r_1+r_2) &= \lc(r_1)\cap\lc(r_2) \\
    \lc(r_3^*) &= \emptyset \\
  \end{align*}
  We claim that for every regular expression $r$ the sets obtained
  from this recursive definitions coincide with the previously defined
  ones.
  
  We prove the three equalities simultaneously by structural induction on the regular
  expression $r$. The base cases are: 
  \begin{itemize}
    \item $r = \emptyset$: Then $L(r) = \emptyset$, so $\bigcup_{w \in L(r)} \rc(w) = \emptyset = \rc(\emptyset)$ 
    by the definition of $\rc$. Similarly, $\bigcup_{w \in L(r)} \lo(w) = \emptyset = \lo(\emptyset)$ and $\bigcap_{w \in L(r)} 
    \lc(w) = \names = \lc(\emptyset)$.
    \item $r = \epsilon$: Then $L(r) = \{\epsilon\}$, hence $\bigcup_{w \in L(r)} \rc(w) = \rc(\epsilon) = \emptyset 
    = \rc(\epsilon)$ by definition. This holds true similarly for $\lo$ and $\lc$.
    \item $r = \gamma$ where $\gamma \in \ah$: Then $L(r) = \{\gamma\}$, so $\bigcup_{w \in L(r)} \rc(w) = 
    \rc(\gamma)$. Again, by the definition in \cref{lem:gramDec},
    $\rc(\gamma) = \emptyset$ if $\gamma = a$ or $\gamma =
    \lb a$, and $\rc(\gamma) = \{a\}$ if $\gamma = a\rb$. This matches $\rc(\gamma)$ directly.
    The same reasoning applies to $\lo$ and $\lc$.
  \end{itemize}
  
  For the inductive step, we showcase the case distinction for $\rc$;
  the same arguments hold true for $\lo$ and $\lc$ analogously.
  \begin{itemize}
  \item $r = r_1 + r_2$: By definition, $L(r) = L(r_1) \cup L(r_2)$. Therefore, we have
    \begin{align*}
      \textstyle \bigcup_{w \in L(r)} \rc(w)
      &= \textstyle\big(\bigcup_{w \in L(r_1)} \rc(w)\big)
      \cup \big(\bigcup_{w \in L(r_2)} \rc(w)\big)\\
      &= \rc(r_1) \cup \rc(r_2) = \rc(r),
    \end{align*}
    where the second equality follows from the induction hypothesis.
    
  \item $r = r_1 \cdot r_2$: By definition,
    $L(r) = L(r_1) \cdot L(r_2) = \{uv \mid u \in L(r_1), v \in L(r_2)\}$.  Using
    \cref{lem:cnConcat} and the induction hypothesis:
    \begin{align*}
      \textstyle\bigcup_{w \in L(r)} \rc(w)
      &= \textstyle\bigcup_{u \in L(r_1)} \bigcup_{v \in L(r_2)} \rc(uv) \\
      &= \textstyle\bigcup_{u \in L(r_1)} \bigcup_{v \in L(r_2)} \left((\rc(u) \setminus \lc(v)) \cup \rc(v)\right) \\
      &= \textstyle\big(\bigcup_{u \in L(r_1)} \rc(u) \setminus \bigcap_{v \in L(r_2)}
      \lc(v)\big) \mathrel{\cup} \\ 
      &\phantom{=\ \ }\textstyle\bigcup_{v \in L(r_2)} \rc(v) \\
      &= (\rc(r_1) \setminus \lc(r_2)) \cup \rc(r_2) = \rc(r).
    \end{align*}
    \item $r = r_1^*$: By definition, we have
    \[L(r) = L(r_1)^* = \{\epsilon\} \cup L(r_1) \cup L(r_1)L(r_1) \cup \cdots\text{.}\]
    Then we compute:
    \begin{align*}
      &\textstyle\bigcup_{w \in L(r)} \rc(w)\\ 
      &\textstyle= \rc(\epsilon) \cup \bigcup_{w \in L(r_1)} \rc(w) \cup \bigcup_{w_1,w_2 \in L(r_1)} \rc(w_1w_2) \cup \cdots\\
      &\textstyle\subseteq \rc(\epsilon) \cup \bigcup_{w \in L(r_1)} \rc(w) \mathrel{\cup} \\
      & \phantom{\subseteq\ \ }\textstyle\bigcup_{w_1,w_2 \in L(r_1)}
      (\rc(w_1)\cup\rc(w_2)) \cup \cdots\\
      &= \textstyle\bigcup_{w \in L(r_1)} \rc(w) = \rc(r_1) = \rc(r)
    \end{align*}
    The inclusion $\rc(r)= \rc(r_1) \subseteq \bigcup_{w \in L(r)} \rc(w)$ holds by induction since $L(r_1) \subseteq L(r)$.
  \end{itemize}

  Since regular expressions are finite and only finitely many
  names occur in a given $r$, the two sets $\rc(r)$ and $\lo(r)$ are finite, and the set
  $\lc(r)$ is either finite or equal to $\names$. It follows that the three sets can easily
  be computed recursively to check the side conditions for concatenation and Kleene star in
  the grammar in \cref{def:regdex}. Thus, we obtain a decision procedure for that grammar.
\end{proofappendix}

\begin{lemma}\label{lem:RegNonShadow}
  Let~$r$ be a classical regular expression~$r$ over $\ah$. Then the
  following are equivalent.
  \begin{enumerate}
  \item\label{item:r-rns} All words in the language over~$\ah$
    defined by~$r$ are right non-shadowing
  \item\label{item:r-regdex} The expression~$r$ is a regular deallocation expression
    in the sense of \cref{def:regdex}.
  \end{enumerate}
\end{lemma}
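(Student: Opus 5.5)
The proof is by structural induction on~$r$, with the two directions handled separately. In both directions the glue between the syntactic side conditions and the word-level definition is the pair of identities in \eqref{eq:rc-lo}. They say that $\rc(r)$ and $\lo(r)$ are exactly the unions of $\rc(w)$ and $\lo(w)$ over $w\in L(r)$. So a name lies in $\rc(r_1)\cap\lo(r_2)$ if and only if there are witnesses $u\in L(r_1)$ and $v\in L(r_2)$ with that name in $\rc(u)\cap\lo(v)$.

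For \ref{item:r-regdex}$\Rightarrow$\ref{item:r-rns}, the cases $\emptyset$, $\epsilon$ and single letters are trivial, since single-letter words are right non-shadowing, and $+$ is immediate from the induction hypothesis.

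For $r_1\cdot r_2$, take $u\in L(r_1)$ and $v\in L(r_2)$. Both are right non-shadowing by the induction hypothesis. Moreover $\rc(u)\cap\lo(v)\subseteq\rc(r_1)\cap\lo(r_2)=\emptyset$, so $uv$ is right non-shadowing by \cref{lem:rnsConcat}.

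For $r_3^*$, I induct on the number $n$ of factors of $w_1\cdots w_n$ with $w_i\in L(r_3)$. By \cref{lem:cnConcat}, $\rc(w_1\cdots w_k)\subseteq\bigcup_{i\le k}\rc(w_i)\subseteq\rc(r_3)$, while $\lo(w_{k+1})\subseteq\lo(r_3)$. The side condition $\rc(r_3)\cap\lo(r_3)=\emptyset$ therefore lets me apply \cref{lem:rnsConcat} at each step.

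For \ref{item:r-rns}$\Rightarrow$\ref{item:r-regdex}, I first show that \emph{factors of right non-shadowing words are right non-shadowing}. Let $w=uxyv$ be right non-shadowing and suppose $a\in\rc(x)\cap\lo(y)$. By \cref{lem:cnConcat}, $\rc(x)\subseteq\rc(ux)$ and $\lo(y)\subseteq\lo(yv)$. Then $a\in\rc(ux)\cap\lo(yv)$, contradicting that $w$ is right non-shadowing.

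Next, for every subexpression $s$ of $r$, each word of $L(s)$ occurs as a factor of some word of $L(r)$, provided the sibling subexpressions along the path to $s$ have nonempty languages. Granting this, all subexpressions again satisfy~\ref{item:r-rns}, and the induction goes through as follows.

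At a concatenation $r_1\cdot r_2$, suppose $a\in\rc(r_1)\cap\lo(r_2)$. Then there are witnesses $u,v$ with $a\in\rc(u)\cap\lo(v)$, and $uv\in L(r_1\cdot r_2)$ is right shadowing at the split $u\,|\,v$, which is a contradiction.

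At a star $r_3^*$, suppose $a\in\rc(r_3)\cap\lo(r_3)$. Then there are $w_1,w_2\in L(r_3)$ with $a\in\rc(w_1)\cap\lo(w_2)$, and the word $w_1w_2\in L(r_3^*)$ is right shadowing, which is again a contradiction. It matters here that the star condition compares two \emph{possibly different} words of $L(r_3)$; this is exactly what iteration allows.

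The main obstacle is subexpressions with empty language. For example, $(a\rb\, a)\cdot\emptyset$ denotes $\emptyset$, which vacuously satisfies~\ref{item:r-rns}, yet its subterm $a\rb\,a$ violates the grammar. My plan is to first normalise $r$ by the standard language-preserving rewrites $\emptyset\cdot s=s\cdot\emptyset=\emptyset$, $\emptyset+s=s$ and $\emptyset^*=\epsilon$, until either $r=\emptyset$ or no $\emptyset$ occurs in~$r$. This normalisation is decidable, consistent with \cref{lem:gramDec}. I would read the lemma as concerning such normalised expressions, or equivalently as saying that $r$ is language-equivalent to a regular deallocation expression. After normalisation every subexpression has a nonempty language, so the factor argument above applies. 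I expect most of the care to go into stating this normalisation cleanly rather than into the inductive cases themselves.
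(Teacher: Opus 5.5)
Your proof follows the same structural induction as the paper's. For (2)$\Rightarrow$(1) you use \cref{lem:rnsConcat} together with \eqref{eq:rc-lo}. For the side conditions in (1)$\Rightarrow$(2) you use a witness pair $u\in L(r_1)$, $v\in L(r_2)$ (resp.\ $w_1,w_2\in L(r_3)$) that yields a shadowing split.

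The difference is that you make explicit a step the paper skips. In the converse concatenation case, the paper concludes ``by the induction hypothesis, $r_1$ and $r_2$ are regular deallocation expressions'' without ever showing $L(r_1),L(r_2)\subseteq\RNS(\ah)$. Your factor-closure lemma supplies this, and its proof via $\rc(x)\subseteq\rc(ux)$ and $\lo(y)\subseteq\lo(yv)$ from \cref{lem:cnConcat} is correct.

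Your example $(a\rb\,a)\cdot\emptyset$ is also correct. It shows that this step, and with it the lemma as literally stated, fails when the sibling factor has empty language. The sum and star cases are unaffected, since $L(r_i)\subseteq L(r_1+r_2)$ and $L(r_3)\subseteq L(r_3^*)$.

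Your repair is sound. (2)$\Rightarrow$(1) holds for all $r$, and (1)$\Rightarrow$(2) holds for $\emptyset$-normalised $r$; add the symmetric rule $s+\emptyset=s$ to your rewrites. This also yields the weaker reading that every $r$ satisfying (1) is language-equivalent to a regular deallocation expression. The two readings are not literally equivalent, since the second does not say that $r$ itself is a regular deallocation expression, but both follow from your argument.
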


\begin{proofsketch}
  We proceed by structural induction on $r$.
  The base cases ($\emptyset$,$\epsilon$, and $\gamma\in\ah$) are immediate.
  Sums are handled by $L(r_1+r_2)=L(r_1)\cup L(r_2)$.
  For concatenation we employ \cref{lem:cnConcat} and a contradiction argument;
  the star case is analogous.
\end{proofsketch}

  \begin{proofappendix}{lem:RegNonShadow}
  We proceed via structural induction on the regular expression $r$.

  For the base cases $r \in \{\emptyset,\epsilon,a,\lb a,a\rb,\q{a}\}$ the claim is
  immediate: the grammar in \cref{def:regdex} admits all these expressions as
  regular deallocation expressions, and every word in $L(r)$ has length at most
  one, hence is right-non-shadowing.

  For sums $r = r_1 + r_2$ we have $L(r) = L(r_1) \cup L(r_2)$ and the grammar
  imposes no side conditions. Thus $r$ is a regular deallocation expression
  iff $r_1$ and $r_2$ are, which holds immediately by the induction hypothesis.

  For a concatenation $r = r_1 \cdot r_2$, we prove both implications.

  First assume that $r$ is a regular deallocation expression. Then~$r_1$ and~$r_2$ are
  regular deallocation expressions and $\rc(r_1) \cap \lo(r_2) = \emptyset$ by
  \cref{def:regdex}. By the induction hypothesis, for $i=1,2$, we have
  $L(r_i) \subseteq \RNS(\ah)$.  Let $w = w_1 w_2 \in L(r)$ with $w_i \in L(r_i)$. By the
  definitions of $\rc(r_1)$ and $\lo(r_2)$ in \cref{eq:rc-lo}, we have
  $\rc(w_1) \cap \lo(w_2) = \emptyset$. So by \cref{lem:rnsConcat}, $w$ is
  right-non-shadowing.  Therefore, $L(r) \subseteq \RNS(\ah)$ holds.

  Conversely, assume that $L(r) \subseteq \RNS(\ah)$. Suppose that there is some
  $a \in \rc(r_1) \cap \lo(r_2)$. By \cref{eq:rc-lo}, we have words $w_1 \in L(r_1)$ and
  $w_2 \in L(r_2)$ with $a \in \rc(w_1) \cap \lo(w_2)$. Then for the decomposition
  $w_1 w_2 = w_1 \cdot w_2$, we obtain $\rc(w_1) \cap \lo(w_2) \neq \emptyset$, so $w_1 w_2$
  is not right-non-shadowing, contradicting $L(r) \subseteq \RNS(\ah)$. Thus, we have
  $\rc(r_1) \cap \lo(r_2) = \emptyset$. By the induction hypothesis, $r_1$ and $r_2$ are
  regular deallocation expressions, whence so is $r$.

  The case $r = (r_3)^*$ works analogously: Using the induction hypothesis for
  $r_3$ and the above argument for a finite concatenation
  $w = w_1 \cdots w_n$ with each $w_i \in L(r_3)$ shows that the condition
  $\rc(r_3) \cap \lo(r_3) = \emptyset$ in \cref{def:regdex} holds iff
  $L(r_3^*) \subseteq \RNS(\ah)$.\qedhere
\end{proofappendix}

\begin{defn}[label=def:dnfa]
  Let $A = (Q,\Delta, i, F)$ be a classical NFA over the alphabet $\ah$.  The sets $\rc_A(q)$
  and $\lo_A(q)$ of right-closed and left-open names, respectively, of a state $q \in Q$ are
  inductively defined as the least family of sets such that
  \begin{itemize}
  \item if $q \trans{a\rb} q'$, then $\rc_A(q) \cup \set{a} \subseteq \rc_A(q')$
    and $\lo_A(q) \supseteq \set{a} \cup \lo_A(q')$,
    
  \item if $q \trans{a} q'$, then $\lo_A(q) \supseteq \lo_A(q') \cup \set{a}$ and
    $\rc_A(q)\setminus\set{a} \subseteq \rc_A(q')$,
    
  \item if $q \trans{\q{a}} q'$, then $\rc_A(q)\cup\set{a} \subseteq \rc_A(q')$,
    $\lo_A(q) \supseteq \lo_A(q')$, and
    
  \item if $q \trans{\lb a} q'$, then $\lo_A(q) \supseteq \lo_A(q') \setminus \set{a}$ and
    $\rc_A(q)\setminus \set{a} \subseteq \rc_A(q')$.
  \end{itemize}
  We ommit the index $A$ if it is clear from context. We say that $A$ is a \emph{deallocation NFA} (\emph{D-NFA}, for short) if for every state
  $q \in Q$, we have $\rc_A(q) \cap \lo_A(q) = \emptyset$. The \emph{literal language}~$L(A)$
  of~$A$ is its language as an NFA. The \emph{alphatic language}~$\la(A)$
  of~$A$ is the quotient set $\la(A) = \set{ [w]_\alpha \mid w \in L(A)}$.
  The \emph{data language}~$L_D(A)$ of~$A$ is the set $L_D(A) = \db[L(A)]$.
\end{defn}
\begin{rem}\label{rem:dnfa}
  This definition is intuitively understood as follows.  The sets
  $\rc(q)$ and $\lo(q)$ contain the names that must be absent or
  present, respectively, in the support of~$q$ in order for the given
  NFA to be realizable as an NDA. Clearly, the two sets have to be
  disjoint. The rules defining these sets reflect the properties of the support 
  stated in \cref{lem:supptrans}. At the same
  time, $\rc(q)$ contains the names that are right-closed in some word
  that has a run ending in~$q$, while $\lo(q)$ contains the names that
  are left-open in some word having a run beginning in~$q$, so that
  the disjointness property ensures that words having a run from any
  state in the automaton are right non-shadowing, a property
  guaranteed also by NDAs.
\end{rem}

For ease of the upcoming construction of a D-NFA from a regular
deallocation expression in \cref{th:regdexToDnfa}, we introduce
$\epsilon$-transitions into D-NFAs, building on the notion of
classical NFAs with $\epsilon$-transitions.  However, these do not
increase the expressive power of D-NFAs, as we show in the following.

\begin{defn}\label{def:dnfaEps}
  Let $A = (Q,\Delta,i,F)$ be an NFA with $\epsilon$-transitions
  ($(q,\epsilon,q')\in\Delta$). The sets $\rc(q)$ and $\lo(q)$ are defined similarly as
  before, but in addition to the inclusions given in \cref{def:dnfa}, we require that for
  every $(q\trans{\epsilon}q')\in\Delta$, we have $\lo(q) \supseteq \lo(q')$ and
  $\rc(q) \subseteq \rc(q')$. We say that $A$ is a \emph{D-NFA with $\epsilon$-transitions}
  if $\rc(q) \cap \lo(q) = \emptyset$ for all $q\in Q$.
\end{defn}

\begin{lemma}\label{lem:epsDnfa}
  For D-NFAs, $\epsilon$-transitions are admissible, i.e.~for each
  D-NFA with $\epsilon$-transitions $A=(Q,\Delta, i,F)$,
  there is a D-NFA $A'$ (without $\epsilon$-transitions) such that $L(A') = L(A)$.
\end{lemma}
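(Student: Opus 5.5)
We use the standard $\epsilon$-closure construction for NFA and then check that the result is still a D-NFA. Given a D-NFA with $\epsilon$-transitions $A=(Q,\Delta,i,F)$, let $E(q)$ denote the $\epsilon$-closure of $q$, i.e.~the set of states reachable from $q$ by a (possibly empty) sequence of $\epsilon$-transitions. Define $A'=(Q,\Delta',i,F')$ with $(q\trans{\gamma}q'')\in\Delta'$ for $\gamma\in\ah$ iff there is $q'\in E(q)$ with $(q'\trans{\gamma}q'')\in\Delta$. Put $q\in F'$ iff $E(q)\cap F\neq\emptyset$. That $L(A')=L(A)$ is the classical argument: accepting runs of $A$ and $A'$ translate into each other by collapsing, respectively re-inserting, blocks of $\epsilon$-steps. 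So the only real content is that $A'$ satisfies the D-NFA condition $\rc_{A'}(q)\cap\lo_{A'}(q)=\emptyset$ for all $q$.

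The sets $\rc_{A'}$ and $\lo_{A'}$ are the least families satisfying the closure rules of \cref{def:dnfa} for $\Delta'$. We therefore show that the families $\rc_A$ and $\lo_A$ (computed in $A$ with $\epsilon$-transitions per \cref{def:dnfaEps}) already satisfy those rules for $\Delta'$. Minimality then gives $\rc_{A'}(q)\subseteq\rc_A(q)$ and $\lo_{A'}(q)\subseteq\lo_A(q)$, and disjointness is inherited from $A$.

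To verify the rules, take a transition $q\trans{\gamma}q''$ in $\Delta'$ arising from $q\trans{\epsilon}\cdots\trans{\epsilon}q'\trans{\gamma}q''$ in $A$. Iterating the $\epsilon$-rules of \cref{def:dnfaEps} along the chain gives
\[
  \lo_A(q)\supseteq\lo_A(q') \quad\text{and}\quad \rc_A(q)\subseteq\rc_A(q').
\]
Each rule of \cref{def:dnfa} has the shape "$\lo(\text{source})\supseteq$ something monotone in $\lo(\text{target})$" and "$\rc(\text{source})$, possibly with $a$ added or removed, $\subseteq\rc(\text{target})$". The rule holds at $q'$ in $A$, and transferring it to $q$ only weakens it in the right direction: enlarging the source's $\lo$ keeps the $\supseteq$, and shrinking the source's $\rc$ keeps the $\subseteq$. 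For example, for $\gamma=a\rb$ we get
\[
  \rc_A(q)\cup\{a\}\subseteq\rc_A(q')\cup\{a\}\subseteq\rc_A(q'')
  \quad\text{and}\quad
  \lo_A(q)\supseteq\lo_A(q')\supseteq\{a\}\cup\lo_A(q'').
\]
The cases $a$, $\lb a$ and $\q{a}$ are analogous, since the operations $\setminus\{a\}$ and $\cup\{a\}$ are monotone.

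The main point needing care is the direction of the minimality argument. We must use that $\rc_{A'}$ and $\lo_{A'}$ are defined as the least solution, so that exhibiting any solution bounds them from above. We must also make sure that the restriction of $\rc_A$ and $\lo_A$ to $Q$ is a simultaneous solution for both families. This holds because the two families are constrained independently, so a simultaneous solution is just a pair of solutions, and leastness is taken componentwise. Final states impose no constraints on $\rc$ or $\lo$, so the change from $F$ to $F'$ is harmless.
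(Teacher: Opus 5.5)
Your proposal is correct and follows essentially the same route as the paper: standard $\epsilon$-closure elimination, then checking that the original families $\rc_A$ and $\lo_A$ satisfy the clauses of \cref{def:dnfa} over the new transition relation, using the $\epsilon$-clauses of \cref{def:dnfaEps} along the $\epsilon$-prefix, so that minimality yields disjointness. The only difference is cosmetic: the paper takes the closures $E(q)$ as states, whereas you keep $Q$, which spares you checking that $\rc(E(q)) := \rc(q)$ is well defined; you also state the minimality step explicitly, where the paper leaves it implicit.
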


\begin{proofsketch}
  Eliminate $\epsilon$-transitions on the underlying NFA via $\epsilon$-closures $E(q)$ and
  the standard construction of an equivalent $\epsilon$-free NFA.
  Define $\rc(E(q)):=\rc(q)$ and $\lo(E(q)):=\lo(q)$.
  The additional constraints for $\epsilon$-transitions on $\rc$ and $\lo$
  ensure that every visible transition in the quotient construction still
  satisfies the defining inclusions of \cref{def:dnfa}.
  Hence disjointness $\rc\cap\lo=\emptyset$ is preserved and the result is again a D-NFA.
\end{proofsketch}

\begin{proofappendix}{lem:epsDnfa}
  Let $A=(Q,\Delta,i,F)$ be a D-NFA with $\epsilon$-transitions such that for
  every $q\trans{\epsilon}q'\in\Delta$ we have $\lo(q) \supseteq \lo(q')$ and
  $\rc(q) \subseteq \rc(q')$.

  We first remove $\epsilon$-transitions on the underlying NFA by the standard
  $\epsilon$-elimination for NFA. Define the $\epsilon$-closure of a state
  \[
    E(q) = \{q'\in Q \mid q \trans{\epsilon^*} q'\},
  \]
  where $\trans{\epsilon^*}$ are arbitrarily many consecutive $\epsilon$-transitions.
  We put $A'=(Q',\Delta',E(i),F')$ with
  \begin{align*}
    &Q' = \set{E(q) \mid q \in Q},\\
    &E(q)\trans{\gamma}E(q')\in\Delta'\quad  \text{if 
      $\exists p \in E(q).\, p\trans{\gamma}q'\in\Delta$ and $\gamma\neq\epsilon$},\\
    &F' = \{E(q) \in Q' \mid E(q)\cap F\neq\emptyset\}.
  \end{align*}
  Then $A'$ is an NFA without $\epsilon$-transitions such that $L(A) = L(A')$.

  It remains to show that $A'$ is a D-NFA, i.e.~that the conditions on
  $\rc$ and $\lo$ from \cref{def:dnfa} are preserved. We claim that
  $\rc(E(q)) = \rc(q)$ and $\lo(E(q)) = \lo(q)$ satisfy the clauses of
  \cref{def:dnfa} when read over the transition relation $\Delta'$.

  Take a transition $E(q)\trans{\gamma}E(q')\in\Delta'$.  By construction, there is
  a state $p\in E(q)$ such that
  \[
    q \trans{\epsilon^*} p \trans{\gamma} q'
  \]
  is a path in $A$ with the same visible letter $\gamma$.  For each
  $\epsilon$-transition $r\trans{\epsilon}r'$ on this path we have
  $\lo(r) \supseteq \lo(r')$ and $\rc(r) \subseteq \rc(r')$ by assumption.  At
  step $p\trans{\gamma}q'$ the
  condition from \cref{def:dnfa} holds for $A$.  Thus, we easily
  obtain the required clause between $\rc(E(q)),\lo(E(q))$ and $\rc(E(q')),\lo(E(q'))$ for
  the transition $E(q)\trans{\gamma}E(q')$. For instance, we show the
  case for $\gamma = a\rb$, all other cases work analogously:
  \resizebox{\columnwidth}{!}{
    $\rc(E(q))\cup\set{a} = \rc(q)\cup\set{a} \subseteq \rc(p)\cup\{a\}
      \subseteq \rc(q') = \rc(E(q'))$}
  and
    \begin{align*}
      \lo(E(q)) = \lo(q) \supseteq \lo(p) \supseteq \{a\}\cup \lo(q') = \{a\}\cup \lo(E(q')).
    \end{align*}
  Hence, setting $\rc(E(q)) = \rc(q)$ and $\lo(E(q)) = \lo(q)$ for all $q\in Q$,
  the $\rc$ and $\lo$ of $A$ satisfy all the defining
  inequalities of \cref{def:dnfa} for the transition relation $\Delta'$ as
  well. In particular, we still
  have $\rc(q) \cap \lo(q) = \emptyset$ for all $q\in Q$,
  which implies $\rc(E(q)) \cap \lo(E(q)) = \emptyset$ for all $E(q)\in Q'$,
  so $A'$ is a D-NFA.
\end{proofappendix}

We proceed to give a translation between NDA
and D-NFA that preserves the alphatic language.

\begin{construction}[Nominalization]\label{con:nom} Let
  $A = (Q,\Delta,i,F)$ be a D-NFA. We fix a section
  $\iota\colon Q\to Q\times\names^*$ of the left projection
  $Q\times\names^*\to Q$ such that for $q\in Q$,  $\iota(q)=(q,(a_1,\dots,a_n))$
  where $\lo(q)=\{a_1,\dots,a_n\}$ (in other words, we fix an enumeration
  of the elements of $\lo(q)$ for each~$q$). Then the
  \emph{nominalization} of $A$ is the NDA
  $A' = (\cleq Q', \cleq \Delta', i', \cleq F')$ given by
  \begin{equation*}
    Q'  = \iota[Q]\subseteq Q\times\names^*,\qquad
    F'  = \iota[F],\qquad
    i' = \iota(i),
  \end{equation*}
  and $\Delta'$ consists of the following transitions:
  \begin{itemize}
  \item a transition $\iota(q) \trans{\alpha} \iota(q')$ whenever
    $(q \trans{\alpha} q') \in \Delta$ and~$\alpha\in\ah$ has the
    shape $\alpha=a$, $\alpha=\q{a}$, or $\alpha=a\rb$; and
  \item a transition $\iota(q)\trans{\lb a} (q',s)$ whenever
    $(q \trans{\lb c} q') \in \Delta$ and
    $\braket{a}(q',s) = \braket{c}\iota(q')$.
  \end{itemize}
  Here, equivariant closures~$\cleq$ and abstractions are understood
  w.r.t.~the product $Q\times\names^*$ of~$Q$ viewed as a discrete
  nominal set and the nominal set~$\names^*$.
\end{construction}

\begin{lemma}[label=lem:nomNda]
The nominalization of a D-NFA is an NDA.
\end{lemma}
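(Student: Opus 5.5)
We check the conditions of \cref{def:nda} one at a time for $A'=(\cleq Q',\cleq\Delta',i',\cleq F')$. Orbit-finiteness, equivariance and the remaining structural conditions are immediate. The real content is name erasure, which needs the D-NFA disjointness $\rc(q)\cap\lo(q)=\emptyset$ together with the \emph{leastness} of the families $\rc,\lo$.

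\textbf{Structural conditions.} The state set $\cleq Q'$ is the union of the orbits of the finitely many elements $\iota(q)$, $q\in Q$, so it is orbit-finite. Since $\supp(\iota(q))=\lo(q)$, its degree is bounded by $\max_q|\lo(q)|$. The relation $\cleq\Delta'$ and the set $\cleq F'$ are equivariant by construction.

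\textbf{Finite branching.} The set $\Delta'$ is a finite union of sets: for each $(q\trans{\gamma}q')\in\Delta$ we get either one transition, or, in the allocating case, the family of transitions $\iota(q)\trans{\lb a}(q',s)$ with $\braket a(q',s)=\braket c\iota(q')$. Taking equivariant closure, a state $\pi\cdot\iota(q)$ has outgoing transitions only of the form $\pi\cdot t$ with $t\in\Delta'$ leaving $\iota(q)$; a state $(q,s)$ lies in the orbit of exactly one $\iota(q)$ because $Q$ is discrete. For free and deallocating letters, the pairs $(\pi(a),\pi\cdot\iota(q'))$ depend only on $\pi$ restricted to $\supp(\iota(q))\cup\{a\}\cup\supp(\iota(q'))$. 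Step~(1) below shows $a,\supp(\iota(q'))\subseteq\lo(q)$, so $\pi$ fixes all of it and there are finitely many such pairs. For allocating and unknown transitions, the abstractions $\braket a(q',s)$ form a single class per D-NFA transition, and these are fixed by any $\pi$ fixing $\lo(q)$ once we know $\lo(q')\setminus\{c\}\subseteq\lo(q)$.

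\textbf{Left $\alpha$-invariance.} For $\lb a$ this holds by definition, since all $\alpha$-renamings $(q',s)$ are included. For $\q{a}$, the transition $\iota(q)\trans{\q a}\iota(q')$ requires $a\notin\supp\iota(q')=\lo(q')$. Any $b$ with $\braket a\iota(q')=\braket b(q',s)$ then gives $(q',s)=\iota(q')$, so we need $\iota(q)\trans{\q b}\iota(q')$ for all $b\notin\lo(q')$. I would handle this by adding the equivariant/$\alpha$-closure of $\q{}$-transitions and checking that this does not change the language up to the intended reading of $?$. Alternatively, since $?$ abstracts the name, the construction should be read as including all such $b$. This point needs care.

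\textbf{Name erasure.} This is the key step; for $a\rb$ and $\q a$ we must show $a\notin\lo(q')$.
\begin{enumerate}
\item From the D-NFA inclusions, $q\trans{\gamma}q'$ yields $\lo(q')\subseteq\lo(q)\cup\{c\}$ in the allocating case and $\lo(q')\subseteq\lo(q)$ otherwise, with $a\in\lo(q)$ for free and deallocating letters. This is exactly the support lemma shape (\cref{lem:supptrans}).
\item For $q\trans{a\rb}q'$, we have $a\in\rc(q')$ by the first clause of \cref{def:dnfa}. Disjointness $\rc(q')\cap\lo(q')=\emptyset$ then gives $a\notin\lo(q')$.
\item For $q\trans{\q a}q'$, the same argument gives $a\notin\lo(q')$.
\end{enumerate}
Thus the rule for $\q a$ records $a$ in $\rc(q')$ precisely to force name erasure; this is also why $\lo(q)$ rather than some larger set must be used as the support in $\iota$.

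The main obstacle I expect is the careful bookkeeping for left $\alpha$-invariance of unknown transitions and for finite branching under the equivariant closure. The closure could a priori create extra transitions from a state $\pi\cdot\iota(q)=\iota(q)$ when $\pi$ fixes $\lo(q)$ but moves $a$. This is harmless for free and deallocating letters because $a\in\lo(q)$, and it is exactly the intended $\alpha$-family for $\lb{}$ and $\q{}$.
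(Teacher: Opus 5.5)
Your proposal is correct and follows essentially the same route as the paper: the same checklist, and in particular the identical name-erasure argument ($a\in\rc(q')$ from the $a\rb$/$\q a$ clause of \cref{def:dnfa}, then $\rc(q')\cap\lo(q')=\emptyset$ together with $\supp\iota(q')=\lo(q')$). The issue you flag for unknown transitions is real: equivariant closure alone cannot supply $\iota(q)\trans{\q b}\iota(q')$ for $b\in\lo(q)\setminus\lo(q')$. The paper's proof tacitly adopts exactly your second reading (it treats $\q a$ like $\lb a$, with transitions ``one for each fresh $c$''), so commit to that reading, under which left $\alpha$-invariance is immediate; note also that leastness of $\rc,\lo$ is needed only to make $\lo(q)$ finite, not for name erasure.
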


\begin{proofsketch}
  Since $Q$ and $\Delta$ are finite and only finitely many names occur
  on transitions, there are only finitely many possible supports
  $\lo(q)$ and hence only finitely many tuples chosen by the section
  $\iota$; thus $Q'$ is finite and its equivariant closure is
  orbit-finite.  By construction, $\cleq\Delta'$ is equivariant.  Left
  $\alpha$-invariance holds because the construction of $\Delta'$
  mirrors precisely the requirements in \cref{def:nda}.  Name erasure
  follows from the clauses for $\rc$ and the required disjointness property
  for D-NFAs. Finite
  branching is inherited from finiteness of $\Delta$ and the fact that
  $Q'$ is finite.
\end{proofsketch}

\begin{proofappendix}{lem:nomNda}
  Let $A = (Q,\Delta,i,F)$ be a D-NFA and let
  $A' = (\cleq Q', \cleq \Delta', i', \cleq F')$ be
  its nominalization
  constructed as in \cref{con:nom}.
  We show that $A'$ is an NDA.
  Since $A=(Q,\Delta,i,F)$ is an NFA, the sets $Q$ and $\Delta$
  are finite, and only
  finitely many names occur in the labels of transitions in
  $\Delta$.
  Let $N \subseteq \names$ be this finite set of names. By the
  definition of $\lo(q)$ and $\rc(q)$ in \cref{def:dnfa},
  we have $\lo(q) \subseteq N$ for every $q\in Q$,
  and for each $q\in Q$ there are only finitely many tuples
  $r=(a_1,\dots,a_n)$
  such that $\lo(q)=\{a_1,\dots,a_n\}$, and hence $Q'$ is finite.
  Consequently, $\cleq Q'$ is an orbit-finite nominal set.

  By construction, we have $\Delta'\subseteq \cleq Q'\times \ah
  \times \cleq Q'$,
  hence $\cleq \Delta'$ is an equivariant subset of
  $(\cleq Q') \times \ah \times (\cleq Q')$.
  The conditions on the transition relation in \cref{def:nda}
  are equivariant,
  therefore it suffices to verify them for
  transitions in $\Delta'$,
  which is straightforward by construction:
  \begin{itemize}
    \item Left $\alpha$-invariance: Let
      $(q,r) \trans{\gamma} (q',s)$, $\gamma = \lb a$ or
      $\gamma = \q{a}$, be a transition in $\Delta'$ and suppose
      we have $(q',u) \in \cleq Q'$ with $\braket{a}s =
      \braket{b}u$.
      We have to show that there is $(q,r) \trans{(ab)\cdot\gamma}
      (q',u)\in\Delta'$.
      By definition of $\Delta'$, there are $c$ and $t$ such that
      $q \trans{(ac)\cdot\gamma} q' \in \Delta$, $(q,r),(q',t)
      \in Q'$ and
      $\braket{a}s = \braket{c}t$.
      Then we clearly also have
      $\braket{b}u = \braket{c}t$, hence we have
      $(q,r) \trans{(ab)\cdot\gamma} (q',u)\in\Delta'$
      by construction.

    \item Name erasure: Let $(q,r)\trans{\gamma}(q',s)$ be a transition
      in $\Delta'$ with $\gamma=a\rb$ or $\gamma=\q{a}$.  By
      \cref{con:nom}, we have $q\trans{\gamma}q'$ in $\Delta$ and
      $(q,r)=\iota(q)$, as well as $(q',s)=\iota(q')$.  In both cases,
      \cref{def:dnfa} yields $\rc(q)\cup\set{a}\subseteq\rc(q')$, hence
      $a\in\rc(q')$.  Since $A$ is a D-NFA,
      $\rc(q')\cap\lo(q')=\emptyset$, so $a\notin\lo(q')$.  By
      definition of $\iota$, the tuple $s$ enumerates $\lo(q')$, so $a$
      does not occur in $s$ and therefore $a\fresh(q',s)$.  For
      transitions in $\cleq\Delta'$, the claim follows from
      equivariance of freshness.

  \item Finite branching: Fix $(q,r)\in Q'$. For free and deallocating
    transitions, the number of outgoing transitions is given by the
    number of free and deallocating transitions of $q$ in $\Delta$,
    respectively, which is finite. This holds true similarly for
    allocating transitions: From each transition
    $q \trans{\lb c}q' \in \Delta$, we construct one equivalence class
    $\braket{c}(q',s)$ such that $(q,r) \trans{\lb c} (q',s)$ in
    $\Delta'$. Similarly, a transition $(q,r) \trans{\q{a}} (q',s)$ in
    $\Delta'$ is obtained from a transition
    $q \trans{\q{c}} q' \in \Delta$.  While there are infinitely many
    such transitions $q \trans{\q{c}} q'\in \cleq \Delta'$ (one for
    each fresh $c$), the number of states they lead to
    $\set{(q',s)\mid ((q,r),\q{a},(q',s))\in\Delta'}$ is finite, since
    $Q'$ and $\Delta$ are.
  \end{itemize}
  Finally, the initial state of $A'$ is the unique $i'=(i,r)\in Q'$.
  The set of final states of $A'$ is
  $F' = \{(f,r)\mid f\in F\}$, which is equivariant,
  making $A'$ an NDA.
\end{proofappendix}

\begin{lemma}[label=lem:nomSameAlpha]
For a D-NFA $A$ and its nominalization $A'$, we have
$\la(A') = \la(A)$.
\end{lemma}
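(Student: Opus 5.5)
We prove the two inclusions $\la(A)\subseteq\la(A')$ and $\la(A')\subseteq\la(A)$ separately, each by induction on the length of words. To make the induction go through, we generalize from the initial states to arbitrary states $q\in Q$ and $\iota(q)\in Q'$, and to arbitrary elements of the orbit $\cleq\{\iota(q)\}$. A useful preliminary observation is that every state of $A'$ reachable from $i'$ has the form $(q,s)$, where $s$ is a permutation image of the enumeration of $\lo(q)$. The reason is that transitions in $\Delta'$ start at $\iota(q)$ and end either at $\iota(q')$ or at an $\alpha$-renaming of $\iota(q')$. Consequently every reachable state $(q,s)$ satisfies $(q,s)=\pi\cdot\iota(q)$ for some $\pi$.

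For $\la(A)\subseteq\la(A')$, we claim that whenever $q$ accepts $w$ in $A$, the state $\iota(q)$ accepts some $w'\aeq w$ in $A'$. We do a case split on the first letter.
\begin{itemize}
\item For the letters $a$, $a\rb$ and $\q{a}$, the transition $\iota(q)\trans{\gamma}\iota(q')$ lies in $\Delta'$ by construction, and the induction hypothesis applied to the tail finishes the case. Here we use the congruence rules of \cref{def:Alpha}; the side condition $a\notin\lo(v)$ needed for rules 4 and 5 follows from right non-shadowing.
\item For $\lb c$, we pick a name $a$ with $\braket{a}(q',s)=\braket{c}\iota(q')$; taking $a=c$ is allowed, since then $s$ is the enumeration of $\lo(q')$. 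So again $\iota(q)\trans{\lb c}\iota(q')$, and we conclude by the induction hypothesis.
\end{itemize}
In fact this direction yields $L(A)\subseteq L_0(A')$ literally. The $\alpha$-slack is needed only for the converse direction.

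For $\la(A')\subseteq\la(A)$, we claim that if $\pi\cdot\iota(q)$ accepts $w$ in $A'$, then $q$ accepts some $w'$ with $\pi\cdot w'\aeq w$. By \cref{lem:acceptEquiv} it suffices to treat $\pi=\id$ at each step and push the permutation through. We take the first transition $\iota(q)\trans{\gamma}p$ of a run in $\cleq\Delta'$. Since $\Delta'$ only has sources of the form $\iota(\cdot)$, this transition comes from $\Delta'$ up to a permutation fixing $\iota(q)$, hence fixing $\lo(q)$.
\begin{itemize}
\item For the letters $a$, $a\rb$ and $\q{a}$, we get a corresponding transition $q\trans{\gamma'}q'$ in $A$, with $\gamma$ and $\gamma'$ related by a permutation fixing $\lo(q)$. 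For $a$ and $a\rb$ we have $a\in\lo(q)$, since $\lo(q)\ni a$ by \cref{def:dnfa}, so $\gamma=\gamma'$. For $\q{\cdot}$, we rename using rule 5 of \cref{def:Alpha}. Its side condition, that the names involved are not left-open in the tail, comes from the fact that the D-NFA invariant forces $a\in\rc(q')$, and hence $a\notin\lo(q')$.
\item For $\lb a$, we have $p=(q',s)$ with $\braket{a}(q',s)=\braket{c}\iota(q')$ for some $A$-transition $q\trans{\lb c}q'$. The induction hypothesis gives a word $v'$ accepted by $q'$ with $(a\,c)\cdot v'\aeq v$, and \cref{alphaRuleAdmissible} then yields $\lb a v\aeq\lb c v'$.
\end{itemize}

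The main obstacle is justifying the renamings in the second direction. We must show that the permuted names are not left-open in the relevant tail words, which is what the abstraction equalities require. The key lemma to establish first is that every word $v$ accepted from $q$ in $A$ satisfies $\lo(v)\subseteq\lo(q)$. This follows by induction from the defining inclusions of $\lo_A$ in \cref{def:dnfa}, and combined with \cref{fact:supp-lo} it gives $\supp([v]_\alpha)\subseteq\supp(\iota(q))$. We would prove this lemma first and then carry out the case analysis above.
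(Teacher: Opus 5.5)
Your proposal is correct and follows essentially the same route as the paper. The inclusion $\la(A)\subseteq\la(A')$ goes by lifting runs of $A$ literally along $\iota$. The converse goes by induction on words: equivariance of acceptance handles the allocating case, and the D-NFA disjointness $\rc(q')\cap\lo(q')=\emptyset$ justifies the $\q{\cdot}$-renaming. Your explicit bookkeeping of the permutation fixing $\lo(q)$, and the auxiliary bound $\lo(v)\subseteq\lo(q)$ (the paper's \cref{lem:cnDnfaCoinc}), only make precise steps that the paper's proof leaves implicit.
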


\begin{proofsketch}
  One inclusion is immediate because every transition of $A$ induces a corresponding
  transition between the chosen representatives $\iota(q)$, so every accepting run of $A$
  lifts to an accepting run of $A'$ on the same word.
  For the reverse inclusion, we project a run of $A'$ back to its first component to obtain
  a run of $A$.
  The only mismatch can come from allocating and unknown transitions,
  where the nominalization may
  choose a different fresh name.
  Use equivariance of acceptance in NDAs (\cref{lem:acceptEquiv}) together with the defining
  rules of $\alpha$-equivalence to turn the word of the run in $A'$ into an
  $\alpha$-equivalent word accepted by $A$.
\end{proofsketch}

\begin{proofappendix}{lem:nomSameAlpha}
  Let $A = (Q,\Delta,i,F)$ be a D-NFA and let
  $A' = (\cleq Q', \cleq \Delta', i', \cleq F')$ be its
  nominalization
  with $Q'\subseteq Q\times\names^*$, $\Delta'\subseteq
  \cleq Q'\times\ah\times \cleq Q'$,
  $i' \in Q'$ and $F' \subseteq Q'$
  given as in \cref{con:nom}.
  We prove $\la(A') = \la(A)$ by establishing both inclusions.

  "$\supseteq$": For every accepting run of $A$, there is an
  accepting run of $A'$ on the same word.
  This is straightforward since $A$ is essentially a
  subautomaton of $A'$: 
  Let
  \[
    \sigma = i \trans{\gamma_1} q_1 \trans{\gamma_2} \cdots
    \trans{\gamma_n} f
  \]
  be an accepting run of $A$ on $w = \gamma_1\cdots\gamma_n$.
  Then,
  \[
    \rho = \iota(i) \trans{\gamma_1} \iota(q_1) \trans{\gamma_2}
    \cdots
    \trans{\gamma_n} \iota(f)
  \]
  is an accepting run of $A'$ on $w$ by construction.
  
  "$\subseteq$": We show more generally that for each state
  $(q,r) \in Q'$ accepting a word~$w$, the state $q \in Q$ accepts
  some $w'\aeq w$. We proceed
  by induction on~$w$. 

  The base case is simple: If $(f,r)\in Q'$ accepts $w = \epsilon$,
  then we have $(f,r) \in F'$, hence $f\in F$, so~$f$
  accepts~$\epsilon$.

  Inductive step: Let $(q,r) \in Q'$ accept $w = \gamma w'$. We
  distinguish cases on~$\gamma$:
  \begin{itemize}
    \item $\gamma = a$:
      Then there is $(q,r) \trans{a} (q',s) \in \Delta'$ with
      $(q',s)$ accepting $w'$.
      By \cref{con:nom}, this transition arises from a free
      transition
      $q \trans{a} q' \in \Delta$.  By the induction hypothesis,
      $q'$ accepts some
      $v \aeq w'$, hence $q$ accepts $av$, and by \cref{def:Alpha},
      we have
      $av \aeq aw' = w$.

    \item $\gamma = \lb a$: We have a transition
      $(q,r) \trans{\lb a} (q',s) \in \Delta'$ with $(q',s)$ accepting
      $w'$.  By \cref{con:nom}, there is some allocating transition
      $q \trans{\lb c} q' \in \Delta$ with $\braket{a}s = \braket{c}t$
      for $(q',t)\in Q'$, and by \cref{lem:acceptEquiv}, $(q',t)$
      accepts $w''$ with $w'' = (ac)\cdot w'$.  By the induction
      hypothesis, $q'$ accepts a word $v \aeq w''$, so $q$ accepts
      $\lb c v$. By the definition of $\alpha$-equivalence
      (\cref{def:Alpha}), we obtain
      $\lb c v \aeq \lb c w'' \aeq \lb a w' = w$.

    \item $\gamma = \q{a}$: Then
      $(q,r) \trans{\q{a}} (q',s) \in \Delta'$ with $(q',s)$ accepting
      $w'$.  By construction of $\Delta'$ in \cref{con:nom}, the
      transition $(q,r) \trans{\q{a}} (q',s) \in \Delta'$ arises by
      applying some permutation to a transition
      $\iota(q)\trans{\q{c}}\iota(q')$ in~$A$.  By the induction
      hypothesis, $q'$ accepts some $v \aeq w'$, so~$q$ accepts
      $\q{c}v$; it remains to show that $\q{c}v\aeq\q{a}w'$.  We have
      $c\in\rc(q')$ and hence $c \notin \lo(q')$, since~$A$ is a
      D-NFA. Moreover, we have already shown that~$A'$ is an NDA
      (\Cref{lem:nomNda}), so $a\notin\supp(s)$, and hence
      $a\notin\supp([w']_\alpha)=\lo(w')$ by the support lemma
      (\Cref{lem:supptrans}) and \Cref{fact:supp-lo}.
      Moreover, $c\notin\lo(w')$ since $c\in\rc(q')$ and $w'$ is read from $(q',s)$ in a
      right-non-shadowing run, so $\{a,c\}\cap\lo(w')=\emptyset$.
      Consequently $q$
      accepts $\q{c}v \aeq \q{a}w' = w$ by the definition of
      $\alpha$-equivalence.

    \item $\gamma = a \rb$:
      Finally, suppose $(q,r)$ accepts $w = a\rb w'.$
      Then there is $(q,r) \trans{a\rb} (q',s) \in \Delta'$
      with $(q',s)$ accepting
      $w'$.  By \cref{con:nom}, this step comes from a
      deallocating transition
      $q \trans{a\rb} q' \in \Delta$.  By the induction hypothesis,
      $q'$ accepts
      some $v \aeq w'$. As above, since $a \notin \supp(s)$,
      we have that
      $a \notin \lo(w')$.  Hence, by \cref{def:Alpha},
      we have $a\rb v \aeq a\rb w' = w$, and $q$ accepts $a\rb v$.
  \end{itemize}
  Finally, since for each state $(q,r)\in Q'$ accepting $w$, the state
  $q\in Q$ accepts some $w'\aeq w$, this holds in particular for $i'$ and $i$,
  yielding $\la(A') = \la(A)$.
\end{proofappendix}

\begin{construction}[$S$-restriction]\label{con:sRes}
    Let $A = (Q, \Delta, i,F)$ be an NDA.
    We fix a set $S \subseteq \names$ such that $\supp(i) \subseteq S$ and 
    that $|S| = \dg(A)+1$.
    The
    \emph{$S$-restriction} $A_S$ of $A$ is defined by 
    $A_S = (Q_S, \Delta_S, i,F_S)$, where
    \begin{align*}
      Q_S &= \{q\in Q\mid \supp(q) \subseteq S\}, \\
      \Delta_S &= \{q\trans{\gamma}q'\in \Delta \mid q,q' \in Q_S, \db(\gamma) \in S\}, \\
      F_S &= F  \cap Q_S.
    \end{align*}
  \end{construction}

\begin{lemma}\label{lem:sResDnfa}
    The $S$-restriction $A_S$ of an NDA $A$ is a D-NFA.
\end{lemma}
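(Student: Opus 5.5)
Two things must be shown: that $A_S$ is a classical NFA over a finite subalphabet of $\ah$, and that $\rc_{A_S}(q)\cap\lo_{A_S}(q)=\emptyset$ for every state $q\in Q_S$.

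\emph{Finiteness.} Since $S$ is finite, $Q$ is orbit-finite, and a single orbit contains only finitely many elements with support contained in the finite set $S$, the set $Q_S$ is finite. Every transition in $\Delta_S$ carries a letter $\gamma$ with $\db(\gamma)\in S$, so labels lie in the finite alphabet $\hat S$. Hence $\Delta_S\subseteq Q_S\times\hat S\times Q_S$ is finite and $A_S$ is an NFA. Note that $i\in Q_S$ because $\supp(i)\subseteq S$.

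\emph{Disjointness.} The plan is to exhibit one explicit family of sets that satisfies all the closure inclusions of \cref{def:dnfa}. Since $\rc_{A_S}$ and $\lo_{A_S}$ are the least such family, they are contained in it. I take
\[
  \lo'(q)=\supp(q),\qquad \rc'(q)=S\setminus\supp(q).
\]
These are disjoint by construction, and inclusion gives disjointness of $\rc_{A_S}(q)$ and $\lo_{A_S}(q)$. Strictly, "least" is componentwise: least for $\rc$ and greatest-in-the-reverse-order, i.e.\ least, for the $\lo$ inclusions, which are of the form $\lo(q)\supseteq\dots$. Both are monotone closure conditions, so the least solution lies below any solution.

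The verification of the inclusions is a case split on the transition label using the support lemma (\cref{lem:supptrans}). Throughout, everything lives inside $S$, because all supports and labels are in $S$.
\begin{itemize}
\item Free transition $q\trans{a}q'$: we have $\supp(q')\cup\{a\}\subseteq\supp(q)$, which gives $\lo'(q)\supseteq\lo'(q')\cup\{a\}$. Complementing in $S$ gives $\rc'(q)\subseteq\rc'(q')$, hence also $\rc'(q)\setminus\{a\}\subseteq\rc'(q')$.
\item Allocating transition $q\trans{\lb a}q'$: $\supp(q')\subseteq\supp(q)\cup\{a\}$ yields both required inclusions directly.
\item Deallocating transition $q\trans{a\rb}q'$: we have $a\in\supp(q)$, which follows as in the free case by applying \cref{lem:supp-pfin} to the map $q\mapsto\{(a,q')\}$, and $\supp(q')\subseteq\supp(q)\setminus\{a\}$. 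So $\lo'(q)\supseteq\{a\}\cup\lo'(q')$. Also $a\in S\setminus\supp(q')=\rc'(q')$ and $\rc'(q)\subseteq\rc'(q')$.
\item Unknown transition $q\trans{\q{a}}q'$: \cref{lem:supptrans} gives $\supp(q')\subseteq\supp(q)\setminus\{a\}$, so $a\in\rc'(q')$ because $a\in S$. The remaining inclusions follow as before.
\end{itemize}

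The main obstacle I expect is the deallocating case. Item (4) of \cref{lem:supptrans} as stated does not record $a\in\supp(q)$, and this fact is needed to obtain $a\in\lo'(q)$. I would derive it from \cref{lem:supp-pfin} applied to the equivariant map $Q\to\pfin(\names\times Q)$, $q\mapsto\{(a,q')\mid q\trans{a\rb}q'\}$, which is finite by finite branching.

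A smaller subtlety is that the choice $|S|=\dg(A)+1$ is not needed for this lemma. It matters only later, for language preservation, so I would not use it here.
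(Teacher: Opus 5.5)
Your proposal is correct and follows the paper's proof. The paper likewise bounds the least solutions by the family $\lo'(q)=\supp(q)$, $\rc'(q)=\names\setminus\supp(q)$ and checks the inclusions of \cref{def:dnfa} via the support lemma.

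Your choice of $S\setminus\supp(q)$ in place of $\names\setminus\supp(q)$ is an inessential variant; it needs the labels to lie in $S$, which holds in $\Delta_S$. Your derivation of $a\in\supp(q)$ for deallocating transitions, from \cref{lem:supp-pfin} and finite branching, correctly fills in a detail the paper leaves implicit when it calls the verification immediate.
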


\begin{proofappendix}{lem:sResDnfa}
  Let $A = (Q,\Delta,i,F)$ be an NDA and let
  $A_S = (Q_S,\Delta_S,i,F_S)$ be its $S$-restriction.  To show that
  $A_S$ is a D-NFA, we need to show that for each state~$q$ in~$A_S$,
  the sets~$\lo(q)$ and $\rc(q)$ are disjoint. To this end, we show
  that $\lo(q) \subseteq \supp(q)$ and
  $\rc(q) \subseteq \names \setminus \supp(q)$. This follows once we
  show that the families $\lo'$, $\rc'$ of sets defined by
  $\lo'(q)=\supp(q)$, $\rc'(q)= \names \setminus \supp(q)$ satisfy the
  defining conditions of $\lo$, $\rc$ as per \cref{def:dnfa}. This,
  however, is immediate from the support lemma (\cref{lem:supptrans}).
\end{proofappendix}
\noindent The restriction of an NDA accepts the correspondingly
restricted literal language:

\begin{lemma}\label{lem:restrictNDAL0}
    The $S$-restriction $A_S$ of an NDA $A$ accepts
    $L_0(A)$ restricted to words over $\hat{S}$, that is, $L(A_S) = L_0(A)\cap \shs$.
\end{lemma}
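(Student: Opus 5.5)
We prove the two inclusions $L(A_S)\subseteq L_0(A)\cap\shs$ and $L_0(A)\cap\shs\subseteq L(A_S)$ separately. Both come down to comparing accepting runs of~$A_S$ and~$A$ directly; we do not work at the level of languages.

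For "$\subseteq$", we take an accepting run of $A_S$ on a word $w$. Every transition in $\Delta_S$ is a transition of $\Delta$, the initial states agree, and $F_S\subseteq F$. Hence the same sequence of transitions is an accepting run of $A$, and $w\in L_0(A)$. Moreover, every label $\gamma$ occurring in the run satisfies $\db(\gamma)\in S$ by the definition of $\Delta_S$. Since $\gamma\in\hat S$ iff $\db(\gamma)\in S$, this gives $w\in\shs$.

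For "$\supseteq$", we take $w=\gamma_1\cdots\gamma_n\in L_0(A)\cap\shs$ with an accepting run $i=q_0\trans{\gamma_1}q_1\cdots\trans{\gamma_n}q_n\in F$ in $A$. We show by induction on $k$ that $\supp(q_k)\subseteq S$, i.e.\ $q_k\in Q_S$. The base case is the hypothesis $\supp(i)\subseteq S$. For the step, we apply the support lemma (\cref{lem:supptrans}). For $\gamma_{k+1}$ free, deallocating or unknown, we get $\supp(q_{k+1})\subseteq\supp(q_k)\subseteq S$. For $\gamma_{k+1}=\lb a$, we get $\supp(q_{k+1})\subseteq\supp(q_k)\cup\{a\}$, and $a=\db(\gamma_{k+1})\in S$ because $w\in\shs$. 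In every case, $\db(\gamma_{k+1})\in S$ also holds because $w\in\shs$. Therefore each transition of the run lies in $\Delta_S$, and $q_n\in F\cap Q_S=F_S$, so the run is an accepting run of $A_S$.

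The only point needing care is the allocating step, and it is exactly where the hypothesis $w\in\shs$ is used: without it, a run of $A$ could leave $Q_S$ by allocating a name outside~$S$. The cardinality condition $|S|=\dg(A)+1$ plays no role in this lemma. It matters only later, for showing that every alphatic class has a representative over $\hat S$, so we need not invoke it here.
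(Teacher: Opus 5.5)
Your proof is correct and takes essentially the paper's approach. The inclusion $L(A_S)\subseteq L_0(A)\cap\shs$ holds because $A_S$ is a subautomaton of $A$. The converse transfers an accepting run of $A$ into $A_S$, using the support lemma (\cref{lem:supptrans}) to keep every state inside $Q_S$; the paper phrases this as an induction on run length generalized to all states of $Q_S$, which is the same argument read from the other end.
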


\begin{proofsketch}
  We show this via induction on the length of accepting runs, and use that
  by \cref{lem:supptrans}, each run on a word $w\in\shs$
  remains in $A_S$. 
\end{proofsketch}

\begin{proofappendix}{lem:restrictNDAL0}
    Let $A_S$ restrict $A$ to $S$, we show that $L(A_S) =
    L_0(A)\cap \shs$. Since $A_S$ is a proper 
    subautomaton to $A$, one direction of the inclusion is trivial.
    So let $w \in \shs$ such that $A$
    accepts $w$. We reinforce the claim and show that each $q$
    in $A_S$ accepts all $w \in \shs$
    accepted by $q$ in $A$, and proceed via induction on the
    length of accepting runs.
    
    For the base case, let $q\in Q_S$ accept $w = \epsilon$ in $A$,
    then $q\in F$, hence $q\in F_S$, and $q$ accepts $w$ in
    $A_S$. For the inductive step from~$w$ to $\gamma w\in\shs$, let
    $q\in Q_S$, and let $\sigma$ be an accepting run of~$A$ on
    $\gamma w$ starting in~$q$. Then~$\sigma$ has the shape
    $\sigma = q \trans{\gamma} q' \trans{\sigma'} f$ in $A$. Since
    $\gamma \in \hat{S}$, we have $q \trans{\gamma} q' \in \Delta_S$,
    and then are done by induction, as soon as we show $q'\in Q_S$.
    But this is immediate from the support lemma
    (\cref{lem:supptrans}).
\end{proofappendix}
\noindent As an immediate consequence of \Cref{lem:restrictNDAL0}, we
obtain a similar property for data languages:
\begin{lemma}\label{cor:restrictNDALD}
  Restricting an NDA $A$ to $S$ restricts $L_D(A)$ to words over $S$, i.e.~$L_D(A_S) = S^* \cap L_D(A)
  = \set{\db(w)\mid w \in L_0(A_S)}$.
\end{lemma}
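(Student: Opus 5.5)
The plan is to derive the statement directly from \cref{lem:restrictNDAL0}, applying $\db$ to both sides and then checking how $\db$ interacts with restriction to $\hat S$. Here $L_D(A_S)$ means $\db[L(A_S)]$ as in \cref{def:dnfa} (equivalently $\db[L_0(A_S)]$, with $A_S$ regarded as an automaton). The second equality in the statement, $S^*\cap L_D(A)=\set{\db(w)\mid w\in L_0(A_S)}$, is then just the first one unfolded. So the real content is $\db[L_0(A)\cap\shs]=S^*\cap\db[L_0(A)]$.

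For the inclusion from left to right, take $w\in L_0(A)\cap\shs$. Every letter of $w$ has the form $a$, $\lb a$, $a\rb$ or $\q{a}$ with $a\in S$, and $\db$ sends each of these to $a$. Hence $\db(w)\in S^*$, and of course $\db(w)\in\db[L_0(A)]=L_D(A)$.

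For the converse, take $v\in S^*\cap L_D(A)$. We pick $w\in L_0(A)$ with $\db(w)=v$. Since $\db$ is a letterwise substitution that preserves length, the $k$-th letter of $w$ carries the name $v_k\in S$. Every letter of $w$ therefore lies in $\hat S$, so $w\in\shs$, and \cref{lem:restrictNDAL0} gives $w\in L(A_S)$. Thus $v=\db(w)\in L_D(A_S)$.

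I do not expect a genuine obstacle. The only point needing care is the observation that $\db^{-1}[S^*]=\shs$ holds letterwise, i.e.\ that $\db(\gamma)\in S$ iff $\gamma\in\hat S$. This is exactly the side condition $\db(\gamma)\in S$ used in the definition of $\Delta_S$ in \cref{con:sRes}, so it is consistent with the way \cref{lem:restrictNDAL0} was obtained.
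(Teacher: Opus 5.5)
Your proposal is correct and matches the paper's approach: the paper gives no separate proof and presents the statement as an immediate consequence of \cref{lem:restrictNDAL0}, obtained by applying $\db$ to $L(A_S)=L_0(A)\cap\shs$. Your letterwise observation that $\db(\gamma)\in S$ iff $\gamma\in\hat S$, which yields $\db[L_0(A)\cap\shs]=S^*\cap\db[L_0(A)]$, is exactly the step that makes it immediate.
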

\noindent On the other hand, crucially, the alphatic language does not
change under restriction:

\begin{proposition}\label{prop:sResAlpha}
  The $S$-restriction $A_S$ of an NDA $A$ accepts the same alphatic language:
  $\la(A_S) = \la(A)$.
\end{proposition}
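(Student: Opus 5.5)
The inclusion $\la(A_S)\subseteq\la(A)$ is immediate from \cref{lem:restrictNDAL0}, since $L(A_S)\subseteq L_0(A)$. For the converse, we show that every word accepted by $A$ is $\alpha$-equivalent to a word over $\hat S$ accepted by $A$. By \cref{lem:restrictNDAL0}, such a word is then accepted by $A_S$. We strengthen the claim to arbitrary states: for every state $q$ with $\supp(q)\subseteq S$ and every $w\in L_0(q)$, there is $w'\in L_0(q)\cap\shs$ with $w'\aeq w$. The proof is by induction on the length of $w$.

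Write $w=\gamma v$ and fix an accepting run that begins with $q\trans{\gamma}q'$.

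\emph{Free and deallocating letters ($\gamma=a$ or $\gamma=a\rb$).} By the support lemma (\cref{lem:supptrans}), $a\in\supp(q)\subseteq S$ and $\supp(q')\subseteq\supp(q)\subseteq S$. The induction hypothesis gives $v'\aeq v$ with $v'\in L_0(q')\cap\shs$, so $\gamma v'$ is accepted by $q$. For $\gamma=a$, Rule~3 of \cref{def:Alpha} yields $\gamma v'\aeq\gamma v$. For $\gamma=a\rb$, we use Rule~4; its side condition $a\notin\lo(v')$ holds because $\lo(v')=\lo(v)$ by \cref{fact:supp-lo}, and $a\notin\lo(v)$ since $w$ is right non-shadowing by \cref{prop:ndaRns}.

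\emph{Allocating and unknown letters ($\gamma=\lb a$ or $\gamma=\q{a}$).} Here $a$ may lie outside $S$, and we rename it using the spare name. Since $|S|=\dg(A)+1>|\supp(q)|$, there is some $b\in S\setminus\supp(q)$. The delicate point is that $b$ must also be fresh for the successor state, so that left $\alpha$-invariance applies.

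For $\gamma=\lb a$, we have $\supp(q')\subseteq\supp(q)\cup\{a\}$ by \cref{lem:supptrans}. If $a\in S$, no renaming is needed. Otherwise, the chosen $b$ satisfies $b\notin\supp(q')$, because $b\notin\supp(q)$ and $b\neq a$. Hence $\braket{a}q'=\braket{b}q''$ with $q''=(ab)\cdot q'$, and left $\alpha$-invariance gives $q\trans{\lb b}q''$. Now $\supp(q'')\subseteq\supp(q)\cup\{b\}\subseteq S$, and $q''$ accepts $(ab)\cdot v$ by \cref{lem:acceptEquiv}. The induction hypothesis applied at $q''$ gives $v''\in\shs$ with $v''\aeq(ab)\cdot v$. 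Finally, $\lb b\,v''\aeq\lb a\,v$ follows from \cref{alphaRuleAdmissible}. Its hypothesis $\braket{b}[(ab)\cdot v]_\alpha=\braket{a}[v]_\alpha$ requires $b\notin\lo(v)$. This holds because $\lo(v)=\supp([v]_\alpha)$ by \cref{fact:supp-lo}, and $\supp([v]_\alpha)$ is contained in $\supp(q')$ by an equivariance argument: the map sending a state to the set of $\alpha$-classes of words it accepts is equivariant.

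For $\gamma=\q{a}$, \cref{lem:supptrans} gives $\supp(q')\subseteq\supp(q)\setminus\{a\}\subseteq S$. The chosen $b$ is not in $\supp(q')$, so by \cref{rem:right-alpha} we have $q\trans{\q{b}}q'$ with the same target. The induction hypothesis at $q'$ gives $v'\in\shs$ with $v'\aeq v$. We then apply Rule~5 to get $\q{b}v'\aeq\q{a}v$. Its side condition $\{a,b\}\cap\lo(v)=\emptyset$ holds because $\lo(v)\subseteq\supp(q')$ and neither $a$ nor $b$ lies in $\supp(q')$.

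The main obstacle is the freshness bookkeeping in the allocating case. We need $b$ to be fresh simultaneously for $q'$ (to apply left $\alpha$-invariance) and for $\lo(v)$ (to obtain $\alpha$-equivalence). Both are secured by the single extra name in $S$, via $|S|=\dg(A)+1$, together with the inclusion $\lo(v)\subseteq\supp(q')$ for words accepted at $q'$.
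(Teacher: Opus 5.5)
Your proof is correct in substance, and it takes a different and more careful route than the paper. The paper picks some $w'\aeq w$ with $w'\in\shs$, justifies its existence by a degree count, and applies \cref{lem:restrictNDAL0}. That lemma, however, only concerns words literally accepted by $A$, and $L_0(A)$ need not be closed under $\alpha$-equivalence. So an arbitrary choice of $w'$ can fail. In the example at the start of \cref{sec:ndm}, take $a,b\in S$: then $\lb a\q{b}\in\shs$ is $\alpha$-equivalent to the accepted word $\lb a\q{c}$, yet it is not accepted.

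Your induction along the run fixes exactly this. At each allocating or unknown step you rename to a spare name $b\in S\setminus\supp(q)$, which is then automatically fresh for the successor, so left $\alpha$-invariance applies. This is what guarantees a renamed word over $\hat S$ that is still literally accepted. It also shows why $|S|=\dg(A)+1$, one name more than any state's support, is the right size.

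One step needs a different justification. In both binder cases you rely on $\lo(v)\subseteq\supp(q')$ for $v\in L_0(q')$. You derive it from equivariance of the map sending a state to its set of accepted $\alpha$-classes, but equivariance only shows that the \emph{set} is supported by $\supp(q')$, not that each of its elements is. The set is in general infinite, so \cref{lem:supp-pfin} does not apply. For comparison, the constant map $q\mapsto\names$ into $\pow(\names)$ is equivariant, but the elements of its value are not supported by $\supp(q)$.

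The inclusion itself is true. Prove it by induction on the accepting run from $q'$ using \cref{lem:supptrans}, as in the easy half of the proof of \cref{lem:suppDisc}. For an $a\rb$-step you need $a$ in the support of the prestate. This is not literally part of the statement of \cref{lem:supptrans}, but it follows from finite branching and \cref{lem:supp-pfin} applied to $q\mapsto\{(a,q')\mid q\trans{a\rb}q'\}$. The same observation justifies your claim $a\in\supp(q)\subseteq S$ in the deallocating case. With these repairs the argument is complete.
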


\begin{proofappendix}{prop:sResAlpha}
  The left-to-right inclusion is trivial, since~$A_S$ is a
  subautomaton of~$A$.  For the reverse inclusion, let $w \in L(A)$,
  and choose $w' \aeq w$ such that $w' \in \shs$. Since
  $\deg(A) \geq |\bigcup_{v \in L_0(q)}\lo(v)|$ and we have picked
  $|S| = \deg(A)+1$, such a~$w'$ exists. By \cref{lem:restrictNDAL0},
  $w'$ is accepted by $A_S$.
\end{proofappendix}

As motivated in \cref{rem:dnfa}, the following two lemmas characterize
the sets $\rc(q)$ and $\lo(q)$ for a state $q$ in a D-NFA
more precisely.

\begin{lemma}\label{lem:cnDnfaCoinc}
  Let $A = (Q,\Delta,i,F)$ be an NFA over $\ah$ and $q,q'\in Q$ a pair
  of states.  Then for the NFA $A' = (Q,\Delta,q,\set{q'})$ we have
  that $\bigcup_{w\in L(A')} \rc(w) \subseteq \rc(q')$ and
  $\bigcup_{w \in L(A')} \lo(w) \subseteq \lo(q)$.
\end{lemma}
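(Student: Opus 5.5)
We argue by induction on the length of a path $q=q_0\trans{\gamma_1}q_1\trans{\gamma_2}\cdots\trans{\gamma_n}q_n=q'$ in~$A$. We prove both inclusions simultaneously, for arbitrary start and end states: for every word $w$ labelling a path from $q$ to $q'$, we show $\rc(w)\subseteq\rc(q')$ and $\lo(w)\subseteq\lo(q)$. Note that this does not need the D-NFA disjointness condition; it only uses the closure conditions of \cref{def:dnfa}. The base case $w=\epsilon$ (with $q=q'$) is trivial, since $\rc(\epsilon)=\lo(\epsilon)=\emptyset$.

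For the left-open names we peel off the \emph{first} letter. We write $w=\gamma v$ with $q\trans{\gamma}p$, where $v$ labels a path from $p$ to $q'$, so that $\lo(v)\subseteq\lo(p)$ by induction. We then compare the recursive clauses of \cref{tab:namedefs} with those of \cref{def:dnfa}.
\begin{itemize}
\item For $\gamma=a$ and $\gamma=a\rb$ we have $\lo(w)=\lo(v)\cup\set{a}\subseteq\lo(p)\cup\set{a}\subseteq\lo(q)$.
\item For $\gamma=\lb a$ we have $\lo(w)=\lo(v)\setminus\set{a}\subseteq\lo(p)\setminus\set{a}\subseteq\lo(q)$.
\item For $\gamma=\q{a}$ we have $\lo(w)=\lo(v)\setminus\set{a}\subseteq\lo(p)\subseteq\lo(q)$.
\end{itemize}

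For the right-closed names we peel off the \emph{last} letter. We write $w=u\gamma$ with $q\trans{u}p\trans{\gamma}q'$, so that $\rc(u)\subseteq\rc(p)$ by induction, and we use \cref{lem:cnConcat}(1) in the form $\rc(u\gamma)=(\rc(u)\setminus(\lc(\gamma)\cup\lo(\gamma)))\cup\rc(\gamma)$.
\begin{itemize}
\item For $\gamma=a\rb$: the right-hand side is contained in $\rc(u)\cup\set{a}\subseteq\rc(p)\cup\set{a}\subseteq\rc(q')$.
\item For $\gamma=\q{a}$: the same bound holds, since $\rc(\q{a})=\set{a}$.
\item For $\gamma=a$: here $\lo(\gamma)=\set{a}$ and $\rc(\gamma)=\emptyset$, so $\rc(w)=\rc(u)\setminus\set{a}\subseteq\rc(p)\setminus\set{a}\subseteq\rc(q')$.
\item For $\gamma=\lb a$: here $\lc(\gamma)=\set{a}$, so again $\rc(w)=\rc(u)\setminus\set{a}\subseteq\rc(q')$.
\end{itemize}

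The only point needing care is bookkeeping. The two inclusions are established by inducting from opposite ends of the path, so it is cleanest to state the claim for all pairs of states and all path lengths, and then let each half choose its own decomposition. One must also check the table entries for single letters, namely $\lc(\lb a)=\set{a}$, $\lo(a)=\lo(a\rb)=\set{a}$ and $\rc(a\rb)=\rc(\q{a})=\set{a}$, against \cref{tab:namedefs}. Finally, \cref{lem:cnConcat} is stated for arbitrary words, so it applies here even though $w$ need not be right non-shadowing, and we do not need \cref{cor:rcRNS}.
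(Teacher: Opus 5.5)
Your proof is correct, and it departs from the paper's in a way worth noting. The paper runs a single induction that peels off the \emph{first} letter for both inclusions. For $\lo$ this is exactly your argument. For $\rc$, however, it makes the cases $\gamma=a\rb$ and $\gamma=\q{a}$ non-local.

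When $a$ does not occur in the remaining word $u$, we have $a\in\rc(w)$, but the induction hypothesis only bounds $\rc(u)$, which does not contain $a$. One must therefore additionally transport $a\in\rc(p)$ along the entire path from $p$ to $q'$. This works because no letter of $u$ mentions $a$, so every clause of \cref{def:dnfa} preserves it. The paper covers this only by a brief appeal to \cref{lem:cnConcat}. Its $\q{a}$ case also reads \cref{tab:namedefs} as $\rc(w)=\rc(u)$, which overlooks exactly this situation.

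You instead peel off the \emph{last} letter for $\rc$. This matches the fact that $\rc$ is determined by last occurrences, just as $\lo$ is determined by first occurrences. Every case then becomes a one-line comparison of \cref{lem:cnConcat}(1) for a single-letter suffix with the corresponding clause of \cref{def:dnfa}, and no auxiliary propagation argument is needed.

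The only cost is running two separate inductions anchored at opposite ends of the path. You handle this correctly by quantifying over all pairs of states. Your remarks that only the closure conditions of \cref{def:dnfa} are used (neither leastness nor disjointness), and that \cref{lem:cnConcat} requires no right-non-shadowing hypothesis, are also accurate.
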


\begin{proofappendix}{lem:cnDnfaCoinc}
  \takeout{Ex ante, we observe that the conditions of a D-NFA in \cref{def:dnfa}
  only constrain
  the transition relation~$\Delta$, hence changing the initial and final states
  does not affect whether an NFA is a D-NFA. Hence $A'$ is again a D-NFA.}
  We show the inclusions
  \[
    \bigcup_{w\in L(A')} \rc(w) \subseteq \rc(q')
    \,\text{and}\,
    \bigcup_{w\in L(A')} \lo(w) \subseteq \lo(q)
  \]
  simultaneously by induction on the length of an accepting run in~$A'$.
  Let
  \[
    \rho = q_0 \trans{\gamma_1} q_1 \trans{\gamma_2} \cdots
    \trans{\gamma_n} q_n
  \]
  be an accepting run of $A'$ on $w = \gamma_1\cdots\gamma_n$
  with $q_0 = q$ and $q_n = q'$.

  For $n=0$, we have $w = \epsilon$ and thus $q=q'$.  By
  \cref{tab:namedefs}, $\rc(\epsilon) = \lo(\epsilon) = \emptyset$,
  so the claim holds.

  For the induction step, let $w = \gamma u$ and
  \[
    \rho = q \trans{\gamma} p \trans{*} q'
  \]
  for some state $p$.  The suffix $p \trans{*} q'$ is an accepting run of
  length $n-1$ on the word $u$.  By the induction hypothesis,
  for $A'' = {Q,\Delta, p, \set{q'}}$, we obtain
  \[
    \rc(u) \subseteq \rc(q')\qquad\text{and}\qquad
    \lo(u) \subseteq \lo(p).
  \]
  We now distinguish cases on $\gamma$:

  \begin{itemize}
  \item $\gamma = a$:
    From \cref{tab:namedefs}, we have $\rc(w) = \rc(u)$ and
    $\lo(w) = \lo(u)\cup\{a\}$.  Since $q\trans{a}p\in \Delta$,
    \cref{def:dnfa} yields $\rc(q) \subseteq \rc(p)$ and
    $\lo(q) \supseteq \lo(p)\cup\{a\}$.  Hence
    $\rc(w) \subseteq \rc(q')$ by the induction hypothesis and
    $\lo(w) \subseteq \lo(q)$.

  \item $\gamma = \lb a$:
    Here, \cref{tab:namedefs} gives $\rc(w) = \rc(u)$ and
    $\lo(w) = \lo(u)\setminus\{a\}$.  From $q\trans{\lb a}p$ we obtain
    $\rc(q)\setminus\{a\} \subseteq \rc(p)$ and
    $\lo(q) \supseteq \lo(p)\setminus\{a\}$, so again
    $\rc(w) \subseteq \rc(q')$ and $\lo(w) \subseteq \lo(q)$.

  \item $\gamma = a\rb$:
    By \cref{tab:namedefs}, $\lo(w) = \lo(u)\cup\{a\}$ and
    $a\in\rc(w)$ iff $a\in\rc(u)$ or $a\notin\lc(u)$.  Since
    $q\trans{a\rb}p$, \cref{def:dnfa} implies
    $\rc(q)\cup\{a\} \subseteq \rc(p)$ and
    $\lo(q) \supseteq \lo(p)\cup\{a\}$.  Using the induction
    hypothesis and the relationships between $\rc$ and $\lc$ for
    words from \cref{lem:cnConcat}, we obtain
    $\rc(w) \subseteq \rc(q')$ and $\lo(w) \subseteq \lo(q)$.

  \item $\gamma = ?$:
    Finally, if $q\trans{?}p$, then \cref{def:dnfa} gives
    $\rc(q) \subseteq \rc(p)$ and $\lo(q) \supseteq \lo(p)$.
    From \cref{tab:namedefs}, we have $\rc(w) = \rc(u)$ and
    $\lo(w) = \lo(u)$, so the claim follows from the induction
    hypothesis. \qedhere
  \end{itemize}
\end{proofappendix}

Clearly, the reverse inclusions do not hold, since the values~$\rc(q)$ and $\lo(q)$ of a
state $q$ are properties of the transition structure and hence independent of the choice of
initial and final state.  For instance, the simple D-NFA
\[
  A = (\set{q,q'}, \set{q\trans{a}q'},q,\set{q})
\]
accepts the language
$L(A) = \set{\epsilon}$ with $\lo(\epsilon) = \emptyset$ while we have $\lo(q) = \set{a}$.
However, if we consider all possible incoming and outgoing runs, not exclusively those
ending in a particular final state or starting in the initial state, respectively, the
reverse inclusions do hold true:

\begin{lemma}\label{lem:cnDnfaCoincInv}
  Let $A = (Q,\Delta, i, F)$ be an NFA over $\ah$, and let $q \in Q$ be a state.
  \takeout{ 
  Let further
  \[
    O(q) = \set{w\mid 
      \text{there is a run for $w$ in $A$ starting in $q$}}
  \]
  be the set of words on which there is an outgoing, not necessarily
  accepting run in $A$ from $q$. Likewise, let
  \[
    I(q) = \set{w\mid
      \text{there is a run for $w$ in $A$ ending in $q$}}
    \]
    be the set of words on which there is an incoming run in $A$ from some state of $Q$.
  }
  Further, denote by $O(q)$ the set of all words having a run in $A$ that starts in $q$, and
  by $I(q)$ the set of all words having a run in $A$ that ends in $q$.
  Then we have 
  \[
    \textstyle
    \bigcup_{w\in I(q)} \rc(w) = \rc(q)
    \quad \text{and} \quad
    \bigcup_{w \in O(q)} \lo(w) = \lo(q).
  \]
\end{lemma}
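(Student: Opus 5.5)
The plan is to prove each equality by two inclusions. The inclusions ``$\subseteq$'' come from a run-length induction as in \cref{lem:cnDnfaCoinc}. The inclusions ``$\supseteq$'' come from the minimality in \cref{def:dnfa}: we show that the word-based families satisfy all the defining closure conditions.

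\emph{Step 1 (``$\subseteq$'').} I will slightly generalise the proof of \cref{lem:cnDnfaCoinc}. There, the only facts used about the run $q_0\trans{w}q_n$ were the inclusions of \cref{def:dnfa} along each transition, never initiality or finality. The same induction therefore shows: for every run $p\trans{w}q$ in $A$ we have $\rc(w)\subseteq\rc(q)$ and $\lo(w)\subseteq\lo(p)$. Specialising $q$ gives $\bigcup_{w\in I(q)}\rc(w)\subseteq\rc(q)$. Specialising $p$ gives $\bigcup_{w\in O(q)}\lo(w)\subseteq\lo(q)$.

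\emph{Step 2 (``$\supseteq$'').} Put $R(q)=\bigcup_{w\in I(q)}\rc(w)$ and $P(q)=\bigcup_{u\in O(q)}\lo(u)$. Since $\rc_A,\lo_A$ form the least family satisfying the clauses of \cref{def:dnfa}, it suffices to check that $(R,P)$ satisfies every clause; leastness then gives $\rc(q)\subseteq R(q)$ and $\lo(q)\subseteq P(q)$. For a transition $q\trans{\gamma}q'$ we have $w\in I(q)\Rightarrow w\gamma\in I(q')$ and $u\in O(q')\Rightarrow \gamma u\in O(q)$. The clauses then reduce to letter computations via \cref{lem:cnConcat} and \cref{tab:namedefs}:
\begin{itemize}
\item For $\gamma=a\rb$: $\rc(wa\rb)=(\rc(w)\setminus\{a\})\cup\{a\}$ and $\lo(a\rb u)=\lo(u)\cup\{a\}$.
\item For $\gamma=a$: $\rc(wa)=\rc(w)\setminus\{a\}$ and $\lo(au)=\lo(u)\cup\{a\}$.
\item For $\gamma=\lb a$: $\rc(w\lb a)=\rc(w)\setminus\{a\}$ and $\lo(\lb a u)=\lo(u)\setminus\{a\}$.
\item For $\gamma=\q{a}$: $\rc(w\q{a})=(\rc(w)\setminus\{a\})\cup\{a\}$.
\end{itemize}
In each case the required inclusion between $R(q),R(q')$ and between $P(q),P(q')$ follows immediately. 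The extra names $a$ are witnessed because $I(q)\ni\epsilon$ whenever $q$ has an incoming run (in particular $\epsilon\in I(q)$ always), and $O(q')\ni\epsilon$.

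\emph{Main obstacle.} The one delicate clause is $\lo(q)\supseteq\lo(q')$ for an unknown transition $q\trans{\q{a}}q'$. Here $\lo(\q{a}u)=\lo(u)\setminus\{a\}$, so we need $a\notin\lo(u)$ for every $u\in O(q')$.

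For a D-NFA this holds: $a\in\rc(q')$ by the $\q{a}$-clause, so $a\notin\lo(q')$ by disjointness, and then $a\notin\lo(u)$ by Step~1. This argument would be circular in the leastness step, so I will instead argue directly. If some $u\in O(q')$ had $a\in\lo(u)$, then $\q{a}u$ would be right-shadowing. I expect to resolve this either by restricting the statement to D-NFA (or to words in $\RNS(\ah)$, which is the only intended use, cf.~\cref{rem:dnfa}), or by reading $q\trans{?}q'$ as the family of all $\q{b}$ with $b$ fresh, so that one can pick $b\notin\lo(u)$. I would settle this point first and otherwise expect the proof to be routine.
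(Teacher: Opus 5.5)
Your plan is the paper's: the ``$\subseteq$'' direction via \cref{lem:cnDnfaCoinc}, and the ``$\supseteq$'' direction by checking that the word-based families satisfy the clauses of \cref{def:dnfa} and then invoking leastness. \Cref{lem:cnDnfaCoinc} only uses the clauses along transitions, so it holds for arbitrary runs. The paper writes the argument out only for $\rc$ and dismisses $\lo$ as ``dual''. Your letter computations are correct, and the $\rc$ half goes through for every NFA.

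The point you leave open, however, cannot be settled: the $\lo$ half of the statement is false for a general NFA. Take the transitions $q\trans{\q{a}}q'\trans{a}q''$. The least solution of \cref{def:dnfa} gives $\lo(q'')=\emptyset$ and $\lo(q')=\{a\}$, hence $\lo(q)=\{a\}$. On the other hand $O(q)=\{\epsilon,\q{a},\q{a}a\}$ and $\lo(\q{a}a)=\lo(a)\setminus\{a\}=\emptyset$, so $\bigcup_{w\in O(q)}\lo(w)=\emptyset$.

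The asymmetry is exactly the one you spotted. The $\rc$-clause for $\q{a}$ matches $\rc(w\q{a})=\rc(w)\cup\{a\}$ exactly. The $\lo$-clause $\lo(q)\supseteq\lo(q')$ is stronger than $\lo(\q{a}u)=\lo(u)\setminus\{a\}$ warrants. So the paper's ``dual'' argument has precisely this hole.

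Neither of your fallbacks rescues the statement as written:
\begin{itemize}
\item Reading $?$ as a choice of fresh name fails because NFA transitions carry a fixed name, so there is no $b$ to choose.
\item Restricting the union to right non-shadowing words fails because in the example it only discards $\q{a}a$ and still yields $\emptyset$.
\end{itemize}

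The repair is a hypothesis on $A$: either that $A$ is a D-NFA, or, more directly, that every word having a run in $A$ is right non-shadowing. Under the latter, for $q\trans{\q{a}}q'$ and $u\in O(q')$ the word $\q{a}u$ has a run, and $a\in\rc(\q{a})$ forces $a\notin\lo(u)$.

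Your D-NFA argument is \emph{not} circular. Leastness of $\lo_A$ only requires you to verify that $P$ satisfies the clauses. In that verification you may use any facts already established about the fixed family $(\rc_A,\lo_A)$, namely:
\begin{itemize}
\item $a\in\rc_A(q')$, from the $\q{a}$-clause;
\item $a\notin\lo_A(q')$, from disjointness;
\item $\lo(u)\subseteq\lo_A(q')$, from your Step~1.
\end{itemize}
None of these presupposes $\lo_A\subseteq P$. With that hypothesis your proof is complete.

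Be aware that the paper also invokes this lemma in the concatenation case of \cref{th:regdexToDnfa}. There it is applied to an automaton with $\epsilon$-transitions that is not yet known to be a D-NFA, so the restricted version cannot simply be plugged in.
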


\begin{proofappendix}{lem:cnDnfaCoincInv}
  Let $A = (Q,\Delta,i,F)$ be an NFA over $\ah$, $q \in Q$, and
  $O(q)$ and $I(q)$ as defined above.
  The inclusions
  \[
    \textstyle
    \bigcup_{w\in I(q)} \rc(w) \subseteq \rc(q)
    \qquad \text{and} \qquad
    \bigcup_{w \in O(q)} \lo(w) \subseteq \lo(q)
  \]
  hold by \cref{lem:cnDnfaCoinc}, so it remains to show
  \[
    \textstyle
    \bigcup_{w\in I(q)} \rc(w) \supseteq \rc(q)
    \quad \text{and} \quad
    \bigcup_{w \in O(q)} \lo(w) \supseteq \lo(q).
  \]

  We consider $\rc$. The argument for $\lo$ is dual and hence omitted.
  Let $R(q) = \bigcup_{w \in I(q)} \rc(w)$.
  We claim that $R(q)$ satisfies the clauses
  for $\rc(q)$ in \cref{def:dnfa}. Since $\rc(q)$ is defined as the
  least set fulfilling the requirements in \cref{def:dnfa},
  this implies $\rc(q) \subseteq R(q)$, and hence $\rc(q) = R(q)$.

  Let $q \trans{\gamma} q' \in \Delta$ be a transition in $A$.
  We proceed via case distinction on $\gamma$:
  \begin{itemize}
    \item $\gamma = a\rb$:
      We have to show $R(q) \cup \{a\} \subseteq R(q')$.
      Let $b \in R(q)$. Then there is a run $p \trans{*} q$ on some
      word $w$ with $b \in \rc(w)$. This yields a run
      $p \trans{*} q \trans{a\rb} q'$ on $wa\rb$, so $wa\rb \in I(q')$.
      By \cref{lem:cnConcat}, we have
      $\rc(wa\rb) = \rc(w) \cup \set{a}$, hence $b \in \rc(wa\rb)$ and thus
      $b \in R(q')$, and also $a \in R(q')$.
    \item $\gamma = a$:
      Here, we have to show $R(q)\setminus \set{a} \subseteq R(q')$.
      Let $b \in R(q)$, $a\neq b$, be witnessed by a run $p \trans{*} q$ on $w$
      with $b \in \rc(w)$. Then the run $p \trans{*} q \trans{a} q'$ on $wa$
      with $b \in \rc(wa)$ yields $b \in R(q')$.
    \item $\gamma = \lb a$:
      We need to show $R(q) \setminus \{a\} \subseteq R(q')$.
      Let $b \in R(q)$, $a \neq b$, so there is a run
      $p \trans{*} q$ on $w$ with $b \in \rc(w)$.
      This yields a run
      $p \trans{*} q \trans{\lb a} q'$ on $w \lb a$,
      so $w \lb a \in I(q')$.
      By \cref{tab:namedefs} and \cref{lem:cnConcat}, we have
      $\rc(w \lb a) = \rc(w) \setminus \{a\}$, hence
      $b \in \rc(w \lb a)$ and $b \in R(q')$.
    \item $\gamma = \q{a}$:
      We have to show $R(q) \cup \{a\} \subseteq R(q')$.
      Let $b \in R(q)$. Then there is a run $p \trans{*} q$ on some
      word $w$ with $b \in \rc(w)$. This yields a run
      $p \trans{*} q \trans{\q{a}} q'$ on $w\q{a}$, so $w\q{a} \in I(q')$.
      By \cref{lem:cnConcat}, we have
      $\rc(w\q{a}) = (\rc(w)\setminus (\lc(\q{a})\cup \lo(\q{a}))) \cup \rc(\q{a})$.
      Since $\lc(\q{a}) = \{a\}$, $\lo(\q{a}) = \emptyset$, and $\rc(\q{a}) = \{a\}$,
      we get $\rc(w\q{a}) = (\rc(w)\setminus \{a\}) \cup \{a\} \supseteq \rc(w)$,
      hence $b \in \rc(w\q{a})$ and thus $b \in R(q')$,
      as well as $a \in \rc(w\q{a})$ and $a \in R(q')$.\qedhere
  \end{itemize}  
\end{proofappendix}

\begin{rem}
  Note that the set of all left-closed names shared by all words $w$
  having a run from or to some state $q$ is empty: since the empty word $\epsilon$
  has a run starting and ending in every state, we have
  $\bigcap_{w\in I(q)} \lc(w) = \emptyset = \bigcap_{w\in O(q)} \lc(w)$.
\end{rem}

We are now ready to show the main results of this section: the mutual
translations between regular deallocation expressions and D-NFAs. 

\begin{theorem}\label{th:regdexToDnfa}
    For every regular deallocation expression $r$,
    there exists a D-NFA $A$ such that $L_\alpha(r) = L_\alpha(A)$.
\end{theorem}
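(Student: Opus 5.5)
We proceed by structural induction on the regular deallocation expression $r$, using the Thompson construction for NFA with $\epsilon$-transitions and then invoking \cref{lem:epsDnfa} to remove the $\epsilon$-transitions. In fact we aim for the stronger statement $L(A)=L(r)$ for the literal languages, which trivially yields $L_\alpha(r)=L_\alpha(A)$. For the induction to go through we additionally maintain the invariant that the constructed automaton $A_r$ has a single initial state $i_r$ without incoming transitions and a single final state $f_r$ without outgoing transitions, and that $\lo(i_r)=\lo(r)$ and $\rc(f_r)=\rc(r)$. At least the inclusions $\lo(i_r)\subseteq\lo(r)$ and $\rc(f_r)\subseteq\rc(r)$ are needed; the reverse ones follow from \cref{lem:cnDnfaCoinc} once all states are useful.

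The base cases $\emptyset,\epsilon,\lb a,a,a\rb,\q{a}$ are two-state automata, and the D-NFA condition is checked directly from \cref{def:dnfa}. For example, for $a\rb$ we get $\lo(i)=\{a\}$, $\rc(i)=\emptyset$, $\lo(f)=\emptyset$ and $\rc(f)=\{a\}$. Sums use a fresh initial state with $\epsilon$-transitions to $i_{r_1},i_{r_2}$ and $\epsilon$-transitions from $f_{r_1},f_{r_2}$ to a fresh final state. Concatenation links $f_{r_1}\trans{\epsilon}i_{r_2}$. The star uses the usual fresh start/end states together with a back edge from $f_{r_3}$ to $i_{r_3}$.

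The key point is to compute the least families $\rc,\lo$ of the combined automaton in terms of those of the components. By \cref{lem:cnDnfaCoincInv}, $\rc(q)$ is the union of $\rc(w)$ over words $w$ reaching $q$, and $\lo(q)$ is the union of $\lo(w)$ over words $w$ leaving $q$. Hence for a state $q$ inside the $A_{r_2}$-part of a concatenation, the words reaching $q$ have the form $uv$ with $u\in L(r_1)$ (or a prefix-run inside $A_{r_1}$ for states there) and $v$ reaching $q$ inside $A_{r_2}$. By \cref{lem:cnConcat}, $\rc(uv)\subseteq\rc(u)\cup\rc(v)$, so $\rc(q)\subseteq\rc(r_1)\cup\rc_{A_{r_2}}(q)$. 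Symmetrically, for $q$ in the $A_{r_1}$-part we get $\lo(q)\subseteq\lo_{A_{r_1}}(q)\cup\lo(r_2)$.

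Disjointness $\rc(q)\cap\lo(q)=\emptyset$ is then the main obstacle. For $q$ in $A_{r_2}$ we need $\rc(r_1)\cap\lo_{A_{r_2}}(q)=\emptyset$, but the side condition only gives $\rc(r_1)\cap\lo(r_2)=\emptyset$, and $\lo(q)$ for an inner state may exceed $\lo(r_2)$. My plan is to exploit right non-shadowing of $r_1r_2$ more finely. A name $a\in\rc(u)\cap\lo(v')$ with $v'$ leaving $q$ that is not already left-open in $r_2$ must be left-closed in the prefix $v$ leading to $q$. In that case $\rc(uv)$ removes $a$, so $a\notin\rc(uv)$ by the refined formula $\rc(uv)=(\rc(u)\setminus(\lc(v)\cup\lo(v)))\cup\rc(v)$ from \cref{lem:cnConcat}. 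Consequently, I will not bound $\rc(q)$ crudely by a union but argue pointwise via \cref{lem:cnDnfaCoincInv}. Suppose $a\in\rc(w_1)\cap\lo(w_2)$ with $w_1\in I(q)$ and $w_2\in O(q)$. Trimming the automaton to useful states (removing states not on an accepting path, which changes no language), $w_1w_2$ extends to a word of $L(r)$, and $w_1w_2$ is then right-shadowing at the split, contradicting \cref{lem:RegNonShadow}.

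For this to work, every state must lie on an accepting path and $I(q)\cdot O(q)$ must consist of infixes of words in $L(r)$. Since prefixes and suffixes of right non-shadowing words are again right non-shadowing, this pointwise argument in fact handles all constructors uniformly, including the star with $\rc(r_3)\cap\lo(r_3)=\emptyset$. I therefore expect the final proof to be short: take any trim $\epsilon$-NFA for $r$, apply \cref{lem:cnDnfaCoincInv} together with \cref{lem:RegNonShadow}, and then apply \cref{lem:epsDnfa}. One must check that \cref{lem:cnDnfaCoincInv} holds with $\epsilon$-transitions, or alternatively eliminate the $\epsilon$-transitions first by the classical construction and then trim.
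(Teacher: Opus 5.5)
Your final argument is correct, but it takes a genuinely different route from the paper's. The paper works constructor by constructor. It builds Thompson-style automata with $\epsilon$-transitions and bounds the $\rc$- and $\lo$-sets of states of the composite automaton by those of the components (e.g.\ $\lo_{A''}(q)\subseteq\lo_A(q)\cup(\lo_{A'}(i')\setminus\rc_A(q))$ for states of the left factor of a concatenation). It uses the side conditions of \cref{def:regdex} locally and then removes $\epsilon$-transitions via \cref{lem:epsDnfa}. Your final plan replaces this with a construction-independent fact: every trim NFA whose literal language lies in $\RNS(\ah)$ is a D-NFA. Indeed, by the path characterization of \cref{lem:cnDnfaCoincInv}, a name in $\rc(q)\cap\lo(q)$ gives words $w_1\in I(q)$ and $w_2\in O(q)$ with $\rc(w_1)\cap\lo(w_2)\neq\emptyset$. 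Trimness extends $w_1w_2$ to an accepted word $xw_1w_2y$, which is right non-shadowing by \cref{lem:RegNonShadow}. Since $\rc(xw_1)\supseteq\rc(w_1)$ and $\lo(w_2y)\supseteq\lo(w_2)$ by \cref{lem:cnConcat}, this is a contradiction. Your route buys a short proof with no case analysis and literal equality $L(A)=L(r)$ for any trim NFA for $r$. Together with \cref{lem:cnDnfaCoinc}, it also shows that for trim NFA being a D-NFA depends only on the language. The paper's route gives explicit compositional bounds, although its concatenation case itself ends in the same kind of shadowing contradiction, with the needed reachability left implicit.

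Three details need tidying up. First, if $L(r)=\emptyset$, trimming deletes the initial state; use a single non-final state without transitions instead. Second, you only need the upper bounds $\rc(q)\subseteq\bigcup_{w\in I(q)}\rc(w)$ and $\lo(q)\subseteq\bigcup_{w\in O(q)}\lo(w)$, i.e.\ that these unions satisfy the clauses of \cref{def:dnfa}. For $\lo$ this really uses your hypotheses: the $\q{a}$-clause demands $\lo(q)\supseteq\lo(q')$, whereas $\lo(\q{a}w)=\lo(w)\setminus\set{a}$. For the NFA $q_0\trans{\q{a}}q_1\trans{a}q_2$ one gets $a\in\lo(q_0)$ although no word in $O(q_0)$ has $a$ left-open, so the $\lo$-half of \cref{lem:cnDnfaCoincInv} fails for arbitrary NFA. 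In your trim setting it does hold: if $q\trans{\q{a}}q'$ and $a\in\lo(w)$ for some $w\in O(q')$, then $\q{a}w$ would be a right-shadowing infix of an accepted word. So check the $\lo$-bound directly with this argument rather than citing the lemma as a black box. Third, starting from an $\epsilon$-free NFA for $r$ (classical $\epsilon$-elimination, then trimming) means you never need to extend \cref{lem:cnDnfaCoincInv} to $\epsilon$-transitions or invoke \cref{lem:epsDnfa}.
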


\begin{proofsketch}
  We construct $A$ by structural induction on the grammar in \cref{def:regdex}.
  The base cases are straightforward. For the inductive cases, we use the standard Kleene constructions for NFA
  (concatenation, star, and union) with $\epsilon$-transitions and verify
  the conditions imposed on $\rc$ and $\lo$ in \cref{def:dnfa}.
\end{proofsketch}

\begin{proofappendix}{th:regdexToDnfa}
  We show that for every regular deallocation expression~$r$, there is
  a D-NFA with $\epsilon$-transitions $A = (Q,\Delta, i, \set{f})$
  with a single final state $f$ such that $L(r) = L(A)$.  From
  \cref{lem:epsDnfa}, we then obtain a D-NFA $A'$ without
  $\epsilon$-transitions such that $L(A') = L(A) = L(r)$.

  We proceed via structural induction on the derivation of $r$ in the grammar from \cref{def:regdex}.
  
  \noindent Base cases:
  \begin{itemize}
  \item $r=\emptyset$: $A=(\{q,f\},\emptyset,q,\set{f})$.
  \item $r=\epsilon$: $A=(\{q\},\emptyset,q,\{q\})$.
  \item $r=a$: $A=(\{q,f\},\{q\trans{a} f\},q,\{f\})$.
  \item $r=\lb a$: $A=(\{q,f\},\{q\trans{\lb a} f\},q,\{f\})$.
  \item $r=a\rb$: $A=(\{q,f\},\{q\trans{a\rb} f\},q,\{f\})$.
  \item $r=\q{a}$: $A=(\{q,f\},\{q\trans{\q{a}} f\},q,\{f\})$.
  \end{itemize}
  We check in each case that~$A$ is a D-NFA. In the cases $r=\emptyset$ and $r=\epsilon$,
  there are no transitions, hence $\rc(q)=\lo(q)=\emptyset
  =\rc(f)=\lo(f)$.
  For $r=a$, we have $\lo(f)=\rc(f)=\emptyset$ and $\lo(q)=\{a\}$ and
  $\rc(q)=\emptyset$.
  For $r=\lb a$, we have $\rc(q)=\lo(q)=\rc(f)=\lo(f)=\emptyset$.
  For $r=a\rb$, we have $\lo(f)=\emptyset$ and $\rc(f)=\{a\}$ and
  $\lo(q)=\{a\}$ and $\rc(q)=\emptyset$. For $r = \q{a}$, we have $\lo(q) =
  \emptyset = \lo(f) = \rc(q)$ and $\rc(f) = \set{a}$.
  In all cases, we have $\rc(q)\cap\lo(q)=\emptyset$
  for each state $q$ in $A$.

  \noindent Inductive step: We distinguish cases:
  \begin{itemize}
  \item $r\cdot r'$: By the induction hypothesis, let $A=(Q,\Delta,i,\set{f})$
    and $A'=(Q',\Delta',i',\set{f'})$ be D-NFAs with $\epsilon$-transitions
    such that $L(r)=L(A)$ and $L(r')=L(A')$.
    Assume w.l.o.g. that $Q$ and $Q'$ are disjoint.
    We build the NFA $A''=(Q\cup Q',\Delta'',i,\set{f'})$ where
    \[\Delta''=\Delta\cup\Delta'\cup\{f\trans{\epsilon}i'\}.\]
    \begin{center}
    \begin{tikzpicture}[
    node distance=0.85cm,
    auto,
    box/.style={rectangle, draw=black, thick, minimum width=1.4cm, minimum height=0.8cm, align=center},
    arrow/.style={->,>=stealth,thick},
    lemma/.style={font=\small\itshape}
    ]
    \node (i) {$i$};
    \node[box, right of=i] (A) {$A$};
    \node[right of=A] (f) {$f$};
    \node[right of=f] (i') {$i'$};
    \node[box, right of=i'] (A') {$A'$};
    \node[right of=A'] (f') {$f'$};
    \draw[arrow] (f) -- node[above] {$\epsilon$} (i');
    \end{tikzpicture}
    \end{center}
    Clearly, we have $L(A'') = L(r\cdot r')$ by the standard Kleene theorem for NFA.
    We establish the following upper bounds for the sets~$\rc(q)$ and $\lo(q)$
    for a state $q$ in $A''$ depending on which automaton it comes from. For $\rc_{A''}$, we claim:
    \[\rc_{A''}(q) \subseteq \begin{cases}
      \rc_A(q) & \text{if } q\in Q\\
                                \rc_{A'}(q)\cup(\rc_A(f)\setminus\lo_{A'}(q)) & \text{if } q\in Q'
                              \end{cases}\]
                            and for $\lo_{A''}$:
                            \[\lo_{A''}(q) \subseteq \begin{cases}
                                                        \lo_{A}(q) \cup (\lo_{A'}(i')\setminus\rc_{A}(q)) & \text{if } q\in Q\\
                                                        \lo_{A'}(q) & \text{if } q\in Q'
                                                      \end{cases}\]
    Let $A''' = (Q,\Delta,q,\set{f})$ be the D-NFA $A$ with $q$ as initial state.
       By \cref{lem:cnDnfaCoincInv}, we have $\lo_{A''}(q) = \bigcup_{w \in O(q)}\lo(w)$.
    Recall from \cref{lem:cnDnfaCoincInv} that $O_A(q)$ denotes the set of words admitting a run
    in~$A$ starting in~$q$.
    By construction, we have for each $w \in O_{A''}(q)$ that either $w \in O_A(q)$ or
    $w = vu$ where on $v$ there is a run~$\sigma_v$ from~$q$ to~$f$ in~$A$ (i.e.~$v \in L(A''')$) , and  on~$u$ 
    we have a run $\sigma_u$ from~$i'$ to some~$p\in Q'$ in $A'$. Using this observation, we calculate:
    \begin{align*}
    &\phantom{=} \lo_{A''}(q) = \bigcup_{w \in O(q)}\lo(w)\\
    &= \bigcup_{w \in O_A(q)}\lo(w) \cup \bigcup_{u \in L(A''')}\bigcup_{v \in O_{A'}(i')}\lo(uv)\\
    &= \lo_{A}(q) \cup \bigcup_{u \in L(A''')}\bigcup_{v \in O_{A'}(i')}\lo(u)\cup (\lo(v)\setminus\lc(u))\\
    &= \lo_{A}(q) \cup \bigcup_{u \in L(A''')}\bigcup_{v \in O_{A'}(i')}(\lo(v)\setminus\lc(u))\\
    &= \lo_{A}(q) \cup \lo_{A'}(i') \setminus(\bigcap_{u \in L(A''')}\lc(u))\\
    \end{align*}
    Here, we observe the following:
    If $a \in \lo(i') \subseteq \lo(f)$, then we have $a \in \lo(w)$ for some word $w$
    having a run starting in $i'$. Furthermore, if $a\notin (\bigcap_{u \in L(A''')}\lc(u))$
    then there is a word $u$ with a run from $q$ to $i'$ for which $a \notin \lc(u)$.
    Now, we cannot have $a \in \rc(q)$, since otherwise there would be a word $v$
    with a run ending in $q$ such that $a \in \rc(v)$, yielding a run on a
    right shadowing word $vuw$. Thus, we conclude
    $\lo(i') \setminus(\bigcap_{u \in L(A''')}\lc(u)) \subseteq
    \lo(i') \setminus(\bigcup_{u \in I(q)}\rc(u)) = \lo(i') \setminus \rc(q)$,
    and hence we have 
    \[\lo_{A''}(q) \subseteq \lo(q) \cup (\lo(i') \setminus \rc(q)).\]
    The reasoning for $\rc$ works entirely analogously.
  \item $r + r'$: By the induction hypothesis, we have
    $A = (Q,\Delta,i,\set{f})$ and $A' = (Q',\Delta',i',\set{f'})$
    such that $\la(r) = \la(A)$ and $\la(r') = \la(A')$.  Then,
    $A'' = (Q\cup
    Q'\cup\set{i_\mathsf{new},f_\mathsf{new}},\Delta\cup\Delta'\cup
    \set{i_\mathsf{new}\trans{\epsilon}i,
      i_\mathsf{new}\trans{\epsilon}i', f
      \trans{\epsilon}f_\mathsf{new}, f'
      \trans{\epsilon}f_0},i_0,\set{f_0})$,
    where $i_0,f_0\notin Q\cup Q'$ are new states,
    is a D-NFA with $\epsilon$-transitions such that
    $\la(r+r') = \la(A'')$.  We have
    $\lo(i_\mathsf{new}) \subseteq \lo(i) \cup \lo(i')$,
    $\rc(i_\mathsf{new}) = \emptyset$,
    $\lo(f_\mathsf{new}) = \emptyset$ and
    $\rc(f_\mathsf{new}) \subseteq \rc(f) \cup \rc(f')$, since the
    right-hand sides satisfy all inequalities from \cref{def:dnfa};
    moreover, the sets of right-closed names and left-open names of
    the original states remain unchanged. This implies the required
    disjointness, so $~A''$ is a D-NFA (with
    $\epsilon$-transitions). Moreover, $A''$ clearly accepts precisely
    the language $L(r+r')$.
  \item $r^*$: By the induction hypothesis, we have
    $A = (Q,\Delta,i,\set{f})$ such that $\la(r) = \la(A)$.
    Then, $A'' = (Q\cup\set{q_\mathsf{new}},\Delta\cup\set{q_\mathsf{new}\trans{\epsilon}i,
    f \trans{\epsilon}q_\mathsf{new}},i,\set{f})$
    is a D-NFA with $\epsilon$-transitions
    such that $\la(r^*) = \la(A'')$.
    Analogously to the case for concatenation, one shows that
    $\rc(q) \subseteq \rc(q) \cup (\rc(f)\setminus\lo(q))$ and
    $\lo(q) \subseteq \lo(i)\setminus\rc(q))$ for all $q \in Q$,
    as well as $\lo(q_\mathsf{new}) = \lo(i)$ and $\rc(q_\mathsf{new}) = \rc(f)$.\qedhere
  \end{itemize}
\end{proofappendix}

\begin{theorem}\label{th:dnfaToRegdex}
  For every D-NFA $A$, there exists a regular deallocation expression
  $r$ such that $L(r) = L(A)$.
\end{theorem}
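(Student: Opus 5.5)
The plan is to apply the classical state-elimination (or, equivalently, McNaughton--Yamada / Kleene) construction to the underlying NFA of the D-NFA~$A$. This yields a classical regular expression~$r$ over~$\ah$ with $L(r)=L(A)$. The only thing that then needs proof is that~$r$ is a regular deallocation expression in the sense of \cref{def:regdex}. Rather than checking the side conditions of the grammar syntactically along the construction, we will use \cref{lem:RegNonShadow}. That lemma reduces the claim to showing that every word in $L(r)$ is right non-shadowing, and it applies to $r$ as well as to every subexpression. Since the lemma is an equivalence proved by structural induction, it suffices to show $L(s)\subseteq\RNS(\ah)$ for $s=r$: the grammar conditions at all inner nodes then follow from the ``$\Leftarrow$'' direction applied recursively, because each subexpression's language consists of factors of words in $L(r)$. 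A cleaner route is to note that the proof of \cref{lem:RegNonShadow} already derives the side conditions at each node from $L(r)\subseteq\RNS(\ah)$.

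Concretely, we use the Kleene construction with expressions $r^{k}_{pq}$ denoting the words having a run from~$p$ to~$q$ through intermediate states among the first~$k$. We set $r=\sum_{f\in F} r^{n}_{i f}$. For every subexpression~$s$ occurring in this construction, every word of $L(s)$ is the label of some run $p\trans{*}q$ in~$A$, or is a factor of such a label. A factor of a right non-shadowing word is again right non-shadowing: if $w=xuy$ and $u=u_1u_2$, then $\rc(u_1)\cap\lo(u_2)\subseteq \rc(xu_1)\cap\lo(u_2y)$, and this inclusion follows from \cref{lem:cnConcat}, because names in $\rc(u_1)$ stay right-closed after prefixing and names in $\lo(u_2)$ stay left-open after suffixing. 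So it remains to show the key claim: every word~$w$ labelling a run $p\trans{*}q$ in a D-NFA is right non-shadowing.

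For the key claim, take any decomposition $w=uv$ along such a run, so that there is an intermediate state~$m$ with $u$ labelling $p\trans{*}m$ and $v$ labelling $m\trans{*}q$. Then $u\in I(m)$ and $v\in O(m)$. By \cref{lem:cnDnfaCoinc} (or \cref{lem:cnDnfaCoincInv}) we get $\rc(u)\subseteq\rc_A(m)$ and $\lo(v)\subseteq\lo_A(m)$. The D-NFA condition $\rc_A(m)\cap\lo_A(m)=\emptyset$ then gives $\rc(u)\cap\lo(v)=\emptyset$. Combining this with the reduction above completes the argument.

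The main obstacle I expect is making the passage from ``$L(r)\subseteq\RNS(\ah)$'' to ``$r$ satisfies the grammar'' airtight for the subexpressions produced by state elimination. The danger is a subexpression, for instance a star $(r^{k-1}_{kk})^*$, that generates words not realizable inside any run. This cannot happen in the Kleene construction, since every word of every $r^{k}_{pq}$ is a genuine run label. The remaining subtlety is the empty-language subexpression~$\emptyset$ (with $\lc(\emptyset)=\names$), which appears only when $L(r^k_{pq})=\emptyset$. Here I would first simplify away $\emptyset$ factors using $\emptyset\cdot s=\emptyset$ and $\emptyset+s=s$. After that, \cref{lem:RegNonShadow} applies directly, and the decidable checks of \cref{lem:gramDec} are satisfied.
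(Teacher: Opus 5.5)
Your proof is correct, and its core ingredient is the same as the paper's: the grammar's side conditions are discharged by \cref{lem:cnDnfaCoinc} together with the disjointness $\rc_A(m)\cap\lo_A(m)=\emptyset$ at the state $m$ where two pieces of a run are glued.

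The two proofs are organised differently. The paper inducts on the number of transitions, using $r = r' + r'_{i,q}(\gamma r'_{q',q})^*\gamma r'_{q',F}$. It then checks the side conditions locally, at the newly created concatenation and star nodes, by a case distinction on $\gamma$ that it leaves to the reader. You instead use the state-indexed McNaughton--Yamada construction and isolate one global invariant: every run label of a D-NFA is right non-shadowing. All side conditions then follow uniformly from this invariant, independently of which Kleene construction is used. This is arguably cleaner and makes the role of the D-NFA condition explicit.

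One caution concerns the step you route through \cref{lem:RegNonShadow}. The direction you need (all words right non-shadowing implies a regular deallocation expression) fails as stated when empty-language factors occur. For example, $(a\rb\cdot a)\cdot\emptyset$ has empty language but is not generated by the grammar. For the same reason, your remark that every subexpression's language consists of factors of words in $L(r)$ is false when a sibling factor is empty.

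Your argument does not actually depend on this. Every subexpression $s$ of the McNaughton--Yamada expression has $L(s)$ consisting of run labels, so you get $L(s)\subseteq\RNS(\ah)$ for every $s$ directly. Each side condition then follows at once from right non-shadowing of the words in the parent's language: $\rc(s_1)\cap\lo(s_2)=\emptyset$ from the words of $L(s_1s_2)$, and $\rc(s_3)\cap\lo(s_3)=\emptyset$ from those of $L(s_3^*)$. With this, the $\emptyset$-simplification is superfluous, though harmless.
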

\begin{proofsketch}
  We obtain a regular expression for $A$ by induction on the subsets of its transition relation:
  starting from the empty subset of transitions,
  at each step we add one missing
  transition $\delta = q\trans{\gamma}q'$, extend the subautomaton,
  and update the expression to
  $r'' = r' + r'_{(i,\set{q})} (\gamma\, r'_{(q',\set{q})})^{*}\gamma r'_{(q',F)}$,
  summarizing all paths that iterate $\delta$.
  Here, $r_{(x,Y)}$ denotes the regular expression obtained
  by the induction hypothesis
  from the previous step for initial state $x$ and final states $Y$.
  Iterating until all transitions are added yields the desired~$r$.
\end{proofsketch}

\begin{proofappendix}{th:dnfaToRegdex}
  \sloppy Let $A = (Q,\Delta, i, F)$ be a D-NFA. We proceed by
  induction on the (finite) cardinality of~$\Delta$, exploiting that
  the class of D-NFA is closed under subautomata and under
  changing~$i$ or~$F$.

  Induction base ($\Gamma = \emptyset$): In this case, $A$ accepts
  either the empty language, expressed by $r = \emptyset$, or the
  language consisting only of the empty word, expressed by
  $r = \epsilon$.

  Induction step: Pick $\delta = (q \trans{\gamma} q') \in \Delta$,
  and put $\Gamma=\Delta\setminus\{\delta\}$. As remarked above, we
  have D-NFAs $A' = (Q,\Gamma, i,F)$
  $A'_{i,q} = (Q,\Gamma, i,\set{q})$,
  $A'_{q',q} = (Q,\Gamma, q',\set{q})$ and
  $A'_{q',F} = (Q,\Gamma, q',F)$, for which by induction we have
  corresponding regular expressions $r'$, $r'_{i,q}$, $r'_{q',q}$ and
  $r'_{q',F}$ such that $L(r')=L(A')$,
  $L(r'_{i,q}) = L(A'_{i,q})$, $L(r'_{q',q}) = L(A'_{q',q})$
  and $L(r'_{q',F}) = L(A'_{q',F})$.  Now put
  \begin{equation*}
    r = r' + r'_{i,q} (\gamma r'_{q',q})^*\gamma r'_{q',F}.
  \end{equation*}
  This is a regular deallocation expression: Since $A$ is a D-NFA, we
  have $\rc(q) \cap \lo(q) = \emptyset$ for each $q \in Q$.  By the
  induction hypothesis and \cref{lem:cnDnfaCoinc}, this implies the
  corresponding disjointness conditions for the concatenations and
  Kleene-star in $r$ via a simple case distinction on $\gamma$.

  We verify $L(r) = L(A)$:
  
  \begin{enumerate}
  \item $L(r) \subseteq L(A)$: Let $w \in L(r)$. By the structure of $r$,
    either $w \in L(r')$ or $w$ is of the form
    $w = w_{(i,q)} (\gamma w_{(q',q,i)})_{i\leq n} \gamma w_{(q',F)}$, where
    $w_{(i,q)} \in L(r'_{i,q}) = L(A'_{i,q})$,
    $w_{(q',q,i)} \in L(r'_{q',q}) = L(A'_{q',q})$ for each $i \leq n$,
    and
    $w_{(q',F)} \in L(r'_{q',F}) = L(A'_{q',F})$.  In the first case, $w \in L(A') \subseteq L(A)$
    by the induction hypothesis.  In the second case, for each part of $w$,
    we have runs in the corresponding $A'$, and together with $\delta$ there is a run
    $i \trans{*} q \trans{\gamma} q' (\trans{*} q \trans{\gamma} q')^n \trans{*} f$
    for $w$ in $A$ (where the loop $q' \trans{*} q \trans{\gamma} q' $ is iterated $n$ times),
    so $w \in L(A)$.

  \item $L(r) \supseteq L(A)$: Let $w \in L(A)$, then there is an
    accepting run $\sigma$ for $w$ in $A$. Let
    run $\sigma$ pass through $\delta$
    (if it doesn't, then $w \in L(A') = L(r') \subseteq L(r)$ by the induction hypothesis).
    Then $w$ is of the form
    $w = w_1(\gamma w_{2,1}) \cdots (\gamma w_{2,n}) \gamma w_3$ with a run
    $\sigma = \sigma_1(\delta\sigma_{2,1}) \cdots (\delta\sigma_{2,n})\sigma_3$,
    where $\sigma_1$ is a run for $w_1$ in $A'_{(i,q)}$, for each $i \in \{1,\ldots,n\}$ we have
    $\sigma_{2,i}$ is a run for $w_{2,i}$ in $A'_{(q',q)}$, and $\sigma_3$ is a run for $w_3$ in
    $A'_{(q',F)}$. By the induction hypothesis, we have that
    $w_1 \in L(r'_{i,q})$, $w_{2,i} \in L(r'_{q',q})$ for each $i \in \{1,\ldots,n\}$, and
    $w_3 \in L(r'_{q',F})$, therefore $w \in L(r)$ by the definition of $L(r)$.\qedhere
  \end{enumerate}
\end{proofappendix}
\noindent 
In summary, regular deallocation expressions and D-NFAs are
equiexpressive.

\section{Determinization of NDA}\label{sec:determinization}


We show next that NDA can be determinized. Recall that NDA are
equivalent to RNNA under local freshness semantics
(\cref{prop:NDA2RNNA}), but RNNA cannot be determinized, as the next
example shows.

\begin{example}\label{E:nodetRNNA}
  Consider the data language of all words in which the last letter has
  occurred before. It is accepted by the following RNNA under local
  freshness semantics:
  \begin{center}
    \begin{tikzpicture}[scale=0.15]
    \tikzstyle{every node}+=[inner sep=0pt]
    \draw [black] (54.1,-28) circle (3);
    \draw (54.1,-28) node {$q_2$};
    \draw [black] (54.1,-28) circle (2.4);
    \draw [black] (23.9,-28) circle (3);
    \draw (23.9,-28) node {$q_1$};
    \draw [black] (39,-28) circle (3);
    \draw (39,-28) node {$q_2(a)$};
    \draw [black] (22.577,-25.32) arc (234:-54:2.25);
    \draw (23.9,-20.75) node [above] {$\lb a$};
    \fill [black] (25.22,-25.32) -- (26.1,-24.97) -- (25.29,-24.38);
    \draw [black] (37.677,-25.32) arc (234:-54:2.25);
    \draw (39,-20.75) node [above] {$\lb b$};
    \fill [black] (40.32,-25.32) -- (41.2,-24.97) -- (40.39,-24.38);
    \draw [black] (26.9,-28) -- (36,-28);
    \fill [black] (36,-28) -- (35.2,-27.5) -- (35.2,-28.5);
    \draw (31.45,-28.5) node [below] {$\lb a$};
    \draw [black] (42,-28) -- (51.1,-28);
    \fill [black] (51.1,-28) -- (50.3,-27.5) -- (50.3,-28.5);
    \draw (46.55,-28.5) node [below] {$a$};
    \end{tikzpicture}
  \end{center}
  Suppose that there were a deterministic RNNA~$A$ accepting the same
  language, and let~$k$ be the degree of~$A$. We first assume that the
  initial state of $A$ has empty support. Let $a_i \in \names$,
  $i = 0, \ldots k$, be pairwise distinct. Then every word
  $a_0a_1\cdots a_ka_i$\smnote{Ich meine, dass man hier eigentlich
    noch argumentieren muss, warum der letzte vom RNNA gelesene
    Buchstabe nicht $\newletter a_i$ sein kann. Nämlich weil man dann
    per $\alpha$-Äquivalenz Wörter der Länge $k+2$ mit allen
    Buchstaben verschieden in der Datensprache bekommt.}  is contained
  in the given data language, and so $A$ must have an accepting run of
  every word $\newletter a_0\newletter a_1 \cdots \newletter a_k a_i$,
  for $i = 0, \ldots, k$ (as the automaton clearly cannot accept
  $\newletter a_0\newletter a_1 \cdots \newletter a_k \newletter
  a_i$). Since $A$ is deterministic, it has only one run for the
  prefix $\newletter a_0\newletter a_1 \cdots \newletter a_k$ ending
  in the state $q$, say. By \cref{lem:supptrans}, the support of $q$
  must contain all letters $a_0, \ldots, a_k$ in order to be able to
  have an $a_i$-labelled transition to a final state for every
  $i = 0, \ldots, k$. This contradicts that $|\supp(q)| \leq k$.

  Now assume that the initial state $s$ of $A$ has non-empty support. Then the argument
  essentially remains the same, except that the above string could contain some letters
  without a bar (at most $|\supp(s)|$ many). 
\end{example}

However, there is a deterministic NDA accepting the data language from \cref{E:nodetRNNA};
just replace the labels at the loops of the RNNA with $\q{a}$ and $\q{b}$, respectively.

\takeout{
In order to clarify why RNNA are not determinizable,
we note that there are other mechanisms for data languages,
for instance bar-DFA, which have the favorable property of a
deterministic transition relation.
However, bar-DFA in particular suffer from not being closed
under $\alpha$-equivalent renamings.
This limitation arises from several factors: the finite state space and
transition relation and the lack of a notion of memory require that
$\alpha$-equivalent renaming must be applied a posteriori.
RNNA, in constrast, address this issue by weakening to orbit-finiteness
only, and by introducing new states and transitions in the name-dropping modification,
one closes the language of
the automata under all $\alpha$-equivalent renamings. However, this solution
requires proper non-determinism to achieve that closure. RNNA still possess
exculively two types of transitions, bound and free ones, where --- intuitively ---
the former allows to store a name
and the latter requires that the name on the transition is already present in memory.
During all transitions, RNNA can non-deterministically lose names from the memory,
just as NDA do, yet
they do not facilitate explicit deletions of names, and hence strictly require
this non-determinism
in order to maintain their expressivity.

However, our NDA subsume RNNA and hence admit the same non-determinism,
which materializes in
odd ways; for instance, they can exhibit allocating transitions $q\trans{\lb a} q'$
where the label $a$ is contained in $\supp(q)$, yet in $q'$ the name $a$ is
indeed dropped from the support.
Lest we have to deal with this unintuitive
behaviour later on, we bridge this gap by introducing a notion of discipline
to the behaviour of the automaton. In particular,
we require the automaton
to drop names only then, when the symbol on the transition indicates it.
This means that $\lb a$
always allocates $a$,
$ a\rb$ always deallocates $a$, a free $a$ transition always knows $a$ before and after,
and $\q{a}$ always allocates and deallocates $a$ immediately (for $a\rb$ and $\q{a}$, this
is the case by default).

In this regard, the following construction gives rise to a deterministic
automata model, and while this is already remarkable in the nominal setting generally,
it is particularly striking that deallocation is all one needs to achieve it.

As visualized in \cref{fig:DDASketch}, we establish a correspondence
between non-deterministic and
deterministic automata in terms of local freshness semantics:
we are going to provide a construction that converts an NDA $A$ into a
\emph{deterministic deallocation automaton} (DDA), i.e.\ an
NDA $A' = (Q,\Delta, i, F)$ where $\Delta$ is (partially) deterministic,
accepting the local freshness semantics of $A$ as data language: $\D(\la(A)) = L_D(A')$.
We recall that a transition relation
$\Delta$ is (partially) deterministic if for all $q \in Q$ and $\gamma \in \ah$,
there is at most one $q' \in Q$ such that $q \trans{\gamma} q'$.} 

\begin{figure*}[h]
  \centering
  \begin{tikzcd}[column sep=40, row sep=large]
    & & \D(\la(A)) \arrow[d,equal] &
    L_0(A) \arrow[d, "\text{\cref{th:nameDropModAlphaClosed}}"]\\
    L_0(A_{DDA}) \arrow[r, "\db"] &
    L_D(A_{DDA}) \arrow[r, equal, "\text{\cref{lem:l-intersection}}"'] \arrow[r, equal, "\cref{th:DDA}"]  \arrow[ru, equal] &
    L_D(A_{\bot}) \arrow[d, "\text{\cref{cor:restrictNDALD}}" right] &
    L_0(A_{\bot}) \arrow[l, "\db"'] \arrow[d, "\text{\cref{lem:restrictNDAL0}}" right]\\
    L_0(A_D)
      \arrow[r, "\db"]
      \arrow[u, "\text{\cref{lem:ADiscSameAlpha}}"']
      \arrow[u, "\clal"]
      & L_D(A_D) \arrow[r, equal, "\text{\cref{lem:discRestrictionSameData}}"]
      \arrow[u, "\text{\cref{lem:closingDDA}}"'] &
    L_D(A_S)
    & L_0(A_S)
      \arrow[l, "\db"']
  \end{tikzcd}
  \caption{Proof sketch of \cref{th:DDA}. }
  \label{fig:DDASketch}
\end{figure*}
\takeout{
In order to further clarify why RNNA are not determinizable, we note
that there are other mechanisms for data languages, for instance bar
DFA (the obvious deterministic variant of bar NFA mentioned in
\cref{sec:prel}), which have the favorable property of a deterministic
transition relation\lsnote{Is it really useful to talk about DFAs
  here? It seems to me that the issue discussed later, regarding
  non-deterministic name dropping, is more to the point}.  However,
the language of a bar DFA is not closed under $\alpha$-equivalent
renamings in general. This limitation arises from several factors: the
finite state space, the transition relation, and the lack of a notion
of memory require that $\alpha$-equivalent renaming must be applied a
posteriori.

In contrast, RNNA address this issue by admitting orbit-finite in lieu
of only finite state sets. Moreover, by introducing new states and
transitions in the name-dropping modification, one closes the language
of an RNNA under all $\alpha$-equivalent renamings. However, this
solution requires proper non-determinism to achieve that closure. RNNA
have only two types of transitions, bound and free ones. Intuitively,
the former allow storing a name, and the latter require that the name
on the transition is already present in memory.  During a transition,
an RNNA can non-deterministically drop names from its memory, and the
same holds for NDA (e.g.~there can be an allocating transition
$q\trans{\lb a} q'$, where $a$ is in $\supp(q)$ but it may or may not
be in $\supp(q')$).

However, unlike NDA, RNNA  do not allow for explicit deletions of
names from memory. Consequently, RNNA require this non-determinism in order to maintain their
expressivity.\smnote{Ich habe das Gefühl, dass diese Erklärungen viel früher gemacht werden
  sollten, bei einem Vergleich von NDA und RNNA, nahe der Einführung von NDA, weil sie
  nämlich NDA erst motivieren. Hier könnte dann kurz daran erinnern werden.}

To avoid the slightly unintuitive non-deterministic name dropping in
NDA, we introduce in \cref{con:disc} a notion of discipline to the
behaviour of the automaton. In particular, we require the automaton to
drop names only when the label of the transition indicates it.  This
means that $\lb a$ always allocates $a$, $ a\rb$ always deallocates
$a$, a free $a$ transition always keeps the known name~$a$ in memory,
and $\q{a}$ always allocates and deallocates $a$ immediately (for
$a\rb$ and $\q{a}$, this already holds in every NDA).

In this regard, \cref{con:disc} is the key to enabling determinization. As we have
already pointed out, having determinization is quite remarkable for any nominal automaton
model; it is particularly striking that deallocation is all one needs to achieve it.
}
Intuitively, RNNA cannot be determinized because
they rely on non-deterministic name dropping to maintain their expressivity:
since RNNA lack explicit deallocation,
names can only be removed from memory by non-deterministically dropping
them during transitions. NDA inherit this non-deterministic name
dropping (e.g., an allocating transition $q\trans{\lb a} q'$ may or may
not retain $a$ in $\supp(q')$), but crucially add the ability to drop
names explicitly via deallocating transitions.

To avoid the non-deterministic name dropping in NDA, we introduce in
\cref{con:disc} a notion of discipline on automaton behaviour.  In
particular, we require the automaton to drop names only when the label
of the transition indicates it: $\lb a$ always allocates $a$, $a\rb$
always deallocates $a$, a free $a$ transition always keeps the known
name~$a$ in memory, and $\q{a}$ always allocates and deallocates $a$
immediately (for $a\rb$ and $\q{a}$, this already holds in every NDA).

This discipline is the key to enabling determinization.  As we have
already pointed out, having determinization is quite remarkable for any
nominal automaton model; it is particularly striking that explicit
deallocation is all one needs to achieve it.\spnote{reworded and removed bar DFA}

We proceed to introduce our deterministic automaton model.

\begin{defn}
  A \emph{deterministic deallocation automaton} (DDA) is an NDA $A = (Q,\Delta, i, F)$ where
  $\Delta$ is \emph{(partially) deterministic}: for all $q \in Q$ and $\gamma \in \ah$,
  there is at most one $q' \in Q$ such that $q \trans{\gamma} q'$.
\end{defn}

We will establish a correspondence between non-deterministic and
deterministic automata in terms of local freshness semantics: we are
going to provide a construction that converts a given NDA~$A$ into a
DDA~$A'$ such that $\D(\la(A)) = L_D(A')$.
Here $A'$ is produced $\alpha$-closed (via name dropping in the overall construction),
so the statement applies uniformly to every NDA~$A$.


Since for the big majority of applications closed languages are of particular interest, 
at least for the purpose of
comparability to other automata models, we restrict our investigations to closed languages 
(like the one in \cref{E:nodetRNNA}).

\takeout{ 
This additionally carries the 
advantage of a construction that is widely standard (with a convincingly simple adjustment).
Eventually, this bears:
\begin{center}
  \emph{Closed regular data languages are deterministic.}
\end{center}
}

From this point on, we will only consider name-dropping modifications
of NDAs (e.g. an $S$-restriction is always one of a name-dropping
modification).  This ensures that all states required for applying our
disciplination policy, defined next, actually exist.

\begin{construction}[Disciplined $S$-Restriction]\label{con:disc}
  Given an NDA, fix a set $S \subseteq \names$ as in \cref{con:sRes}.
  Let $A_S = (Q,\Delta, i, F)$ be the $S$-restriction of the
  name-dropping modification of an NDA~$A$.  The \emph{disciplined
    $S$-restriction} of~$A$ is the NFA $A_D = (Q,\Delta_D,i,F_D)$ with
   transitions  given by
  \begin{align*}
      q \trans{\lb a} q'\text{ in~$A_D$}
      &\text{ if $q \trans{\lb a} q'$ in~$A_S$},& a \notin \lo_{A_S}(q),\quad&a \in \lo_{A_S}(q'),\\
      q \trans{a} q'\text{ in~$A_D$}
      &\text{ if $q \trans{a} q'$ in~$A_S$},& &a \in \lo_{A_S}(q'),\\
      q \trans{a\rb} q'\text{ in~$A_D$}
      &\text{ if $q \trans{a\rb} q'$ in~$A_S$, or}\\
      &\text{ if $q \trans{a} q'$ in~$A_S$},& &a \notin \lo_{A_S}(q'),\\
      q \trans{\q{a}} q'\text{ in~$A_D$}
      &\text{ if $q \trans{\q{a}} q'$ in~$A_S$, or }\\
      &\text{ if  $q \trans{\lb a} q'$ in~$A_S$,}& a\notin \lo_{A_S}(q),\quad &a \notin \lo_{A_S}(q').
    \end{align*}
    \takeout{
      \item $q \trans{\lb a} q'\text{ in~$A_D$}$ if $q \trans{\lb a} q'$ in~$A'$ and
              $a \in \lo(q')$,
      \item $q \trans{a} q'\text{ in~$A_D$}$ if $q \trans{a} q'$ in~$A'$, and
              $a \in \lo(q')$,
      \item $q \trans{a\rb} q'\text{ in~$A_D$}$ if $q \trans{a\rb} q'$ in~$A'$, 
              or if $q \trans{a} q'$ in~$A'$ and $a \notin \lo(q')$, and
            \item $q \trans{\q{a}} q'\text{ in~$A_D$}$ if
              $q \trans{\q{a}} q'$ in~$A'$, or if $q \trans{\lb a} q'$
              in~$A'$ and $a \notin \lo(q')$,} The set of final states
            is
            $F_D = \set{f \in F \mid \lo_{A_S}(f) =
              \emptyset}$.
\end{construction}

\begin{lemma}\label{lem:discSResDnfa}
  The disciplined $S$-restriction of an NDA is a D-NFA.
\end{lemma}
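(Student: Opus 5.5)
The plan is to bound the sets $\rc_{A_D}(q)$ and $\lo_{A_D}(q)$, computed in $A_D$ as the least solutions of the inclusions in \cref{def:dnfa}, from above by sets that are disjoint for every state. $A_D$ is a finite NFA: $A_S$ is one, since $Q_S$ consists of the finitely many states with support inside the finite set $S$, and $\Delta_D$ only relabels or filters transitions of $A_S$. Disjointness of the bounds then gives the D-NFA property. This follows the proof of \cref{lem:sResDnfa}.

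Concretely, I would use the candidate families
\[
  \lo'(q) = \lo_{A_S}(q), \qquad \rc'(q) = \names\setminus\lo_{A_S}(q).
\]
These are disjoint by definition. It remains to show that they satisfy every clause of \cref{def:dnfa} for the transitions of $\Delta_D$; leastness then gives $\lo_{A_D}(q)\subseteq\lo'(q)$ and $\rc_{A_D}(q)\subseteq\rc'(q)$. We also know from \cref{lem:sResDnfa} that $A_S$ is a D-NFA, and that $\lo_{A_S}$ satisfies the $\lo$-clauses for the transitions of $A_S$.

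The verification is a case analysis along the defining clauses of \cref{con:disc}.
\begin{itemize}
\item For $q\trans{\lb a}q'$ kept as $\lb a$: the $\lo$-clause is inherited from $A_S$. For the $\rc$-clause, $(\names\setminus\lo(q))\setminus\{a\}\subseteq\names\setminus\lo(q')$ follows from $\lo(q)\supseteq\lo(q')\setminus\{a\}$.
\item For $q\trans{a}q'$ kept as $a$: the $\lo$-clause is inherited. The $\rc$-clause needs $\lo(q')\setminus\{a\}\subseteq\lo(q)$, which holds since $\lo(q)\supseteq\lo(q')\cup\{a\}$.
\item For $q\trans{a\rb}q'$, either original or converted from $q\trans{a}q'$ with $a\notin\lo_{A_S}(q')$: the $\lo$-clause $\lo(q)\supseteq\{a\}\cup\lo(q')$ holds in both cases. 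The $\rc$-clause requires $\rc'(q)\cup\{a\}\subseteq\rc'(q')$, i.e.\ $a\notin\lo(q')$ and $\lo(q')\subseteq\lo(q)$. In the converted case both are given. In the original case $a\notin\lo_{A_S}(q')$ holds because $a\in\rc_{A_S}(q')$ and $A_S$ is a D-NFA.
\item For $q\trans{\q{a}}q'$: the $\lo$-clause is inherited in the original case. In the case converted from $\lb a$, it follows from $a\notin\lo(q')$ together with $\lo(q)\supseteq\lo(q')\setminus\{a\}$. The $\rc$-clause again needs $a\notin\lo(q')$, which holds via the D-NFA property of $A_S$ in the original case and by the side condition in the converted case, plus $\lo(q')\subseteq\lo(q)$.
\end{itemize}

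The main obstacle is the $\rc$-clauses, which increase $\rc$ across the transition. With the complement candidate they reduce to showing $\lo_{A_S}(q')\subseteq\lo_{A_S}(q)$ plus freshness of $a$ in $q'$. Both are secured by the side conditions of \cref{con:disc} (that is exactly why they are imposed) and by $A_S$ being a D-NFA. If the complement candidate ever failed, I would instead bound $\rc_{A_D}(q)$ by $\rc_{A_S}(q)$ enlarged by the names whose free transitions were converted into deallocations, and argue disjointness via \cref{lem:cnDnfaCoincInv}.
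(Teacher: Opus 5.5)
Your proof is correct and is essentially the paper's argument. The paper bounds $\lo_{A_D}(q)$ by $\lo_{A_S}(q)$ and $\rc_{A_D}(q)$ by $R(q)=\rc_{A_S}(q)\cup(\names\setminus\lo_{A_S}(q))$, and checks the clauses of \cref{def:dnfa} case by case along \cref{con:disc}. Because $A_S$ is a D-NFA, $R(q)$ coincides with your $\names\setminus\lo_{A_S}(q)$: you use that fact inside the $a\rb$ and $\q{a}$ cases (to get $a\notin\lo_{A_S}(q')$), whereas the paper uses it at the final disjointness step.
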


\begin{proofappendix}{lem:discSResDnfa}
  Fix $S\subset \names$. As of \cref{con:disc},
  let $A_S=(Q,\Delta,i,F)$ be
  the $S$-restriction of an NDA and $A_D = (Q, \Delta_D, i, F_D)$
  the disciplined $S$-restriction. We show that $A_D$ is a D-NFA.

  We have that $A_S$ is a D-NFA by \cref{lem:sResDnfa}. Let $q \in Q$, we
  show that $\lo_{A_D}(q) || \rc_{A_D}(q)$.

  We claim that:
  \[\lo_{A_D}(q) \subseteq \lo_{A_S}(q)\] and
  \[\rc_{A_D}(q) \subseteq \rc_{A_S}(q)
  \cup \names\setminus\lo_{A_S}(q) =: R(q).\]

  The sets $\lo_{A_S}(q)$ and $R(q)$ are trivially disjoint, so we are left to show that they meet the inequalities given in \cref{def:dnfa}.

  For $q \trans{\gamma}q'\in \Delta_D$, we distinguish cases on $\gamma$:

  \begin{itemize}
    \item $\gamma = \lb a$: We show that 
    $\lo_{A_S}(q)\supseteq \lo_{A_S}(q')\setminus \set{a}$ and
    $R(q)\setminus\set{a} \subseteq R(q')$. By \cref{con:disc}, we have $q \trans{\lb a}q'\in \Delta$.
    Since $A_S$ is a D-NFA, we immediately obtain $\lo_{A_S}(q) \supseteq \lo_{A_S}(q')\setminus\set{a}$.
    For the second inclusion, let $b\in R(q)\setminus\set{a}$.
    If $b\in \rc_{A_S}(q)\setminus\set{a}$, then $b\in \rc_{A_S}(q')\subseteq R(q')$ by \cref{def:dnfa}.
    Otherwise, we have $b\notin \lo_{A_S}(q)$ by \cref{con:disc} and $b\neq a$.
    Then we have $\lo_{A_S}(q')\subseteq \lo_{A_S}(q)\cup\set{a}$, hence $b\notin \lo_{A_S}(q')$, and
    thus $b\in \names\setminus \lo_{A_S}(q') \subseteq R(q')$.

    \item $\gamma = a$: We show that $\lo_{A_S}(q)\supseteq \lo_{A_S}(q')\cup\set{a}$ and
    $R(q)\setminus\set{a} \subseteq R(q')$. By \cref{con:disc}, we have $q \trans{a}q'\in \Delta$.
    Since $A_S$ is a D-NFA, the first inclusion is clear.
    Let $b\in R(q)\setminus\set{a}$. If $b\in \rc_{A_S}(q)\setminus\set{a}$, then $b\in \rc_{A_S}(q')\subseteq R(q')$ by \cref{def:dnfa}.
    Otherwise, we have $b \in \names\setminus\lo_{A_S}(q)$, i.e.~$b\notin \lo_{A_S}(q)$, and $b\neq a$. We have $\lo_{A_S}(q')\subseteq \lo_{A_S}(q)$, hence $b\notin \lo_{A_S}(q')$, thus we get $b\in \names\setminus \lo_{A_S}(q')\subseteq R(q')$.

    \item $\gamma = a\rb$: We show that $\lo_{A_S}(q)\supseteq \set{a}\cup \lo_{A_S}(q')$ and
    $R(q)\cup\set{a} \subseteq R(q')$.
    We distinguish cases on how such transitions arise in \cref{con:disc}:
    If $q\trans{a\rb}q'\in \Delta$, then both desired inclusions follow immediately from
    $A_S$ being a D-NFA.

    Otherwise, we have $q\trans{a}q'\in \Delta$ and $a\notin \lo_{A_S}(q')$.
    Since $A_S$ is a D-NFA, we have $\lo_{A_S}(q)\supseteq \lo_{A_S}(q')\cup\set{a}$.
    For $R(q)$, let $b\in R(q)\cup\set{a}$.
    If $b=a$, then $b\in \names\setminus \lo_{A_S}(q')\subseteq R(q')$ by $a\notin \lo_{A_S}(q')$.
    If $b\neq a$ and $b\in \rc_{A_S}(q)$, then $b\in \rc_{A_S}(q')\subseteq R(q')$ by
    the clause for $q\trans{a}q'$ in \cref{def:dnfa}.
    If $b\neq a$ and $b\notin \lo_{A_S}(q)$, then from $\lo_{A_S}(q)\supseteq \lo_{A_S}(q')\cup\set{a}$ we have $b\notin \lo_{A_S}(q')$.
    Hence $b\in \names\setminus \lo_{A_S}(q')\subseteq R(q')$.

    \item $\gamma = \q{a}$: We show that $\lo_{A_S}(q)\supseteq \lo_{A_S}(q')$ and
    $R(q)\cup\set{a} \subseteq R(q')$.
    Again, we distinguish the two cases from \cref{con:disc}.
    If $q\trans{\q{a}}q'\in \Delta$, then since $A_S$ is a D-NFA we have $\lo_{A_S}(q)\supseteq \lo_{A_S}(q')$ and $\rc_{A_S}(q)\cup\set{a}\subseteq \rc_{A_S}(q')$, and the desired inclusions follow.
    Otherwise, we have $q\trans{\lb a}q'\in \Delta$, $a\notin \lo_{A_S}(q)$, and $a\notin \lo_{A_S}(q')$.
    Since $A_S$ is a D-NFA and $q\trans{\lb a}q'$, we immediately have $\lo_{A_S}(q)\supseteq \lo_{A_S}(q')\setminus\set{a}=\lo_{A_S}(q')$.
    For $R(q)$, let $b\in R(q)\cup\set{a}$.
    If $b=a$, then $b\in \names\setminus \lo_{A_S}(q')\subseteq R(q')$ by $a\notin \lo_{A_S}(q')$.
    If $b\neq a$ and $b\in \rc_{A_S}(q)$, then $b\in \rc_{A_S}(q')\subseteq R(q')$ by \cref{def:dnfa}.
    If $b\neq a$ and $b\notin \lo_{A_S}(q)$, then from $\lo_{A_S}(q)\supseteq \lo_{A_S}(q')$ we get $b\notin \lo_{A_S}(q')$.
    Hence, we obtain $b\in \names\setminus \lo_{A_S}(q')\subseteq R(q')$.
  \end{itemize}
\end{proofappendix}

We say a D-NFA is disciplined, if its literal language is.\spnote{Dieses Statement oder das entsprechene Lemma verschieben}

\begin{lemma}\label{lem:discRestrictionSameData}
  The $S$-restriction of an NDA and its corresponding disciplined $S$-restriction accept the
  same data language. 
\end{lemma}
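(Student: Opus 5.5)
We prove the two inclusions between $L_D(A_S)$ and $L_D(A_D)$ separately. Throughout, $A_S$ is the $S$-restriction of the name-dropping modification $A_\bot$, as stipulated before \cref{con:disc}.

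\emph{Inclusion $L_D(A_D)\subseteq L_D(A_S)$.} This direction is a letter-by-letter simulation. By \cref{con:disc}, every transition $q\trans{\gamma}q'$ of $A_D$ arises from a transition $q\trans{\gamma'}q'$ of $A_S$ between the same states, with $\db(\gamma)=\db(\gamma')$. The possible pairs $(\gamma,\gamma')$ are $(\lb a,\lb a)$, $(a,a)$, $(a\rb,a\rb)$, $(a\rb,a)$, $(\q a,\q a)$ and $(\q a,\lb a)$. Moreover $F_D\subseteq F$. Hence every accepting run of $A_D$ on $w$ is, state by state, an accepting run of $A_S$ on some $w'$ with $\db(w')=\db(w)$.

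\emph{Inclusion $L_D(A_S)\subseteq L_D(A_D)$.} Let $A_S$ accept $w=\gamma_1\cdots\gamma_n$ via the run $q_0\trans{\gamma_1}q_1\cdots q_n$, and put $v_k=\gamma_{k+1}\cdots\gamma_n$.

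1. Tighten the run. We replace each $q_k$ by the state $p_k$ of $A_\bot$ obtained by dropping from $q_k$ all names not in $\lo(v_k)$. This is possible because $\lo(v_k)\subseteq\supp(q_k)$ by \cref{lem:supptrans}. By \cref{lem:restrAccFNDN}, $p_k$ still accepts $v_k$. By the support side conditions in clauses (3)--(6) of \cref{def:ndm}, together with \cref{lem:cnConcat}, we obtain transitions $p_k\trans{\gamma_{k+1}}p_{k+1}$ in $A_\bot$. These lie in $A_S$ because the supports only shrink and the letters are unchanged. For the initial state, \cref{lem:alphaClosedSupp}/\cref{lem:NDMClosedAlpha} may be used to keep the word over $\hat S$, and the start may need adjusting to remain at $i$: we drop names only from $q_1$ onwards and use $\supp(i)\subseteq S$.

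2. Pin down $\lo_{A_S}$. For the tightened states we get $\lo_{A_S}(p_k)=\supp(p_k)=\lo(v_k)$. The inclusion $\lo_{A_S}(p_k)\subseteq\supp(p_k)$ comes from the proof of \cref{lem:sResDnfa}. The inclusion $\lo(v_k)\subseteq\lo_{A_S}(p_k)$ comes from \cref{lem:cnDnfaCoincInv}.

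3. Relabel according to $\disc$. We keep the same sequence of states $p_k$ and relabel the $k$-th transition by the $k$-th letter of $\disc(w)$. The side conditions of \cref{con:disc} now read exactly as the clauses defining $\disc$:
\begin{itemize}
\item A letter $a$ stays $a$ iff $a\in\lo(v_{k+1})=\lo_{A_S}(p_{k+1})$, and otherwise becomes $a\rb$.
\item A letter $\lb a$ stays $\lb a$ iff $a\in\lo(v_{k+1})$, and otherwise becomes $\q a$. In both cases $a\notin\lo(\lb a\,v_{k+1})=\lo_{A_S}(p_k)$.
\item The letters $a\rb$ and $\q a$ are kept unchanged.
\end{itemize}
The last state is final in $A_D$ since $\lo_{A_S}(p_n)=\lo(\epsilon)=\emptyset$. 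Thus $A_D$ accepts $\disc(w)$, and $\db(\disc(w))=\db(w)$ by the definition of $\disc$; \cref{lem:discSameData} confirms this at the level of local freshness semantics.

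The main obstacle is step 1: making sure the tightened states and the transitions between them really exist inside $A_S$. Two points need care. First, an allocating transition $\lb a$ from $p_k$ where $a$ was in $\supp(q_k)$: after dropping, $a\notin\supp(p_k)$, which is what we want. Second, the clause for $\lb a$ in \cref{def:ndm} is phrased up to abstraction, so one must check that the tightened target is an admissible $g$ with $\supp(f)\cup\{a\}\supseteq\supp(g)$. To handle this, I would argue transition type by transition type, using \cref{lem:everyExt} to pass back to total extensions in $A$ and then drop names again.
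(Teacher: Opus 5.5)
Your route matches the paper's. The inclusion $L_D(A_D)\subseteq L_D(A_S)$ is the same letter-by-letter simulation. For the converse, the paper also shrinks states to support exactly $\lo$ of the remaining suffix and relabels according to $\disc$; it organises this as an induction on words (a tightened restriction $(o,r_\bot)$ accepts $\disc(w)$) rather than as a transformation of one whole run. What you flag as the main obstacle is harmless. For $k\geq 1$, the witnesses in $A$ for $q_k\trans{\gamma_{k+1}}q_{k+1}$ also witness the tightened transition (for $\lb a$ after restricting the witness $g'$ accordingly). The support side conditions of \cref{def:ndm} follow from \cref{tab:namedefs}, using name erasure in the $\q{a}$ case.

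The genuine gap is at $k=0$, which you pass over. The run must start in the initial state $i$ of $A_D$, which cannot be tightened, so step 2 fails for $p_0=i$. By \cref{lem:cnDnfaCoincInv}, $\lo_{A_S}(i)$ is the union of $\lo(v)$ over \emph{all} words $v$ having some run from $i$, including non-accepting runs and runs on other branches. Hence $\lo(w)$ may be a proper subset of $\lo_{A_S}(i)$, and step 3 breaks in two places:
\begin{itemize}
\item if $\gamma_1=\lb a$, both the $\lb a$-clause and the $\q{a}$-clause of \cref{con:disc} demand $a\notin\lo_{A_S}(i)$;
\item if $w=\epsilon$, you need $i\in F_D$, i.e.\ $\lo_{A_S}(i)=\emptyset$.
\end{itemize}

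This cannot be patched, because the statement fails in general. Take states $s(c)$ for $c\in\names$ and two states $u,v$ with empty support. Take transitions $s(c)\trans{c}v$ and $s(c)\trans{\lb b}u$ for all $b,c\in\names$, with $F=\{u\}$ and $i=s(c_0)$.
\begin{itemize}
\item Then $\lb c_0\in L(A_S)$, so $c_0\in L_D(A_S)$.
\item But $\lo_{A_S}(s(c_0))=\{c_0\}$ because of the dead-end transition into $v$, and $\lo_{A_S}(u)=\emptyset$.
\item So the only transitions of $A_D$ from $s(c_0)$ into $u$ are $\q{b}$ with $b\in S\setminus\{c_0\}$, and $c_0\notin L_D(A_D)$.
\end{itemize}
Keeping only $s(c)\trans{c}v$ and making all $s(c)$ final gives a variant with $\epsilon\in L_D(A_S)\setminus L_D(A_D)$.

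The paper's proof has the same hole: its induction never returns from the tightened $(o,r_\bot)$ to $i$ itself, and the companion \cref{lem:l0disciplineLiteral} simply asserts $\lo(i)=\emptyset$. Under an extra hypothesis such as $\supp(i)=\emptyset$, as in the final corollary, your argument is complete. Then $\lo_{A_S}(i)\subseteq\supp(i)=\emptyset=\lo(w)$ by the proof of \cref{lem:sResDnfa}, so $i$ is already tight and steps 2--3 apply at $k=0$ as well.
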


\begin{proofsketch}
  For $L_D(A_D)\subseteq L_D(A_S)$, note that every transition in $\Delta_D$ is obtained from a
  transition in $\Delta$ by possibly changing the binder and marker, but never the
  underlying name after applying $\db$; hence every run in $A_D$ corresponds to a run in $A_S$
  on a word with the same debracketing.
  For $L_D(A_S)\subseteq L_D(A_D)$, take an accepting run of $A_S$ on some word $w$ and transform
  it into a run of $A_D$ on $\disc(w)$ by inspecting each transition locally: whenever a name
  is not needed in the successor's support, we replace the step by a deallocation
  (or an unknown), otherwise the step is kept.
  Since $\db(\disc(w))=\db(w)$, this preserves the data word.
\end{proofsketch}

\begin{proofappendix}{lem:discRestrictionSameData}
  Let the data of \cref{con:disc} be given, with the same
  designators.
  
  "$L_D(A_D) \subseteq L_D(A_S)$": This inclusion is straightforward, since each transition
  $q\trans{\gamma}q' \in \Delta_D$ is induced by a transition $q\trans{\hat\gamma}q' \in \Delta$ such
  that $\db(\hat\gamma) = \db(\gamma)$, hence for each run on some $w$ in $A_D$ there is a
  run on some $w'$ in $A_S$ such that $\db(w') = \db(w)$.
    
  "$L_D(A_S) \subseteq L_D(A_D)$": For the opposite direction, we recall
  the representation of states in $Q$ inherited from the name-dropping
  modification as tuples $(o,r)$ where $r$ is a partial function
  $[n]\parto S$, with~$n$ being the size of the support of
  elements of the orbit~$o$ in the original NDA. We proceed via
  induction on words, and reinforce the
  claim: For each state $(o,r)$ in $Q$ and each smaller word $w$ accepted by
  $(o,r)$,
  there is a state $(o,r')\in Q$ such that $r'$ extends $r$,
  $\lo(r') \supseteq \lo(w)$ and $(o,r')$ accepts $\disc(w)$ in $A_D$
  (for which we have $\db(\disc(w)) = \db(w)$).

  Base case $w = \epsilon$: We have that $(o,r)$ is a final state in $A_S$ and $w = \epsilon$.
  Then $(o,r_\bot)$ where $r_\bot$ is the partial function undefined everywhere
  is a final state in $A_D$, $r$ extends $r_\bot$, $\supp(r_\bot) = \emptyset =
  \lo(w)$,
  and $(o,r_\bot)$ accepts $\disc(\epsilon) = \epsilon$.

  Induction step $w = \gamma u$: Let $(o,r)\in Q$ accept $w = \gamma u$. Then there is
  $(p,s)$ such
  that $(o,r)\trans{\gamma}(p,s) \in \Delta$ and $(p,s)$ accepts $u$ in $A_S$.
  By the induction hypothesis, there is $(p,s_\bot)\in Q$ accepting $\disc(u)$ with
  $\supp(s_\bot) = \lo(u)
  = \lo(\disc(u))$, and $s$ extends $s_\bot$. We distinguish cases:
  
  $\gamma = a$: By the definition of the name-dropping modification and $A_S$, there is
  $(o,r_\bot)\trans{a}(p,s_\bot)$ in $\Delta$ for some $r_\bot$ extended by $r$. By
  \cref{lem:supptrans},
  we have that $a \in \supp(r_\bot)$.
  If $a \in \supp(s_\bot) = \lo(\disc(u))$, then there is $(o,r_\bot)\trans{a}(p,s_\bot)$
  in $\Delta_D$ by the
  definition of $\Delta_D$; otherwise $(o,r_\bot)\trans{a\rb}(p,s_\bot) \in \Delta_D$.
  In the former case,
  $(o,r_\bot)$ accepts $a\disc(u) = \disc(w)$; in the latter, $(o,r_\bot)$ accepts
  $a\rb\disc(u) = \disc(w)$.

  $\gamma = \lb a$: By the definition of the name-dropping modification, there is
  $(o,r_\bot)\trans{\lb a}(p,s_\bot)$
  in $\Delta$ for some
  $r_\bot$ extended by $r$, and we choose $r_\bot$ with $a \notin \supp(r_\bot)$.
  If $a \in \supp(s_\bot) = \lo(\disc(u))$,
  then $(o,r_\bot)\trans{\lb a}(p,s_\bot) \in \Delta_D$ by \cref{con:disc},
  and $(o,r_\bot)$ accepts
  $\lb a\,\disc(u) = \disc(w)$. Otherwise, $a \notin \supp(s_\bot)$, hence
  $(o,r_\bot)\trans{\q{a}}(p,s_\bot) \in \Delta_D$
  by \cref{con:disc}, and $(o,r_\bot)$ accepts $\q{a}\,\disc(u) = \disc(w)$.

  $\gamma = a\rb$: Then $(o,r_\bot)\trans{a\rb}(p,s_\bot) \in \Delta$ for some
  $r_\bot$ extended by $r$.
  By the definition of $\Delta_D$ the same transition is in $\Delta_D$. Moreover,
  $a \notin \supp(s_\bot)$ by
  \cref{lem:supptrans}, and $\disc(w) = a\rb\disc(u)$, hence $(o,r_\bot)$ accepts
  $\disc(w)$.

  $\gamma = \q{a}$: Lastly, $(o,r_\bot)\trans{\q{a}}(p,s_\bot) \in \Delta$ for some
  $r_\bot$ extended by $r$,
  and by \cref{con:disc} also $(o,r_\bot)\trans{\q{a}}(p,s_\bot) \in \Delta_D$.
  Since $\disc(\q{a}u) = \q{a}\,\disc(u)$,
  $(o,r_\bot)$ accepts $\disc(w)$.\qedhere
  \end{proofappendix}

  We next consider the classical powerset construction, applied in the
  same manner as for ordinary NFA, where, crucially, we restrict the
  powerset automaton to its reachable part. Disciplined S-restrictions
  of an NDA are specific NFA over~$\hat{S}$; hence the standard
  powerset construction yields a DFA accepting the same literal
  \mbox{language}. A D-NFA is a \emph{deallocation DFA} (\emph{D-DFA})
  if its transition relation is a partial function.  We observe next
  that the conditions on D-NFA as per \cref{def:dnfa} are preserved by
  the powerset construction:

\begin{lemma}\label{lem:pow}
  The powerset construction preserves D-NFA.
\end{lemma}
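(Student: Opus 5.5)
Let $A=(Q,\Delta,i,F)$ be a D-NFA and let $\mathcal{P}(A)=(\mathcal{Q},\Delta_{\mathcal{P}},\{i\},F_{\mathcal{P}})$ be its reachable powerset automaton, with states nonempty sets $P\subseteq Q$, transitions $P\trans{\gamma}P'$ for $P'=\{q'\mid \exists q\in P.\,q\trans{\gamma}q'\}$ whenever $P'\neq\emptyset$, and the usual final states. The plan is not to compute $\rc_{\mathcal{P}(A)}$ and $\lo_{\mathcal{P}(A)}$ directly. Instead, I will exhibit families of sets satisfying the closure conditions of \cref{def:dnfa} and then use minimality. Concretely, define
\[
  R(P)=\textstyle\bigcup_{q\in P}\rc_A(q),\qquad O(P)=\bigcup_{q\in P}\lo_A(q).
\]
Since $\rc_{\mathcal{P}(A)}$ and $\lo_{\mathcal{P}(A)}$ are the least families satisfying these inclusions, verifying the inclusions for $R$ and $O$ gives $\rc_{\mathcal{P}(A)}(P)\subseteq R(P)$ and $\lo_{\mathcal{P}(A)}(P)\subseteq O(P)$.

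The verification is a case distinction on $\gamma$ for a transition $P\trans{\gamma}P'$, and it splits into two halves.

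\emph{The $\lo$ half.} Every $q'\in P'$ has a predecessor $q\in P$ with $q\trans{\gamma}q'$ in $A$. The condition for $A$ gives, for instance, $\lo_A(q')\cup\{a\}\subseteq\lo_A(q)\subseteq O(P)$ when $\gamma\in\{a,a\rb\}$. Taking the union over $q'\in P'$, and using $P'\neq\emptyset$ to get the extra $\{a\}$, yields the required inclusion for $O$. The cases $\lb a$ and $\q{a}$ go the same way.

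\emph{The $\rc$ half.} This half is the subtle one, because it runs in the forward direction. We need, for example, $R(P)\cup\{a\}\subseteq R(P')$ for $\gamma=a\rb$. But an element $q\in P$ need not have a $\gamma$-successor, so its $\rc_A(q)$ does not obviously propagate to $P'$. This is the step I expect to be the main obstacle.

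My first attempt is to replace the forward-defined $R$ by the characterization of \cref{lem:cnDnfaCoincInv}. In $\mathcal{P}(A)$, $\rc(P)$ is the union of $\rc(w)$ over words $w$ having a run into $P$. If the powerset automaton is restricted to subsets reachable from $\{i\}$, a run ending in $P$ on a word $v$ is preceded by a run from $\{i\}$ on some $u$. Then $P$ is exactly the set of states reached from $i$ on $uv$, and every $q\in P$ is the end of a run in $A$ on $uv$. I will use \cref{lem:cnConcat} to get $\rc(v)\subseteq\rc(uv)\cup\lc(v)$. More simply, $\rc(v)\subseteq\rc(uv)$ holds whenever $v$ is a suffix, since right-closedness is determined by the last occurrence. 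Then \cref{lem:cnDnfaCoinc} gives $\rc(uv)\subseteq\rc_A(q)$ for every $q\in P$. This yields
\[
  \rc_{\mathcal{P}(A)}(P)\subseteq\textstyle\bigcap_{q\in P}\rc_A(q),
\]
which is stronger than what is needed. Dually, by \cref{lem:cnDnfaCoincInv} applied to outgoing words together with \cref{lem:cnDnfaCoinc}, any word with a run from $P$ has a run in $A$ from some $q\in P$, which gives
\[
  \lo_{\mathcal{P}(A)}(P)\subseteq\textstyle\bigcup_{q\in P}\lo_A(q).
\]

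Disjointness then follows. If $a$ lies in both sets, then $a\in\lo_A(q)$ for some $q\in P$, and also $a\in\rc_A(q)$ for that same $q$ by the intersection bound. This contradicts $\rc_A(q)\cap\lo_A(q)=\emptyset$. Hence $\mathcal{P}(A)$ is a D-NFA, and, being deterministic, a D-DFA.

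The crux is the reachability restriction, which is essential for the intersection bound. It is also what guarantees the suffix argument: every run into $P$ comes from $\{i\}$, so $P$ equals the set of states reached on the full word. I also need to double-check the suffix claim $\rc(v)\subseteq\rc(uv)$ against \cref{lem:cnConcat}. There we have $\rc(uv)=(\rc(u)\setminus(\lc(v)\cup\lo(v)))\cup\rc(v)$, so the claim indeed holds.
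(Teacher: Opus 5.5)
Your proof is correct. Its key step is the same as the disjointness part of the paper's proof: a word witnessing $a\in\rc$ of a macrostate (via \cref{lem:cnDnfaCoincInv}) has an $A$-run into every member of that macrostate. Then \cref{lem:cnDnfaCoinc} puts $a$ into $\rc_A(q')$ for the member $q'$ with $a\in\lo_A(q')$.

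The difference is in the surrounding bookkeeping. The paper asserts the equalities $\rc_{\mathcal{P}(A)}(P)=\bigcup_{q\in P}\rc_A(q)$ and $\lo_{\mathcal{P}(A)}(P)=\bigcup_{q\in P}\lo_A(q)$, and checks the inclusions of \cref{def:dnfa} for both unions. You use only upper bounds: the union for $\lo$ and the intersection for $\rc$. Your choice is the sound one. As you suspected, the union family for $\rc$ need not satisfy the forward clause. Take transitions $i\trans{b}q_1$, $i\trans{b}q_2$, $q_2\trans{c}q_3$, $p\trans{a\rb}q_1$. Then $a\in\rc_A(q_1)$, but $\rc_A(q_3)=\emptyset$, so the clause for $\{q_1,q_2\}\trans{c}\{q_3\}$ fails for the union. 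The intersection bound is exactly what the disjointness argument needs.

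One correction to your closing remark: reachability is not essential. Every $q'\in P'$ has a $\gamma$-predecessor in $P$. Hence the family $P\mapsto\bigcap_{q\in P}\rc_A(q)$ satisfies the $\rc$-clauses of \cref{def:dnfa} directly, by the same predecessor argument you give for $\lo$. Leastness then yields your intersection bound for every nonempty macrostate. You need neither \cref{lem:cnDnfaCoincInv}, nor the prefix $u$, nor the suffix inclusion $\rc(v)\subseteq\rc(uv)$. Even in your word-based version, a run $P_0\trans{v}P$ in the powerset automaton already makes each $q\in P$ the endpoint of an $A$-run on $v$.

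What does matter is that you exclude the empty macrostate, as you do. In a total powerset DFA, a reachable sink $\emptyset$ would carry loops labelled $a$ and $a\rb$. That would put $a$ into both its $\lo$ and its $\rc$, violating disjointness.
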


\begin{proofappendix}{lem:pow}
  Let $A = (Q,\Delta,i,F)$ be a D-NFA.
  The powerset construction applied to $A$ yields the DFA $A' =
  (Q',\delta,\set{i},F_P)$, where $Q'\subseteq \pow(w)$
  is the set of reachable macrostates,
  $\delta\colon Q'\times \shs\to Q'$ is defined by
  \[
    (S, \gamma) \mapsto \set{q'\in Q \mid q\trans{\gamma}q'\in \Delta, q \in S},
  \]
  and we have $F_P = \set{S\in\pow(Q) \mid S \cap F \neq \emptyset}$.

  We prove that the DFA $A'$ is a D-NFA.
  Let $S \in Q'$ be a state in~$A'$. We claim that
  \[
    \textstyle
    \lo_{A'}(S) = \bigcup_{q\in S} \lo_A(q)
    \qquad\text{and}\qquad
    \rc_{A'}(S) = \bigcup_{q\in S} \rc_A(q).
  \]
  
  We have to show that (1) $\lo_{A'}(S) \cap \rc_{A'}(S) = \emptyset$ for every
  $S \in Q'$ and (2) $\lo_{A'}(S)$ and $\rc_{A'}(S)$ above adhere to the
  defining inclusions of D-NFAs.
  
  (1): Assume towards a contradiction that there is
  $a\in \rc_{A'}(S) \cap \lo_{A'}(S)$. Then there are states $q,q' \in S$ such
  that $a \in \rc_A(q)$ and $a \in \lo_A(q')$.
  Assume that $S$ has some incoming run $\sigma$ on $w$ such that
  $a\in\rc(w)$.  If there is none, then $\rc_A(S) = \emptyset$ holds,
  contradicting $a \in \rc_A(S)$ and we are done. By the powerset
  construction, all states $q$ in $S$ then have an incoming run on
  $w$, in particular $q'$, yielding $a \in \rc_A(q')$.  With this,
  $a \in \lo_A(q') \cap \rc_A(q')$ contradicts that $A$ is a D-NFA.

  (2): Let $S \trans{\gamma} S' \in \delta$; We showcase the inclusion for
  the right-closed names and for $\gamma = a$; all other cases are entirely
  analog. We have to show
  $(\bigcup_{q \in S} \rc_A(q))\setminus \set{a} \subseteq
  \bigcup_{q \in S'} \rc_A(q)$. Let $q \in S$ and $b\in \rc_A(q)$.
  We have to show that if $a \neq b$, then $b \in \rc_{A'}(S')$.
  Let $q' \in Q$ such that $q' \trans{a} p \in \Delta$ for $p\in S'$,
  which induces the given $S \trans{a} S' \in \delta$.
  We have that $b \in \rc_A(q)$ implies $b \in \rc_A(w)$
  for some $w$ with an incoming run to $q$. By the powerset construction,
  $q'$ also has an incoming run on $w$, and $b \in \rc_A(q')$. Since
  $A$ is a D-NFA, we have $\rc_A(q') \setminus \set{a} \subseteq \rc_A(p)$,
  thus $b \in \rc_A(p)$ and eventually $b \in \bigcup_{q \in S'} \rc_A(q) =
  \rc_{A'}(S')$.

  Finally, we obtain $L(A) = L(A')$ from the standard powerset construction
  for NFAs.
\end{proofappendix}

Now that we enforced that free and allocating transitions always store their label in the
support, which allows us to close the automaton under equivariance while preserving its alphatic
language, using the closure under equivariance given in \cref{con:nom}.

\begin{lemma}\label{lem:closingDDA}
  The nominalization of a disciplined D-DFA yields a DDA.
\end{lemma}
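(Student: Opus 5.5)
By \cref{lem:nomNda}, the nominalization $A'$ of a D-DFA $A=(Q,\delta,i,F)$ is an NDA. So it only remains to show that $\cleq\Delta'$ is partially deterministic. Here $A$ is the reachable powerset automaton of a disciplined $S$-restriction; it is a D-NFA by \cref{lem:discSResDnfa} and \cref{lem:pow}. Because $\cleq\Delta'$ is equivariant, determinism may be checked at the representatives $\iota(q)=(q,\bar a)$, where $\bar a$ enumerates $\lo(q)$. Any state of $\cleq Q'$ has the form $\pi\cdot\iota(q)$, and $q$ is determined by it because $Q$ is a discrete component. A transition out of $\pi\cdot\iota(q)$ labelled $\gamma$ is $\pi$ applied to a transition out of $\iota(q)$ labelled $\pi^{-1}\cdot\gamma$, so it suffices to fix $q$ and a letter $\gamma$ and show there is at most one target.

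The argument splits according to the shape of $\gamma$.

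For $\gamma=a$ or $\gamma=a\rb$, a transition $\iota(q)\trans{\gamma}(q',s)$ in $\cleq\Delta'$ is $\pi\cdot(\iota(q)\trans{\gamma_0}\iota(q'))$ for some $\gamma_0$ with $q\trans{\gamma_0}q'$ in $A$ and $\pi$ fixing $\lo(q)$. By \cref{lem:supptrans} the name of $\gamma_0$ lies in $\lo(q)$, so $\gamma_0=\gamma$. Determinism of $A$ then fixes $q'$. Moreover $\lo(q')\subseteq\lo(q)$ by \cref{def:dnfa}, so $\pi$ fixes $s$ and $s=\iota(q')$.

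For $\gamma=\q{a}$, the source transition is some $q\trans{\q{c}}q'$ with $c\notin\lo(q)\cup\lo(q')$. Here a subtlety arises, because $A$ is deterministic only per letter: $q\trans{\q{c}}q_1$ and $q\trans{\q{c'}}q_2$ could have $q_1\neq q_2$. I plan to handle this in one of two ways. One option is to argue that in the disciplined, $\alpha$-closed construction the $\q{\cdot}$-successor does not depend on the fresh name $c\in S$; this follows from left $\alpha$-invariance of the name-dropping modification, which is inherited by $A_S$, $A_D$ and the powerset. The other option is to restrict the definition of $\Delta'$ to one chosen representative letter. Once a single $q'$ is fixed, the target $(q',\pi\cdot\bar b)$ is again determined, since $\pi$ fixes $\lo(q)\supseteq\lo(q')$.

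For $\gamma=\lb a$, the source is $q\trans{\lb c}q'$ with $\braket{a}(q',s)=\braket{c}\iota(q')$. Discipline enters here: every $\lb c$-transition of $A_D$ has $c\in\lo(q')$, and this persists through the powerset construction. Hence $c\in\supp(\iota(q'))$, and the abstraction forces $s=(a\,c)\cdot\bar b$ uniquely once $a$ and $q'$ are fixed. The remaining issue is that the same $\lb a$ may arise from several letters $\lb c$ with different $q'$. I would again use left $\alpha$-invariance, transported from $A_\bot$ through $A_S$ and the powerset construction, to show that all such $c\notin\lo(q)$ lead to $\alpha$-equivalent successors.

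I expect the main obstacle to be exactly this independence of the successor from the chosen fresh letter, in the $\lb a$ and $\q{a}$ cases. Within the finite automaton, left $\alpha$-invariance only holds for letters in $S$, and the powerset construction might mix successors. If the invariance argument fails, the fallback is to observe that the nominalization may be defined from a single chosen representative transition per $\alpha$-class, since $\la$ is preserved anyway by \cref{lem:nomSameAlpha}. This choice makes determinism hold by construction.
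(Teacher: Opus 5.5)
Your treatment of the letters $a$ and $a\rb$ is sound and matches the paper's argument for them. A permutation stabilizing $\iota(q)$ fixes $\lo(q)\supseteq\{a\}\cup\lo(q')$ pointwise, so it fixes both the label and $\iota(q')$.

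The gap is in the $\lb a$ and $\q{a}$ cases, where you expected it, and neither of your two routes closes it. Route (1) needs the successor to be the \emph{same} D-DFA state for all admissible letters $c$, and this is false. Showing only that the successors are related by renaming does not help either: \cref{con:nom} treats the D-DFA's state set as discrete, so renamed copies remain distinct states after nominalization.

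Concretely, run the pipeline on the NDA $i\trans{\lb a}q(a)\trans{a}f$, with $\supp(i)=\supp(f)=\emptyset$, $f$ final, and $S=\{x,y\}$.
Disciplining yields $i\trans{\lb x}q(x)\trans{x\rb}f$ and $i\trans{\lb y}q(y)\trans{y\rb}f$; the steps into $q(\bot)$ become $\q{x}$ and $\q{y}$. The resulting powerset D-DFA is disciplined, with literal language $\{\lb x\,x\rb,\lb y\,y\rb\}$. It has $\{i\}\trans{\lb x}\{q(x)\}$ and $\{i\}\trans{\lb y}\{q(y)\}$, with $\lo(\{i\})=\emptyset$. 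Its nominalization therefore contains both $(\{i\},())\trans{\lb a}(\{q(x)\},(a))$ and $(\{i\},())\trans{\lb a}(\{q(y)\},(a))$ for every $a$. These are two distinct $\lb a$-successors.

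Replacing $q(a)\trans{a}f$ by $q(a)\trans{\q{b}}f$ breaks the $\q{}$ case in the same way. Then $\{i\}$ has $\q{x}$- and $\q{y}$-successors $\{q(x),q(\bot)\}\neq\{q(y),q(\bot)\}$, all with empty $\lo$. Since every permutation fixes $(\{i\},())$, the equivariant closure gives two $\q{z}$-successors for each $z$.

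So with \cref{con:nom} as given, the lemma fails. The paper's own proof breaks at the same point: it asserts that equivariant closure adds no new $\lb a$- or $\q{a}$-transitions.

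Your fallback, keeping one representative letter per class, changes \cref{con:nom}. It therefore proves a statement about a different automaton, not the stated lemma. Moreover, \cref{lem:nomSameAlpha} does not carry over: its inclusion $\la(A)\subseteq\la(A')$ lifts \emph{every} transition of $A$, which is exactly what the pruning removes.

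To make that route work, you would need a new argument. It must show that each discarded successor accepts, up to $\alpha$-equivalence, the corresponding renaming of what the kept successor accepts; in the example, that $\{q(y)\}$ behaves like $(x\,y)\cdot\{q(x)\}$. This is not automatic after disciplining and the powerset construction. Macrostates may contain states whose supports mention $c$ even though $c\notin\lo$, as with $q(x)\in\{q(x),q(\bot)\}$ above. That argument is the missing ingredient, and the proposal leaves it open.
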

\begin{proofappendix}{lem:closingDDA}\spnote{fix this proof}
  Let $A = (Q,\Delta, i,F)$ be a disciplined D-DFA, and
  let $A' = (Q',\Delta',i,F')$ be its nominalization.
  In view of \cref{lem:nomNda,lem:pow}, we are left to verify that
  $\Delta'$ is deterministic, which follows from $\Delta$ being so:
  Let $q \in Q'$ and $\gamma\in \ah$,
  $q\trans{\gamma}q' \in \Delta$ is unique.
  We verify that \cref{con:nom} does not introduce additional
  transitions for $q$ on $\gamma$ and distinguish cases:
  \begin{itemize}
    \item $\gamma = a,\q{a},a\rb$:
      First, we have $\iota(q) \trans{\gamma} \iota(q')$ for $\delta:q \trans{\gamma} q'
      \in \Delta$ which is also unique since $\iota$ is injective.
      Then, since we have $\lo_(S\in Q) = \bigcup_{q\in S}\lo(q)$ from \cref{lem:pow}
      and for all $\delta \in \Delta$ we have $\lo(q')\subseteq\lo(q)$,
      hence the closure under equivariance does not produce new transitions
      starting from $q$.

    \item $\gamma = \lb a$:
      This case works similarly to the above case, we now have 
      $\iota(q) \trans{\lb a} \iota(q')$ for $\delta:q \trans{\lb c} q'' \in \Delta$
      where $\braket{a}q' = \braket{c}q''$. By \cref{con:nom}, we have
      $\lo(q'') \subseteq \lo(q) \cup \set{a}$, and in particular
      $a \in \lo(q'')$. Thus, neither the enforcement of left
      $\alpha$-invariance nor equivariance yield new transitions
      with $\lb a$.\qedhere
  \end{itemize}
\end{proofappendix}

\takeout{ 
  Here we apply the powerset construction as for NFA. Disciplined S-restricted NDA
  $A = (Q,\Delta, i, F)$ are specific NFA over $\shs$; hence the powerset construction
  yields a DFA $A' = (Q',\Delta',i,F')$ accepting the same literal language. Furthermore,
  for each transition $q\trans{\gamma}P$, the target $P$ is ufs; if $\gamma = a$ or
  $\gamma = \lb a$, then $a \in \supp(P)$, which allows us to close the automaton under
  equivariance and preserve the literal language. The closure under equivariance of a finite
  automaton $A = (Q,\Delta,i,F)$ is defined as
  $\cleq(A) = (\cleq(Q),\cleq(\Delta),i,\cleq(F))$.
}

Given an NDA $A$, we write $A_\mathsf{DDA}$ for this DDA and refer to it as a
\emph{determinization} of $A$.
\begin{rem}
  Note that this result fails for RNNA. While applying the powerset
  construction to a bar NFA does provide a bar DFA accepting the same
  bar language, the closure under equivariance of that bar DFA does
  not satisfy $\alpha$-invariance, and closing the transition relation
  under $\alpha$-invariance to obtain an RNNA yields a
  non-deterministic transition relation.\spnote{Hier vergleich mit
    DOFA aus Bojanczyk zitieren?}  \smnote{Hierzu müsste ein konkretes
    Beispiel präsentiert werden. Ansonsten ist hier eine unbewiesene
    Behauptung! Was kommt z.B.~bei dem RNNA in Fig.~4 raus (nach den
    einzelnen Schritten)?  Den bar NFA sehe ich, aber den bar DFA
    seine und closure unter equvariance sollte man zeigen sowie was
    dann zum RNNA noch fehlt und warum das wieder
    nichtdet.~wird. Lutz: Das weiß man aus dem einleitenden
    Beispiel}\lsnote{@Simon: Mal in Bollig et al. 2014 nachlesen, ob
    man session automata determinisieren kann.}
\end{rem}

\begin{lemma}[label=lem:suppDisc]
  Let $A_\mathsf{DDA} = (Q,\Delta, i, F)$ be a determinization.
  Then for every state $q\in Q$ and every word $w \in L_0(q)$, $\lo(w) = \supp(q)$.
\end{lemma}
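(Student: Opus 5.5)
The proof reduces the claim to two separate facts about the finite D-DFA $B$ from which $A_\mathsf{DDA}$ is obtained by nominalization. Here $B$ is the reachable powerset automaton of the disciplined $S$-restriction $A_D$. The first fact is that every state $q$ of $A_\mathsf{DDA}$ satisfies $\supp(q)=\lo_B(q_0)$, where $q_0$ is the underlying state of $B$. The second is the key claim: in a disciplined D-DFA arising this way, every word accepted from $q_0$ satisfies $\lo(w)=\lo_B(q_0)$. Both facts are invariant under permutations, so by equivariance of acceptance (\cref{lem:acceptEquiv}) and of $\lo$ it suffices to treat states of the form $\iota(q_0)$ or their $\lb a$-successors $(q_0',s)$.

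The first fact comes directly from \cref{con:nom}. There $\iota(q_0)=(q_0,\bar a)$ with $\bar a$ enumerating $\lo_B(q_0)$, so $\supp(\iota(q_0))=\lo_B(q_0)$ because $Q$ is discrete. For a state $(q_0',s)$ with $\braket{a}(q_0',s)=\braket{c}\iota(q_0')$, the support is $(ac)\cdot\lo_B(q_0')$, a permutation of the support of an $\iota$-state, so the same description holds up to renaming.

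For the key claim I would argue by induction on the length of $w$, following the structure of \cref{lem:alphaClosedSupp}. One inclusion is immediate: \cref{lem:cnDnfaCoinc} gives $\lo(w)\subseteq\lo_B(q_0)$. For the reverse inclusion I use determinism of $B$ together with discipline.

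Here is the intended argument. The state $q_0$ is a macrostate, and \cref{lem:pow} gives $\lo_B(q_0)=\bigcup_{p\in q_0}\lo_{A_D}(p)$. Take $b\in\lo_B(q_0)$ and let $w$ be accepted from $q_0$. I want to show $b\in\lo(w)$. The disciplined construction \cref{con:disc} keeps a name in $\lo_{A_S}$ of the target of a free or allocating step exactly when it is the label; otherwise it converts the step to $a\rb$ or $\q{a}$. Final states of $A_D$ additionally satisfy $\lo_{A_S}(f)=\emptyset$. Consequently, along any accepting run, a name $b\in\lo_{A_S}(p)$ must be consumed by a free letter $b$ or a deallocation $b\rb$ before any $\lb b$ occurs, since the $\lb b$ clause requires $b\notin\lo_{A_S}(q)$. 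This argument uses $\lo_{A_S}$, so I would first show $\lo_{A_D}(p)=\lo_{A_S}(p)$ on reachable states. In the inductive step I case split on the first letter $\gamma$ and apply the D-NFA inclusions of \cref{def:dnfa} to the target macrostate, using \cref{tab:namedefs} to update $\lo$.

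The main obstacle is that the macrostate $q_0$ may contain a state $p$ with $b\in\lo(p)$ while the accepting run of $w$ passes through a different member $p'\in q_0$ that never mentions $b$. In that case $\lo(w)$ would be strictly smaller than $\lo_B(q_0)$. To handle this I would first show that all elements of a reachable macrostate share the same $\lo_{A_S}$-set. These elements are all reached by a common word $u$ from $i$. Discipline, together with \cref{lem:discRestrictionSameData}, should force $\lo_{A_S}(p)$ to consist exactly of names that are opened in $u$ and not yet closed, which is a quantity depending only on $u$.

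If that argument fails, the fallback is to redefine the macrostates. I would restrict to macrostates whose members all have equal $\lo$-sets, splitting by $\lo$ during the powerset construction, which is still deterministic on labels. Then $\iota$ is chosen accordingly, and the claim follows by the induction above.
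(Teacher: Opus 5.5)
Your reduction to the powerset DFA $B$ is essentially fine, modulo using the run projection from the proof of \cref{lem:nomSameAlpha} rather than mere equivariance. The inclusion $\lo(w)\subseteq\lo_B(q_0)$ is also fine. The gap is the reverse inclusion, and it is not only the macrostate issue you isolate. The clauses of \cref{def:dnfa} bound a successor's $\lo$-set only from above: $\lo(q)\supseteq\lo(q')\cup\{a\}$ for $q\trans{a}q'$, and $\lo(q)\supseteq\lo(q')\setminus\{a\}$ for $q\trans{\lb a}q'$. Moreover, \cref{con:disc} only decides whether the \emph{label's} name $a$ stays in $\lo$. A name $b\neq a$ in $\lo(q)$ may simply be missing from $\lo(q')$, so nothing forces $b$ to be consumed by a $b$- or $b\rb$-step. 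This already happens for singleton macrostates.

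Your repair also fails. Members of a reachable macrostate need not share their $\lo$-set: two $\lb b$-successors of $i$ may differ in whether they later use a global name $c\in\supp(i)$, so $\lo$ is not a function of the word $u$ read so far. Your fallback does not help either, since splitting a $\gamma$-successor set according to $\lo$ yields several $\gamma$-successors, which destroys determinism.

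In fact no argument can close this gap, because the statement fails as written. Take the NDA with the following data:
\begin{itemize}
\item states $i,f$ with empty support, and $q(c)$, $q'(c)$ with support $\{c\}$;
\item final state $f$;
\item transitions $i\trans{\lb c}q(c)$, $q(c)\trans{c}f$, $q(c)\trans{\lb d}q'(d)$ and $q'(d)\trans{d\rb}f$ for all $c,d\in\names$.
\end{itemize}
The degree is $1$, so let $S=\{a,b\}$ with $a\neq b$. In $A_S$ we get $\lo(q(b))=\{b\}$, $\lo(q'(a))=\{a\}$ and $\lo(f)=\emptyset$. Hence \cref{con:disc} keeps $i\trans{\lb b}q(b)$, $q(b)\trans{\lb a}q'(a)$ and $q'(a)\trans{a\rb}f$, turns $q(b)\trans{b}f$ into $q(b)\trans{b\rb}f$, and puts $f\in F_D$. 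The other $\lb b$-successor of $i$ in $A_S$ is the name-dropped copy of $q(b)$ with empty support, and in $A_D$ it is reached only via $\q{b}$. So $\{q(b)\}$ is a reachable macrostate. The state $\iota(\{q(b)\})$ has support $\{b\}$, yet it accepts both $b\rb$ and $\lb a\,a\rb$, whose $\lo$-sets are $\{b\}$ and $\emptyset$.

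The paper's proof has exactly the same hole. It claims that the first step removing a name from the support must be labelled by that name's deallocation, whereas here $b$ is dropped on the $\lb a$-step. The proof of \cref{lem:discSameLoWQ} makes the same mistake in its $\lb b$ case, using $\lo(q)\supseteq\lo(q')\setminus\{b\}$ in the wrong direction. So you were right that \cref{con:disc} alone does not pin down the support, but neither your argument nor the paper's supplies the missing invariant.
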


\begin{proofsketch}
  Let $w\in L_0(q)$ be accepted via a run $q \trans{*} f$.  Then the
  inclusion $\lo(w) \subseteq \supp(q)$ is immediate from the support
  lemma (\Cref{lem:supptrans}); we show $\supp(q) \subseteq \lo(w)$.
  So let $a\in\supp(q)$. Now final states in~$A_\mathsf{DDA}$ have
  empty support, so~$a$ must be removed from the support somewhere
  along the run, which by \cref{con:disc} can only happen on an
  $a\rb$-step. Hence,~$a$ occurs in~$w$, and its first occurrence
  cannot be $\lb a$ or $\q{a}$, so it is either $a$ or $a\rb$,
  i.e.~$a\in\lo(w)$.
\end{proofsketch}

\begin{proofappendix}{lem:suppDisc}
  Let $w\in L_0(q)$ be accepted via a run $q \trans{*} f$.  Then the
  inclusion $\lo(w) \subseteq \supp(q)$ is immediate from the support
  lemma (\Cref{lem:supptrans}); we show $\supp(q) \subseteq \lo(w)$.

  We recall that in the nominalization step of the construction of
  $A_\mathsf{DDA}$, states are constructed from a state~$q$ in the the
  determinized disciplined $S$-restriction constructed in the previous
  step in such a way that the support receives size $|\lo(q)|$. Since
  \Cref{con:disc} imposes $\lo(f')=\emptyset$ on final states~$f'$ in
  the disciplined $S$-restriction, final states in $A_\mathsf{DDA}$
  thus have empty support\lsnote{@Simon: Ich hoffe, ich gebe das hier
    korrekt wieder. Da passt aber was nicht -- in der
    Powerset-Konstruktion werden ja durchaus finale Zustände mit
    nicht-finalen zusammengeworfen, und da das Wort ja immer noch
    weitergehen könnte, lässt sich das nicht so ohne weiteres
    umgehen. Ich habe das Gefühl, man muss da durchaus noch
    aufwändiger vorgehen: Man zitiert ein vorher hoffentlich schon mal
    bewiesenes Lemma, demzufolge ein zu~$w$ $\alpha$-äquivalentes Wort
    vom unterliegenden D-NFA akzeptiert wird, und sagt dann, dass
    akzeptierende Runs im Powerset-Automaten immer einen
    akzeptierenden Run im ursprünglichen Automaten induzieren
    (deswegen ist die Powerset-Konstruktion korrekt). Mit dem Run
    argumentiert man dann weiter. Alternativ zeigt man, dass
    Nominalisierung und Powerset-Konstruktion kommutieren. Ah, siehe
    Mattermost: Für erreichbare Makrozustände tritt das Problem nicht
    auf}; in particular, $\supp(f)=\emptyset$.

  
  Now let $a\in\supp(q)$.  Since $a\notin\supp(f)$, there is some
  transition in the given run $q \trans{*} f$ where $a$ is removed
  from the support for the first time (it may be reintroduced and
  removed again any number of times later in the run). By
  \Cref{con:disc} (disciplined $S$-restriction) and \cref{con:nom}
  (nominalization)\lsnote{@Simon: Hier wurde ursprünglich das Support
    Lemma zitiert. Aber das liefert ja keine obere Abschätzung auf die
    in einer Transition entfernten Namen, was schon deswegen ja auch
    nicht geht, weil es für beliebige NDA gilt}, the only possible
  label of such a transition is~$a\rb$, and all previous transitions
  in the run whose label mentions~$a$ must be $a$-transitions
  (since~$\lb a$ and~$\q{a}$ both require that~$a$ is not in the
  support of the prestate of the transition, and~$a\rb$ would have
  removed~$a$ from the support). Thus, $a\in\lo(w)$, as required.
\end{proofappendix}

\begin{lemma}\label{lem:ADiscSameAlpha}
  Let $A_D$ be the disciplined $S$-restriction of an NDA, and let $A_\mathsf{DDA}$ be the
  corresponding determinization (i.e.~the nominalization of the powerset construction of
  $A_D$). 
  \begin{enumerate}
  
    
  \item \label{lem:ADiscSameAlpha:1}The literal language of $A_\mathsf{DDA}$ is closed under $\alpha$-equivalence.
  \item \label{lem:ADiscSameAlpha:2}The alphatic languages of $A_D$ and $A_\mathsf{DDA}$ coincide.
  \item \label{lem:ADiscSameAlpha:3}The literal
  language of $A_\mathsf{DDA}$ is the closure of that of $A_D$ under
  $\alpha$-equivalence:
  $L_0(A_\mathsf{DDA}) = \clal(L_0(A_D))$.
  \end{enumerate}
  
%
\end{lemma}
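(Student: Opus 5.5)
We prove the three items in order, deriving (3) from (1) and (2).

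For item (1) we use \cref{lem:alphaClosedSupp}. Its hypothesis is that every state~$q$ of $A_\mathsf{DDA}$ satisfies $\supp(q)=\lo(w)$ for every $w\in L_0(q)$. This is exactly \cref{lem:suppDisc}. Since $A_\mathsf{DDA}$ is an NDA by \cref{lem:closingDDA}, \cref{lem:alphaClosedSupp} applies directly, so $L_0(A_\mathsf{DDA})$ is closed under $\alpha$-equivalence.

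For item (2) we chain three steps. Let $P$ be the D-DFA obtained from $A_D$ by the powerset construction, restricted to its reachable part. By \cref{lem:pow}, $P$ is again a D-NFA, and by the standard correctness of the powerset construction $L(P)=L(A_D)$, so $\la(P)=\la(A_D)$. Next, $A_\mathsf{DDA}$ is the nominalization of $P$, and \cref{lem:nomSameAlpha} gives $\la(A_\mathsf{DDA})=\la(P)$. Combining these yields $\la(A_D)=\la(A_\mathsf{DDA})$. The only point to check is that \cref{lem:nomSameAlpha} is stated for arbitrary D-NFAs, so restricting to reachable macrostates does not matter; both lemmas apply as stated.

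For item (3) we show both inclusions. For $\supseteq$: take $w\aeq w_0$ with $w_0\in L_0(A_D)$. By (2), some $w_1\aeq w_0$ lies in $L_0(A_\mathsf{DDA})$. Hence $w\aeq w_1$, and by (1) $w\in L_0(A_\mathsf{DDA})$. For $\subseteq$: take $w\in L_0(A_\mathsf{DDA})$. By (2), $[w]_\alpha=[w_0]_\alpha$ for some $w_0\in L_0(A_D)$, so $w\in\clal(L_0(A_D))$. Here we use that $\aeq$ is an equivalence relation on $\RNS(\ah)$. Transitivity is built into \cref{def:Alpha}. Symmetry follows by induction on the rules, since each rule is symmetric in shape; for rule~4 we also use \cref{fact:supp-lo}, which gives $\lo(w)=\lo(w')$. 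Membership in $\RNS(\ah)$ holds by \cref{prop:ndaRns}.

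The main obstacle is item (1). Everything there rests on \cref{lem:suppDisc}, and that lemma in turn needs two facts.

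First, final states of $A_\mathsf{DDA}$ must have empty support. This requires that a reachable final macrostate of $P$ has $\lo_P=\emptyset$. By \cref{lem:pow}, $\lo_P(S)$ is the union of $\lo_{A_D}(q)$ over $q\in S$, and $S$ may contain non-final states with nonempty $\lo_{A_D}$. So I would argue via \cref{lem:cnDnfaCoincInv}: if some word leaving $S$ has a left-open name~$a$, then $a$ cannot be closed in every continuation, and disciplinedness of the runs forbids such a situation in a reachable macrostate that is accepting. If this argument fails, the fallback is to change the construction slightly so that final states are nominalized with support $\lo$ restricted to the final component.

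Second, $\lb a$ and $\q{a}$ transitions must require $a\notin\supp$ of their source state. For $\lb a$ this is the side condition $a\notin\lo_{A_S}(q)$ in \cref{con:disc}. For $\q{a}$ it follows from \cref{lem:supptrans}.
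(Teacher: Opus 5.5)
Your decomposition is the paper's own. Item (1) comes from \cref{lem:alphaClosedSupp} plus \cref{lem:suppDisc}. Item (2) goes via the powerset construction and \cref{lem:nomSameAlpha}; that is the paper's sketch, while its appendix runs a direct induction instead. Item (3) follows from (1) and (2), and your derivations of (2) and of (3) from (1) and (2) are fine.

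The genuine gap is exactly the step you flag, and your proposed way around it fails. \cref{con:disc} only consults the state-level sets $\lo_{A_S}(q)$, which record names used by \emph{some} continuation from $q$. So $A_D$ can accept non-disciplined words, and reachable accepting macrostates can have non-empty $\lo$. Concretely, take the NDA with states $i$, $p(y)$, $q(x,y)$ ($x\neq y$), $r(y)$ and a final state $f$. Its transitions are $i\trans{\lb y}p(y)$, $p(y)\trans{\lb x}q(x,y)$, $q(x,y)\trans{x\rb}f$, $q(x,y)\trans{x\rb}r(y)$ and $r(y)\trans{y\rb}f$. Write $\bot$ for dropped names in $A_\bot$, and let $a\neq b$ lie in $S$. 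Then $\lo_{A_S}(p(b))=\{b\}$ (witness $\lb a\,a\rb\,b\rb$), $\lo_{A_S}(q(a,b))=\{a,b\}$, and $\lo_{A_S}(p(\bot))\subseteq\supp(p(\bot))=\emptyset$. So \cref{con:disc} keeps $i\trans{\lb b}p(b)\trans{\lb a}q(a,b)\trans{a\rb}f$ and also $q(a,b)\trans{a\rb}r(b)\trans{b\rb}f$. Hence $\lb b\lb a\,a\rb\in L_0(A_D)$, although its disciplined form is $\q{b}\lb a\,a\rb$. The accepting macrostate $\{f,r(b),r(\bot)\}$ reached on this word has $b$ in its $\lo$, because of the $b\rb$-transition out of $r(b)$.

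In fact this example refutes (1), and hence (3), for the construction as written. So no argument routed through \cref{lem:suppDisc} can work. The paper's proof of that lemma makes the same unjustified leap, from $\lo_{A_S}(f)=\emptyset$ for $f\in F_D$ to empty support of all final states of $A_\mathsf{DDA}$.

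\begin{itemize}
\item $A_\mathsf{DDA}$ accepts $\lb b\lb a\,a\rb$: lift the run along $\iota$.
\item $\lb a\lb a\,a\rb$ is $\alpha$-equivalent to it, by \cref{alphaRuleAdmissible}, since $\lo(\lb a\,a\rb)=\emptyset$.
\item The transition $i\trans{\lb c}p(\bot)$ becomes a $\q{c}$-transition in $A_D$. So the $\lb c$-successor of $\{i\}$ in the powerset DFA is $\{p(c)\}$, with $\lo=\{c\}$. Every $\lb a$-successor of $\iota(\{i\})$ is therefore a state $(\{p(c)\},(a))$ with support $\{a\}$.
\item $\{p(c)\}$ has no $\lb c$-transition, since $c\in\lo_{A_S}(p(c))$. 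Its $\lb z$-successors ($z\in S\setminus\{c\}$) are the macrostates $\{q(z,c),q(z,\bot)\}$, whose $\lo$ contains $c$ via $z\rb\,c\rb$. Renaming their binder to $c$ is therefore blocked in the nominalization.
\item By equivariance, $(\{p(c)\},(a))$ has no $\lb a$-transition, so $A_\mathsf{DDA}$ rejects $\lb a\lb a\,a\rb$.
\end{itemize}

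Your fallback of shrinking the support of final macrostates is also unavailable. The macrostate $\{f,r(b),r(\bot)\}$ has a $b\rb$-transition, so by \cref{lem:supptrans} its nominalized state must contain $b$. What is missing is a disciplining step that also forbids silently forgetting names other than the one on the label. For example, one could keep only transitions that change the support exactly as their label prescribes, with final states of empty support. Only then does every state satisfy the hypothesis of \cref{lem:alphaClosedSupp}.
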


\begin{proofsketch}
  \ref{lem:ADiscSameAlpha:1}: Immediate from
  \Cref{lem:alphaClosedSupp,lem:suppDisc}.
  \ref{lem:ADiscSameAlpha:2}: By \cref{lem:nomSameAlpha}.
  \ref{lem:ADiscSameAlpha:3}: Immediate from \ref{lem:ADiscSameAlpha:1} and \ref{lem:ADiscSameAlpha:2}.
\end{proofsketch}

\begin{proofappendix}{lem:ADiscSameAlpha}
  \sloppy
  \ref{lem:ADiscSameAlpha:1}: Immediate from
  \Cref{lem:alphaClosedSupp,lem:suppDisc}.

  \ref{lem:ADiscSameAlpha:2}: \lsnote{@Simon: Warum müssen wir hier
    noch etwas beweisen? Die Powerset-Konstruktion erhält sogar die
    Literalsprache, und die Nominalisierung erhält die alphatische
    Sprache. Was fehlt?}Let $A_D = (Q_D,\Delta_D,i,F_D)$ the
  disciplined $S$-restriction of a given NDA, and let
  $A_\mathsf{DDA} =
  (Q_\mathsf{DDA},\Delta_\mathsf{DDA},i,F_\mathsf{DDA})$ be its
  determinization. We prove
  $L_\alpha(A_\mathsf{DDA}) = L_\alpha(A_D)$.

  Since $A_D$ is a subautomaton of $A_\mathsf{DDA}$, we have
  $L_\alpha(A_\mathsf{DDA}) \supseteq L_\alpha(A_D)$ immediately.

  For the reverse inclusion, we show
  $L_\alpha(A_\mathsf{DDA}) \subseteq L_\alpha(A_D)$, and proceed by induction on the length
  of an accepting run in $A_\mathsf{DDA}$, and reinforce the inductive claim: for all
  $q \in Q_\mathsf{DDA}$ and all permutations $\pi$ such that $p := \pi \cdot q \in Q_D$, we
  have $\pi \cdot L_\alpha(q) = L_\alpha(p)$.

  For the base case, if $q \in F_\mathsf{DDA}$ and $p=\pi \cdot q \in Q_D$, then $q \in \cleq(F_D)$ implies
  $p \in F_D$, thus $\epsilon \in L_0(q)$ iff $\epsilon \in L_0(p)$.

  For the inductive step, let $w = \gamma v$ be a word accepted by
  $q \in Q_\mathsf{DDA}$ with a run of length $n$, and let $p=\pi \cdot q \in Q_D$
  be some state in $A_D$ for some permutation $\pi$.
  We show that there is some $w' \aeq \pi \cdot w$ accepted by $p$.
  We have
  $q \trans{\gamma} r \in \Delta_\mathsf{DDA}$ with
  $v$ accepted by $r$, and proceed with a case distinction on $\gamma$, of which most are straightforward:
  \begin{itemize}
    \item $\gamma = a$ or $\gamma = \lb a$: We have that $p \trans{\delta} r' \in \Delta_D$ with $\delta = \pi \cdot \gamma$
     and $r' = \pi \cdot r$. By the induction hypothesis, $r'$ accepts some $v' \aeq \pi \cdot v$,
     hence $p$ accepts $\delta v' \aeq (\pi \cdot \gamma) (\pi \cdot v) = \pi \cdot (\gamma v) = \pi \cdot w$
     by rule 2 of \cref{def:Alpha}.
    \item $\gamma = a \rb$: This case is similar to the first; we additionally verify that $\pi(a) \notin \lo(v')$.
    Since we have $a \notin \lo(v)$ by the definition of $\alpha$-equivalence,
    $\pi(a) \notin \lo(\pi \cdot v)$,
    and thus also $\pi(a) \notin \lo(v')$.
    \item $\gamma = \q{a}$: We have that $p \trans{\delta} r' \in \Delta_\mathsf{DDA}$ with $\delta = \q{\pi(a)}$ and $r' = \pi \cdot r$.
    Although we have $p,r' \in Q_D$, we have $a \notin \supp(p) \cup \supp(r')$ for $\q{a}$-transitions, so choose $b \in S$ such that $b \notin \supp(p) \cup \supp(r')$.
    This $b$ exists by choice of $S$ with $|S| = \dg(A) +1$ and we then have that $(ab)\cdot (p \trans{\delta} r') = p \trans{\q{b}} r' \in \Delta_D$. 
    By the induction hypothesis, $r'$ accepts some $v' \aeq \pi \cdot v$, hence $p$ accepts
    $\q{b}v' \aeq \q{\pi(a)}v' \aeq \pi \cdot (\q{a}v) = \pi \cdot w$
    by the rule for local freshness and transitivity of $\alpha$-equivalence.
  \end{itemize}
  Instantiating the claim with $q=i$ and $\pi=\id$ yields $L_\alpha(A_\mathsf{DDA})=L_\alpha(A_D)$.
  \ref{lem:ADiscSameAlpha:3}: This is immediate from \ref{lem:ADiscSameAlpha:1} and \ref{lem:ADiscSameAlpha:2}.
\end{proofappendix}

\takeout{
The nominalization of a disciplined $S$-restriction closes its literal language under
$\alpha$-equivalence.
\begin{lemma}\label{lem:closingDisc}
  Let $A_D$ be the disciplined $S$-restriction of an NDA, and let
  $A_\mathsf{DDA}$ be the corresponding determinization. The literal
  language of $A_\mathsf{DDA}$ is the closure of that of $A_D$ under
  $\alpha$-equivalence:
  $L_0(A_\mathsf{DDA}) = \clal(L_0(A_D))$.
\end{lemma}
\noindent
Indeed, this follows immediately from \cref{lem:ADiscSameAlpha}.\lsnote{@Simon: Add to \cref{lem:ADiscSameAlpha}}
}


\begin{lemma}\label{lem:l-intersection}
  Let $L$ and $L'$ be literal languages, and let $S \subseteq \names$
  be a set of names such that $\clal(L'\cap \shs) = L'$ and
  $\clal(L \cap \shs) = L$. If
  \[
    L'\cap \shs = \set{\disc(w)\mid w \in L \cap \shs}, 
  \]
  then $\db[L] = \db[L']$.\lsnote{Maybe this can be worded referring
    to alphatic rather than literal languages?}
\end{lemma}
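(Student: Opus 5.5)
We show the two inclusions $\db[L]\subseteq\db[L']$ and $\db[L']\subseteq\db[L]$ separately. In both directions we use the two closure hypotheses to move from an arbitrary word to an $\alpha$-equivalent representative over $\hat S$. On such representatives the relation between $L$ and $L'$ is given by $\disc$, and $\disc$ does not change the debracketing.

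We first record two auxiliary facts.

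(a) For every word $w$ we have $\db(\disc(w))=\db(w)$. This is immediate from the recursive definition of $\disc$, which only changes binders and markers and never the underlying name.

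(b) The map $\db$ is not invariant under $\alpha$-equivalence, so we cannot simply replace a word by its $\alpha$-renaming. Instead we pass through the local freshness semantics and use \cref{lem:discSameData}, i.e.\ $\D(\set{[w]_\alpha})=\D(\set{[\disc(w)]_\alpha})$. Hence we do not try to show that the literal images $\db[L]$ and $\db[L']$ agree word by word. We compare the unions $\bigcup_{w}\D(\set{[w]_\alpha})$ instead.

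Since $L=\clal(L\cap\shs)$, the language $L$ is closed under $\alpha$-equivalence. Therefore $\db[L]=\set{\db(u)\mid [u]_\alpha\in\la}$, where $\la$ is the set of classes of words in $L$, and this set equals $\D(\set{[w]_\alpha\mid w\in L\cap\shs})$. The same holds for $L'$.

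For $\db[L]\subseteq\db[L']$, take $x=\db(u)$ with $u\in L$. By closure there is $w\in L\cap\shs$ with $u\aeq w$, so $x\in\D(\set{[w]_\alpha})$. By \cref{lem:discSameData}, $x\in\D(\set{[\disc(w)]_\alpha})$. By hypothesis $\disc(w)\in L'\cap\shs$. Since $L'$ is $\alpha$-closed, every representative of $[\disc(w)]_\alpha$ lies in $L'$, and hence $x\in\db[L']$.

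The converse inclusion is symmetric. Given $x=\db(u')$ with $u'\in L'$, choose $v\in L'\cap\shs$ with $u'\aeq v$, so $x\in\D(\set{[v]_\alpha})$. By hypothesis $v=\disc(w)$ for some $w\in L\cap\shs$. Applying \cref{lem:discSameData} in the reverse direction gives $x\in\D(\set{[w]_\alpha})\subseteq\db[L]$, where the last step again uses that $L$ is $\alpha$-closed.

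The main obstacle is the rewriting $\db[L]=\D(\la)$ for an $\alpha$-closed $L$. This step needs that $\D$ collects exactly the debracketings of all right-non-shadowing representatives, together with the fact that the words in question are right non-shadowing so that $\disc$ and \cref{lem:discSameData} apply. We first check that $L$ and $L'$ lie in $\RNS(\ah)$, which follows from $\alpha$-equivalence being defined only on $\RNS(\ah)$. The only other point to verify is that $\disc(w)$ stays within $\RNS(\ah)$, so that $[\disc(w)]_\alpha$ is defined. This follows from $\lo(\disc(w))=\lo(w)$ (\cref{lem:disc}) together with a short inspection showing that $\disc$ turns $a$ into $a\rb$ and $\lb a$ into $\q a$ only when $a\notin\lo$ of the remaining suffix, so no right shadowing is introduced.
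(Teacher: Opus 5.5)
Your proposal is correct and is the paper's proof. The paper writes it as the single chain $\db[L'] = \bigcup_{w'\in L'\cap\shs}\D(\{[w']_\alpha\}) = \bigcup_{w\in L\cap\shs}\D(\{[\disc(w)]_\alpha\}) = \bigcup_{w\in L\cap\shs}\D(\{[w]_\alpha\}) = \db[L]$, using the two closure hypotheses, the assumed description of $L'\cap\shs$, and \cref{lem:discSameData}; you unfold this into two inclusions. Your final check that $\disc(w)\in\RNS(\ah)$ is harmless but unnecessary, since the hypothesis already puts $\disc(w)$ in $L'=\clal(L'\cap\shs)\subseteq\RNS(\ah)$.
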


\begin{proofappendix}{lem:l-intersection}
  We have
  \begin{align*}
    \db[L'] & =\textstyle \bigcup_{w'\in L'\cap\shs}\D(\{[w']_\alpha\}) && \by{hypothesis}\\
            & =\textstyle \bigcup_{w\in L\cap\shs}\D(\{[\disc(w)]_\alpha\}) && \by{hypothesis} \\
            & =\textstyle \bigcup_{w\in L\cap\shs}\D(\{[w]_\alpha\}) && \by{\cref{lem:discSameData}}\\
    & = \db[L] && \qedhere
  \end{align*}
  
\end{proofappendix}

\begin{lemma}\label{lem:discSameLoWQ}
  Let $A = (Q, \Delta, i, F)$ be a
  disciplined $S$-restriction.
  Then for each $q \in Q$ and $w \in L(q)$ we have $\lo(q) = \lo(w)$.
\end{lemma}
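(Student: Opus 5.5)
We proceed by induction on the length of $w$, or equivalently on the length of an accepting run $q=q_0\trans{\gamma_1}q_1\cdots\trans{\gamma_n}q_n\in F_D$ in $A_D$. We prove the two inclusions separately, reading $\lo(q)$ as $\lo_{A_S}(q)$, the set in terms of which \cref{con:disc} is phrased.

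The inclusion $\lo(w)\subseteq\lo(q)$ needs no new argument. By \cref{lem:cnDnfaCoincInv} (or already \cref{lem:cnDnfaCoinc}), every word with a run from $q$ has its left-open names contained in $\lo_{A_D}(q)$. The proof of \cref{lem:discSResDnfa} shows $\lo_{A_D}(q)\subseteq\lo_{A_S}(q)$, which gives the inclusion.

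For $\lo(q)\subseteq\lo(w)$ we do induction on the run. In the base case $q\in F_D$, so $\lo_{A_S}(q)=\emptyset=\lo(\epsilon)$ by the definition of $F_D$. For the step, write $w=\gamma v$ with $q\trans{\gamma}q'$ in $\Delta_D$ and $v\in L(q')$; the induction hypothesis gives $\lo(q')=\lo(v)$. We then check the four transition shapes of \cref{con:disc} against the clauses of \cref{tab:namedefs}:
\begin{itemize}
\item For $\gamma=\lb a$: \cref{con:disc} gives $a\notin\lo(q)$, and the table gives $\lo(w)=\lo(v)\setminus\{a\}$.
\item For $\gamma=a$: \cref{con:disc} gives $a\in\lo(q')$, and $\lo(w)=\lo(v)\cup\{a\}$.
\item For $\gamma=a\rb$: here $\lo(w)=\lo(v)\cup\{a\}$.
\item For $\gamma=\q a$: \cref{con:disc} gives $a\notin\lo(q)$, and $\lo(w)=\lo(v)$.
\end{itemize}
In each case it therefore suffices to show that a name $b\in\lo(q)$ with $b\neq a$ lies in $\lo(q')$; the name $a$ itself is handled by the side conditions just listed.

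This is the step I expect to be the main obstacle. $\lo_{A_S}(q)$ is the union of $\lo$ over all runs leaving $q$ (\cref{lem:cnDnfaCoincInv}), so a priori a name $b$ could be "forgotten" along a transition not mentioning $b$. To exclude this, I would first use that $A_S$ is an $S$-restriction of a name-dropping modification, so the proof of \cref{lem:sResDnfa} gives $\lo_{A_S}(q)\subseteq\supp(q)$. I would then strengthen the inductive invariant to $\lo_{A_S}(p)=\supp(p)$ for the states $p$ visited by accepting runs of $A_D$.

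The intended justification runs as follows. Final states of $A_D$ have $\lo_{A_S}=\emptyset$, hence can be taken with empty support. By \cref{lem:supptrans}, the only transitions that may remove a support name $b$ are those labelled $b\rb$ or $\q b$. The latter is excluded for $b\in\supp(q)$ by \cref{lem:supptrans}\ref{it:unknown}; the same lemma forbids $\lb b$ on a state with $b$ in its support. Moreover, \cref{con:disc} converts every "silent" drop of $b$ on a free $b$-step into an explicit $b\rb$-step.

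Tracing $b$ along the run, I would first try exactly this argument, mirroring the proof of \cref{lem:suppDisc}. Its conclusion is that the first occurrence of $b$ in $w$ is $b$ or $b\rb$, i.e.\ $b\in\lo(w)$. If the support-tracking invariant cannot be maintained for arbitrary states of $A_S$, the fallback is to restrict the claim to states of $A_D$ reachable and co-reachable via the construction in \cref{lem:discRestrictionSameData}, where $\supp(s_\bot)=\lo(u)$ holds by construction.
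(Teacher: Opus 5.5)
Your argument for the inclusion $\lo(w)\subseteq\lo(q)$ and your base case are fine. You have also located the crux correctly: everything hinges on showing that a name $b\in\lo(q)$ with $b\neq a$ lies in $\lo(q')$.

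The argument you propose for this step does not work.
\begin{itemize}
\item \cref{lem:supptrans} gives only \emph{upper} bounds on the support of the target. So a transition with any label may silently drop arbitrary other names.
\item Part~(2) of that lemma does not forbid $q\trans{\lb b}q'$ when $b\in\supp(q)$. Only part~(3), for $\q{b}$, has such a clause.
\item Far from excluding silent drops, the name-dropping modification adds, for every transition, all variants whose target is a restriction of the original target.
\item \cref{con:disc} keeps every deallocating transition of $A_S$ unconditionally. It relabels a free or allocating step only according to the status of that step's own name.
\end{itemize}
So the invariant $\lo_{A_S}(p)=\supp(p)$ along accepting runs is not available. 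Final states of $A_D$ need not even have empty support, since $F_D$ only requires $\lo_{A_S}(f)=\emptyset$.

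In fact the step cannot be closed, because the lemma is false as stated. Consider the following NDA.
\begin{itemize}
\item States: $s(x,y)$ for $x\neq y$, $t(y)$, and a single final state $f$ with empty support.
\item Transitions: $s(x,y)\trans{x\rb}f$, $s(x,y)\trans{x\rb}t(y)$ and $t(y)\trans{y\rb}f$. Name erasure holds because $x\neq y$, and the other NDA conditions are clear.
\item Initial state $s(a,b)$, and $S=\{a,b,c\}$.
\end{itemize}
These transitions persist in $A_\bot$ and $A_S$, and, being deallocating, in $A_D$. Since $f$ has no outgoing transitions, $\lo_{A_S}(f)=\emptyset$ and so $f\in F_D$.

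Hence $s(a,b)$ accepts both $a\rb$ and $a\rb b\rb$ in $A_D$. Their left-open names are $\{a\}$ and $\{a,b\}$ respectively. No single set $\lo(s(a,b))$ can equal both, whether read as $\lo_{A_S}$ or $\lo_{A_D}$. This also defeats your fallback: $\supp(s(a,b))=\lo(a\rb b\rb)$, so $s(a,b)$ is exactly the kind of state produced in \cref{lem:discRestrictionSameData}, yet it also accepts $a\rb$.

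The paper's own proof has precisely the gap you anticipated. In each case it infers $a\in\lo(q')$ from $a\in\lo(q)$, $a\neq b$, and the defining inclusion $\lo(q)\supseteq\lo(q')\cup\{b\}$ (resp.\ $\supseteq\lo(q')\setminus\{b\}$, $\supseteq\lo(q')$). That inclusion points the wrong way for this inference.

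Only your first inclusion survives. Anything stronger needs a changed statement, for instance a restriction on which names may be dropped, rather than a better proof.
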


\begin{proofappendix}{lem:discSameLoWQ}
  Let $A = (Q, \Delta, i, F)$ be a
  disciplined $S$-restriction.
  We show that for each $q \in Q$ and $w \in L(q)$ we have $\lo(q) = \lo(w)$.
  To this end, we show that we have
  \[\lo(q)\supseteq\bigcup_{w \in L(q)} \lo(w)\]
  and
  \[\lo(q) \subseteq \bigcap_{w \in L(q)} \lo(w).\]

  The inclusion $\lo(q) \supseteq \bigcup_{w \in L(q)} \lo(w)$ holds by \cref{lem:cnDnfaCoinc}.
  We show $\lo(q) \subseteq \bigcap_{w \in L(q)} \lo(w)$.
  Let $a \in \lo(q)$ and let $w \in L(q)$; we show $a \in \lo(w)$.
  We proceed by induction on the length of an accepting run for $w$ from $q$.
  If $q \in F$, then $\lo(q) = \emptyset$ by the definition of final states in \cref{con:disc}, so there is nothing to show.
  Otherwise, let $q \trans{\gamma} q' \trans{*} f$ be an accepting run for $w = \gamma w'$ with $f \in F$.
  By the induction hypothesis, $a \in \lo(q')$ implies $a \in \lo(w')$.
  We distinguish cases on $\gamma$:
  \begin{itemize}
  \item $\gamma = b$: By \cref{con:disc}, we have $b \in \lo(q')$.
    By \cref{def:dnfa}, we have $\lo(q) \supseteq \lo(q') \cup \{b\}$.
    If $a = b$, then $a \in \lo(bw') = \lo(w') \cup \{b\}$ by \cref{tab:namedefs}.
    If $a \neq b$, then $a \in \lo(q')$, hence $a \in \lo(w')$ by the induction hypothesis, and thus $a \in \lo(bw') = \lo(w') \cup \{b\}$.
  \item $\gamma = \lb b$: By \cref{con:disc}, we have $b \notin \lo(q)$ and $b \in \lo(q')$.
    By \cref{def:dnfa}, we have $\lo(q) \supseteq \lo(q') \setminus \{b\}$.
    Since $a \in \lo(q)$ and $b \notin \lo(q)$, we have $a \neq b$ and $a \in \lo(q')$.
    By the induction hypothesis, $a \in \lo(w')$, and thus $a \in \lo(\lb b\, w') = \lo(w') \setminus \{b\}$ by \cref{tab:namedefs}.
  \item $\gamma = b\rb$: By \cref{con:disc}, either $q \trans{b\rb} q' \in \Delta$ or $q \trans{b} q' \in \Delta$ with $b \notin \lo(q')$.
    In both cases, \cref{def:dnfa} yields $\lo(q) \supseteq \lo(q') \cup \{b\}$.
    If $a = b$, then $a \in \lo(b\rb w') = \lo(w') \cup \{b\}$ by \cref{tab:namedefs}.
    If $a \neq b$, then $a \in \lo(q')$, hence $a \in \lo(w')$ by the induction hypothesis, and thus $a \in \lo(b\rb w') = \lo(w') \cup \{b\}$.
  \item $\gamma = \q{b}$: By \cref{con:disc}, either $q \trans{\q{b}} q' \in \Delta$ or $q \trans{\lb b} q' \in \Delta$ with $b \notin \lo(q)$ and $b \notin \lo(q')$.
    In both cases, \cref{def:dnfa} yields $\lo(q) \supseteq \lo(q')$.
    Hence $a \in \lo(q')$, so $a \in \lo(w')$ by the induction hypothesis.
    By \cref{tab:namedefs}, we have $\lo(\q{b}w') = \lo(w') \setminus \{b\}$.
    Since $b \notin \lo(q')$ and $a \in \lo(q')$, we have $a \neq b$, and thus $a \in \lo(\q{b}w')$.\qedhere
  \end{itemize}
\end{proofappendix}

\begin{lemma}\label{lem:l0disciplineLiteral}
  Let $A_S$ be the $S$-restriction and $A_D$ the disciplined
  $S$-restriction of the name-dropping modification of an NDA~$A$ as
  per \cref{con:disc}. Then $L_0(A_D)=\disc[L_0(A_S)]$.
\end{lemma}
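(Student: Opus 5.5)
We prove the two inclusions separately. In both directions we work along runs, keeping track of the left-open names of the remaining suffix. The tools are the characterisation of $\lo_{A_S}$ via outgoing runs (\cref{lem:cnDnfaCoinc,lem:cnDnfaCoincInv}), the bound $\lo_{A_S}(q)\subseteq\supp(q)$ from the proof of \cref{lem:sResDnfa}, and \cref{lem:discSameLoWQ}.

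\emph{Inclusion $L_0(A_D)\subseteq\disc[L_0(A_S)]$.} Take an accepting run $i=q_0\trans{\gamma_1}q_1\cdots\trans{\gamma_n}q_n$ of $A_D$ on $w=\gamma_1\cdots\gamma_n$. Each step of $\Delta_D$ comes from a step $q_{k-1}\trans{\hat\gamma_k}q_k$ of $A_S$. Here $\hat\gamma_k=\gamma_k$, except for two cases: we take $\hat\gamma_k=a$ when $\gamma_k=a\rb$ came from a free transition, and $\hat\gamma_k=\lb a$ when $\gamma_k=\q{a}$ came from an allocating transition. Since $F_D\subseteq F$, this yields an accepting run of $A_S$ on $u=\hat\gamma_1\cdots\hat\gamma_n$. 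We then show $\disc(u)=w$ by induction from the right, writing $u_k$ for the suffix of $u$ after position $k$. The invariant is $\lo(u_k)=\lo(w_k)=\lo_{A_S}(q_k)$.
\begin{itemize}
\item The first equality holds because $\lo(\disc(v))=\lo(v)$ (\cref{lem:disc}).
\item For the second equality, $\subseteq$ holds by \cref{lem:cnDnfaCoinc}. For $\supseteq$ we reuse the argument of \cref{lem:discSameLoWQ}, but with $\lo_{A_S}$ in place of $\lo_{A_D}$; the side conditions of \cref{con:disc} are phrased in terms of $\lo_{A_S}$, and $\lo_{A_S}(f)=\emptyset$ holds for $f\in F_D$.
\end{itemize}
With the invariant in hand, the side conditions of \cref{con:disc} translate literally into the clauses defining $\disc$. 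For instance, $a\in\lo_{A_S}(q_k)=\lo(u_k)$ means that $\disc$ keeps $a$ or $\lb a$, while $a\notin\lo(u_k)$ means that $\disc$ turns $a$ into $a\rb$ and $\lb a$ into $\q{a}$.

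\emph{Inclusion $\disc[L_0(A_S)]\subseteq L_0(A_D)$.} Let $u\in L_0(A_S)$ be accepted by a run through states $q_k$. First we normalise the run: using the name-dropping states together with \cref{lem:restrAccFNDN}, we replace each intermediate state $q_k$, for $k\geq1$, by the state obtained by dropping all names outside $\lo(u_k)$. By construction of $\Delta_\bot$ all transitions survive, and the states remain in $Q_S$, so we obtain an accepting run of $A_S$ with $\supp(q_k)=\lo(u_k)$. Combining \cref{lem:cnDnfaCoincInv} with $\lo_{A_S}(q_k)\subseteq\supp(q_k)$ then gives $\lo_{A_S}(q_k)=\lo(u_k)$ along the run. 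In particular the final state has empty $\lo_{A_S}$ and therefore lies in $F_D$. A letter-by-letter comparison with the definition of $\disc$ now shows that each step of $A_S$ yields the step of $A_D$ labelled by the corresponding letter of $\disc(u)$. This gives an accepting run of $A_D$ on $\disc(u)$.

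\emph{Expected obstacle.} The main difficulty is the initial state $i$, which cannot be replaced by a name-dropped version. For the allocating clause we need $a\notin\lo_{A_S}(i)$, while only $a\notin\lo(u)$ is known. We would try to settle this by showing $\lo_{A_S}(i)\subseteq\lo(u)$ or, failing that, by exploiting that $\supp(i)\subseteq S$ with $|S|=\dg(A)+1$ to $\alpha$-rename the leading binder. The latter would, however, weaken the claim to the level of $\clal$, so the exact equality may additionally require assuming that $i$ has empty support, or has support exactly $\lo(u)$. The same point resurfaces in the invariant of the first inclusion when $k=0$.
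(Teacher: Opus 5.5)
Your first inclusion fails at the invariant, specifically at $\lo_{A_S}(q_k)\subseteq\lo(w_k)$. Re-running the argument of \cref{lem:discSameLoWQ} with $\lo_{A_S}$ does not yield it. For a step on a free letter $b$ and a name $a\neq b$, the D-NFA clause only gives $\lo_{A_S}(q)\supseteq\lo_{A_S}(q')\cup\set{b}$, which does not put $a$ into $\lo_{A_S}(q')$. By \cref{lem:cnDnfaCoincInv}, $\lo_{A_S}(q)$ collects the left-open names of \emph{all} runs leaving $q$, including other branches and non-accepting runs. So it can exceed $\lo$ of the suffix actually read at any $k$, not just at $k=0$. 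The paper's own proof rests on exactly this step (via \cref{lem:discSameLoWQ}). In fact the inclusion $L_0(A_D)\subseteq\disc[L_0(A_S)]$ is false, even when $\supp(i)=\emptyset$.

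Here is a counterexample. Take states $i,f$ with empty support ($i$ initial, $f$ final) and $q(a)$ for $a\in\names$, with transitions $i\trans{\lb a}q(a)$, $q(a)\trans{a}f$ and $q(a)\trans{\q{c}}f$ for all $a,c\in\names$. Let $S=\set{s_1,s_2}$. The step $q(s_1)\trans{s_1}f$ gives $\lo_{A_S}(q(s_1))=\set{s_1}$. Hence $A_D$ keeps both $i\trans{\lb s_1}q(s_1)$ and $q(s_1)\trans{\q{s_2}}f$, and accepts $\lb s_1\q{s_2}$. This word is not in $\disc[L_0(A_S)]$: $\disc(x)=\lb s_1\,y$ forces $x=\lb s_1\,x'$ with $s_1\in\lo(x')=\lo(\disc(x'))=\lo(y)$ by \cref{lem:disc}, whereas $\lo(\q{s_2})=\emptyset$.

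What your lifting does establish is $\disc(w)=\disc(u)$. $A_D$ turns $a$ into $a\rb$, or $\lb a$ into $\q{a}$, only when $a\notin\lo_{A_S}(q_k)\supseteq\lo(u_k)$, i.e.\ exactly where $\disc$ would do so anyway. These rebracketings also preserve $\lo$ of suffixes. Hence $\disc[L_0(A_D)]\subseteq\disc[L_0(A_S)]$.

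Your second inclusion is the paper's argument: pass to the state $(o,r_\bot)$ whose support is $\lo$ of the remaining suffix. Your sandwich $\lo(u_k)\subseteq\lo_{A_S}(q_k)\subseteq\supp(q_k)=\lo(u_k)$ justifies it more cleanly than the paper's appeal to \cref{lem:discSameLoWQ}.

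Your worry about the initial state is well founded. The paper simply asserts $\lo(i)=\emptyset$, which tacitly assumes $\supp(i)=\emptyset$, and without that assumption this inclusion fails too. Take initial state $i(a)$ with $i(a)\trans{a}f$ and $i(a)\trans{\lb c}q(c)\trans{c}f$ for all $c$, including $c=a$. Then $\lb a\,a\rb\in\disc[L_0(A_S)]$, but $a\in\lo_{A_S}(i(a))$ blocks every $\lb a$- and $\q{a}$-step out of $i(a)$ in $A_D$.

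For $\supp(i)=\emptyset$ you do obtain $\disc[L_0(A_S)]\subseteq L_0(A_D)$. Since $\disc$ is idempotent, this gives $\disc[L_0(A_D)]=\disc[L_0(A_S)]\subseteq L_0(A_D)$. Via \cref{lem:discSameData} this still equates the local freshness semantics of $A_D$ and $A_S$, but it is not the literal equality the lemma claims.
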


\begin{proofappendix}{lem:l0disciplineLiteral}
  Recall that $A_S = (Q,\Delta,i,F)$ and $A_D = (Q,\Delta_D,i,F_D)$
  share the same state set~$Q$, and that states $q = (o, r) \in Q$
  hail from a name-dropping modification as in \cref{def:ndm}.

  $L_0(A_D) \subseteq \disc[L_0(A_S)]$: We prove by induction on $w$
  that for every $q \in Q$, if $q$ accepts $w$ in $A_D$, then $q$
  accepts some $w'$ in $A_S$ such that $\disc(w') = w$.
  
  For the base case, let $q \in F_D$. By definition of $F_D$, we have $q \in F$. Thus $q$ accepts $\epsilon$ in $A_S$, and $\disc(\epsilon) = \epsilon$.

  Inductive step: Let $q$ accept $\gamma w$ in $A_D$ via
  $q \trans{\gamma} q' \in \Delta_D$ where~$q'$ accepts~$w$. By the
  induction hypothesis, $q'$ accepts some $w'$ in~$A_S$ such that
  $\disc(w') = w$. We distinguish cases on $\gamma$:
    \begin{itemize}
    \item $\gamma = a$: By \cref{con:disc}, we have $q \trans{a} q'$
      in~$A_S$ and $a \in \lo_{A_S}(q')$. Thus, $q$ accepts $aw'$ in
      $A_S$. Since moreover $\lo(q') = \lo(w) = \lo(w')$ by
      \cref{lem:discSameLoWQ,lem:disc}, we have $a \in \lo(w')$, hence
      $\disc(aw') = a\disc(w') = aw$.
    \item $\gamma = a\rb$: By \cref{con:disc}, we have either $q \trans{a\rb} q' \in \Delta$, or
      $q \trans{a} q' \in \Delta$ with $a \notin \lo(q')$. In the former case,
      $a \notin \lo(w')$ since $w$ is right-non-shadowing, so $\disc(a\rb w') = a\rb\disc(w') = a\rb w$.
      In the latter case, $a \notin \lo(w')$ since $\lo(q') = \lo(w')$, hence
      $\disc(aw') = a\rb\disc(w') = a\rb w$. In both cases, $q$ accepts a word $v$ in $A_S$
      with $\disc(v) = a\rb w$.
    \item $\gamma = \lb a$: By \cref{con:disc}, we have $q \trans{\lb a} q' \in \Delta$ with
      $a \notin \lo(q)$ and $a \in \lo(q')$. Since $a \in \lo(q') = \lo(w')$, we have
      $\disc(\lb a\, w') = \lb a\disc(w') = \lb a\, w$. Thus $q$ accepts $\lb a\, w'$ in $A_S$.
    \item $\gamma = \q{a}$: By \cref{con:disc}, either $q \trans{\q{a}} q' \in \Delta$ or
      $q \trans{\lb a} q' \in \Delta$ with $a \notin \lo(q)$ and $a \notin \lo(q')$.
      In the first case, $\disc(\q{a}w') = \q{a}\disc(w') = \q{a}w$.
      In the second case, $a \notin \lo(w')$ since $\lo(q') = \lo(w')$, hence
      $\disc(\lb a\, w') = \q{a}\disc(w') = \q{a}w$. In both cases, $q$ accepts a word $v$
      in $A_S$ with $\disc(v) = \q{a}w$.\qedhere
    \end{itemize}

    $\disc[L_0(A_S)] \subseteq L_0(A_D)$: We prove by induction on
    words that for every $(o,r) \in Q$, if $(o,r)$ accepts $w$ in
    $A_S$, then the (unique) $(o,r_\bot)\in Q$ such that
    $\supp(r_\bot) = \lo(w)$ holds and $r$ extends $r_\bot$ accepts
    $\disc(w)$ in $A_D$.
  
    Base case: Let $(o,r) \in F$ accept $\epsilon$ in $A_S$. Since
    $\lo(\epsilon) = \emptyset$, we have~$r_\bot$ being undefined
    everywhere, so $\supp(r_\bot) = \emptyset$.  Since
    $\lo(o, r_\bot) = \emptyset$ by \Cref{lem:cnDnfaCoincInv}, we
    have $(o,r_\bot) \in F_D$. Thus $(o,r_\bot)$ accepts
    $\disc(\epsilon) = \epsilon$ in $A_D$.

  Inductive step: Let $(o,r)$ accept $\gamma w$ in $A_S$ via $(o,r) \trans{\gamma} (p,s) \in \Delta$
  with $(p,s)$ accepting $w$. By the induction hypothesis, $(p,s_\bot)$ accepts $\disc(w)$ in $A_D$. 
  We proceed via case distinction on $\gamma$ and showcase $\gamma = a$; all other cases are entirely analogous. By the definition of the name-dropping modification, there is
  $(o,r_\bot) \trans{a} (p,s_\bot)$ in $\Delta$ for some $r_\bot$ extended by $r$.
  If  $a \in \lo((p,s_\bot)) = \lo(\disc(w))$, then by \cref{con:disc},
  $(o,r_\bot) \trans{a} (p,s_\bot) \in \Delta_D$. Thus $(o,r_\bot)$ accepts
  $a\disc(w) = \disc(aw)$ in $A_D$. If instead $a \notin \lo((p,s_\bot))$, then we have $a \notin \lo(\disc(w))$ by \cref{lem:discSameLoWQ}, and $(o,r_\bot)\trans{a\rb}(p,s_\bot) \in \Delta$ by construction, therefore $(o,r_\bot)$ accepts $a\rb\disc(w) = \disc(aw)$ in~$A_D$.

  For the initial state $i$, we have $\lo(i) = \emptyset$ and for each word $w \in L(A_S)$, we have $\lo(w) = \emptyset$. Hence, in particular,~$i$ accepts $w$ in $A_D$.
\end{proofappendix}

We now have all preparations in hand to prove the main result of this
section. In summary, the resulting determinization procedure for NDA
has the following steps: Given an NDA $A = (Q, \Delta, i, F)$, fix a
set $S \subseteq \names$ such that $\supp(i) \subseteq S$ and
$|S| = \deg(A_\bot) + 1$.
\begin{enumerate}
\item Form the name-dropping modification $A_\bot$ of $A$ (\cref{def:ndm});
\item Form the $S$-restriction $A_S$ of $A_\bot$ (\cref{con:sRes});
\item Form the corresponding disciplined $S$-restriction $A_D$ (\cref{con:disc}); 
\item Perform the powerset construction to obtain a DFA that is equivalent to the (D-)NFA
  $A_D$;
\item Form the nominalization $A_\mathsf{DDA}$ of that DFA (\cref{con:nom}).
\end{enumerate}

\begin{theorem}[name=Determinizability of NDA, label=th:DDA]
  For every NDA $A$, there exists a DDA $A_{\mathsf{DDA}}$
  such that $L_D(A_{\mathsf{DDA}}) = \D(\la(A))$.
\end{theorem}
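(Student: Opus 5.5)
The plan is to follow the chain of constructions listed just before the theorem and the diagram in \cref{fig:DDASketch}. Set $A_\mathsf{DDA}$ to be the nominalization of the powerset automaton of $A_D$, where $A_D$ is the disciplined $S$-restriction of $A_\bot$. We first check that $A_\mathsf{DDA}$ is a DDA. By \cref{lem:discSResDnfa}, $A_D$ is a D-NFA, and its literal language is disciplined: by \cref{lem:l0disciplineLiteral} it equals $\disc[L_0(A_S)]$, and every word of this form is disciplined by \cref{lem:disc}. By \cref{lem:pow}, the powerset construction yields a D-DFA with the same literal language. \cref{lem:closingDDA} then shows that its nominalization is deterministic and, by \cref{lem:nomNda}, an NDA.

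For the language equation, we apply \cref{lem:l-intersection} with $L := L_0(A_\bot)$ and $L' := L_0(A_\mathsf{DDA})$. This requires three facts.

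\begin{enumerate}
\item $\clal(L\cap\shs)=L$. The language $L$ is $\alpha$-closed by \cref{lem:NDMClosedAlpha}. Since $|S|=\deg(A_\bot)+1$, every word in $L$ has an $\alpha$-equivalent word over $\hat S$; this is the argument of \cref{prop:sResAlpha}. Hence $L\cap\shs = L_0(A_S)$ by \cref{lem:restrictNDAL0}, and its $\alpha$-closure is $L$.
\item $\clal(L'\cap\shs)=L'$. By \cref{lem:ADiscSameAlpha}, $L' = \clal(L_0(A_D))$, and $L_0(A_D)\subseteq\shs$. Hence $L_0(A_D)\subseteq L'\cap\shs\subseteq L'$, and taking $\alpha$-closures gives the claim.
\item $L'\cap\shs = \disc[L\cap\shs]$. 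The right-hand side is $\disc[L_0(A_S)] = L_0(A_D)$ by \cref{lem:l0disciplineLiteral}. So we must show $\clal(L_0(A_D))\cap\shs = L_0(A_D)$.
\end{enumerate}

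The third fact is the main obstacle. An $\alpha$-renaming of a word in $L_0(A_D)$ that stays inside $\shs$ must again be accepted by $A_D$. The first thing to try is to argue that $A_D$ inherits closure under $\alpha$-equivalence \emph{within $\hat S$} from $A_\bot$. The name-dropping transitions, together with left $\alpha$-invariance restricted to names of $S$, are available in $A_S$. The discipline conditions refer only to $\lo_{A_S}$, and these sets are invariant under such renamings by \cref{lem:discSameLoWQ} and \cref{fact:supp-lo}. We would then redo the induction of \cref{lem:NDMClosedAlpha} for words over $\hat S$, using \cref{lem:restrAccFNDN}.

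If that induction becomes too delicate, there is a fallback that weakens the hypothesis of \cref{lem:l-intersection}. Its proof only needs that $\D$ of both sides agree. It therefore suffices that every $w'\in L'\cap\shs$ is $\alpha$-equivalent to some $\disc(w)$ with $w\in L_0(A_S)$, and conversely. This follows directly from $L'=\clal(L_0(A_D))$ together with $L_0(A_D)=\disc[L_0(A_S)]$.

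Finally, we conclude with
\[
L_D(A_\mathsf{DDA})=\db[L']=\db[L]=\D(\la(A_\bot))=\D(\la(A)).
\]
The third equality holds because $L$ is $\alpha$-closed, so $\db[L]=\D(\la(A_\bot))$. The last equality is \cref{lem:NDMSameAlpha}.
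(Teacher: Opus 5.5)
Your proof is correct and takes essentially the paper's route: the same construction chain, and an application of \cref{lem:l-intersection} with $L = L_0(A_\bot)$ and $L' = L_0(A_{\mathsf{DDA}})$. The paper asserts the hypothesis $L'\cap\shs = \disc[L\cap\shs]$ without justification, whereas your fallback supplies it: you weaken that hypothesis to equality of the induced sets of $\alpha$-classes, which follows at once from $L' = \clal(L_0(A_D))$ and $L_0(A_D) = \disc[L_0(A_S)]$, so the unfinished route (a) can be dropped.
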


\begin{proofsketch}
  We chase the proof through \cref{fig:DDASketch}: Starting from~$A$,
  first apply the name-dropping modification in order to obtain
  $A_\bot$.  By \cref{th:nameDropModAlphaClosed},
  $L_0(A_\bot)=\clal(L_0(A))$, hence $\D(\la(A))=L_D(A_\bot)$.
  Restrict to a finite set~$S$ of names using the
  $S$-restriction~$A_S$; this preserves the data language over $S$ by
  \cref{lem:restrictNDAL0,cor:restrictNDALD}.  Next, discipline the
  automaton to obtain $A_D$; this does not change the data language by
  \cref{lem:discRestrictionSameData}.  Determinize $A_D$ via the
  powerset construction (preserving the D-NFA property by
  \cref{lem:pow}) and then nominalize; by \cref{lem:closingDDA} the
  result is a DDA.  This closes the literal language under
  $\alpha$-equivalence (\cref{lem:ADiscSameAlpha}) and
  we finally lift equality from the $S$-restriction back to the full
  alphabet via \cref{lem:l-intersection}.
\end{proofsketch}

\begin{proofappendix}{th:DDA}
Let $A$ be an NDA. As sketched in \cref{fig:DDASketch}, we follow the construction
from~$A$ to a DDA $A_{\mathsf{DDA}}$\spnote{funny palindrome: A DDA A-DDA höhö}.
Let $A_{\bot}$ be the name-dropping modification of~$A$ by \cref{def:ndm}.
By \cref{th:nameDropModAlphaClosed},
we have 
  \[L_0(A_{\bot})=\clal(L_0(A))\qquad \text{and hence} \qquad \D(\la(A)) = L_D(A_{\bot}).\]
Fix $S\subseteq\names$ with $\supp(i)\subseteq S$ and $|S|=\deg(A_\bot)+1$, and let $A_S$ be the
$S$-restriction of~$A_{\bot}$ by \cref{con:sRes}. By \cref{lem:restrictNDAL0,cor:restrictNDALD},
we have \[L_0(A_S) = L_0(A_{\bot})\cap \hat S^{*}\qquad \text{and} \qquad L_D(A_S) = S^*\cap L_D(A_{\bot}).\]
Let $A_D$ be the disciplined $S$-restriction by \cref{con:disc}.
By \cref{lem:discRestrictionSameData,lem:l0disciplineLiteral},
we have \[L_D(A_D)=L_D(A_S)\qquad \text{and} \qquad L_0(A_D)=\disc[L_0(A_S)].\]
Let $A_{\mathsf{DDA}}$ be the nominalization of the powerset DFA of~$A_D$
(\cref{lem:closingDDA}). Then
\[
  L_0(A_{\mathsf{DDA}})=\clal(L_0(A_D))
\]
by \cref{lem:ADiscSameAlpha}, and
\[
  L_D(A_{\mathsf{DDA}})
  = \D(\la(A_{\mathsf{DDA}}))
  = \D(\la(A_D))
\]
since $\la(A_{\mathsf{DDA}})=\la(A_D)$ by \cref{lem:ADiscSameAlpha}.
Then, \cref{lem:discSameData} yields
$\D(\la(A_D))=\D(\la(A_S))$, so
\[
  L_D(A_{\mathsf{DDA}})=\D(\la(A_S)).
\]
Intersecting with~$S^*$ gives
\[
  L_D(A_{\mathsf{DDA}})\cap S^*
  = \D(\la(A_S))\cap S^*
  = L_D(A_S)
  = S^*\cap L_D(A_{\bot}).
\]
Set $L:=L_0(A_{\bot})$ and $L':=L_0(A_{\mathsf{DDA}})$.
Then $\clal(L\cap\shs)=L$ and $\clal(L'\cap\shs)=L'$ as shown above.
Moreover, we have $L'\cap\shs=\set{\disc(w)\mid w\in L\cap\shs}$.
Thus, \cref{lem:l-intersection} applies and yields $\db[L]=\db[L']$, hence eventually
we obtain the desired equality
\[
  \D(\la(A))=L_D(A_{\bot})=L_D(A_{\mathsf{DDA}}).\qedhere
\]

\end{proofappendix}

\begin{cor}
  An RNNA whose initial state has empty support is determinizable as NDA regarding data
    languages.
\end{cor}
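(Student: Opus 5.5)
The corollary follows by specializing \cref{th:DDA} to RNNA, which the paper already regards as NDA. Let $B$ be an RNNA whose initial state $i$ has $\supp(i)=\emptyset$. Viewing $B$ as an NDA without deallocating or unknown transitions, \cref{th:DDA} gives a DDA $B_{\mathsf{DDA}}$ with $L_D(B_{\mathsf{DDA}})=\D(\la(B))$. It then remains to identify $\D(\la(B))$ with the data language $B$ is meant to recognize. For RNNA this is its local freshness (bar) semantics, namely the debracketings of all $\alpha$-renamings of accepted bar strings.

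To make this identification, I will map bar strings to words over $\ah$ by sending $\newletter a$ to $\lb a$ and keeping free letters $a$ unchanged. This is exactly how RNNA sit inside NDA. I then check the following points.

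\begin{enumerate}
\item Words without $a\rb$ and $\q{a}$ have $\rc=\emptyset$ by \cref{tab:namedefs}. So every such word is right non-shadowing, and $\alpha$-equivalence on it is generated by Rules 1--3 of \cref{def:Alpha}.
\item Rules 1--3 are precisely the generating clauses of bar-string $\alpha$-equivalence, since $w\newletter a v\aeq w\newletter b u$ whenever $\braket{a}v=\braket{b}u$. I will show the two equivalences coincide by an induction on derivations, using \cref{alphaRuleAdmissible} to pass between raw and class-level abstraction.
\item Debracketing commutes with the letter map.
\end{enumerate}

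Together these give $\D(\la(B))=\db[\clal(L_0(B))]$, which is the bar-language data semantics of $B$.

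The hypothesis $\supp(i)=\emptyset$ is used to match the standard convention for RNNA data languages, where accepted words are closed, i.e.\ have no left-open names beyond the initial support. With empty initial support, every accepted word $w$ satisfies $\lo(w)\subseteq\supp(i)=\emptyset$ by \cref{lem:supptrans}. Hence the choice of $S$ in the determinization, with $\supp(i)\subseteq S$, imposes no constraint. The support-free situation is also exactly the one in \cref{E:nodetRNNA}, which shows such RNNA are not determinizable within RNNA. Determinizability \emph{as NDA} is therefore the strongest available statement.

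I expect the main obstacle to be step 2. I must make sure the right-non-shadowing restriction in \cref{def:Alpha} does not lose any bar-string renamings. This holds because, without closing letters, no renaming can introduce right shadowing. The remaining steps are bookkeeping.
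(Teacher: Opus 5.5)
Your proposal is correct and matches the paper. The paper gives no separate proof: it obtains the corollary directly by viewing an RNNA as an NDA without deallocating or unknown transitions and applying \cref{th:DDA}. Your check that $\D(\la(B))$ coincides with the bar-string local freshness semantics is a sound detail the paper leaves implicit.

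One correction to your commentary: the hypothesis $\supp(i)=\emptyset$ is not about an RNNA convention. It is exactly what the determinization chain relies on at the initial state, since $F_D$ in \cref{con:disc} requires $\lo_{A_S}(f)=\emptyset$ and the last step of the proof of \cref{lem:l0disciplineLiteral} uses $\lo(i)=\emptyset$. Without it the construction breaks: the NDA of \cref{fig:NDALog} accepts $\epsilon$, but its disciplined $S$-restriction does not, because the $c$-loop puts $c$ into $\lo_{A_S}$ of the initial state.
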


\takeout{
\section{Relation to Other Formalisms}\label{sec:others}
We briefly relate NDA and regular deallocation transitions
to several established formalisms for
languages over infinite alphabets with binding.

\paragraph*{Regular nondeterministic regular automata (RNNA)}

As highlighted in \cref{sec:intro,fact:rnnaNda,prop:NDA2RNNA}, NDA build upon the
concept of RNNA, which they subsume structurally and to which they are
expressively equivalent. Analogous to \cref{prop:NDA2RNNA},
by the Kleene theorems of RNNA and NDA, we obtain that
regular bar expressions and regular deallocation expressions are
expressively equivalent.

\paragraph*{Dynamic sequences and bracket algebra}
Dynamic sequences~\cite{GabbayEA15} and bracket algebra~\cite{brunet_et_al:LIPIcs:2019:10683}
employ explicit creation and destruction
markers, typically written as $\langle a$ and $\rangle a$ or $\langle_a$ and $\rangle_a$,
respectively, for $a\in\names$. Dynamic sequences and our D-Sequences are syntactically
similar but incommensurable en détail: Dynamic sequences admit constants that are not
subject to renaming, while D-sequences instead admit scoping from both sides for
a single letter.

Bracket algebra consists of words over the extended alphabet $A_\braket{} = \set{a,\langle_a,\rangle_a\mid a\in\names}$
accompanied by renaming of bilaterally bound letters into fresh or bound letters that does not introduce shadowing.
In contrast to dynamic sequeces and D-sequences, in bracket algebra a symbol $\langle_a$ does not indicate
the actual appearance of $a$ in the data language, but only the fact that $a$ is bound (from the left).
With this, there is a pointwise translation $t:\ah\to A_\braket{}$ defined by
\[
  t(a) = a,\quad t(\lb a) = \langle_a a,\quad t(\rb_a) = a \rangle_a,\quad t(\q{a}) = \langle_a a. \rangle_a,
\]

such that}

\section{Conclusions}

We have presented a novel automata model, namely non-deterministic deallocation automata
(NDA), for data languages with binders that extend regular non-deterministic nominal
automata (RNNA) with deallocating and unknown transitions. The new model is built in the
framework of nominal sets. Utilizing a representation as non-deterministic finite automaton,
we have defined regular expressions for NDA and proved a Kleene
theorem for them: regular expressions are equiexpressive with NDA (under the literal
language semantics). Moreover, we have shown that it is possible to determinize NDA
accepting closed languages under preservation of their data language, which is a rare
feature in the realm of nominal automata.

Future work includes investigations on algebraic and coalgebraic semantics for the 
languages and automaton model, respectively, as well as studying the complexity of 
related decision problems, and extending the framework to infinite words.

\label{maintextend} 
\begin{acks}
We thank Thorsten Wißmann for helpful discussions and feedback on an earlier draft of this paper.
\end{acks}
\bibliographystyle{ACM-Reference-Format}
\bibliography{coalgml}

\newpage \appendix

\ifthenelse{\boolean{proofsinappendix}}{%
\section{Omitted Proofs and Further Details}
\label{sec:proofappendix}%
\closeoutputstream{proofstream}
\input{\jobname-proofs.out}
}{}

%
%
%

\end{document}